\documentclass[10pt]{article}


\usepackage[leqno]{amsmath}
\usepackage{amssymb}
\usepackage{amsthm}
\usepackage{enumerate}
\usepackage{verbatim}
\usepackage{mathrsfs}
\usepackage{dsfont}
\usepackage{graphicx}
\usepackage{tikz}
\usepackage[pdf]{pstricks}
\usepackage{float}
\usepackage{caption}
\usepackage{subcaption}

\newcommand{\tu}{\textup}
\newcommand{\ds}{\displaystyle}
\newcommand{\ts}{\textstyle}

\usepackage{soul}

\newcommand{\mr}{\mathrm}
\newcommand{\mi}{\mathit}

\newcommand{\mc}{\mathcal}
\newcommand{\mb}{\mathbf}
\newcommand{\mbb}{\mathbb}

\newcommand{\scr}{\mathscr}


\newcommand{\uhr}{\upharpoonright}

\newcommand{\tsfrac}[2]{{\ts\frac{#1}{#2}}}

\newcommand{\sq}[1]{\ensuremath{\langle#1\rangle}}

\newcommand{\eps}{\varepsilon}
\newcommand{\wt}{\widetilde}

\newcommand{\BAR}[1]{\overline{#1}}



\let\Pr\undefined

\DeclareMathOperator*{\Pr}{\mathds{P}}




\usepackage{mathtools}
\newcommand{\defeq}{\vcentcolon=}

\newcommand{\fieldfont}[1]{\mathbb{#1}}
\newcommand{\N}{\fieldfont{N}}
\newcommand{\R}{\fieldfont{R}}
\newcommand{\Z}{\fieldfont{Z}}



\newtheoremstyle{theorem-style}
  {}
  {}
  {\slshape}
  {}
  {\bf}
  {.}
  {.5em}
  {}

\newtheorem{thm}{Theorem}[section]

\newtheorem{la}[thm]{Lemma}

\newtheorem{main-la}[thm]{Main Lemma}

\newtheorem{cor}[thm]{Corollary}

\newtheorem{obs}[thm]{Observation}

\theoremstyle{definition}
\newtheorem{df}[thm]{Definition}

\newtheorem{rmk}[thm]{Remark}
\newtheorem{ex}[thm]{Example}
\newtheorem{notn}[thm]{Notation}

\usepackage{fullpage}
\usepackage{hyperref} 
\usepackage{float}

\def\[#1\]{\begin{align*}#1\end{align*}}

\newcommand{\td}{\mathsf{td}}
\newcommand{\tw}{\mathsf{tw}}
\newcommand{\bw}{\mathsf{bw}}
\newcommand{\lab}{\mr{label}}
\newcommand{\Fib}{\mr{Fib}}

\newcommand{\SUB}{\mr{SUB}}

\newcommand{\Gr}[1]{G_{#1}}
\newcommand{\A}{\scr A}
\newcommand{\B}{\scr B}
\newcommand{\C}{\scr C}

\newcommand{\proj}[2]{\mr{proj}_{#1}(#2)}
\newcommand{\rest}{\mr{rest}}

\renewcommand{\P}[1]{\mathsf{Pathset}_{#1}}
\newcommand{\Rel}[1]{\mathsf{Relation}_{#1}}

\newcommand{\RHO}[3]{\rest_{#1}(#2|#3)}

\newcommand{\GG}[1]{G_{#1}}

\newcommand{\ACzero}{\cc{AC}^{\,\cc 0}}
\newcommand{\NCone}{\cc{NC}^{\,\cc 1}}

\newcommand{\cc}{\textsl}
\newcommand{\gad}{\rotatebox[origin=c]{180}{$\dag$}}
\newcommand{\smallgad}{\rotatebox[origin=c]{180}{$\scriptstyle\dag$}}

\newtheorem{problem}{Problem}
\usepackage[font={small,sl},labelfont={sf,up},width=6in]{caption}

\usepackage{bm}
\renewcommand{\mb}{\bm}

\begin{document}

\title{Tree-depth and the Formula Complexity of Subgraph Isomorphism}
\author{Deepanshu Kush \\ University of Toronto \and Benjamin Rossman \\ Duke University}
\date{\today}
\maketitle{}

\begin{abstract}
For a fixed ``pattern'' graph $G$, the {\em colored $G$-subgraph isomorphism problem} (denoted $\SUB(G)$) asks, given an $n$-vertex graph $H$ and a coloring $V(H) \to V(G)$, whether $H$ contains a properly colored copy of $G$.
The complexity of this problem 
is tied to parameterized versions of $\cc{P}$ ${=}?$ $\cc{NP}$ and $\cc{L}$ ${=}?$ $\cc{NL}$, among other questions.
An overarching goal is to understand the complexity of $\SUB(G)$, under different computational models, in terms of natural invariants of the pattern graph $G$. 

In this paper, we establish a close relationship between the {\em formula complexity} of $\SUB(G)$ and an invariant known as {\em tree-depth} (denoted $\td(G)$). 
$\SUB(G)$ is known to be solvable by
monotone $\ACzero$ formulas 
of size $O(n^{\td(G)})$. 
Our main result is an $n^{\wt\Omega(\td(G)^{1/3})}$ lower bound
for formulas that are monotone {\em or} have sub-logarithmic depth.
This complements a lower bound of Li, Razborov and Rossman \cite{li2017ac} relating tree-width and $\ACzero$ circuit size. 
As a corollary, it implies a stronger homomorphism preservation theorem for first-order logic on finite structures \cite{RossmanHPT}.

The technical core of this result is an 
$n^{\Omega(k)}$
lower bound in the special case where $G$ is a complete binary tree of height $k$, which we establish using the {\em pathset framework} introduced in \cite{rossman2018formulas}.  (The lower bound for general patterns follows via a recent excluded-minor characterization of tree-depth \cite{czerwinski2019improved,KR18}.)
Additional results of this paper
extend the pathset framework and 
improve upon both, the best known upper and lower bounds on the average-case formula size of $\SUB(G)$ when $G$ is a path.
\end{abstract}

\tableofcontents{}

\section{Introduction}

Let $G$ be a fixed ``pattern'' graph. In the \textsc{colored $G$-subgraph isomorphism problem}, denoted $\SUB(G)$, we are given an $n$-vertex ``host'' graph $H$ and a vertex-coloring $c : V(H) \to V(G)$ as input and required to determine whether or not $H$ contains a properly colored copy of $G$ (i.e.,\ a subgraph $G' \subseteq H$ such that the restriction of $c$ to $V(G')$ constitutes an isomorphism from $G'$ to $G$). This general problem includes, as special cases, several important problems in parameterized complexity.  In particular, $\SUB(G)$ is equivalent (up to $\ACzero$ reductions) to the \textsc{$k$-clique} and \textsc{distance-$k$ connectivity} problems when $G$ is a clique or path of order $k$.  

For any fixed pattern graph $G$, the problem $\SUB(G)$ is solvable by brute-force search in polynomial time $O(n^{|V(G)|})$.  Understanding the fine-grained complexity of $\SUB(G)$ 
--- in this context, we mean the exponent of $n$ in the complexity of $\SUB(G)$ under various computational models --- for general patterns $G$ is an important challenge that is tied to major open questions including $\cc{P} \mathrel{{=}{?}} \cc{NP}$, $\cc{L} \mathrel{{=}{?}} \cc{NL}$, $\NCone \mathrel{{=}{?}} \cc{L}$, and their parameterized versions ($\cc{FPT} \mathrel{{=}{?}} \cc{W}[1]$, etc.)  An overarching goal is to bound the fine-grained complexity of $\SUB(G)$ in terms of natural invariants of the graph $G$.

Two key invariants arising in this connection are tree-width ($\tw$) and tree-depth ($\td$).
The {\em tree-depth} of $G$ is the minimum height of a rooted forest whose ancestor-descendant closure contains $G$ as a subgraph. 
This invariant has a number of equivalent characterizations and plays a major role in structural graph theory and parameterized complexity \cite{nevsetvril2006tree}.
{\em Tree-width} is even more widely studied in graph theory and parameterized complexity \cite{diestel2005graph}.  It is defined in terms of a different notion of tree decomposition and provides a lower bound on tree-depth ($\tw+1 \le \td$).

These two invariants provide well-known upper bounds on the {\em circuit size} and {\em formula size} of $\SUB(G)$.  To state this precisely, we regard $\SUB(G)$ as a sequence of boolean functions $\{0,1\}^{|E(G)|{\cdot}n^2} \to \{0,1\}$ where the input encodes a host graph $H$ with vertex set $V(G) \times \{1,\dots,n\}$ under the vertex-coloring that maps $(v,i)$ to $v$.  (Restricting attention to this class of inputs is without loss of generality.)  Throughout this paper, we consider circuits and formulas in the unbounded fan-in basis $\{\mr{AND}_\infty,\mr{OR}_\infty,\mr{NOT}\}$; we measure {\em size} of both circuits and formulas by the number of gates.  A circuit or formula is {\em monotone} if it contains no $\mr{NOT}$ gates.  We use $\ACzero$ as an adjective that means ``depth $O(1)$'' in reference to upper bounds and ``depth $o(\log n)$'' in reference to lower bounds on formula size.\footnote{Here and elsewhere, asymptotic notation hides constants that may depend on $G$. In other contexts, e.g.\ $\Omega(\td(G))$, hidden constants are absolute.}

\begin{thm}[Folklore upper bounds]\label{thm:folklore}
For all pattern graphs $G$, $\SUB(G)$ is solvable by monotone $\ACzero$ circuits (respectively, formulas) of size $O(n^{\tw(G)+1})$ (respectively, $O(n^{\td(G)})$).
\end{thm}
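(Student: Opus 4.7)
The plan is to prove the two upper bounds independently by standard dynamic programming, respectively on a tree-depth decomposition and on a tree decomposition of $G$.

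\textbf{Formula bound.} Fix a rooted forest $F$ on $V(G)$ of depth $\td(G)$ whose ancestor--descendant closure contains $G$. For each $v \in V(G)$ with ancestors $a_1, \dots, a_k$ in $F$ (so $k < \td(G)$), I build recursively a parameterized formula
\[
\Psi_v(x_1,\dots,x_k) \,=\, \bigvee_{x \,\in\, c^{-1}(v)} \Bigl( \bigwedge_{i \,:\, a_iv \,\in\, E(G)} E_H(x_i,x) \Bigr) \wedge \bigwedge_{w \text{ child of } v \text{ in } F} \Psi_w(x_1,\dots,x_k,x),
\]
and take the final formula to be $\bigwedge_{r \text{ root of } F} \Psi_r()$. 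Writing $h(v)$ for the height of the subtree of $F$ rooted at $v$, an induction on $h(v)$ gives $|\Psi_v| \le C_G \cdot n^{h(v)}$: each level of the recursion multiplies the size by $n$ (the disjunction over $x$) and by at most $|V(G)|$ (the conjunction over children in $F$), so iterating contributes a constant factor of at most $|V(G)|^{\td(G)}$, absorbed into $C_G$. Since $h(r) \le \td(G)$ for every root, the total size is $O(n^{\td(G)})$, and the depth is $2\td(G) = O(1)$.

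\textbf{Circuit bound.} Fix a rooted tree decomposition $(T, \{B_t\}_{t \in T})$ of $G$ of width $\tw(G)$. For each $t \in T$ and each proper assignment $\sigma : B_t \to V(H)$, define $\Phi_t[\sigma]$ to express the predicate ``$\sigma$ extends to a valid embedding on the portion of $G$ introduced in the subtree at $t$'' via
\[
\Phi_t[\sigma] \,=\, \Bigl( \bigwedge_{uv \in E(G[B_t])} E_H(\sigma(u),\sigma(v)) \Bigr) \wedge \bigwedge_{t' \text{ child of } t} \; \bigvee_{\substack{\sigma' : B_{t'} \to V(H) \\ \sigma'|_{B_t \cap B_{t'}} \,=\, \sigma|_{B_t \cap B_{t'}}}} \Phi_{t'}[\sigma'].
\]
Each $\Phi_{t'}[\sigma']$ is reused across the many parent assignments $\sigma$ agreeing with $\sigma'$ on $B_t \cap B_{t'}$, which makes the construction a (monotone) circuit rather than a formula. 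There are at most $n^{\tw(G)+1}$ distinct gates per bag and $O(|V(G)|)$ bags, giving total size $O(n^{\tw(G)+1})$; the depth is constant in $n$ (and can be made $O(\log|V(G)|)$ by balancing the decomposition).

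\textbf{Main conceptual point.} The two constructions share the same basic DP skeleton but diverge in a crucial way. In the tree-depth recursion, each subformula $\Psi_w(\dots,x)$ depends on the freshly enumerated value $x$, so nothing can be shared and the construction is genuinely a formula whose shape mirrors $F$. In the tree-width construction, the sharing of $\Phi_{t'}[\sigma']$ across siblings of $T$ is essential to avoid a multiplicative blowup, and this sharing forces the output to be a DAG. There is no obstacle to executing either construction per se; the only subtle point to verify is precisely that the tree-depth version uses no sharing (and hence is a formula), which is what pins the formula bound at $n^{\td(G)}$ rather than at the stronger $n^{\tw(G)+1}$.
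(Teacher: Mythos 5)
Your proof is correct and gives the standard folklore constructions that the theorem's label refers to: a brute-force recursion along an elimination forest of minimal height for the formula bound, and a shared dynamic program over a tree decomposition for the circuit bound. The paper states Theorem~\ref{thm:folklore} without proof, so there is no internal argument to compare against; your construction is the expected one, and the gate-counting (roughly $n^{d(v)+1}$ copies of the local work at a vertex $v$ of depth $d(v)$ in the elimination forest, summing to $O(n^{\td(G)})$; roughly $n^{|B_t|}$ gates per bag $B_t$, summing to $O(n^{\tw(G)+1})$) checks out.
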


It is conjectured that $\SUB(G)$ requires circuit size $n^{\Omega(\tw(G))}$ for all graphs $G$; if true this would imply $\mi{FPT} \ne \mi{W}[1]$ and $\mi{P} \ne \mi{NP}$ in a very strong way.
As evidence for this conjecture, Marx \cite{marx2010can} proved a conditional $n^{\Omega(\tw(G)/\log\tw(G))}$ lower bound assuming the Exponential Time Hypothesis.
Providing further evidence, Li, Razborov and Rossman \cite{li2017ac} established an unconditional $n^{\Omega(\tw(G)/\log\tw(G))}$ lower bound for $\ACzero$ circuits, via a technique that extends to (unbounded depth) monotone circuits.  
This result is best stated in terms of a certain graph invariant $\kappa(G)$ introduced in \cite{li2017ac}:

\begin{thm}[Lower bound on the restricted circuit size of $\SUB(G)$ \cite{li2017ac}]\label{thm:LRR}
For all pattern graphs $G$, the circuit size of $\SUB(G)$ --- in both the $\ACzero$ and monotone settings --- is at least 
$n^{\kappa(G)-o(1)}$ 
where $\kappa(G)$ is a 
graph invariant satisfying 
$\Omega(\tw(G)/\log \tw(G)) \le \kappa(G) \le \tw(G)+1$.\footnote{\normalfont 
It is actually shown that $\kappa(G)$ is at most the {\em branch-width} of $G$, an invariant related to tree-width by $\frac23(\tw+1) \le \bw \le \tw+1$.
The relationship between $\kappa(G)$ and $\tw(G)$ was further investigated by Rosenthal \cite{rosenthal2019}, who identified the separating example $\kappa(Q) = \Theta(\tw(Q)/\sqrt{\log\tw(Q)})$ for hypercubes $Q$.}
\end{thm}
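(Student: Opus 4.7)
The plan is to prove this through a unified framework that handles both the monotone and $\ACzero$ lower bounds via a common graph-theoretic invariant $\kappa(G)$ which emerges from the analysis itself. I would introduce $\kappa(G)$ as the optimal value of a min-max expression of the form $\max_H \min_S |S|$, where $H$ ranges over ``dense'' substructures of $G$ (suitably defined, perhaps via topological minors or bramble-like families) and $S$ over separators or covers in $G$. The upper bound $\kappa(G) \le \tw(G)+1$ (indeed $\kappa(G) \le \bw(G)$) should fall out directly from the definition once $\kappa$ is compared with a branch-decomposition of optimal width, and the lower bound $\kappa(G) = \Omega(\tw(G)/\log\tw(G))$ can then be extracted from standard Robertson--Seymour-type separator theorems via an extremal argument.

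The second step is to set up a hard input distribution. For inputs with color classes $V(G) \times [n]$, let $\mu^+$ plant a uniformly random colorful copy of $G$ (one vertex $(v, i_v)$ per color) and add independent Bernoulli noise edges at a density tuned so that $\mu^-$, the pure-noise distribution, has only $O(1)$ colorful copies of $G$ in expectation. The goal is to show that no circuit of size $n^{\kappa(G)-o(1)}$ distinguishes $\mu^+$ from $\mu^-$ with non-vanishing advantage. For monotone circuits, I would follow Razborov's approximation method: each gate is replaced by an approximator drawn from a class of ``simple'' monotone functions corresponding to partial $G$-homomorphisms of width strictly below $\kappa(G)$. A sunflower-free / Alweiss--Lovett--Wu--Zhang-type lemma would bound the per-gate approximation error by $n^{-\kappa(G)+o(1)}$; since the final approximator must incur $\Omega(1)$ total error (being simple, it cannot separate $\mu^+$ from $\mu^-$), a hybrid argument forces the circuit size to be at least $n^{\kappa(G)-o(1)}$.

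For the $\ACzero$ case, the plan is to apply a sequence of random restrictions, one per depth layer, each collapsing a level of depth via a switching lemma, until the circuit becomes a decision tree of depth much smaller than $\kappa(G)$. After restriction, the induced distribution remains (effectively) a planted-$G'$ distribution for some $G' \subseteq G$ with $\kappa(G') \ge \kappa(G) - o(\kappa(G))$, so the same bottleneck argument rules out solving $\SUB(G')$ by a shallow decision tree. The main obstacle, in my view, is the color-respecting sunflower extraction underlying the approximation step: one needs a multipartite analogue of the classical sunflower-free lemma that respects the transversal structure of $\mu^+$ and that still produces approximators with small coincidence probability on $\mu^-$. A secondary obstacle is aligning the $\ACzero$ restriction argument with the monotone approximation so that the same invariant $\kappa(G)$ controls both; this requires showing that the ``hard core'' of $G$ preserved by random restrictions coincides with the ``hard core'' exploited by the monotone approximators.
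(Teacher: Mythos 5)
This theorem is a cited result from Li--Razborov--Rossman \cite{li2017ac}; the paper you were given does not prove it, and the only ingredient it supplies is Definition~\ref{df:kappa}, which is where your proposal diverges most seriously. The actual invariant is
\[
  \kappa(G) \;=\; \max_{\theta}\ \min_{A}\ \max_{B \preceq A}\ \Delta_\theta(B),
\]
a max over \emph{threshold weightings} $\theta$ on $E(G)$, then a min over \emph{join-trees} $A$ (binary union-formulas building $G$ from its edges), then a max over sub-join-trees $B$ of the quantity $\Delta_\theta(B) = |V(B)| - \sum_{e\in E(B)}\theta(e)$. Your proposed definition --- a $\max_H \min_S |S|$ over ``dense substructures'' and ``separators/covers'' --- is not this object and is not what links the circuit lower bound to the join-tree/pathset machinery that the rest of this paper (Sections~\ref{sec:join-trees}--\ref{sec:pathset}) extends to formulas. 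In particular, the comparison $\kappa(G)\le\bw(G)\le\tw(G)+1$ in \cite{li2017ac} comes from taking $A$ to follow a branch decomposition and bounding each $\Delta_\theta(B)$ by the corresponding separator size; the lower bound $\kappa(G)=\Omega(\tw/\log\tw)$ comes from choosing $\theta$ cleverly (not from a Robertson--Seymour separator argument applied to your hypothetical definition).

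The technical route you sketch is also not the one used. The hard distribution is $\mathbf X_{\theta,n}$ from Definition~\ref{df:X}, a pure product distribution with edge probabilities $n^{-\theta(e)}$ tuned so the YES-probability is bounded away from $0$ and $1$; there is no separate planted distribution $\mu^+$ and noise $\mu^-$ (that ``planted vs.\ Erd\H{o}s--R\'enyi'' setup resembles the $k$-clique lower bounds of Rossman, not the SUB$(G)$ argument here). The proof in \cite{li2017ac} is not Razborov's method of approximations with a sunflower lemma, nor is it a switching-lemma/decision-tree collapse; it is a pathset-complexity argument directly analogous to the formula-size version presented in Section~\ref{sec:pathset} of this paper --- the quantity $\Delta_\theta(B)$ controls the density of the ``pathsets'' that can appear at internal wires, and $\kappa(G)$ measures the worst such density bottleneck over all circuit-shaped constructions. (This is precisely why the same threshold-weighting/join-tree vocabulary reappears verbatim when the present paper defines $\tau(G)$ and $\Phi_\theta$.) So the gap is not merely cosmetic: you would need to replace your separator-based invariant with the $\Delta_\theta$ max-min-max, replace the planted distribution with $\mathbf X_{\theta,n}$, and replace the approximation/switching machinery with the density-bottleneck argument --- i.e., essentially a different proof. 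Whether a sunflower-plus-switching route could be made to hit the exact exponent $\kappa(G)$ is genuinely unclear and is not something the literature supports; the classical approaches give weaker or less structured exponents.
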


Shifting our focus from circuits to formulas, it is natural to conjecture that $\SUB(G)$ requires formula size $n^{\Omega(\td(G))}$. 
This statement generalizes the prominent conjecture that \textsc{distance-$k$ connectivity} requires formula size $n^{\Omega(\log k)}$, which as a consequence implies $\NCone \ne \cc{NL}$.  (There is also an average-case version of this conjecture which implies $\NCone \ne \cc{L}$, as we explain shortly.)

In this paper, we carry out the final step in the proof of an 
analogous result to Theorem \ref{thm:LRR} that lower bounds the restricted formula size of $\SUB(G)$ in terms of an invariant $\tau(G)$ that is polynomially related to tree-depth:  

\begin{thm}[Lower bound on the restricted formula size of $\SUB(G)$]\label{thm:main-lb}
For all patterns graphs $G$, the formula size of $\SUB(G)$ --- in both the $\ACzero$ and monotone settings --- is at least $n^{\tau(G)-o(1)}$ 
where $\tau(G)$ is a graph invariant satisfying $\wt\Omega(\td(G)^{1/3}) \le \tau(G) \le \td(G)$.
\end{thm}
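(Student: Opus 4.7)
My plan is to prove Theorem \ref{thm:main-lb} in three stages: first define $\tau(G)$ via the pathset framework of \cite{rossman2018formulas} so that the formula-size lower bound $n^{\tau(G)-o(1)}$ is essentially built in; then prove the sharp bound $\tau(T_k) \ge \Omega(k)$ for the complete binary tree $T_k$ of height $k$; and finally reduce general patterns to binary trees via the excluded-minor characterization of tree-depth.

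The first stage is mostly formal. I would define $\tau(G)$ as a density parameter extracted from the pathset complex associated with any formula computing $\SUB(G)$. The framework of \cite{rossman2018formulas} assigns to each subformula a collection of pathsets, whose projected densities satisfy sub-modular inequalities; a standard counting argument translates these constraints into the bound $\size(F) \ge n^{\tau(G) - o(1)}$, uniformly for monotone and sub-logarithmic-depth formulas. The upper bound $\tau(G) \le \td(G)$ follows by analyzing, within the same framework, the natural formula of size $O(n^{\td(G)})$ from Theorem \ref{thm:folklore}.

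The technical heart is showing $\tau(T_k) \ge \Omega(k)$. Here I would exploit the recursive self-similarity of $T_k$: any pathset ``covering'' $T_k$ decomposes at the root into two pathsets covering the copies of $T_{k-1}$ hanging off the root, and the target is an inductive inequality of the form $\tau(T_k) \ge \tau(T_{k-1}) + \Omega(1)$, which iterates to $\Omega(k)$. The subtle point is that a formula may distribute the work between the two subtrees across many gates, so the pathset argument must identify a level at which the two subtrees essentially decouple and pay an $n^{\Omega(1)}$ cost there. This likely requires extending the pathset framework of \cite{rossman2018formulas}, which was designed with path patterns in mind, to accommodate branching, and I expect this to be the main obstacle.

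Finally, to pass from binary trees to arbitrary $G$, I would invoke the excluded-minor characterization of tree-depth \cite{czerwinski2019improved,KR18}: if $\td(G) \ge t$ then $G$ contains a subdivision of $T_k$ with $k = \wt\Omega(t^{1/3})$. Monotonicity of $\tau$ under topological-subgraph containment --- established by reducing $\SUB(T_k)$ to $\SUB(G)$ via an $\ACzero$ projection that embeds a host graph for $T_k$ into one for $G$ by filling in the subdivision paths with dummy colored vertices --- then transfers the bound for $T_k$ to $G$, yielding $\tau(G) \ge \wt\Omega(\td(G)^{1/3})$ and completing the proof.
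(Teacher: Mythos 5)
Your high-level architecture matches the paper's — define $\tau$ via the pathset framework so the formula-size bound comes for free, prove the sharp bound for $T_k$, then reduce general $G$ to the tree case via the excluded-minor approximation of tree-depth — but there are two genuine gaps in the reduction and in the core technical step.

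First, you misstate the excluded-minor characterization of tree-depth. It is a \emph{trichotomy}, not a single implication: if $\td(G) = \Omega(k^3)$, then (i) $\tw(G) \ge k$, \emph{or} (ii) $G$ contains a path of length $2^k$, \emph{or} (iii) $G$ contains a $T_k$-minor. It is simply false that high tree-depth forces a large $T_k$-subdivision — a long path $P_m$ has $\td(P_m) = \Theta(\log m)$ but contains no $T_k$-minor for $k \ge 2$. So your plan covers only case (iii). You additionally need case (ii), handled by the known bound $\tau(P_{2^k}) = \Omega(k)$, and — crucially — case (i), which is handled in the paper by the inequality $\tau(G) \ge \kappa(G)$ together with the circuit lower bound $\kappa(G) = \Omega(\tw(G)/\log\tw(G))$ from \cite{li2017ac}. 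Without this case your argument does not produce a lower bound for graphs whose tree-depth is driven by tree-width rather than by a long path or deep binary tree.

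Second, on the core step $\tau(T_k) = \Omega(k)$: the plan you sketch — decouple the two subtrees at the root and pay $\Omega(1)$, iterating $\tau(T_k) \ge \tau(T_{k-1}) + \Omega(1)$ — is not how the paper argues and, as stated, glosses over the real difficulty. A join-tree over $T_k$ (the combinatorial object behind a pathset decomposition) need not respect the recursive structure of $T_k$ at all; its sub-join-trees can entangle the two subtrees arbitrarily, so there is no natural ``level at which the subtrees decouple.'' The paper instead works with a potential function $\Phi_\theta(A)$ on join-trees, proves by structural induction on the join-tree (not on $k$) that $\Phi_\theta(A) \ge \eps\lambda(A) + \delta\Delta_\theta(A)$ for suitable constants, and — this is essential — takes $\theta$ to be the threshold weighting induced by the uniform random walk on $T_\infty$. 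The paper explicitly notes that the more obvious constant threshold weighting yields no bound better than $\Omega(1)$. Your sketch does not identify either the need for a nontrivial threshold weighting or an induction that tracks something finer than ``tree height,'' and these are precisely where the hard content lies.
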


The invariant $\tau(G)$ was introduced in \cite{RossmanICM}, where  it was also shown that $n^{\tau(G)-o(1)}$ is a lower bound on the formula size of $\SUB(G)$ in the $\ACzero$ and monotone settings. 
The results of \cite{RossmanICM} generalized lower bounds for $\SUB(P_k)$ from papers 
\cite{rossman2015correlation,rossman2018formulas}, which showed that $\tau(P_k) = \Omega(\log k)$ (where $P_k$ is the path graph of length $k$).
As we will explain shortly, this lower bound for $\tau(P_k)$ implies that $\tau(G) = \Omega(\log \td(G))$ for all graphs $G$.  
The contribution of the present paper lies in improving this logarithmic lower bound to a polynomial one by showing $\tau(G) = \wt\Omega(\td(G)^{1/3})$.

\begin{rmk}
It is helpful to keep in mind the related inequalities:
\[
  \text{circuit size} \le \text{formula size},\qquad  \tw+1 \le \td,\qquad  \kappa \le\tau.
\] 
It is further known that $\td(G) \le (\tw(G) + 1)\log|V(G)|$ \cite{nevsetvril2006tree}.  A nearly maximal separation between invariants $\td$ and $\tw$ is witness by {\em bounded-degree trees} $T$, which have tree-width $1$ but tree-depth $\Omega(\log|V(T)|)$.  
This class includes paths and complete binary trees, the two families of pattern graphs studied in this paper.
\end{rmk}

For trees $T$, we point out that $\SUB(T)$ is computable by monotone $\ACzero$ circuits of size $c(T) \cdot n^2$ for a constant $c(T)$ depending on $T$.  (This follows from Theorem \ref{thm:folklore}, since all trees have tree-width $1$.)
Although formulas are a weaker model than circuits, establishing formula lower bounds for $\SUB(T)$ of the form $n^{\Omega(\log|V(T)|)}$, as we do in this paper, is a subtle task which requires techniques that distinguish formulas from circuits.  Accordingly, Theorem \ref{thm:main-lb} involves greater machinery than Theorem \ref{thm:LRR}.  The invariant $\tau(G)$ is also significantly harder to define and analyze compared to $\kappa(G)$.

\subsection{Minor-monotonicity}\label{sec:minormon}

Recall that a graph $F$ is a {\em minor} of $G$ if $F$ can be obtained from $G$ by a sequence of edge deletions and contractions (i.e.,\ remove an edge and identify its two endpoint).
A graph invariant $p$ 
is said to be {\em minor-monotone} if $p(F) \le p(G)$ whenever $F$ is a minor of $G$.  
As observed in \cite{li2017ac}, the complexity of $\SUB(G)$ (under any reasonable class of circuits) is minor-monotone in the following sense:

\begin{la}\label{la:mon-proj}
If $F$ is a minor of $G$, then there is a reduction from $\SUB(F)$ to $\SUB(G)$ via a monotone projection.\footnote{\normalfont 
That is, for every $n$, there is a reduction from $\SUB(F)$ to $\SUB(G)$, viewed as boolean functions $\{0,1\}^{|E(F)|{\cdot}n} \to \{0,1\}$ and $\{0,1\}^{|E(G)|{\cdot}n} \to \{0,1\}$, via a monotone projection that maps each variable of $\SUB(G)$ to a variable of $\SUB(F)$ or a constant $0$ or $1$.
}
\end{la}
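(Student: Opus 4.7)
The plan is to combine two observations: monotone projection reductions compose, and the minor relation on graphs is generated by single edge deletions and single edge contractions. It therefore suffices to exhibit monotone projections $\SUB(F) \to \SUB(G)$ in the two elementary cases (i) $F = G \setminus e$ and (ii) $F = G/e$ for some edge $e = uv \in E(G)$.

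In case (i), identify $V(F) = V(G)$ and keep the same parameter $n$. The projection fixes every $\SUB(G)$-variable for the edge $e$ (there are $n^2$ such variables) to the constant $1$, and maps every other $\SUB(G)$-variable (one for each $e' \in E(F)$ and $(i,j) \in [n]^2$) to the equally-indexed $\SUB(F)$-variable. A properly colored copy of $F$ in the input then lifts to one of $G$ by pairing any vertex of color $u$ with any vertex of color $v$ (all $e$-edges are now present), and conversely every properly colored copy of $G$ in the output restricts to one of $F$ in the input.

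In case (ii), let $w$ denote the contracted vertex, so that $V(F) = (V(G) \setminus \{u,v\}) \cup \{w\}$. The projection is defined by: (a) each $\SUB(G)$-variable for an edge disjoint from $\{u,v\}$ is sent to the same-indexed $\SUB(F)$-variable; (b) for each neighbor $x \notin \{u,v\}$ of $u$ or $v$ in $G$, both $\SUB(G)$-variables for $ux$ and $vx$ at $(i,j)$ are sent to the single $\SUB(F)$-variable for $wx$ at $(i,j)$; and (c) the $\SUB(G)$-variable for $uv$ at $(i,j)$ is fixed to $1$ if $i = j$ and to $0$ otherwise. The crux is (c): this ``diagonal'' assignment encodes the vertex identification performed by contraction. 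For correctness, a properly colored copy of $F$ with $w \mapsto (w, i_w)$ lifts to a copy of $G$ by sending $u, v \mapsto (u, i_w), (v, i_w)$ and keeping the other color-class assignments unchanged; conversely, any properly colored copy of $G$ in the output uses vertices $(u, i_u)$ and $(v, i_v)$ joined by an edge, which survives the projection only if $i_u = i_v = i$, and then $w \mapsto (w, i)$ yields a properly colored copy of $F$ in the input.

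The one mild subtlety, and the point I would double-check carefully, is that a vertex $x$ adjacent to both $u$ and $v$ in $G$ contributes two distinct $\SUB(G)$-variables that project to the same $\SUB(F)$-variable; this is harmless since both necessarily receive the same value, matching the single $wx$-edge of $F$ obtained by simplifying the parallel edges produced by contraction. I do not anticipate any other obstacle.
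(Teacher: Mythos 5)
The paper does not itself prove this lemma; it cites it from Li, Razborov and Rossman \cite{li2017ac}, so there is no in-paper argument to compare against. Your proof is correct and is the standard one: decompose into single edge deletions and single edge contractions (which is precisely how this paper defines the minor relation), then reduce each elementary step by a projection. For deletion, hard-wiring all $n^2$ slots of the removed edge $e$ to $1$ works; for contraction $G/uv = F$ with merged vertex $w$, the diagonal assignment $\mathbf{1}[i=j]$ on the $uv$-slots together with mapping both $ux$- and $vx$-slots to the $wx$-slot is exactly the right gadget, and you correctly observe that a common neighbor $x$ of $u$ and $v$ just causes two output variables to read from the same input variable, absorbing the parallel edge that contraction would otherwise create.

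One small imprecision worth fixing: in the deletion case you write ``identify $V(F) = V(G)$,'' but since the paper's graphs have no isolated vertices, deleting $e=\{u,v\}$ can drop $u$ or $v$ from $V(F)$, so one only has $V(F) \subseteq V(G)$. This does not affect the argument --- the forced all-$1$ block on $e$ makes any missing color class freely assignable in the lifted copy of $G$, and restricting a properly colored copy of $G$ to $V(F)$ recovers a copy of $F$ --- but the sentence as written is not quite accurate.
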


In the quest to characterize the complexity of $\SUB(G)$ in terms of invariants of $G$, it makes sense to focus on minor-monotone ones.  Indeed, invariants $\tw$, $\td$, $\kappa$, $\tau$ are all minor-monotone. 
This feature is useful in bounding the complexity of $\SUB(G)$. For example, we can combine the result of \cite{chuzhoy2019towards} that every graph with tree-width at least $k^9\,\mr{polylog}\,k$ contains a $(k \times k)$-grid minor,
with the lower bound $\kappa(\text{($k \times k$)-grid graph}) = \Omega(k)$ from \cite{amano2010k}, in order to conclude that $\kappa(G) = \wt\Omega(\tw(G)^{1/9})$ for all graphs $G$.
(Notation $\wt O(\cdot)$ and $\wt\Omega(\cdot)$ suppresses poly-logarithmic factors.)
The stronger $\kappa(G) = \Omega(\tw(G) / \log \tw(G))$ bound of Theorem \ref{thm:LRR} is obtained by a more nuanced analysis of the invariant $\kappa$.

In a similar manner, we can combine the fact that every graph $G$ contains a path of length $\td(G)$ \cite{nevsetvril2006tree}, with the lower bound $\tau(P_k) = \Omega(\log k)$ \cite{rossman2018formulas}, in order to conclude that $\tau(G) = \Omega(\log\td(G))$ for all graphs $G$.
With the goal of improving this lower bound to $\Omega(\mr{poly}\,\td(G))$ (that is, $\Omega(\td(G)^\eps)$ for some constant $\eps > 0$), Kawarabayashi and Rossman \cite{KR18} established a polynomial excluded-minor characterization of tree-depth, which was subsequently sharpened by Czerwi{\'n}ski, Nadara and Pilipczuk \cite{czerwinski2019improved}.

\begin{thm}[Excluded-minor characterization of tree-depth \cite{KR18,czerwinski2019improved}]\label{thm:excluded-minor}
Every graph $G$ with tree-depth $\Omega(k^3)$ satisfies at least one of the following:
\begin{enumerate}[\normalfont\quad (i)]
\item
$G$ has tree-width $\ge k$,
\item
$G$ contains a path of length $2^k$,
\item
$G$ contains a $T_k$-minor, where $T_k$ is the complete binary tree of height $k$.
\end{enumerate}
\end{thm}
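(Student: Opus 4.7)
The plan is to prove the contrapositive: assuming $\tw(G) \le k-1$, that $G$ contains no path of length $2^k$, and that $G$ excludes $T_k$ as a minor, I would derive $\td(G) = O(k^3)$. First, fix a tree decomposition $(T,\{B_t\}_{t \in V(T)})$ of width at most $k-1$, which exists by the failure of (i), and root it arbitrarily so that each bag $B_t$ becomes a separator for the subtree below $t$. The standard identity $\td(G) \le |B_r| + \max_i \td(G_i)$, where $G_i$ ranges over components of $G \setminus B_r$, supports a recursion; unfortunately, iterating it naively gives only $\td(G) = O(k \cdot \mr{depth}(T))$, which is far too weak.

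The technical core is a \emph{depth reduction lemma}: if $\td(G) \ge C k^2$ for a sufficiently large absolute constant $C$, and $G$ satisfies neither (ii) nor (iii), then there exists $S \subseteq V(G)$ with $|S| = O(k)$ such that every component of $G \setminus S$ has tree-depth at most $\td(G) - \Omega(k)$. Iterating this lemma $O(k^2)$ times drives $\td$ down to the baseline $C k^2$; each iteration contributes $O(k)$ vertices to the elimination forest being built, so the resulting bound is $O(k^2) \cdot O(k) + O(k^2) = O(k^3)$, as required.

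To prove the depth reduction lemma, I would argue by contradiction, supposing that every candidate separator $S$ of size $O(k)$ fails. A pigeonhole argument applied to an elimination forest witnessing the large value of $\td(G)$ produces a long root-to-leaf path $\pi$ in the decomposition tree $T$ whose bags collectively host an obstruction to low tree-depth in $G$. An Erd\H{o}s--Szekeres-style selection along $\pi$ would then extract either (a) a path of length at least $2^k$ in $G$, obtained by stitching together segments that cross consecutive adhesions along $\pi$ and contradicting (ii); or (b) a sequence of $\Omega(k)$ repeated branching patterns which, via Menger's theorem applied inside small adhesions, unfolds into $k$ levels of vertex-disjoint path-pairs realizing a $T_k$-minor, contradicting (iii).

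The main obstacle I anticipate is the construction of the $T_k$-minor in case (b). Embedding a full binary tree of height $k$ requires that, at each of the $k$ branching levels, two disjoint continuations be found while preserving vertex-disjointness from all previously committed branches; naive applications of Menger furnish individual pairs but not a globally consistent embedding. Overcoming this requires exploiting the nested structure of the rooted tree decomposition so that disjointness propagates from coarser to finer adhesions, and it is precisely this careful bookkeeping that distinguishes the tight $k^3$ dependence in \cite{czerwinski2019improved} from the larger polynomial obtained in \cite{KR18}. Calibrating the three quantitative ingredients---separator size $O(k)$, number of reduction rounds $O(k^2)$, and the depth of each extracted witness---so that they combine to exactly $k^3$ is then a matter of bookkeeping rather than conceptual novelty.
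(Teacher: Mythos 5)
The paper never proves this theorem; it imports it as a black box from \cite{KR18} and \cite{czerwinski2019improved} (see the text surrounding the theorem statement, which simply cites these two references). There is therefore no in-paper proof to compare your proposal against, and it has to be assessed on its own.

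Your proposal has a genuine quantitative gap in the step ``Iterating this lemma $O(k^2)$ times drives $\td$ down to the baseline $Ck^2$.'' Suppose the depth reduction lemma delivers, whenever $\td(G)\geq Ck^2$, a set $S$ with $|S|\leq\alpha k$ such that every component of $G\setminus S$ has tree-depth at most $\td(G)-\beta k$. Since deleting a single vertex can decrease tree-depth by at most $1$, we necessarily have $\alpha\geq\beta$. Building the elimination forest by repeatedly peeling off such separators, the number of iterations needed to reach the baseline is roughly $(\td(G)-Ck^2)/(\beta k)$, and the resulting elimination forest has depth at most $\alpha k\cdot(\td(G)-Ck^2)/(\beta k)+O(k^2)=(\alpha/\beta)\,\td(G)+O(k^2)$. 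Because $\alpha/\beta\geq 1$, the inequality $\td(G)\leq(\alpha/\beta)\,\td(G)+O(k^2)$ is vacuous and yields no upper bound on $\td(G)$. Your ``$O(k^2)$ iterations'' count is only valid if you already know $\td(G)=O(k^3)$, which is what you set out to prove. To repair this you would need the lemma to produce a \emph{multiplicative} contraction in tree-depth (e.g.\ each component has $\td$ at most a constant fraction of $\td(G)$, or bounded directly by a function of $k$ alone), or a different bookkeeping scheme that directly bounds the height of the elimination forest rather than counting ``rounds.'' The remaining ingredients you sketch---walking along a root-to-leaf path in a width-$(k-1)$ tree decomposition, stitching segments across adhesions to produce a long path, and extracting a $T_k$-minor from repeated branching via a Menger/Erd\H{o}s--Szekeres argument---are all in the right spirit and do reflect the flavor of the arguments in \cite{KR18,czerwinski2019improved}, but without a correct termination/accounting argument the contrapositive does not close.
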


Theorem \ref{thm:excluded-minor} reduces the task of proving $\tau(G) = \Omega(\mr{poly}\,\td(G))$ to the task of proving $\tau(T_k) = \Omega(\mr{poly}\,k)$.  It is this final step that we tackle in this paper.\footnote{Theorem \ref{thm:Tk} delivers on a promise in papers \cite{KR18,RossmanHPT,RossmanICM}, which cite $\tau(T_k) = \Omega(\mr{poly}\,k)$ as an unpublished result of upcoming work.
Let us mention that, after finding many devils in the details of an earlier sketch of an $\Omega(\sqrt k)$ bound by the second author, we worked out an entirely different approach in this paper, which moreover gives a {\em linear} lower bound (which is tight up to a constant since $\tau(T_k) \le \td(T_k) = k$).}

\begin{thm}[Main result of this paper]
\label{thm:Tk}
$\tau(T_k) = \Omega(k)$. 
\end{thm}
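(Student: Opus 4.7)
The plan is to work within the pathset framework of \cite{rossman2018formulas}, exploiting the recursive structure of $T_k$ as a root $r$ joined to two copies of $T_{k-1}$. Recall that in this framework, each subformula of a formula computing $\SUB(G)$ is associated with a \emph{pathset}: a collection of partial copies of subgraphs of $G$ into the host vertex set, equipped with a density measure $\mu$. The invariant $\tau(G)$ arises, roughly, as the best exponent $\alpha$ such that one can maintain a density upper bound of $n^{-\alpha}$ through the bottom-up compositional analysis of any formula computing $\SUB(G)$; a formula size lower bound of $n^{\alpha}$ then follows by a union-bound / convexity argument, using that the accepting pathset (representing the set of copies of $G$) has density $\Theta(n^{-|V(G)|})$.

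First, I would specialize the framework to $T_k$: index the relevant pathsets by subtrees $S$ of $T_k$, and set up a density measure tailored to the tree structure so that the density decreases by a factor of roughly $n^{-1}$ whenever one extends $S$ by an additional vertex of $T_k$. Second, I would prove an inductive lemma of the form $\tau(T_k) \ge \tau(T_{k-1}) + \Omega(1)$. The intuition is that $T_k$ forces any formula to process two essentially independent copies of $T_{k-1}$ through a common root, and the pathset corresponding to the entire tree cannot be assembled cheaply from pathsets for the two halves without paying for the root join. Either the formula splits the two copies of $T_{k-1}$ across an AND gate --- in which case the join operation at the root incurs the $\Omega(1)$ cost in density --- or it keeps both copies within a single subformula, whose pathset must then simultaneously track configurations in both subtrees and incurs the same cost. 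Iterating this step $k$ times would yield $\tau(T_k) = \Omega(k)$, matching the upper bound $\tau(T_k) \le \td(T_k) = k$ up to a constant.

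The hard part will be establishing the $\Omega(1)$ gain per level, which is where one must go beyond the path-based analysis of \cite{rossman2018formulas}. In that earlier work, the analogous argument yields $\Omega(1)$ only per \emph{doubling} of the path length, giving the mere $\tau(P_k) = \Omega(\log k)$ bound. To obtain a linear bound for $T_k$, the density measure must be genuinely sensitive to branching: it should reward pathsets that balance the two subtrees of $T_k$ and penalize those that do not. Designing such a measure --- and verifying that it behaves well under all operations performed by AND, OR, and projection gates, in particular that OR gates cannot be used to cheaply switch between ``left-heavy'' and ``right-heavy'' configurations --- is the main technical obstacle. Once such a measure is in place, the inductive structure of the proof should be largely routine, reducing the bound to the trivial base case of a single edge.
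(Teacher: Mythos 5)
Your proposal correctly identifies the right framework (pathset complexity / the potential function on join-trees) and the right high-level intuition (branching must force a constant cost per level, unlike the path case where one gains only per doubling). But the core inductive step you propose does not go through, and the gap is exactly where the paper spends all of its effort.

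The issue is your dichotomy: ``either the formula splits the two copies of $T_{k-1}$ across an AND gate \dots or it keeps both copies within a single subformula.'' This is not a dichotomy. The adversary in this game is the \emph{join-tree} $A$, which is an arbitrary rooted binary tree with leaves labeled by edges of $T_k$. It has no obligation whatsoever to split $T_k$ along the natural $T_{k-1}$/$T_{k-1}$/root decomposition; it can peel off one edge at a time, split across the middle of both copies of $T_{k-1}$ simultaneously, interleave edges from different depths, and so on. In particular there is no reason any sub-join-tree of $A$ should have graph $T_{k-1}$, so an induction of the form $\tau(T_k) \ge \tau(T_{k-1}) + \Omega(1)$ has nothing to recurse on. (This is precisely the obstacle that makes formula lower bounds for $\SUB(T)$ ``a subtle task'' even though $\SUB(T)$ has $O(n^2)$ circuits.) The same difficulty already appears for $P_k$: the known $\Omega(\log k)$ bound does \emph{not} come from observing $P_{2k} \supset P_k$ and paying $\Omega(1)$ per doubling at the midpoint; it comes from a structural induction on join-trees, because the join-tree need not split $P_{2k}$ at the midpoint.

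What the paper actually does is replace the parameter ``$k$'' by a pair of parameters that are meaningful for \emph{every} join-tree over $T_\infty$, not just those with graph exactly $T_k$: the max-complete height $\lambda(A)$ (the largest $j$ for which some complete subtree $T_x$ of height $j$ is wholly contained in $\Gr{A}$) and the surplus $\Delta_\theta(A)$, which by Lemma~\ref{la:partial} is a proxy for the boundary $\partial(\Gr{A})$. The strengthened invariant is $\Phi(A) \ge \frac{1}{30}\lambda(A) + \frac{2}{5}\Delta(A)$ for \emph{all} join-trees $A$, proved by structural induction on $A$ (not on $k$). The additional $\Delta(A)$ term is what gives the argument something to gain when the adversary refuses to hand you a sub-join-tree with a clean subtree as its graph: Lemma~\ref{la1} shows that a sub-join-tree $B$ with roughly half the edges of $T_x$ and small boundary must miss an entire complete subtree $T_z^+$, and Lemma~\ref{la2} then extracts, from how $A$ eventually covers $T_z$, either a sub-join-tree with large $\lambda + \partial$, or one with large boundary, or one where the $\ominus$-truncated component has large $\lambda + \partial$. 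Three cases, each closed by a different inequality ($\dag$ or $\ddag$). The threshold weighting also matters: the constant-$\theta$ weighting on $T_k$ provably cannot give better than $\Omega(1)$; the random-walk weighting of Example~\ref{ex:theta} is needed so that $\Delta$ tracks the boundary. None of this is ``routine once the measure is in place'' --- the measure $\eps\lambda + \delta\Delta$ and the accompanying combinatorial lemmas about subtrees of $T_\infty$ (Lemmas~\ref{lem:boundarylarge}, \ref{la0}, \ref{la1}, \ref{la2}) are the content of the proof.

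So: right framework, right intuition about branching, but the proposed recursion on $k$ cannot be carried out against an adversarial join-tree, and the main technical content of the paper is precisely the machinery ($\lambda$, $\partial$, the random-walk $\theta$, and the case analysis in Section~\ref{sec:philb}) needed to replace that recursion with a structural induction on join-trees.
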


This lower bound is proved in Section \ref{sec:Tk} using a certain potential function (described in Section \ref{sec:prelims}), which further reduces our task to a combinatorial problem concerning {\em join-trees} over $T_k$, that is, rooted binary trees whose leaves are labeled by edges of $T_k$.  This is the same combinatorial framework as the $\tau(P_k) = \Omega(\log k)$ lower bound of \cite{rossman2018formulas}; however, the task of analyzing join-trees over $T_k$ turned out to be significantly harder compared with $P_k$. 

Theorems \ref{thm:excluded-minor} and \ref{thm:Tk} combine to prove Theorem \ref{thm:main-lb} (the bound $\tau(G) = \wt\Omega(\td(G)^{1/3})$) as follows.  For a graph $G$ with tree-depth $\Omega(k^3)$, we can see that $\tau(G) = \Omega(k/\log k)$ by considering the three cases given by Theorem \ref{thm:excluded-minor}:
\begin{enumerate}[\normalfont\quad (i)]
\item
If $G$ has tree-width $\ge k$, then
$
  \tau(G) \ge \kappa(G) = \Omega(k/\log k)
$
by Theorem \ref{thm:LRR}.
\item
If $G$ contains a path of length $2^k$, then
$
  \tau(G) \ge \tau(P_{2^k}) = \Omega(k)
$
by the lower bound of \cite{rossman2018formulas}.
\item
If $G$ contains a $T_k$-minor, then 
$
\tau(G) \ge \tau(T_k) = \Omega(k)
$
by Theorem \ref{thm:Tk}.
\end{enumerate}

\subsection{Corollary in finite model theory}

Theorem \ref{thm:main-lb} has a striking consequence in finite model theory, observed in the paper \cite{RossmanHPT}.  

\begin{cor}[Polynomial-rank homomorphism preservation theorem over finite structures]\label{cor:HPT}
Every first-order sentence of quantifier-rank $r$ that is preserved under homomorphisms of finite structures is logically equivalent on finite structures to an existential-positive first-order sentence of quantifier-rank $\wt O(r^3)$.
\end{cor}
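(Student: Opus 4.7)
The plan is to follow the framework established in \cite{RossmanHPT}, which reduces the problem of bounding the quantifier rank in the finitary homomorphism preservation theorem to a formula-size lower bound for $\SUB(G)$ polynomial in $\td(G)$. Theorem \ref{thm:main-lb} supplies exactly such a bound, so our task here is to combine the two.

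Let $\varphi$ be a first-order sentence of quantifier rank $r$ preserved under homomorphisms of finite structures. By the finitary homomorphism preservation theorem, $\varphi$ is equivalent on finite structures to some existential-positive sentence $\psi$. Writing $\psi$ in disjunctive form, $\psi \equiv \bigvee_{i} \exists\vec x_i\, \bigwedge_j \alpha_{i,j}(\vec x_i)$, each disjunct is determined by a finite ``pattern'' structure $G_i$, and one checks that the minimum quantifier rank of an ex-pos sentence equivalent to the $i$-th disjunct is exactly $\td(G_i)$. Hence the minimum quantifier rank of $\psi$ is $\max_i \td(G_i)$, and we must show $\td(G_i) = \wt O(r^3)$ for every pattern $G_i$ appearing in some such minimal representation.

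Fix such a minimal pattern $G = G_i$. The key reduction from \cite{RossmanHPT} shows that, for every $n$, $\SUB(G)$ on $n$-vertex host graphs can be expressed as an evaluation of the original $\varphi$ on a suitable family of $O(n)$-element structures (built from the colored host graph in a uniform way). Since $\varphi$ has quantifier rank $r$, the standard translation from first-order logic to $\ACzero$ formulas yields a formula of size $n^{O(r)}$ and depth $O(r)$ for $\SUB(G)$. Combined with Theorem \ref{thm:main-lb}, this forces
\[
  r \;\ge\; \tau(G) - o(1) \;=\; \wt\Omega(\td(G)^{1/3}),
\]
so $\td(G) \le \wt O(r^3)$ as required.

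The only subtle step is the reduction in the middle paragraph: one has to encode the colored subgraph isomorphism instance inside a finite structure in a way that (a) preserves quantifier rank up to an additive constant depending only on the vocabulary of $\varphi$, and (b) is uniform in $n$ so that the resulting translation lands in the $\ACzero$ regime of Theorem \ref{thm:main-lb}. This is the main obstacle and is precisely the content of \cite{RossmanHPT}; once it is in hand, Theorem \ref{thm:main-lb} plugs in as a black box to upgrade the previously known logarithmic bound $\td(G) = 2^{O(r)}$ (coming from $\tau(P_k) = \Omega(\log k)$ alone) to the polynomial bound $\td(G) = \wt O(r^3)$.
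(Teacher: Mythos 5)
Your proposal is correct and matches what the paper does: Corollary~\ref{cor:HPT} is obtained by plugging Theorem~\ref{thm:main-lb} into the framework of \cite{RossmanHPT}, and the paper itself gives no independent proof, deferring entirely to that reference. Your sketch of the intermediate steps --- translating $\varphi$ into an $\ACzero$ formula of size $n^{O(r)}$ solving $\SUB(G)$ for each minimal pattern $G$, then invoking the $n^{\tau(G)-o(1)}$ lower bound to force $\tau(G) = O(r)$ and hence $\td(G) = \wt O(r^3)$ --- accurately summarizes the content of \cite{RossmanHPT} (with the one caveat that the reduction there is an average-case argument under $\mb X_{\theta,n}$ rather than the worst-case evaluation your phrasing might suggest).
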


The polynomial upper bound of Corollary \ref{cor:HPT} improves an earlier {\em non-elementary} upper bound of \cite{rossman2008homomorphism}.  
This surprising connection between circuit complexity and finite model theory was in fact the original motivation behind Theorems \ref{thm:main-lb} and \ref{thm:excluded-minor}, as well as the present paper.

\subsection{Improved bounds for average-case $\SUB(P_k)$}

Additional results of this paper improve both the average-case upper and lower bounds for $\SUB(P_k)$ \cite{rossman2015correlation}.  
Here {\em average-case} refers to the $p$-biased product distribution on $\{0,1\}^{kn^2}$ where $p = n^{-(k+1)/k}$.  
This input distribution corresponds to a random graph $\mb X$, comprised of $k+1$ layers of $n$ vertices, where every pair of vertices in adjacent layers is connected by an edge independently with probability $p$.  
For this choice of $p$, the probability that $\mb X$ contains a path of length $k$ containing one vertex from each layer is bounded away from $0$ and $1$.

\begin{thm}[\cite{rossman2018formulas}]\label{thm:previous}
$\SUB(P_k)$ is solvable on $\mb X$ with probability $1 - o(1)$ by monotone $\ACzero$ formulas of size $n^{\frac12\lceil \log_2(k) \rceil + o(1)}$.  
On the other hand, $\ACzero$ formulas solving $\SUB(P_k)$ on $\mb X$ with probability $\ge 0.9$ require size $n^{\tau(P_k) - o(1)}$ where $\tau(P_k) \ge \frac12\log_{\sqrt{13}+1}(k)$ \tu($\ge 0.22\log_2(k)$\tu).
\end{thm}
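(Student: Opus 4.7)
The plan is to prove the two halves of Theorem \ref{thm:previous} separately: the average-case upper bound via an explicit construction that exploits the sparsity of the random graph $\mb X$, and the matching lower bound via the pathset/join-tree framework. For the upper bound, I would build a monotone $\ACzero$ formula by a sparsified balanced recursion over path-length. The standard recursion $g_{2j}(u,v) = \bigvee_w g_j(u,w) \wedge g_j(w,v)$ gives size roughly $n^{\log_2 k}$. The improvement to $n^{\frac12 \lceil \log_2 k \rceil + o(1)}$ exploits the threshold density $p = n^{-(k+1)/k}$: first-moment computations show that only a small set of intermediate vertices at each layer is ``typically useful'' (participates in at least one consistent partial path), with middle-layer counts concentrated at $n^{1/2 + o(1)}$ with high probability. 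Restricting the inner disjunction over $w$ at each recursive level to a non-adaptively sampled random subset of this useful size, nesting the one-sided disjunctions $a_w = \bigvee_{v_0} g_j(v_0,w)$ and $b_w = \bigvee_{v_k} g_j(w,v_k)$ to avoid an additional $n^2$ enumeration factor at the top, and taking a union bound over the $O(\log k)$ levels of recursion yields the claimed formula size and probability $1-o(1)$ of correctness on $\mb X$.

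For the lower bound I would proceed in two steps. First, an average-case analogue of Theorem \ref{thm:main-lb} specialized to $G = P_k$: any $\ACzero$ formula solving $\SUB(P_k)$ on $\mb X$ with probability $\ge 0.9$ can, via random restrictions in the spirit of H{\aa}stad's switching lemma, be converted into a small monotone formula that \emph{exactly} computes $\SUB(P_k)$ on a residual random graph. Each gate of this formula corresponds to a pathset (a projection of the set of length-$j$ sub-paths onto a subset of vertex coordinates), and the maximum pathset complexity across gates is bounded below by $\tau(P_k)$. Second, the combinatorial bound $\tau(P_k) \ge \frac12 \log_{\sqrt{13}+1}(k)$ is obtained by analyzing the optimal join-tree over $P_k$: each internal node joins two sub-pathsets at a boundary vertex, and optimizing the trade-off between pathset density and join cost yields a recurrence relating $\tau(P_k)$ to $\tau(P_{k'})$ for suitably smaller $k'$. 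The constant $\sqrt{13}+1$ arises as the dominant positive root of the quadratic $x^2 - 2x - 12 = 0$ governing this trade-off.

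The hardest step will be the first part of the lower bound. The switching-lemma machinery converts worst-case $\ACzero$ size lower bounds into average-case ones only when the restriction distribution is carefully calibrated to the input distribution. Here $\mb X$ has a specific product distribution at the critical probability $p = n^{-(k+1)/k}$, and the restriction must simultaneously (i) simplify $F$ to shallow pathset-style computations, (ii) preserve a constant fraction of the ``YES-instances'' (length-$k$ paths) in $\mb X$, and (iii) retain enough conditional randomness to support the pathset complexity argument of step two. I would follow the approach of \cite{rossman2018formulas}, isolating the pathset analysis as a modular combinatorial lemma that can be reused for other sparse-random input distributions.
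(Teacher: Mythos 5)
Your overall structure is consistent with the paper, which simply cites \cite{rossman2018formulas} for Theorem~\ref{thm:previous} and reproduces only the combinatorial core of the lower bound (Appendix~\ref{sec:appendix}, Lemma~\ref{la:PHI-lb}) and a sketch of the upper bound construction (Section~\ref{sec:ub-sketch}). Your identification of $\sqrt{13}+1$ as the positive root of $x^2 - 2x - 12 = 0$ is correct: in Lemma~\ref{la:PHI-lb} the constant $c=2\log(\sqrt{13}+1)$ is chosen so that $\tsfrac12 - \tsfrac{1}{2^{c/2}} - \tsfrac{1}{2^{c-1}} = \tsfrac{1}{2^{c-2}}$, which upon setting $y = 2^{c/2}$ is exactly $y^2 - 2y - 12 = 0$.

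However, your upper-bound construction as described would not work, and the gap is not cosmetic. You propose to ``restrict the inner disjunction over $w$ at each recursive level to a non-adaptively sampled random subset,'' justified by the concentration of middle-layer path counts near $n^{1/2+o(1)}$. While that concentration estimate is correct, restricting $\bigvee_w g_j(u,w)\wedge g_j(w,v)$ to $w$ in a random subset $S$ of size $\sqrt{n}$ makes the formula miss the middle vertex of the (essentially unique) witness path with probability $1 - 1/\sqrt{n} = 1 - o(1)$, which is the opposite of the guarantee you need. The actual mechanism (Section~\ref{sec:ub-sketch}) does not restrict the search space; it changes what is computed at each level. One samples $m = \wt O(\sqrt{n})$ independent subsets $\mb I_1,\dots,\mb I_m \subseteq [n]$ each of size $\sqrt{n}$ (so the $\mb I_r$ collectively cover $[n]$), and for each pair $(r,s)\in[m]^2$ one recursively maintains $O(\log n)$ subformulas reporting whether there is a \emph{unique} length-$j$ partial path with endpoints in $\mb I_r\times\mb I_s$, and if so the binary encoding of those endpoints. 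The top-level join then iterates over the middle \emph{subset index} $s\in[m]$ (not over middle vertices $w\in[n]$), matching binary encodings of the shared endpoint; this is why each recursion level multiplies size by $\wt O(m)=\wt O(\sqrt{n})$ rather than $n$, yielding $\wt O(\sqrt{n})^{\lceil\log_2 k\rceil} = n^{\frac12\lceil\log_2 k\rceil + o(1)}$. The uniqueness-plus-binary-encoding trick is the essential ingredient that your sketch omits.

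For the lower bound your plan is broadly in the right direction and you sensibly defer to \cite{rossman2018formulas}, but one caution: the conversion from $\ACzero$ (or monotone) formula size to pathset complexity in that framework is not a H{\aa}stad-style switching-lemma argument in the usual sense. As developed in Section~\ref{sec:pathset} of the present paper, it is a direct reduction in which projection and restriction of relations (Lemmas~\ref{la:proj} and \ref{la:rest}) play the structural role, and the density bound of Theorem~\ref{thm:general-chiPhi} ties pathset complexity to the potential $\Phi_\theta(A|S)$; no depth-reduction or decision-tree conversion is invoked. Characterizing it as a switching-lemma argument risks setting you on a technically incorrect path, although since you are citing rather than reproving \cite{rossman2018formulas} this is more a matter of framing than a fatal error.
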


A similar average-case lower bound for (unbounded depth) monotone formulas was subsequently shown in \cite{rossman2015correlation}.  Precisely speaking, that paper gives an $n^{\frac12\tau(P_k) - o(1)}$ lower bound under $\mb X$, as well as an $n^{\tau(P_k) - o(1)}$ lower bound under the distribution that, half of the time, is $\mb X$ and, the other half, is a uniform random path of length $k$ with no additional edges.

\subsubsection{Upper bound}\label{sec:ub-sketch}

The {\em average-case} upper bound of Theorem \ref{thm:previous} can be recast, in stronger terms, as a {\em worst-case randomized} upper bound for the problem of multiplying $k$ $(n\times n)$-permutation matrices $Q_1,\dots,Q_k$. 
This problem is solvable by deterministic (non-randomized) $\ACzero$ formulas of size $n^{\log_2(k) + O(1)}$ via the classic ``recursive doubling'' procedure: recursively compute matrix products $L \defeq Q_1\cdots Q_{\lceil k/2 \rceil}$ and $R \defeq Q_{\lceil k/2 \rceil+1}\cdots Q_k$ and then obtain $Q_1\cdots Q_k = LR$ by a single matrix multiplication.

Randomization lets us achieve quadratically smaller formula size $n^{\frac12\log_2(k) + O(1)}$.  The idea is as follows.  Generate $m \defeq \wt O(\sqrt n)$ independent random sets $\mb I_1,\dots,\mb I_m \subseteq [n]$, each of size $\sqrt n$.  Rather than compute all entries of the permutation matrix $L$ using $n^2$ subformulas, we will encode the information in $L$ more efficiently using $(2\log n + 1)m^2 = \wt O(n)$ subformulas (note that $\log(n!) = O(n\log n)$ bits are required to encode a permutation matrix). 
For each $(r,s) \in [m]^2$, we recursively construct
\begin{itemize}
\item
one subformula that indicates\footnote{When describing the behavior of randomized formulas in this subsection (using verbs like ``indicate'', ``output'', etc.), we leave implicit that the description holds {\em correctly with high probability} for any input.} whether or not there exists a {\em unique} pair $(a,b) \in \mb I_r \times \mb I_s$ such that $L_{a,b} = 1$, and
\item
$2\log n$ additional subformulas that give the binary representation of $a$ and $b$ whenever such $(a,b)$ uniquely exists.
\end{itemize}
Similarly, with respect to the permutation matrix $R$, for each $(s,t) \in [m]^2$, we have $2\log n + 1$ recursively constructed subformulas that indicate whether there exists a {\em unique} pair $(b,c) \in \mb I_s \times \mb I_t$ such that $R_{b,c} = 1$, and if so, give the binary representation of $b$ and $c$.
Using these subformulas for subproblems $L$ and $R$, we construct the corresponding formulas for $Q_1\cdots Q_k$ which, for each $(r,t) \in [m]^2$, indicate whether there exists a {\em unique} pair $(a,c) \in \mb I_r \times \mb I_t$ such that $R_{a,c} = 1$, and if so, give the binary representation of $a$ and $c$.
These formulas check, for each $s \in [m]$, whether the $(r,s)$- and $(s,t)$-subformulas of the $L$- and $R$-subproblems output $(a,b)$ and $(b',c)$, respectively such that $b = b'$.  These formulas are therefore larger than the subformulas for subproblems $L$ and $R$ by a factor $\wt O(m)$.  This implies an upper bound $\wt O(m)^{\lceil\log_2(k)\rceil} = n^{\frac12\log_2(k)+O(1)}$ on size and $O(\log k)$ on depth of the resulting randomized $\ACzero$ formulas.

A similar construction solves $\SUB(P_k)$ in the average-case. This yields an upper bound on $\frac12\log k + O(1)$ on the parameter $\tau(P_k)$, which we initially guessed might be optimal. However, in the course of trying prove a matching lower bound, we were surprised to discover a better upper bound!

\begin{thm}\label{thm:Pkupper}
There exist randomized $\ACzero$ formulas of size $n^{\frac13\log_\varphi(k)+O(1)}$ $(\le n^{0.49\log_2(k) + O(1)})$, where $\varphi = (\sqrt 5 + 1)/2$ is the golden ratio, which compute the product of $k$ permutation matrices.
\end{thm}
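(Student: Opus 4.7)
The plan is to refine the randomized recursive construction sketched in Section~\ref{sec:ub-sketch}. The $n^{\frac{1}{2}\log_2(k)+O(1)}$ bound there arises from a balanced recursion $F(k) \le \wt O(\sqrt n)\cdot F(\lceil k/2\rceil)$, costing $\wt O(\sqrt n)$ per level across $\log_2(k)$ levels. Since $\frac{1}{3}\log_\varphi(k) = \log_{\varphi^3}(k)$, to obtain the improved bound we aim instead for a recursion that costs only $\wt O(n^{1/3})$ per level but decreases $k$ by a factor of just $\varphi$ rather than $2$. Using the identity $\varphi^{-1} + \varphi^{-2} = 1$, this suggests an asymmetric split of the $k$-fold product $Q_1\cdots Q_k$ into two consecutive blocks $L = Q_1\cdots Q_a$ and $R = Q_{a+1}\cdots Q_k$ with $a \approx k/\varphi$ and $k-a \approx k/\varphi^2$, targeting the recursion
\[
F(k) \;\le\; \wt O(n^{1/3}) \cdot \bigl( F(\lceil k/\varphi \rceil) + F(\lfloor k/\varphi^2 \rfloor) \bigr),
\]
which (plugging in the ansatz $F(k)=n^{\alpha\log_\varphi k}$ and solving $n^{-\alpha}+n^{-2\alpha}=n^{-1/3}$ with $\alpha=1/3$) unrolls to $F(k) = n^{\frac{1}{3}\log_\varphi(k)+O(1)}$.

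The key technical difficulty is realizing the $\wt O(n^{1/3})$ per-level cost. With $m = n^{1/3}$ random cells per dimension, each cell $\mb I_r \times \mb I_s$ has size $n^{2/3}\times n^{2/3}$ and contains $\Theta(n^{1/3})$ ones of a permutation matrix on average, so the \emph{unique witness} clause that drove the $m = \sqrt n$ construction of Section~\ref{sec:ub-sketch} fails outright. The proposed workaround is a \emph{two-level} summary: within each outer $n^{2/3}\times n^{2/3}$ cell, maintain a secondary random partition into $n^{1/3}\times n^{1/3}$ inner cells, for which the expected number of ones per inner cell is $O(1)$ and the unique-witness clause is reinstated. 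The combine step for $LR$ then proceeds in two stages: first match outer cells via the column of $L$ and row of $R$ (contributing the $\wt O(n^{1/3})$ per-entry blowup), and then resolve the remaining ambiguity within the matched outer cells by appealing to the inner summaries of $L$ and $R$.

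The main obstacle is designing the two-level summary and the asymmetric combine step so that (i) the per-level $\wt O(n^{1/3})$ bound genuinely holds, (ii) the two-level structure is preserved under composition across the $\log_\varphi(k) = O(\log k)$ recursion levels, and (iii) the resulting summary of $LR$ has the same two-level shape, so that the recursion closes. Once this design is in place, correctness follows by a union bound over the random partitions across the recursion tree (each level contributing failure probability $n^{-\Omega(1)}$ per entry, summing to $o(1)$ overall); the depth remains constant in $n$ for fixed $k$, keeping the construction in $\ACzero$; and reading off individual entries $(Q_1\cdots Q_k)_{i,j}$ from the top-level two-level summary adds only an $\wt O(1)$ factor to the size, matching the claimed bound.
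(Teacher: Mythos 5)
The target bound is right, and you correctly identify the problem ($m=n^{1/3}$ cells of size $n^{2/3}$ contain $\Theta(n^{1/3})$ ones of a permutation, so unique-witness fails), but your proposed fix does not close the recursion at $\wt O(n^{1/3})$ per level, and your route is genuinely different from the paper's.

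The gap is in the combine step. You split $Q_1\cdots Q_k$ into \emph{disjoint} blocks $L=Q_1\cdots Q_a$ and $R=Q_{a+1}\cdots Q_k$. To assemble the summary of $LR$, you must match $L$'s column index with $R$'s row index, both living at position $a$. With your two-level scheme, the space $[n]$ at position $a$ is binned into $n^{1/3}$ outer cells, each refined into $n^{1/3}$ inner cells. When building the entry of the $LR$-summary for a fixed outer cell $(r,t)$ and inner cell $(i,l)$, the intermediate value $b$ at position $a$ is unconstrained, so you must enumerate both its outer cell $s\in[n^{1/3}]$ \emph{and} (to get down to a level where unique-witness actually holds) its inner cell $j\in[n^{1/3}]$. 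That is a $\wt O(n^{2/3})$ blowup per entry, not $\wt O(n^{1/3})$. There is no way around this with disjoint blocks: the ``outer'' resolution $n^{2/3}$ is too coarse to certify a unique witness, and the ``inner'' resolution $n^{1/3}$ is too fine to enumerate cheaply. You cannot get both properties at a single intermediate position.

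The paper's Fibonacci construction avoids this dilemma precisely by making the subproblems \emph{overlap}. For $k=\Fib(\ell)$, the left subproblem covers $Q_1\cdots Q_{\Fib(\ell-1)}$, the right covers $Q_{\Fib(\ell-2)+1}\cdots Q_k$, and they overlap on $Q_{\Fib(\ell-2)+1}\cdots Q_{\Fib(\ell-1)}$. Each subproblem's summary is indexed by a \emph{triple} of outer cell indices $(r,s,t)\in[m]^3$ (at three positions along the path), and records the unique $\vec\pi$-triple in $\mb I_r\times\mb I_s\times\mb I_t$. Uniqueness is restored not by an inner partition, but by the \emph{third} index: the expected number of valid triples in a product of three size-$n^{2/3}$ sets is $n^{2/3}\cdot(n^{-1/3})^2 = 1$. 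Crucially, left and right share \emph{two} constrained positions (both endpoints of the overlap, $\Fib(\ell-2)$ and $\Fib(\ell-1)$); in the combine step, exactly one of them ($\Fib(\ell-2)$) is summed over (cost $\wt O(m)=\wt O(n^{1/3})$), while the other ($\Fib(\ell-1)$) is retained as an index of the combined summary, so no further enumeration is needed. This is the key idea your proposal is missing. See Sections~\ref{sec:ub-sketch} and \ref{sec:Pkupper}, in particular Lemma~\ref{la:formulas} and the Fibonacci pattern $\vec a_{\Fib(\ell)}$ in Section~\ref{sec:Pkupper}, where the constrained positions $0,\,\Fib(\ell-2),\,\Fib(\ell-1),\,\Fib(\ell)$ make the overlap explicit.
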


The algorithm generalizes the randomized ``recursively doubling'' method outlined above.   Here we give a brief sketch (full details are given in Section \ref{sec:Pkupper}).  Let $k = \Fib(\ell)$ where $\ell \ge 3$ (i.e.,\ the $\ell^{\text{th}}$ Fibonacci number, which satisfies $\Fib(\ell)=\Fib(\ell-1)+\Fib(\ell-2)$).  We will represent information about the product $Q_1\cdots Q_k$ by constructing formulas that enumerate all triples $(a,c,d) \in [n]^3$ such that 
\begin{equation}\label{eq:Q}
  (Q_1\cdots Q_{\Fib(\ell-1)})_{a,c} = (Q_{\Fib(\ell-1)+1}\cdots Q_k)_{c,d} = 1.
\end{equation}
This is accomplished by generating $m \defeq \wt O(n^{1/3})$ independent random sets $\mb I_1,\dots,\mb I_m$, each of size $n^{2/3}$, and recording the {\em unique} triples $(a,c,d) \in \mb I_r \times \mb I_t \times \mb I_u$ for which (\ref{eq:Q}) holds.

The recursive construction breaks into a ``left'' subproblem on $(Q_1,\dots,Q_{\Fib(\ell-1)})$ and a ``right'' subproblem on $(Q_{\Fib(\ell-2)+1},\dots,Q_k)$.\footnote{The ``right'' subproblem on $(Q_{\Fib(\ell-2)+1},\dots,Q_k)$ can also be viewed as a ``left'' subproblem on $(P_1,\dots,P_{\Fib(\ell-1)})$ where $P_i$ is the transpose of $Q_{k-i+1}$.}  (In contrast to the ``recursive doubling'' method, here the ``left'' and ``right'' subproblems involve overlapping subsequences of permutation matrices.)
In the ``left'' subproblem: for each $(r,s,t) \in [m]^3$, we have
\begin{itemize}
\item
$3\log n+1$ subformulas that indicate whether there exists a {\em unique} triple $(a,b,c) \in \mb I_r \times \mb I_s \times \mb I_t$ such that $(Q_1\cdots Q_{\Fib(\ell-2)})_{a,b} = (Q_{\Fib(\ell-2)+1}\cdots Q_{\Fib(\ell-1)})_{b,c} = 1$, and if so, give the binary representation of $a,b,c$.
\end{itemize}
In the ``right'' subproblem: for each $(r,s,t) \in [m]^3$, we have
\begin{itemize}
\item
$3\log n+1$ subformulas that indicate whether there exists a {\em unique} triple $(b,c,d) \in \mb I_s \times \mb I_t \times \mb I_u$ such that $(Q_{\Fib(\ell-2)+1}\cdots Q_{\Fib(\ell-1)})_{b,c} = (Q_{\Fib(\ell-1)+1}\cdots Q_k)_{c,d} = 1$, and if so, give the binary representation of $b,c,d$.
\end{itemize}
The subformulas in the ``left'' and ``right'' subproblems may be combined to produce the analogous (left-handed) formulas for the original input $(Q_1,\dots,Q_k)$: for each $(r,t,u) \in [m]^3$, we construct
\begin{itemize}
\item
$3\log n+1$ subformulas that indicate whether there exists a {\em unique} triple $(a,c,d) \in \mb I_r \times \mb I_t \times \mb I_u$ such that (\ref{eq:Q}) holds,
and if so, give the binary representation of $a,c,d$.
\end{itemize}
These formulas check, for each $s \in [m]$, whether the $(r,s,t)$- and $(s,t,u)$-subformulas in the ``left'' and ``right'' subproblems output triples $(a,b,c)$ and $(b',c',d)$, respectively, such that $b=b'$ and $c=c'$. 
These formulas are therefore larger than the subformulas in the ``left'' and ``right'' subproblems by a factor $\wt O(m)$.  Taking $k = \Fib(3) = 2$ as our base case with formula size $n^{O(1)}$, this gives an upper bound $\wt O(m)^{\ell-3} \cdot n^{O(1)} = n^{\frac13\log_\varphi(k)+O(1)}$ for all $k = \Fib(\ell)$ (which extends as well to non-Fibonacci numbers $k$).\bigskip

In Section \ref{sec:Pkupper}, we introduce a broad class of randomized algorithms (based on a simplification of the pathset complexity measure) that generalize both the ``recursive doubling'' and ``Fibonacci overlapping'' algorithms outlined above.  We also discuss reasons, including experimental data, which suggest that $n^{\frac13\log_\varphi(k)+O(1)}$ might in fact be the asymptotically {\em tight} bound on the randomized formula size of multiplying $k$ permutations.

\subsubsection{Lower bound}

The final result of this paper improves the 
$\tau(P_k) \ge \frac12\log_{\sqrt{13}+1}(k)$ \tu($\ge 0.22\log_2(k)$\tu) lower bound of Theorem~\ref{thm:previous}.

\begin{thm}\label{thm:Pklower}
$\tau(P_k) \ge \log_{\sqrt 5 + 5}(k) - 1$ \tu($\ge 0.35\log_2(k) - 1$\tu)
\end{thm}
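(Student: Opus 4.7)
The plan is to work within the pathset framework of \cite{rossman2018formulas}, which reduces a lower bound on $\tau(P_k)$ to a combinatorial statement about join-trees over $P_k$: for every rooted binary tree whose leaves are labeled by the edges of $P_k$, a certain potential function on the internal nodes must be bounded below by $\log_{\sqrt 5 + 5}(k) - 1$. The starting point is thus to recast Theorem~\ref{thm:Pklower} in this purely combinatorial language, paralleling the derivation of the $\tau(P_k) \ge \frac12\log_{\sqrt{13}+1}(k)$ bound from Theorem~\ref{thm:previous}.

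The next step is to refine the potential function. The argument of \cite{rossman2018formulas} essentially tracks a single parameter per subtree (the length of the covered subpath), and the base $\sqrt{13}+1$ comes from solving a one-step recurrence of the form $\Phi(k) \ge \Phi(k_1) + \Phi(k_2) - O(1)$ under a constraint that rules out trivially unbalanced splits. The improvement to $\sqrt 5 + 5$ should come from a two-parameter potential that records not only the length of the current subpath but also how much of it is already ``shared'' with sibling subtrees higher up in the join-tree. The quantity $\sqrt 5 + 5$ appears naturally as the largest root of $x^2 - 10x + 20 = 0$, which I expect to emerge from unfolding the recurrence for two levels instead of one, exploiting the finer accounting made possible by this second parameter.

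With the refined potential in hand, I would perform a case analysis at each internal node of the join-tree joining two subtrees of ``effective'' lengths $k_1$ and $k_2$. The balanced regime $k_1 \approx k_2$ is handled essentially as before. The unbalanced regime, where one subtree covers almost all of the path, is the pressure point: the potential of the long side is nearly that of the parent, so all growth has to come from the short side and from the ``overlap'' parameter. Carrying out this case analysis and showing that the two-step inequality is tight at exactly the base $\sqrt 5 + 5$ is the heart of the argument.

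The main obstacle is precisely this unbalanced case. Intuitively, the matching upper bound of Theorem~\ref{thm:Pkupper} uses a ``Fibonacci overlapping'' strategy in which consecutive subproblems share endpoints; any tight lower bound must rule out \emph{all} such overlap patterns, and not only the non-overlapping recursive-doubling pattern that was essentially all the prior analysis had to address. Getting the overlap bookkeeping right -- so that the two-parameter potential really does satisfy the two-step recurrence leading to $x^2 = 10x - 20$ -- is where I expect most of the technical work to sit, and is what separates this result from the logarithmic-base improvement over \cite{rossman2018formulas} rather than a matching of the upper bound.
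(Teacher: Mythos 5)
Your overall direction is right: the paper does exactly what you outline at the strategic level --- it refines the potential function on join-trees to a two-parameter quantity, re-derives the connection to $\tau$, and then carries out a case analysis in which the base $\sqrt 5 + 5$ emerges from a tight balance. You also correctly spot that $\sqrt 5 + 5$ is the larger root of $x^2-10x+20=0$. But there are several places where what you describe diverges from what actually makes the argument work, and these are more than bookkeeping.

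First, the second parameter. You describe it as tracking ``how much of the current subpath is already shared with sibling subtrees,'' i.e.\ overlap. The paper's $\Phi_\theta(A\mid S)$ is different: $S\subseteq V(G)$ is a set of vertices that have been \emph{deleted} (restricted away), not a record of overlap. The crucial gain comes from the inequality $(\gad)$, which in the old formulation forced you to pass to the \emph{shrunken} join-tree $D\ominus C$; with the extra parameter one can keep $D$ intact and simply fold ``$\ominus\,C$'' into $S$. This is what preserves information during the recursion, and it's a genuinely different bookkeeping mechanism from the one you propose. If you try to define a potential keyed to ``overlap length'' you will find the $(\ddag)$ (square-root) step does not compose cleanly.

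Second, where $\sqrt5+5$ comes from. The paper does not unfold the recurrence two levels. Instead it proves, by induction over join-trees and backward induction over $S$, a statement of the form $\Phi(A\mid S)\ge\log_c(\alpha k)$ for a component $K$ of length $k$, where $\alpha$ is $\eps\delta$, $\delta$, or $1$ according to whether $K$ is open, half-open, or closed, with $c=\sqrt5+5$, $\delta=(c-3)/c$, $\eps=\tfrac12$. The constant is determined by making the tight sub-cases across all three component types consistent simultaneously: the open cases pin $\eps=\tfrac12$, the half-open $t(A)\in\{1,2\}$ cases pin $\delta=(c-3)/c$, and the closed case (via $(\ddag)$ applied to a connected $D\prec A$ of ``intermediate'' length, together with the constraint $x\ge 1/(c\delta)^2$ from its companion sub-case) forces $c\delta - 1/(c\delta)=4$, hence $c\delta=\sqrt5+2$ and $c=\sqrt5+5$. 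Your two-level-unrolling heuristic happens to land on the same polynomial because $c\delta=c-3$, but it isn't the mechanism, and without the three-way open/half-open/closed typing the induction hypothesis is too weak to close.

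Third, and most substantively: you defer entirely the proof that the refined potential still lower bounds $\tau(G)$. This is Theorem~\ref{thm:tau2} in the paper, proved via Theorem~\ref{thm:general-chiPhi}, and it requires re-running the pathset-complexity argument (Projection and Restriction Lemmas, plus the three cases of the recursion) against the new inequalities $(\dag)$, $(\gad)$, $(\ddag)$ with the extra parameter $S$. Without this, even a correct combinatorial estimate on $\Phi(A\mid\emptyset)$ would not yield a bound on $\tau(P_k)$. So the proposal, as written, identifies the right ingredients but would not assemble into a proof without (a) the correct definition of $\Phi(A\mid S)$, (b) the theorem tying it to $\tau$, and (c) the open/half-open/closed case structure that actually determines the constant.
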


More significant than the quantitative improvement we obtain in Theorem \ref{thm:Pklower} is the fact that our proof further develops {\em pathset framework} by introducing a new potential function that gives stronger lower bounds on $\tau(G)$.  This development and the proof of Theorem \ref{thm:Pklower} are presented in detail in Sections \ref{sec:Phi}, \ref{sec:pathset} and \ref{sec:Pk}.  

Since $\frac13\log_\varphi(k) = \log_{\sqrt 5 + 2}(k)$, our upper and lower bounds are off by exactly $3$ in the base of the logarithm.  It would be very interesting to completely close this gap.

\subsection{Related work}

There have been several papers, including \cite{cygan2016tight,krauthgamer2017conditional,marx2010can}, which give conditional lower bounds (under ETH and other assumptions) on the {\em circuit size} of $\SUB(G)$ and its uncolored variant.
We are not aware of any conditional hardness results for the {\em formula size} of $\SUB(G)$.
It would be interesting to show that $\SUB(G)$ requires (unrestricted) formula size $n^{\Omega(\td(G))}$ under a natural assumption.

\section{Preliminaries}\label{sec:prelims}

For a natural number $n$, $[n]$ denotes the set $\{1,\dots,n\}$.
For simplicity of presentation, we occasionally omit floors and ceilings, e.g.,\ treating quantities like $\sqrt n$ as natural numbers).  This is always without loss of parameters in our results.
When no base is indicated, $\log(\cdot)$ denotes the base-2 logarithm.

\subsection{Graphs}

In this paper, {\em graphs} are simple graphs, i.e., pairs $G = (V(G),E(G))$ where $V(G)$ is a set and $E(G)$ is a subset of $\binom{V(G)}{2}$ (the set of unordered pairs $\{v,w\}$ where $v,w$ are distinct elements of $V(G)$).  Unless explicitly stated otherwise, graphs are assumed to be locally finite (i.e.,\ every vertex has finite degree) and without isolated vertices (i.e.,\ $V(G) = \bigcup_{e \in E(G)} e$).  For a vertex $v \in V(G)$, $\deg_G(v)$ or simply $\deg(v)$ denotes the degree of $v$ in $G$.

We regard $G$ as a fixed (possibly infinite) ``pattern'' graph. $F$ shall consistently denote a finite subgraph of $G$. We write $\subseteq$ for the subgraph relation and $\subset$ (or sometimes $\subsetneqq$) for the proper subgraph relation.  If $F$ is a subgraph of $G$, then $G \setminus F$ denotes the graph with edge set $E(G) \setminus E(F)$ (and no isolated vertices).

Two important graphs in this paper are paths and complete binary trees.  $P_k$ denotes the path graph of length $k$ (with $k+1$ vertices and $k$ edges).  $T_k$ denotes the complete binary tree of height $k$ (with $2^{k+1}-1$ vertices and $2^{k+1}-2$ edges).
We also consider infinite versions of these graphs.  $P_\infty$ is the path graph with vertex set $\Z$ and edge set $\{(i,i+1) : i \in \Z\}$.  $T_\infty$ is the union $\bigcup_{k = 1}^\infty T_k$ under the nesting $T_1 \subset T_2 \subset T_3 \subset \cdots$ where $\mr{Leaves}(T_1) \subset \mr{Leaves}(T_2) \subset \mr{Leaves}(T_3) \subset \cdots$.  Thus, $T_\infty$ is an infinite, rootless, layered binary tree, with leaves in layer $0$, their parents in level $1$, etc.

We use terms {\em graph invariant} and {\em graph parameter} interchangeably in reference to real-valued functions on graphs that are invariant under isomorphism.

\subsection{Threshold weightings}

We describe a family of edge-weightings on graphs $G$, which in the case of finite graphs correspond to product distributions that are balanced with respect to the problem $\SUB(G)$.  (Definitions in this section are adapted from \cite{li2017ac}.)

\begin{df}
For any graph $G$ and function $\theta : E(G) \to \R$, we denote by $\Delta_\theta : \{$finite subgraphs of $G\} \to \R$ the function
\[
  \Delta_\theta(F) \defeq |V(F)| - \sum_{e \in E(F)} \theta(e).
\]
\end{df}

\begin{df}
A {\em threshold weighting} for a graph $G$ is a function $\theta : E(G) \to [0,2]$ such that $\Delta_\theta(F) \ge 0$ for all finite subgraphs $F \subseteq G$; if $G$ is finite, we additionally require that $\Delta_\theta(G) = 0$.

We refer to the pair $(G,\theta)$ as a {\em threshold-weighted graph}.  When $\theta$ is fixed, we will at times simply write $\Delta(\cdot)$ instead of $\Delta_\theta(\cdot)$.
\end{df}


\begin{df}
A {\em Markov chain} on a graph $G$ is a matrix $[0,1]^{V(G) \times V(G)}$ that satisfies 
\begin{itemize}
\item
$\sum_{w \in V(G)} M_{v,w} = 1$ for all $v \in V(G)$ and 
\item
$M_{v,w} > 0 \,\Longrightarrow\, \{v,w\} \in E(G)$ for all $v,w \in V(G)$.
\end{itemize}
\end{df}

\begin{la}\label{la:markov}
Every Markov chain $M$ on $G$ induces a threshold weighting $\theta$ on $G$ defined by 
\[
  \theta(\{v,w\}) \defeq M_{v,w} + M_{w,v}.
\]
This threshold weighting satisfies
\[
  \Delta_\theta(F) = \sum_{v \in V(F)}\, \sum_{w \in V(G) \,:\, \{v,w\} \notin E(F)} M_{v,w}.
\]
\end{la}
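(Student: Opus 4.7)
The plan is straightforward computation in two parts: first verify that $\theta$ falls in the allowed range and then establish the identity, from which non-negativity of $\Delta_\theta$ is immediate.

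First, for each edge $\{v,w\} \in E(G)$, since $M_{v,w},M_{w,v} \in [0,1]$, the weight $\theta(\{v,w\}) = M_{v,w}+M_{w,v}$ lies in $[0,2]$, so $\theta$ has the right codomain to be a threshold weighting.

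The main step is to rewrite $\sum_{e \in E(F)} \theta(e)$ as an ordered sum. Each unordered edge $\{v,w\} \in E(F)$ contributes exactly $M_{v,w}+M_{w,v}$, which equals the total mass $M_{v,w}$ summed over ordered pairs $(v,w)$ with $\{v,w\} \in E(F)$. Thus
\[
  \sum_{e \in E(F)} \theta(e) \;=\; \sum_{v \in V(F)} \;\sum_{w \in V(G) \,:\, \{v,w\} \in E(F)} M_{v,w}.
\]
Meanwhile, I would use the defining property of a Markov chain, $\sum_{w \in V(G)} M_{v,w} = 1$, to rewrite
\[
  |V(F)| \;=\; \sum_{v \in V(F)} 1 \;=\; \sum_{v \in V(F)} \sum_{w \in V(G)} M_{v,w}.
\]
Subtracting the first display from the second yields precisely the claimed formula
\[
  \Delta_\theta(F) \;=\; \sum_{v \in V(F)} \;\sum_{w \in V(G) \,:\, \{v,w\} \notin E(F)} M_{v,w}.
\]

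This immediately implies $\Delta_\theta(F) \ge 0$ since every summand is non-negative, so $\theta$ is indeed a threshold weighting on $G$. In case $G$ is finite, I would observe additionally that $\Delta_\theta(G) = 0$: applying the formula with $F = G$ forces $\{v,w\} \notin E(G)$ in the inner sum, but by the second defining property of a Markov chain, $M_{v,w} > 0$ requires $\{v,w\} \in E(G)$, so every surviving term vanishes. There is no real obstacle here; the only thing to be careful about is the double-counting bookkeeping when passing between the unordered edge sum and the ordered sum over $(v,w)$ pairs, which is handled cleanly by the definition $\theta(\{v,w\}) = M_{v,w}+M_{w,v}$.
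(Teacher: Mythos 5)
The paper states Lemma~\ref{la:markov} without proof, so there is no ``paper's own proof'' to compare against; it is one of the lemmas the authors treat as immediate. Your computation is correct and is surely the argument they had in mind: you correctly convert the unordered sum $\sum_{e \in E(F)} \theta(e)$ into the ordered double sum $\sum_{v \in V(F)}\sum_{w : \{v,w\}\in E(F)} M_{v,w}$ (using that both endpoints of any $e \in E(F)$ lie in $V(F)$), expand $|V(F)|$ via the stochasticity condition $\sum_w M_{v,w}=1$, and subtract. The only minor remark: in the boundary term $w=v$, note that $M_{v,v}=0$ automatically because $\{v,v\}$ is not an edge of the (simple) graph $G$ and the Markov-chain condition forces $M_{v,w}>0 \Rightarrow \{v,w\}\in E(G)$; this is implicitly used in both of your double sums, but it is harmless and worth a word. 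Your verification of $\Delta_\theta(G)=0$ in the finite case by the same support condition is also exactly right.
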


We remark that this lemma has a converse (shown in \cite{rosenthal2019}): {\em Every threshold weighting on $G$ is induced by a (not necessarily unique) Markov chain on $G$.}  Lemma \ref{la:markov} also gives us a way to define {\em threshold weightings} when $G$ is an infinite graph; this will be useful later on.

\begin{ex}\label{ex:theta}
Let $M$ be the transition matrix of the uniform random walk on $T_k$ where $k \ge 2$. That is,
\[
  M_{v,w} \defeq \begin{cases}
  1/\mr{deg}(v) &\text{if } \{v,w\} \in E(T_k),\\
  0 &\text{otherwise.}
  \end{cases}
\]
For the associated threshold weighting $\theta : E(T_k) \to [0,2]$, we have 
\[
  \theta(e) = \begin{cases}
    4/3 
    &\text{if $e$ contains a leaf},\\
    5/6 
    &\text{if $e$ contains the root},\\
    2/3 
    &\text{otherwise.}
  \end{cases}
\]
A key property of this $\theta$ that we will use later on (Lemma \ref{la:partial}) is that 
\[
  \Delta_\theta(F) \ge \frac{|V(F) \cap V(T_k \setminus F)|}{3}
\]
(that is, $\Delta_\theta(F)$ is at least one-third the size of the boundary of $F$)
for all graphs $F \subseteq T_k$.  This is a straightforward consequence of Lemma \ref{la:markov}, which is also true in the infinite tree $T_\infty$.
\end{ex}

\begin{ex}\label{ex:theta2}
Let $P_k$ be the path of length $k$ (with $k+1$ vertices and $k$ edges).  The constant function $\theta \equiv 1 + \frac{1}{k}$ is a threshold weighting for $P_k$.  (This is different from the threshold function induced by the uniform random walk on $P_k$, which has value $1/2$ on the two outer edges of $P_k$ and value $1$ on the inner edges.)

This example again makes sense for $k = \infty$.  The constant function $E(P_\infty) \mapsto \{1\}$ is a threshold weighting for $P_\infty$.  This threshold function has the nice property that 
\[
  \Delta(F) = |V(F)| - |E(F)| = \#\{\text{connected components of }F\}
\] 
for all finite subgraphs $F \subset P_\infty$.
\end{ex}

\begin{df}
\label{df:X}
Let $G$ be a finite graph, let $\theta$ be a threshold weigting on $G$, and let $n$ be a positive integer. We denote by $\mb X_{\theta,n}$ be the random $V(G)$-colored graph (i.e.,\ 
input distribution to $\SUB(G)$) with vertex set $V(G) \times [n]$, vertex-coloring $(v,i) \mapsto v$,
and random edge relation given by
\[
  \Pr[\, \{(v,i),(w,j)\} \text{ is an edge of }\mb X_{\theta,n} \,] = 
  \begin{cases}
  1/n^{\theta(\{v,w\})} &\text{if }\{v,w\} \in E(G),\\
  0 &\text{otherwise,}
  \end{cases}
\]
independently for all $\{(v,i),(w,j)\} \in \binom{V(G) \times [n]}{2}$. 
\end{df}

\begin{la}[\cite{li2017ac}]
The probability that $\mb X_{\theta,n}$ is a YES-instance of $\SUB(G)$ is bounded away from $0$ and $1$.
\end{la}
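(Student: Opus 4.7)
The plan is to analyze $N$, the (random) number of properly colored copies of $G$ in $\mb X_{\theta,n}$, and deduce both halves of the lemma from a first/second-moment argument combined with the FKG inequality. A properly colored copy is parametrized by a tuple $\mb i = (i_v)_{v \in V(G)} \in [n]^{V(G)}$, and by independence of edges its occurrence probability equals $\prod_{e \in E(G)} n^{-\theta(e)} = n^{-|V(G)|}$ (the exponent simplifies to $|V(G)|$ precisely because the threshold-weighting definition forces $\Delta_\theta(G) = 0$ for finite $G$). Summing over the $n^{|V(G)|}$ tuples gives $\Ex[N] = 1$.

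For the ``bounded away from $0$'' direction I would compute $\Ex[N^2]$ by partitioning ordered pairs $(\mb i,\mb j)$ of tuples according to $W = \{v : i_v = j_v\}$. The two copies share precisely the edges of the induced subgraph $G[W]$, so by independence on the remaining edges the joint occurrence probability is $n^{-2|V(G)|+\sum_{e \in E(G[W])}\theta(e)}$; the number of ordered pairs with agreement set $W$ is at most $n^{2|V(G)|-|W|}$. Hence
\[
  \Ex[N^2] \;\le\; \sum_{W \subseteq V(G)} n^{-\Delta_\theta(G[W])} \;\le\; 2^{|V(G)|},
\]
where the last step uses the defining threshold-weighting property $\Delta_\theta(F) \ge 0$ applied to each finite subgraph $F = G[W] \subseteq G$. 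The Paley--Zygmund inequality then yields $\Pr[N \ge 1] \ge \Ex[N]^2/\Ex[N^2] \ge 2^{-|V(G)|}$, a positive constant depending only on $G$.

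For the ``bounded away from $1$'' direction I would appeal to FKG in the independent-edge product space. The event ``copy $\mb i$ is absent'' is a decreasing function of the random edge set, so these $n^{|V(G)|}$ events are positively correlated, giving
\[
  \Pr[N = 0] \;\ge\; \prod_{\mb i} \Pr[\,\text{copy $\mb i$ absent}\,] \;=\; \bigl(1 - n^{-|V(G)|}\bigr)^{n^{|V(G)|}},
\]
which converges to $1/e$ and is bounded below by an absolute constant for all $n \ge 2$. The main obstacle is the $\Ex[N^2] = O(1)$ bound, which requires uniform control of the cross-terms across every induced subgraph $G[W]$; the threshold-weighting axiom $\Delta_\theta(F) \ge 0$ is tailored exactly to make this uniform bound work. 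Once $\Ex[N] = 1$ and $\Ex[N^2] = O(1)$ are in hand, both halves of the lemma are routine applications of Paley--Zygmund and FKG.
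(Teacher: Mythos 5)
Your proof is correct, and it follows the standard two-pronged argument one expects for a ``balanced'' random model of this kind (the paper itself only cites \cite{li2017ac} for this lemma without reproducing a proof).

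A few small remarks on the execution. Your first-moment calculation correctly uses the normalization $\Delta_\theta(G)=0$ (forced by the definition of a threshold weighting on a finite graph) to get $\Ex[N]=1$. In the second-moment bound, when you write $n^{-\Delta_\theta(G[W])}$ for the contribution of agreement set $W$, you are implicitly taking $\Delta_\theta(G[W]) = |W| - \sum_{e\in E(G[W])}\theta(e)$, i.e.\ counting isolated vertices of $G[W]$ toward $|V(\cdot)|$. The paper's convention excludes isolated vertices, but this only makes the exponent more negative (each isolated vertex adds $1$ to $\Delta_\theta$), so the inequality $\Delta_\theta(G[W])\ge 0$ and hence the bound $\Ex[N^2]\le 2^{|V(G)|}$ still holds — worth stating explicitly so a reader doesn't trip on the convention mismatch. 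The Paley--Zygmund step then gives $\Pr[N\ge 1]\ge 2^{-|V(G)|}$, a constant depending only on $G$ as required. For the other direction, the Harris/FKG argument for the intersection of the decreasing events ``copy $\mb i$ absent'' is exactly right, and $(1-n^{-|V(G)|})^{n^{|V(G)|}}\ge 1/4$ whenever $n^{|V(G)|}\ge 2$, which holds for all $n\ge 2$ since $G$ has at least one edge. So both bounds are in place and the argument is complete.
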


The lower bounds of Theorem \ref{thm:LRR} and \ref{thm:main-lb} are in fact average-case lower bounds for $\SUB(G)$ under $\mb X_{\theta,n}$ for arbitrary threshold weightings $\theta$.
Parameters $\kappa(G)$ and $\tau(G)$ are obtained by taking the optimal choice of threshold weighting $\theta$, as we describe in the next subsection.

\subsection{Join-trees and parameters $\kappa(G)$ and $\tau(G)$}\label{sec:join-trees}

Parameters $\kappa(G)$ and $\tau(G)$ are defined in terms of a notion called {\em join-trees} for subgraphs of $G$.  A join-tree is simply a ``formula'' computing a subgraph of $G$, starting from individual edges, where union ($\cup$) is the only operation.

\begin{df}
A {\em join-tree over $G$} is a finite rooted binary tree $A$ together with a labeling $\lab_A : \mr{Leaves}(A) \to E(G) \cup \{\bot\}$ (which may also be viewed as a partial function $\mr{Leaves} \rightharpoonup E(G)$).  We reserve symbols $A,B,C,D,E$ for join-trees.  ($F$ will always denote a subgraph of $G$.)

The {\em graph} of $A$, denoted $\Gr{A}$, is the subgraph of $G$ with edge set $E(G) \cap \mr{Range}(\lab_A)$.  (Note that $\Gr{A}$ is always finite.)
As a matter of notation, we write $E(A)$ for $E(\Gr{A})$ and $V(A)$ for $V(\Gr{A})$.  We also write $\Delta_\theta(A)$ for $\Delta_\theta(\Gr{A})$ where $\theta$ is a threshold weighting on $G$.  

We write $\sq{}$ for the single-node join-tree labeled by $\bot$.  For $e \in E(G)$, we write $\sq{e}$ for the single-node join-tree labeled by $e$.  For join-trees $B$ and $C$, we write $\sq{B,C}$ for the join-tree consisting a root with $B$ and $C$ as children (with the inherited labels, i.e.,\ $\lab_{\sq{B,C}} = \lab_B \cup \lab_C$).  Note that $\Gr{\sq{B,C}} = \Gr{B} \cup \Gr{C}$.

Every join-tree $A$ is clearly either $\sq{}$, or $\sq{e}$ where $e \in E(G)$, or $\sq{B,C}$ where $B,C$ are join-trees.
In the first two cases, we say that $A$ is {\em atomic}; in the third case, we say that $A$ is {\em non-atomic}.

We say that $B$ is a {\em child} of $A$ if $A \in \{\sq{B,C},\sq{C,B}\}$ for some $C$.  We say that $D$ is a {\em sub-join-tree} of $A$ (denoted $D \preceq A$) if $D = \sq{}$ or $D = A$ or $D$ is a sub-join-tree of a child of $A$.  We say that $D$ is a {\em proper sub-join-tree} (denoted $D \prec A$) if $D \preceq A$ and $D \ne A$.
\end{df}

We are now able to define the invariant $\kappa(G)$ in Theorem \ref{thm:LRR}, which lower bounds the restricted circuit size of $\SUB(G)$.  (In fact, $\kappa(G)$ also provides a nearly tight upper bound on the average-case $\ACzero$ circuit size of $\SUB(G)$ \cite{rosenthal2019}.)

\begin{df}[The invariant $\kappa(G)$ of Theorem \ref{thm:LRR}]\label{df:kappa}
For finite graphs $G$, let
\[
  \kappa(G) \defeq 
  \max_{\text{threshold weightings $\theta$ for $G$}}\  
  \min_{\text{join-trees $A$ with graph $G$}}\ \max_{B \preceq A}\ \Delta_\theta(B).
\]
\end{df}

The invariant of $\tau(G)$ of Theorem \ref{thm:main-lb} is significantly more complicated to define.  We postpone the definition to Section \ref{sec:pathset} and, in the meantime, focus on a simpler ``potential function'' on join-trees, denoted $\Phi_\theta(A)$, which we use to lower bound $\tau(G)$.
In order to state the definition of $\Phi_\theta(A)$, we require the following operation $\ominus$ (``restriction away from'') on graphs and join-trees.

\begin{df}[The operation $\ominus$ on graphs and join-trees]
For $F \subseteq G$ and a subset $S \subseteq V(G)$, we denote by $F \ominus S$ the graph consisting of the connected components of $F$ that are vertex-disjoint from $S$.

For a join-tree $A$, we denote $A \ominus S$ the join-tree with the same rooted tree structure as $A$ and leaf labeling function
\[
  \lab_{A \ominus S}(l)
  &\defeq
  \begin{cases}
    \lab_A(l) &\text{if } \lab_A(l) \in E(\Gr{A} \ominus S),\\
    \bot &\text{otherwise.}
  \end{cases}
\]
That is, $A \ominus S$ deletes all labels except for edges in $\Gr{A} \ominus S$. Note that $\Gr{A \ominus S} = \Gr{A} \ominus S$.

As a matter of notation, if $B$ is another join-tree, we write $A \ominus B$ for $A \ominus V(B)$ and $A \ominus (S \cup B)$ for $A \ominus (S \cup V(B))$.  
\end{df}

\begin{figure}[H]
\centering
\includegraphics[scale=.6]{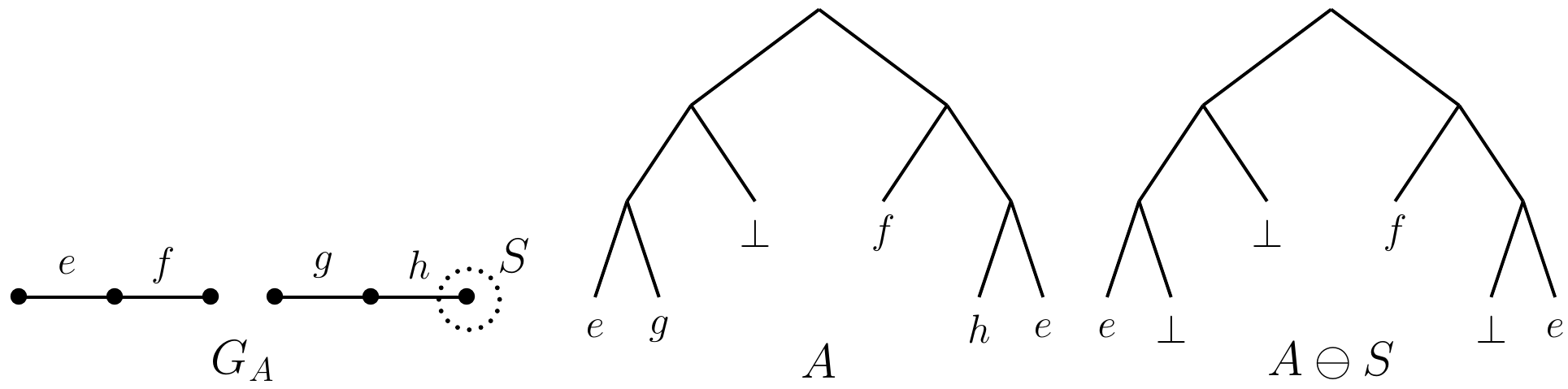}
\caption{\small An example where $A$ is a join-tree whose graph $\Gr{A}$ consists of two paths of length $2$ with edges $e,f,g,h$.  $S$ is the set containing just the external endpoint of $h$.  The join-tree $A \ominus S$ is depicted to the right.}
\end{figure}

\begin{df}[The potential function $\Phi_\theta$ on join-trees]\label{df:Phi}
Fix a threshold weighting $\theta$ on a graph $G$. The potential function $\Phi_\theta : \{$join-trees over $G\} \to \R_{\ge 0}$ is the unique pointwise minimum function satisfying the following inequalities for all join-trees $A,B,C,D$:
\begin{align}
\tag{$\dag$}
  \Phi_\theta(A) &\ge \Phi_\theta(D) + \Delta_\theta(C \ominus D) + \Delta_\theta(A \ominus (C \cup D))
  &&\text{if $A \in \{\sq{B,C},\sq{C,B}\}$ and $D \preceq B$},\\
\tag{$\ddag$}
  \Phi_\theta(A) &\ge \frac12\Big(\Phi_\theta(D) + \Phi_\theta(E \ominus D) + \Delta_\theta(A) + \Delta_\theta(A \ominus (D \cup E))\Big)
  &&\text{if $D,E \prec A$}.
\end{align}
Alternatively, $\Phi_\theta(A)$ has the following recursive characterization:
\begin{itemize}
\item
If $A$ is an atomic join-tree, then
\[
  \Phi_\theta(A) \defeq \Delta_\theta(A) =
  \begin{cases}
  0 &\text{if } A = \sq{},\\
  2 - \theta(e) &\text{if $A = \sq{e}$ where $e \in E(G)$}.
  \end{cases}
\]
(Obs: In the case $A = \sq{e}$, the constraint $\Phi_\theta(A) \ge \Delta_\theta(A)$ is forced by ($\ddag$) where $B = C = \sq{}$.)
\item
If $A = \sq{B,C}$, then
\[
  \Phi_\theta(A) \defeq 
  \max
  \left\{
  \begin{aligned}
  &\max_{\  D \preceq B \ }\ \Phi_\theta(D) + \Delta_\theta(C \ominus D) + \Delta_\theta(A \ominus (C \cup D)),\\
  &\max_{\  D \preceq C\ }\ \Phi_\theta(D) + \Delta_\theta(B \ominus D) + \Delta_\theta(A \ominus (B \cup D)),\\
  &\max_{D,E \prec A}\, 
  \smash{\frac12}\Big(\Phi_\theta(D) + \Phi_\theta(E \ominus D) + \Delta_\theta(A) + \Delta_\theta(A \ominus (D \cup E))\Big)
  \end{aligned}
  \right\}.
\]
\end{itemize}
That is, at least one of inequalities ($\dag$) or ($\ddag$) is tight for each join-tree $A$.
\end{df}

This definition, although opaque at first, will be clarified later (in Sections \ref{sec:Phi} and \ref{sec:pathset}).  The key property of $\Phi_\theta(A)$ is that it provides a lower bound the invariant $\tau(G)$, which in turn provides a lower bound on the restricted formula complexity of $\SUB(G)$.

\begin{thm}[\cite{RossmanICM}]\label{thm:tau}
The invariant $\tau(G)$ of Theorem \ref{thm:main-lb} satisfies
\[
  \tau(G) \ge 
  \max_{\textup{threshold weightings $\theta$ for $G$}}\  
  \min_{\textup{join-trees $A$ with graph $G$}}\ \Phi_\theta(A).
\]
\end{thm}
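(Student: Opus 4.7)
My plan is to use the minimality characterization built into Definition \ref{df:Phi}: since $\Phi_\theta$ is the pointwise minimum function satisfying both $(\dag)$ and $(\ddag)$, any non-negative function $\Psi_\theta$ on join-trees that also satisfies these two inequalities automatically majorizes $\Phi_\theta$. The theorem will follow once I exhibit such a $\Psi_\theta$ that is upper-bounded, over all join-trees $A$ with $\Gr{A} = G$, by $\tau(G)$. The natural choice is $\Psi_\theta(A) \defeq \log_n \mu(\P{A})$, the logarithmic pathset complexity of the pathset canonically computed by $A$; this is exactly the quantity in terms of which $\tau(G)$ is defined in Section \ref{sec:pathset}. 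Once $\Psi_\theta$ is shown to satisfy $(\dag)$ and $(\ddag)$, taking $\min$ over $A$ and $\max$ over $\theta$ on both sides of the pointwise inequality $\Phi_\theta(A) \le \Psi_\theta(A)$ yields $\max_\theta \min_A \Phi_\theta(A) \le \tau(G)$.

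The base cases $A = \sq{}$ and $A = \sq{e}$ are immediate from the normalization of pathset complexity on single edge-relations: $\mu(\P{\sq{e}}) = n^{2-\theta(e)}$ and $\mu(\P{\sq{}}) = 1$, matching $\Phi_\theta$ exactly. For inequality $(\dag)$ with $A = \sq{B,C}$ and $D \preceq B$, I would write $\P{A}$ as the pathset-theoretic join of $\P{D}$ with the pathset obtained by freely completing the vertices of $A$ not covered by $V(D)$: the components newly introduced by $C \ominus D$ contribute an $n^{\Delta_\theta(C \ominus D)}$ factor to $\mu$, and the components of $A \ominus (C \cup D)$ contribute $n^{\Delta_\theta(A \ominus (C \cup D))}$ by repeated application of the density axiom for fresh edges. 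Taking logarithms delivers the additive bound.

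Inequality $(\ddag)$ is the deeper step and where I expect the main obstacle. The tell-tale factor of $\tfrac12$ signals a Cauchy--Schwarz-type join inequality for pathset complexity, roughly of the form $\mu(P \bowtie Q)^2 \le \mu(P)\,\mu(Q)\,n^{c(P,Q)}$, where the correction $c(P,Q)$ is calibrated by $\Delta_\theta$ on the shared-vertex structure. To extract $(\ddag)$, I would decompose $\P{A}$ relative to two sub-join-trees $D, E \prec A$ by viewing $\P{A}$ as the join of $\P{D}$ with $\P{E \ominus D}$, further freely completed over the uncovered vertices $V(A) \setminus (V(D) \cup V(E))$. The two $\Delta_\theta$ terms in $(\ddag)$ arise as follows: $\Delta_\theta(A \ominus (D \cup E))$ is the density contribution from freely completing the uncovered components, while $\Delta_\theta(A)$ arises from bounding the other side of the Cauchy--Schwarz product by the trivial density $n^{\Delta_\theta(A)}$ on the full pathset $\P{A}$, the square root then halving both the $\Phi_\theta$ and $\Delta_\theta$ contributions.

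The delicate technical work, where the real difficulty lies, is the bookkeeping that ensures the $\Delta_\theta$-exponents coming out of the pathset manipulations match exactly the combinatorial quantities $\Delta_\theta(C \ominus D)$, $\Delta_\theta(A \ominus (C \cup D))$, $\Delta_\theta(E \ominus D)$ and $\Delta_\theta(A \ominus (D \cup E))$ on the right-hand sides of $(\dag)$ and $(\ddag)$, with no double-counting. This reduces to showing that the connected-component-based operation $\ominus$ on join-trees is the faithful combinatorial shadow of the pathset operation ``join followed by projection away from shared vertices,'' which is the exact configuration in which the Cauchy--Schwarz bound is being applied. A case analysis tracking, for each connected component of $\Gr{A}$, whether it meets $V(D) \cup V(E)$ should suffice; once this correspondence is pinned down, the induction closes and the theorem follows.
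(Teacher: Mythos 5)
Your high-level plan --- treat $\Phi_\theta$ as the pointwise minimum solution of $(\dag)$ and $(\ddag)$, so that any function satisfying both inequalities and bounded above by $\tau(G)$ on join-trees with graph $G$ delivers the theorem, and account for the factor $\tfrac12$ in $(\ddag)$ by a geometric-mean / Cauchy--Schwarz step --- is the right intuition and matches the spirit of the paper's argument. But your candidate $\Psi_\theta(A) \defeq \log_n \mu(\P{A})$ is not well-defined: $\P{A}$ is the \emph{collection} of all $\Gr{A}$-pathsets, not a single relation, so $\mu(\P{A})$ has no meaning, and there is no ``pathset canonically computed by $A$.'' What must actually be shown to obey $(\dag)$ and $(\ddag)$ is the optimal exponent in the per-relation inequality $\chi_A(\A) \ge n^{(\cdot)}\mu(\A)$; concretely, one proves for every relation $\A$ that
\[
  \mu(\A) \le (1/n)^{\Phi_\theta(A)} \cdot \chi_A(\A)
\]
by induction on $A$ with a case split over which defining inequality of $\Phi_\theta$ is tight, whereupon the theorem is immediate from Definition \ref{df:tau}.

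That induction does not close with only the ingredients you name. Since $\chi_A(\A)$ is defined through a witnessing family of covers $\A \subseteq \bigcup_i \A_i$ with $\A_i \subseteq \B_i \bowtie \C_i$, one must first reduce to a single piece $\A_i$; the tools for then bounding $\mu(\A_i)$ are the density-splitting Lemmas \ref{la:split} and \ref{la:split2} together with the Projection Lemma \ref{la:proj} and the Restriction Lemma \ref{la:rest}, which make pathset complexity non-increasing under projection to sub-join-trees and under substitution of the restricted variables. In the $(\ddag)$ case it is precisely these two monotonicity lemmas, applied in tandem, that turn ``$\chi$ of the projection of $\A_i$ onto $D$'' and ``$\chi$ of a worst-case restriction of $\A_i$'' into the single quantity $\chi_A(\A_i)^2$, after which the geometric mean with the trivial pathset bound $\mu(\A_i) \le (1/n)^{\Delta_\theta(A)}$ produces $(\ddag)$. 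Your phrase ``freely completing the vertices not covered by $V(D)$'' gestures at the pathset density constraint used in the $(\dag)$ case, but the witnessing-family decomposition and the two monotonicity lemmas --- which are the actual content of the ``delicate bookkeeping'' you flag at the end --- are absent from your sketch, and without them neither the $(\dag)$ nor the $(\ddag)$ step goes through.
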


The definition of $\tau(G)$ and proof of Theorem \ref{thm:tau} are postponed to Section \ref{sec:pathset}.  First, in Section \ref{sec:Tk}, we will present our combinatorial main lemma, which gives a lower bound on $\Phi_\theta(A)$ for all join-trees with graph $T_k$ under the threshold weighting $\theta$ of Example \ref{ex:theta}.

\subsection{Observations about $\Phi_\theta$}

Note that inequality ($\dag$) implies $\Phi_\theta(A) \ge \Phi_\theta(D)$ for all $D \preceq A$ (since $\Delta_\theta(\cdot)$ is nonnegative).  Also note that inequality ($\ddag$) implies $\Phi_\theta(A) \ge \Delta_\theta(A)$ in the special case $B = C = \sq{}$ (since $\Phi_\theta(\sq{}) = 0$ and $A \ominus ($the empty graph$) = A$).
Combining these observations, we see that $\Phi_\theta(A) \ge \Delta_\theta(D)$ for all $D \preceq A$.
It follows that $\tau(G) \ge \kappa(G)$ for all graphs $G$, which makes sense in light of the fact that $\kappa(G)$ bounds circuit size and $\tau(G)$ bounds formula size.

Next, observe that $\Phi_\theta(A)$ always equals either $\Phi_\theta(D) + (\text{some $\Delta_\theta(\cdot)$-terms})$ or $\frac12(\Phi_\theta(D) + \Phi_\theta(E \ominus D)) + (\textit{some $\Delta_\theta(\cdot)$-terms})$ where $D$ and $E$ are proper sub-join-trees of $A$.  This can be expanded out until we get a nonnegative linear combination of $\Delta_\theta(\cdot)$-terms.  Looking closely, we see that
\[
  \Phi_\theta(A) = \sum_{F \subseteq G} c_F \cdot \Delta_\theta(F)
\]
where coefficients $c_F$ (which depend on both $\theta$ and $A$) are nonnegative dyadic rational numbers coming from the tight instances of inequalities ($\dag$) and ($\ddag$).  We may further observe, for any $v \in V(G)$, that
\[
  \sum_{F \subseteq G \,:\, v \in V(F)} c_F \le 1.
\]
This is easily shown by induction using the fact that graphs $F_1$ and $F_2 \ominus F_1$ and $F_3 \ominus (F_1 \cup F_2)$ are pairwise disjoint for any $F_1,F_2,F_3 \subseteq G$.

One consequence of this observation is the following lemma, which we will use in Sections \ref{sec:Tk} and \ref{sec:Pk}.

\begin{la}\label{la:infinite}
Suppose $(G,\theta)$ and $(G^\ast,\theta^\ast)$ are threshold-weighted graphs such that $G \subseteq G^\ast$ and $\theta^\ast(e) \le \theta(e)$ for all $e \in E(G)$. Then for any join-tree $A$ with graph $G$, we have 
\[
  \Phi_{\theta}(A) \ge \Phi_{\theta^\ast}(A) - \sum_{e \in E(G)} \Big(\theta(e) - \theta^\ast(e)\Big).
\]
\end{la}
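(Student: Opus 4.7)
The plan is to exploit the ``coefficient decomposition'' of $\Phi_\theta(A)$ mentioned in the preceding subsection: namely, that by unfolding the recursive definition one obtains a representation
$$\Phi_\theta(A) = \sum_{F \subseteq G} c_F \cdot \Delta_\theta(F),$$
where the coefficients $c_F \ge 0$ are dyadic rationals determined by a particular sequence of tight choices of $(D,E)$ in $(\dag)/(\ddag)$ at each recursive step, and satisfy the crucial ``per-vertex'' bound $\sum_{F \ni v} c_F \le 1$ for every $v \in V(G)$. The essential point I will use is that, because the recursive definition of $\Phi$ is a \emph{maximum} over choices of sub-join-trees $D,E \preceq A$, any fixed choice of decomposition yields a \emph{lower bound} on $\Phi_\theta(A)$, regardless of which weighting $\theta$ was used to select it.

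Concretely, I will pick an optimal decomposition for $(A,\theta^\ast)$ producing coefficients $(c_F^\ast)_{F \subseteq G}$ so that $\Phi_{\theta^\ast}(A) = \sum_F c_F^\ast\, \Delta_{\theta^\ast}(F)$ and $\sum_{F \ni v} c_F^\ast \le 1$ for each vertex $v$. Note $c_F^\ast$ is supported on subgraphs of $\Gr{A}=G$, since every $\Delta$-term appearing in the unfolding comes from a graph of the form $\Gr{B \ominus (\cdots)}$ for some $B \preceq A$. Now apply the \emph{same} decomposition to the weighting $\theta$: at each recursive step the specific $(D,E)$ used is still a valid choice in the max of $(\dag)/(\ddag)$ for $\Phi_\theta$, so
$$\Phi_\theta(A) \;\ge\; \sum_{F \subseteq G} c_F^\ast\, \Delta_\theta(F).$$

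Next I compare the two $\Delta$-values edge by edge. Since $\Delta_\theta(F) - \Delta_{\theta^\ast}(F) = -\sum_{e \in E(F)}(\theta(e)-\theta^\ast(e))$, substituting gives
$$\Phi_\theta(A) \;\ge\; \Phi_{\theta^\ast}(A) \;-\; \sum_{F \subseteq G} c_F^\ast \sum_{e \in E(F)} \bigl(\theta(e)-\theta^\ast(e)\bigr) \;=\; \Phi_{\theta^\ast}(A) \;-\; \sum_{e \in E(G)} \bigl(\theta(e)-\theta^\ast(e)\bigr) \sum_{F \,:\, e \in E(F)} c_F^\ast.$$
To finish I only need $\sum_{F \,:\, e \in E(F)} c_F^\ast \le 1$ for every $e \in E(G)$. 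But since graphs have no isolated vertices, $e \in E(F)$ implies both endpoints of $e$ lie in $V(F)$, so for any endpoint $v$ of $e$ we have $\sum_{F : e \in E(F)} c_F^\ast \le \sum_{F : v \in V(F)} c_F^\ast \le 1$. Combined with the hypothesis $\theta(e)-\theta^\ast(e) \ge 0$, the inner sum is bounded by $\theta(e)-\theta^\ast(e)$, and the lemma follows.

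The only delicate point is justifying the identity $\Phi_\theta(A) = \sum_F c_F \Delta_\theta(F)$ and the vertex-sum bound $\sum_{F \ni v} c_F \le 1$ carefully. This is essentially done by induction on the size of $A$, using the three disjointness observation already cited in the paper: whenever $(\dag)$ or $(\ddag)$ is applied, the graphs $\Gr{F_1}$, $\Gr{F_2 \ominus F_1}$, $\Gr{F_3 \ominus (F_1 \cup F_2)}$ occurring in a single step are pairwise vertex-disjoint, so contributions to a fixed vertex $v$ from the three terms add up to at most the bound inherited from the recursive hypotheses (with the factor $\frac12$ in $(\ddag)$ absorbing the two children symmetrically). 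This is the one routine piece I would carry out explicitly; the rest of the argument is the two-line computation above.
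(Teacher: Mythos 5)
Your proposal is correct and follows exactly the paper's argument: fix the decomposition $\{c_F^\ast\}$ that is tight for $\Phi_{\theta^\ast}(A)$, reuse it as a valid (lower-bounding) choice for $\Phi_\theta(A)$, and compare $\Delta$-values term by term using the per-vertex bound $\sum_{F \ni v} c_F^\ast \le 1$. The one small extra step you spell out — passing from the per-vertex bound to the per-edge bound via an endpoint of $e$ — is the same reasoning the paper uses implicitly.
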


\begin{proof}
Let $\{c_F\}_{F \subseteq G}$ be nonnegative dyadic rationals --- arising from the tight instances of inequalities ($\dag$) and ($\ddag$) in the recursive definition of $\Phi_{\theta^\ast}(A)$   --- such that $\Phi_{\theta^\ast}(A) = \sum_{F \subseteq G} c_F \cdot \Delta_{\theta^\ast}(F)$.  We may apply inequalities ($\dag$) and ($\ddag$) in the exact same way to get the bound $\Phi_\theta(A) \ge \sum_{F \subseteq G} c_F \cdot \Delta_\theta(F)$. We now have
\[
  \Phi_{\theta^\ast}(A) - \Phi_\theta(A) 
  &\le 
  \sum_{F \subseteq G} c_F  \Big(\Delta_{\theta^\ast}(F) - \Delta_\theta(F)\Big)\\
  &= 
  \sum_{F \subseteq G} c_F  
  \sum_{e \in E(F)} \Big(\theta(e) - \theta^\ast(e)\Big)\\  
  &=
  \sum_{e \in E(G)} \Big(\theta(e) - \theta^\ast(e)\Big)
  \sum_{F \subseteq G \,:\, e \in E(F)} c_F\\
  &\le
  \sum_{e \in E(G)} \Big(\theta(e) - \theta^\ast(e)\Big),
\]
using the fact that $\theta(e) - \theta^\ast(e) \ge 0$ and $\sum_{F \subseteq G \,:\, v \in V(F)} c_F \le 1$ for all $v \in V(G)$.
\end{proof}

\subsection{Lower bounds on $\Phi_\theta$}

Having introduced the potential function $\Phi_\theta$ and described its connection to $\tau$ in Theorem \ref{thm:tau}, we conclude this section by briefly explaining how it is used derive lower bounds $\tau(P_k)$ and $\tau(T_k)$. The main combinatorial lemma behind the lower bound of Theorem \ref{thm:previous} is the following:

\begin{la}[\cite{rossman2018formulas}]\label{la:PhiPk}
Let $\theta$ be the constant $1 + \frac{1}{k}$ threshold weighting on $P_k$.  For every join-tree $A$ with graph $P_k$, we have $\Phi_\theta(A) \ge \frac{1}{2}\log_{\sqrt{13}+1}(k)$.  (Therefore, $\tau(G) \ge \frac{1}{2}\log_{\sqrt{13}+1}(k)$.)
\end{la}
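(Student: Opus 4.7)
My plan is to reduce to the infinite path setting and then prove the bound by structural induction on the join-tree. By Lemma~\ref{la:infinite} applied with $G = P_k$, $G^* = P_\infty$, $\theta \equiv 1 + 1/k$, and $\theta^* \equiv 1$ (a valid threshold weighting on $P_\infty$ as in Example~\ref{ex:theta2}), we obtain $\Phi_\theta(A) \ge \Phi_{\theta^*}(A) - k\cdot(1/k) = \Phi_{\theta^*}(A) - 1$ for every join-tree $A$ over $P_k$. Thus it suffices to prove $\Phi_{\theta^*}(A) \ge \tfrac12\log_\alpha(k) + 1$, where $\alpha = \sqrt{13}+1$, for every join-tree $A$ over $P_\infty$ with $\Gr{A} = P_k$. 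The payoff of this reduction is that $\Delta_{\theta^*}(F)$ simplifies to the number of connected components of $F$ for every finite $F \subset P_\infty$, removing the awkward $-1/k$-per-edge correction.

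Next, I would introduce the parameter $\mu(A) := $ length of the longest sub-path of $P_\infty$ contained in $\Gr{A}$, and prove by induction on the number of nodes of $A$ that $\Phi_{\theta^*}(A) \ge \tfrac12\log_\alpha\mu(A) + O(1)$ for every join-tree $A$ over $P_\infty$. Base cases $A = \sq{}$ and $A = \sq{e}$ are immediate. For the inductive step on $A = \sq{B,C}$, I would branch on how the long sub-path is distributed between the two children. In the \emph{dominance case}, where one child (say $B$) already contains a sub-path of length at least $\mu(A)/\alpha$, I apply inequality~$(\dag)$ with $D \preceq B$ chosen to witness the long sub-path of $\Gr{B}$: the inductive hypothesis yields $\Phi_{\theta^*}(D) \ge \tfrac12\log_\alpha\mu(A) - \tfrac12 + O(1)$, and the missing $\tfrac12$ is recovered from $\Delta_{\theta^*}(C \ominus D) + \Delta_{\theta^*}(A \ominus (C \cup D))$, which count the gap components of $P_k$ cut off by $V(D)$. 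In the \emph{balanced case}, where neither child contains a sub-path of length $\mu(A)/\alpha$ individually, I apply inequality~$(\ddag)$ with $D \preceq B$ and $E \preceq C$ chosen to cover disjoint medium-length sub-paths of $P_k$; the two inductive estimates $\Phi_{\theta^*}(D),\,\Phi_{\theta^*}(E\ominus D) \ge \tfrac12\log_\alpha(\mu(A)/\alpha) + O(1)$ combine with constant contributions from $\Delta_{\theta^*}(A \ominus (D \cup E))$ — tracking residual components of $P_k$ between and around $V(D), V(E)$ — to close the induction.

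Balancing the dominance and balanced cases forces $\alpha$ to satisfy a quadratic whose positive root is $\sqrt{13}+1$ (roughly, the condition that the factor $\alpha$ lost in $(\dag)$ match the factor lost after the halving in $(\ddag)$ together with its component gain). I expect the main obstacle to lie in the balanced case: even when no child of $A$ contains a long sub-path on its own, one must show that sub-join-trees $D \preceq B$ and $E \preceq C$ can always be chosen so that their graphs cover non-overlapping medium sub-paths of $P_k$ and so that $A \ominus (D \cup E)$ retains enough components to make the $(\ddag)$ recurrence tight. Handling the interaction of the operator $\ominus$ with the structure of $\Gr{B}$ and $\Gr{C}$ — in particular ensuring the disjointness $V(D) \cap V(E) = \emptyset$ required to apply the inductive hypothesis to $E \ominus D$ at the correct effective length — will require careful bookkeeping on how the long sub-path is distributed across $B$ and $C$.
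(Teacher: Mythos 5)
Your reduction to $P_\infty$ via Lemma~\ref{la:infinite}, and the target inductive statement $\Phi_{\theta^*}(A) \ge \tfrac1c\log(\lambda(A)) + \Delta(A)$ with $c = 2\log\alpha$ and $\lambda(A)$ the longest connected component of $\Gr{A}$, do match the setup of the paper's proof (Lemma~\ref{la:PHI-lb} in the appendix). But the inductive step you outline has a genuine gap that is exactly where the technical weight of the argument lies.

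The ``dominance case'' does not close. Suppose $\Gr{A} = P_k$ is connected ($\Delta_{\theta^*}(A)=1$) and $B$ contains a sub-path of length $\ge k/\alpha$. Taking $D \preceq B$ with graph that prefix, inequality $(\dag)$ gives
\[
\Phi_{\theta^*}(A) \ \ge\ \Phi_{\theta^*}(D) + \Delta_{\theta^*}(C \ominus D) + \Delta_{\theta^*}(A \ominus (C \cup D)),
\]
but both extra $\Delta$-terms can vanish: if $\Gr{D} = P_{0,j}$ and $\Gr{C}$ covers $P_{j,k}$ (so they share vertex $j$), then $C \ominus D$ is empty, and since $\Gr{A}$ is connected, $A \ominus (C \cup D)$ is empty as well. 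You are left with $\Phi_{\theta^*}(A) \ge \Phi_{\theta^*}(D) \ge \tfrac1c\log(k) - \tfrac12 + 1$, short by $\tfrac12$ with no compensation available. The ``gap components of $P_k$ cut off by $V(D)$'' you invoke do not exist when the graph is one connected path — this is the central obstruction, and it cannot be waved away.

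The paper instead proves the disjointness/bonus structure you defer to ``careful bookkeeping'' via a five-step case analysis. First it sheds disconnected components of $\Gr{A}$ (Lemma~\ref{la:A-S}). Second, it handles the case where \emph{some} sub-join-tree $A'$ simultaneously has $\Delta(A') \ge 2$ \emph{and} $\lambda(A') \ge \tfrac{1}{2^{c-1}}k$: the induction hypothesis then yields a $+2$ (not $+1$) which absorbs the $\log$ loss. When that fails it records the structural invariant $(\circledast)$: every sub-join-tree with at least $\tfrac{1}{2^{c-1}}k$ edges is connected. Only under $(\circledast)$ does it walk down from $v_0$ and $v_k$ to find $B$ and $Z$; if $\parent{B}, \parent{Z}$ are vertex-disjoint, $(\ddag)$ applies cleanly (your balanced case), but when they overlap the argument needs two further cases in which it identifies three marked vertices $v_r, v_s, v_t$ on $P_k$ and a medium-size $D \prec \sib{C}$ that, being connected by $(\circledast)$, must miss either $v_r$ or $v_s$, yielding the $\Delta$-bonus of $1$ via $(\dag)$. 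The quadratic $u^2 - 2u - 12 = 0$ with root $\alpha = \sqrt{13}+1$ comes from balancing \emph{these} latter thresholds ($\tfrac12 - \tfrac{1}{2^{c/2}} - \tfrac{1}{2^{c-1}} = \tfrac{1}{2^{c-2}}$), not from a dominance-vs.-balance dichotomy. Your proposal thus correctly anticipates the base and the shape of the answer, but the inductive engine — the interleaving of $(\dag)$, $(\ddag)$, the connectivity assumption $(\circledast)$, and the geometric tracking of $v_r, v_s, v_t$ — is missing, and it is precisely what makes the bound true.
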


The proof is included in Appendix \ref{sec:appendix}, for the sake of comparison with our two lower bounds below.  We remark that this proof makes crucial use of both ($\dag$) and ($\ddag$); it was shown in \cite{rossman2018formulas} that no lower bound better than $\Phi_\theta(A) = \Omega(1)$ is provable using ($\dag$) alone or ($\ddag$) alone.

Our lower bound $\tau(T_k) = \Omega(k)$ (Theorem \ref{thm:Tk}) is an immediate consequence of the following:

\begin{la}\label{la:PhiTk}
Let $\theta$ be the threshold weighting arising from the uniform random walk on $T_k$ (Example \ref{ex:theta}).  For every join-tree $A$ with graph $T_k$, we have $\Phi_\theta(A) \ge k/30 - 1/5$.
\end{la}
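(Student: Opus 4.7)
The plan is to prove Lemma \ref{la:PhiTk} by structural induction on the join-tree $A$, establishing a stronger statement that bounds $\Phi_\theta$ on all sub-join-trees in terms of a depth-complexity measure $\sigma$ on subgraphs of $T_k$. A natural candidate for $\sigma$ is: the largest $j$ such that some connected component of $F$ contains a copy of $T_j$ rooted at a vertex of $T_k$. The target inductive claim is then $\Phi_\theta(A) \ge \sigma(\Gr{A})/30 - 1/5$, from which Lemma \ref{la:PhiTk} follows because $\sigma(T_k) = k$. The critical tool throughout is the boundary inequality $\Delta_\theta(F) \ge |V(F) \cap V(T_k \setminus F)|/3$ noted in Example \ref{ex:theta}, which converts any ``cut'' between a subgraph and its complement in $T_k$ into a usable $\Delta_\theta$-contribution.

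For the inductive step with $A = \sq{B, C}$, I would split the argument according to how the maximal $T_j$-subtree in $\Gr{A}$ is distributed between $\Gr{B}$ and $\Gr{C}$. In the \emph{concentrated} case, where one child --- say $B$ --- already contains a $T_{j'}$-subtree with $j'$ close to $j$, inequality ($\dag$) applied with $D = B$ yields $\Phi_\theta(A) \ge \Phi_\theta(B) + \Delta_\theta(C \ominus B) + \Delta_\theta(A \ominus (C \cup B))$. Together with the inductive lower bound on $\Phi_\theta(B)$ and the $\Delta_\theta$-terms interpreted via the boundary inequality, this closes the case with room to spare. In the \emph{split} case, the $T_j$-subtree is genuinely torn between $B$ and $C$: here I would apply ($\ddag$) with sub-join-trees $D \preceq B$ and $E \preceq C$ chosen so that $\Gr{D}$ and $\Gr{E}$ capture the respective ``sides'' of the split. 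The averaging in ($\ddag$) costs a factor of two on the child potentials, which must be repaid by $\Delta_\theta(A)$ and $\Delta_\theta(A \ominus (D \cup E))$.

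The main obstacle will be the split case: verifying that every distribution of a $T_j$-subtree between the two children produces a boundary whose $\Delta_\theta$-value, bounded below by the $|V(F) \cap V(T_k \setminus F)|/3$ inequality, is at least a constant fraction of $j$. This is where the branching geometry of $T_k$ is decisive --- splitting a complete binary subtree necessarily creates many vertex-boundaries, with size growing in the depth of the tree, whereas linear chains $P_k$ only have constant-size boundaries (which is precisely why Lemma \ref{la:PhiPk} yields only a logarithmic bound). The constants $1/30$ and $-1/5$ would emerge from a careful amortized balance between averaging losses in ($\ddag$), boundary gains from splits, and slack for base cases; I would expect the precise definition of $\sigma$ to require fine-tuning so that the inductive accounting closes cleanly in both the concentrated and split cases.
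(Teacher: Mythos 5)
Your proposed invariant $\Phi_\theta(A) \ge \sigma(\Gr{A})/30 - 1/5$ (with $\sigma$ being the max-complete height, which the paper calls $\lambda$) is too weak to close the induction, and this is where your plan would stall. The paper instead proves the \emph{two-term} invariant $\Phi(A) \ge \eps\lambda(A) + \delta\Delta(A)$ with $\eps = 1/30$, $\delta = 2/5$ (Theorem \ref{thm:philb}), over $T_\infty$, and only afterward shifts by the constant $1/5$ when transferring to $T_k$ via Lemma \ref{la:infinite}. The extra $\delta\Delta(A)$ term is not decoration: it is what allows boundary credit accumulated at one level of the recursion to be ``stored'' and paid out later. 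Without it, your induction has no way to charge off the factor-of-two loss in ($\ddag$) except via the one-shot $\Delta_\theta(A)$ bonus, which for $A$ with graph $T_k$ is only $1/3$ and is far from enough.

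More concretely, your ``split'' case cannot be discharged by recursing on the immediate children $B, C$ of $A = \sq{B,C}$. If the edges of $T_k$ are distributed adversarially between $B$ and $C$ — e.g.\ alternating levels — then $\sigma(B) = \sigma(C) = O(1)$ while $\sigma(A) = k$, so the inductive hypotheses on $B$ and $C$ give you nothing, and the $\Delta$-terms in ($\ddag$) applied at the root are not large either (the boundary of $T_k$ is a single vertex, and $\Delta_\theta(T_k) = 1/3$). The paper circumvents this by \emph{not} recursing on the children of $A$; instead it descends by binary search to a $B \preceq A$ with roughly half the edges of $T_x$ (thereby controlling $|E_B|$), and uses two combinatorial lemmas that you would need but have not identified: Lemma \ref{la0} ($\lambda(G) + \partial(G) \ge \log(|E(G)|+1)$), which guarantees that a large subgraph with small $\lambda$ has large boundary, and Lemma \ref{la1}, which shows that a subgraph with few edges and small boundary must miss an entire complete subtree $T_z^+$. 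The descent argument in Lemma \ref{la2} then finds the right sub-join-trees to feed into ($\dag$) and ($\ddag$). Your intuition about the branching geometry forcing large boundaries is exactly right, but it has to be encoded via Lemma \ref{la0} and carried in the potential via the $\delta\Delta$ term, not via the $\sigma$-term alone.
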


Our proof, given in the next section, is purely graph-theoretic. Interestingly, the argument essentially uses only inequality ($\ddag$); we do not require ($\dag$), other than in the weak form $\Phi_\theta(A) \ge \Phi_\theta(D)$ for all $D \prec A$.  

It is worth mentioning that the choice of threshold weighting is important in Lemma \ref{la:PhiTk}.  A different, perhaps more obvious, threshold weighting is the constant function with value $\frac{|V(T_k)|}{|E(T_k)|}$ ($= \frac{2^{k+1}-1}{2^{k+1}-2}$).  With respect to this threshold weighting, no lower bound better than $\Omega(1)$ is possible.

Finally, our improved lower bound $\tau(P_k) \ge \log_{\sqrt{5}+5}(k)$ (Theorem \ref{thm:Pklower}) is obtained via the following lemma.  This result involves a 2-parameter extension of $\Phi_\theta(A)$ denoted $\Phi_\theta(A|S)$ (where $S \subseteq V(G)$), which we introduce in Section \ref{sec:Phi}.

\begin{la}\label{la:PhiPk2}
Let $\theta$ be the constant $1 + \frac{1}{k}$ threshold weighting on $P_k$.  For every join-tree $A$ with graph $P_k$, we have $\Phi_\theta(A|\emptyset) \ge \log_{\sqrt{5}+5}(k) - 1$.
\end{la}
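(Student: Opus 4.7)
The plan is to prove Lemma~\ref{la:PhiPk2} by strong induction on the size of the join-tree $A$, exploiting the 2-parameter potential $\Phi_\theta(A|S)$ defined in Section~\ref{sec:Phi}. As a first step, I would transfer the problem from $(P_k,\theta)$ to the infinite path $P_\infty$ equipped with the unit weighting $\theta^\ast \equiv 1$. By Lemma~\ref{la:infinite},
\[
  \Phi_\theta(A|\emptyset) \ge \Phi_{\theta^\ast}(A|\emptyset) - \sum_{e \in E(P_k)}\bigl(\theta(e) - \theta^\ast(e)\bigr) = \Phi_{\theta^\ast}(A|\emptyset) - 1,
\]
so it suffices to prove $\Phi_{\theta^\ast}(A|\emptyset) \ge \log_{\sqrt{5}+5}(k)$. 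Working over $P_\infty$ streamlines the bookkeeping, since $\Delta_{\theta^\ast}(F)$ is just the number of connected components of any finite $F \subset P_\infty$ (Example~\ref{ex:theta2}).

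The inductive step treats a non-atomic $A = \sq{B,C}$ whose graph $\Gr{A}$ is a sub-path of $P_\infty$ of length $\ell$. Following the template of the proof of Lemma~\ref{la:PhiPk}, the central combinatorial step is to locate proper sub-join-trees $D \preceq B$ and $E \preceq C$ whose graphs $\Gr{D}, \Gr{E}$ are sub-paths of $\Gr{A}$ of carefully chosen lengths $\ell_1$ and $\ell_2$. I would then apply the $\Phi(\cdot|\cdot)$-analogue of $(\ddag)$, using the parameter $S$ to propagate the endpoints of $\Gr{D}$ as ``forbidden vertices'' into the recursive call on $E$ (and vice versa); combined with a $(\dag)$-style bound that handles the case where one of $B,C$ dominates, the inductive hypothesis on $D$ and $E$ produces a recursion strictly sharper than the one behind Lemma~\ref{la:PhiPk}. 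The worst case of the resulting adversary optimization over $(\ell_1,\ell_2)$ is a quadratic characteristic equation whose positive root is $\sqrt{5}+5$, yielding the claimed base of the logarithm.

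The main obstacle has two intertwined components. The first is a structural lemma guaranteeing that, regardless of how the edges of the path $\Gr{A}$ are distributed between $\Gr{B}$ and $\Gr{C}$, one can always extract sub-join-trees $D, E$ with $\Gr{D}, \Gr{E}$ long sub-paths of the desired lengths; this is an elaboration of the ``long sub-path extraction'' step in the proof of Lemma~\ref{la:PhiPk}, likely requiring separate treatment of balanced versus unbalanced splits $(B,C)$. The more delicate obstacle is formulating $\Phi_\theta(A|S)$ and its defining inequalities so that conditioning on $S$ genuinely decouples the two recursive branches in $(\ddag)$: only if $S$ absorbs the overlap between the two sub-problems can one avoid the factor of $\tfrac12$ loss in $(\ddag)$ that capped the base at $\sqrt{13}+1$ in Lemma~\ref{la:PhiPk}. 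Ensuring that the boundary loss per level of recursion stays bounded (so that it amounts to the single ``$-1$'' in the conclusion rather than accumulating to $O(\log k)$) is the central technical challenge.
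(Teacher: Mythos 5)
Your high-level skeleton matches the paper's: reduce from $(P_k,\theta)$ to $(P_\infty,\theta^\ast\equiv 1)$ via Lemma~\ref{la:infinite} (losing exactly $1$), then prove $\Phi_{\theta^\ast}(A|\emptyset)\ge\log_{\sqrt 5+5}(k)$ by induction on join-trees over $P_\infty$. And you correctly locate the two genuine difficulties: extracting useful sub-join-trees from an arbitrary split $(B,C)$, and preventing the ``boundary loss'' from accumulating across $\Theta(\log k)$ levels of recursion. But naming the difficulties is not resolving them, and the mechanism you sketch for resolving them does not match what actually works.

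The gap is the induction hypothesis. You propose to induct on a single quantity ``$\Phi_{\theta^\ast}(A|\emptyset)\ge\log_c(\ell)$'' and hope a $(\ddag)$-style split with a $(\dag)$ fallback closes the recursion. That does not amortize: a single path component of $A|S$ contributes different amounts to $\Phi$ depending on how its endpoints interact with $S$, and tracking this is exactly what stops the per-level loss from compounding into $O(\log k)$. The paper's Theorem~\ref{thm:PhiPk-lb} strengthens the statement to a three-way claim, with multiplicative corrections $\eps\delta$, $\delta$, $1$ for an \emph{open}, \emph{half-open}, or \emph{closed} component $K$ of $A|S$ respectively, plus the additive term $\Delta(A|S\cup K)$ for the discarded components. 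Without that refined hypothesis the induction simply does not close; the constants $\delta=(c-3)/c$, $\eps=1/2$ are not ``a quadratic characteristic equation'' artifact but tracked bookkeeping across the case analysis, and only in the closed case does the quadratic (with $x=1/(\sqrt 5+2)^2$) appear.

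Relatedly, you center the argument on $(\ddag)$, but in the actual proof $(\ddag)$ fires exactly once (the closed sub-case). The primary new tool that the two-parameter potential buys you is $(\gad)$ — the reversed form of $(\dag)$ that lets you ``pay'' the $\Delta$ of a sub-join-tree up front and recurse on the other branch restricted away from it. The open and half-open cases are handled entirely by $(\dag)$ and $(\gad)$, after a structural covering lemma (the alternating interval families $I(B)$, $I(C)$ and the parameter $t(A)$ counting closed covering intervals) that is substantially more involved than your ``long sub-path extraction.'' In short: right scaffolding, correct diagnosis of where the difficulty lies, but the actual inductive engine — the refined three-state statement, the covering structure, and the central role of $(\gad)$ — is missing, so as written the proposal does not yield the theorem.
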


This lemma is proved in Section \ref{sec:Pk}, after we show how $\Phi_\theta(\cdot|\cdot)$ provides a lower bound on $\tau(\cdot)$ in Section~\ref{sec:pathset}.

\section{Lower bound $\tau(T_k) = \Omega(k)$}\label{sec:Tk}

We fix the infinite pattern graph $T_\infty$ with the threshold weighting $\theta$ induced by the uniform random walk.
Recall that $T_\infty = \bigcup_{k=1}^\infty T_k$ under a nesting $T_1 \subset T_2 \subset T_3 \subset \cdots$ with $\mr{Leaves}(T_1) \subset \mr{Leaves}(T_2) \subset \mr{Leaves}(T_3) \subset \cdots$.   
$F,G,H$ will represent finite subgraphs of $T_\infty$, and $A,B,C,D,E$ will be join-trees over $T_\infty$. (In particular, note that $G$ no longer denotes the ambient pattern graph.)  

We next recall the definition of $\theta$ from Example \ref{ex:theta}.  Let $M \in [0,1]^{V(T_\infty) \times V(T_\infty)}$ be the transition matrix of the uniform random walk on $T_\infty$, that is,
\[
  M_{v,w} = \begin{cases}
    1 &\text{if $\{v,w\} \in E(T_\infty)$ and $v$ is a leaf,}\\
    1/3 &\text{if $\{v,w\} \in E(T_\infty)$ and $v$ is a non-leaf,}\\
    0 &\text{if $\{v,w\} \notin E(T_\infty)$.}
  \end{cases}
\]
This induces the threshold weighting $\theta : E(T_\infty) \to [0,2]$ given by
\[
  \theta(\{v,w\}) 
  \defeq 
  M_{v,w} + M_{w,v}
  =
  \begin{cases}
    2/3 &\text{if $v$ or $w$ is a leaf of $T_\infty$},\\
    4/3 &\text{otherwise.}
  \end{cases}
\]
Since $\theta$ is fixed, we will suppress it when writing $\Delta(F)$ and $\Phi(A)$.

\begin{df}
For all $k \ge 0$, let 
\[
  V_k \defeq \{v \in V(T_k) : v\text{ has distance $k$ from a leaf}\}.
\]
Thus, $V_0$ is the set of leaves in $T_\infty$, $V_1$ is the set of parents of leaves, etc. Note that $V(T_\infty) = \bigcup_{k=0}^\infty V_k$. We shall refer to the $V_k$ as the various \emph{levels} of $T_\infty$.

For $k \ge 1$ and $x \in V_k$, let $T_x \subset T_\infty$ be the complete binary tree of height $k$ rooted at $x$ (in the case $k=0$, we regard $T_x$ as a single isolated vertex).
We denote by $T_x^+$ the graph obtained from $T_x$ by including an extra edge between $x$ and its parent.  Note that
\[
  |V(T_x)| = 2^{k+1}-1,\qquad |V(T_x^+)| = 2^{k+1},\qquad
  |E(T_x)| = 2^{k+1}-2,\qquad |V(T_x^+)| = 2^{k+1}-1.
\]
For $j \in \{0,\dots,k\}$, let $V_j(T_x) \defeq V_j \cap V(T_x)$.
\end{df}

\begin{obs}\normalfont\label{obs:levelsize}
If $x\in V_k$, then for $j \in \{0,\dots,k\}$, $|V_j(T_x)| = 2^{k-j}$.
\end{obs}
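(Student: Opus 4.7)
The plan is to prove this counting observation by relating the level $V_j$ (defined by distance from a leaf of $T_\infty$) to the depth within the subtree $T_x$ (measured from the root $x$). Since the statement is purely combinatorial about complete binary trees, the argument should reduce to the standard ``$2^d$ vertices at depth $d$'' fact.

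First I would argue that for any vertex $v \in V(T_x)$, the distance from $v$ to the nearest leaf of $T_\infty$ equals the distance from $v$ to the nearest leaf of $T_x$. Indeed, since $x \in V_k$, every leaf of $T_x$ is a leaf of $T_\infty$ (they are all in $V_0$), so distances to leaves computed inside $T_x$ upper-bound the global ones. For the reverse direction, any path in $T_\infty$ from $v$ to a leaf must pass through vertices that are descendants of $x$, because $T_x$ is the component of $v$ in the subtree hanging off $x$; a shortest such path stays inside $T_x$. Hence $v \in V_j$ if and only if the distance from $v$ to the nearest leaf of $T_x$ equals $j$, which in the complete binary tree $T_x$ of height $k$ occurs exactly when $v$ sits at depth $k-j$ from the root $x$.

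Next I would invoke the standard fact that a complete binary tree of height $k$ has exactly $2^d$ vertices at each depth $d \in \{0, \dots, k\}$ from the root; this follows by a trivial induction on $d$. Applying this with $d = k-j$ yields $|V_j(T_x)| = 2^{k-j}$, as desired.

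There is no real obstacle here: the only thing to check carefully is the correspondence between ``level in $T_\infty$'' (distance from a global leaf) and ``height from leaves in $T_x$,'' and this is immediate once we note that all leaves of $T_x$ are leaves of $T_\infty$ because $x$ sits at the correct level $V_k$. The result will be used implicitly in later counting arguments (e.g., in estimating $|V(T_x)| = \sum_{j=0}^{k} 2^{k-j} = 2^{k+1} - 1$, matching the earlier displayed identities).
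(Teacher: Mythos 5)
The paper states this observation without proof, treating it as immediate from the definition of $T_\infty$ as a layered tree. Your argument is correct, and takes the natural route: identifying $V_j(T_x)$ with the set of vertices at depth $k-j$ from the root of the complete binary tree $T_x$ and invoking the standard $2^d$ count. One small remark: your step that "a shortest path from $v$ to a leaf of $T_\infty$ stays inside $T_x$" is true but slightly indirect; it is cleaner to note that the levels $V_j$ are defined structurally by the layering of $T_\infty$ (leaves in $V_0$, their parents in $V_1$, etc.), so membership in $V_j$ is determined by this layer index rather than needing a shortest-path argument. With that reading, $V_j(T_x)$ is exactly the set of descendants of $x$ at depth $k-j$, and the count $2^{k-j}$ is immediate. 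Either way, your conclusion matches what the paper asserts.
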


We next define two useful parameters of finite subgraphs of $T_\infty$.

\begin{df}[Max-complete height]
For a finite subgraph $F$ of $T_\infty$, define the {\em max-complete height} $\lambda(F)$ to be the maximum $k \in \N$ for which there exists $x \in V_k$ with $T_x \subseteq G$;
$\lambda(F)$ is defined to be zero when no such $x$ exists (in particular, this happens when $V(F)\cap V_0 = \emptyset$). 
\end{df}

\begin{obs}\normalfont
For any $x\in V_k$, $\lambda(T_x) = \lambda(T_x^+) = k$.
\end{obs}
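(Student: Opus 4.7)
The plan is to prove both equalities by establishing the lower and upper bounds on $\lambda$ separately, and to observe that both reduce to simple cardinality comparisons once one unwinds the definitions.

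For the lower bounds $\lambda(T_x) \ge k$ and $\lambda(T_x^+) \ge k$: these are immediate from the definition of $\lambda$ by taking the witness to be $x$ itself. Indeed, $x \in V_k$ and $T_x \subseteq T_x \subseteq T_x^+$, so $k$ is an admissible value in the maximum defining $\lambda$ in both cases.

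For the matching upper bounds $\lambda(T_x) \le k$ and $\lambda(T_x^+) \le k$: I would argue by a vertex count. Suppose for contradiction that $\lambda(T_x^+) > k$, i.e., there exists $j > k$ and some $y \in V_j$ with $T_y \subseteq T_x^+$. Then by Observation~\ref{obs:levelsize} (or directly from the cardinalities listed just above), $|V(T_y)| = 2^{j+1}-1 \ge 2^{k+2}-1$, while $|V(T_x^+)| = 2^{k+1}$. Since $2^{k+2}-1 > 2^{k+1}$ for every $k \ge 0$, this contradicts $T_y \subseteq T_x^+$. The same argument (with the slightly tighter bound $|V(T_x)| = 2^{k+1}-1$) yields $\lambda(T_x) \le k$.

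There is no real obstacle here: the observation is a direct consequence of the definitions together with the cardinality formulas $|V(T_x)| = 2^{k+1}-1$ and $|V(T_x^+)| = 2^{k+1}$ already recorded for $x \in V_k$. The only point worth being careful about is the boundary case $k=0$, where $T_x$ is a single isolated vertex; here $\lambda(T_x) = 0$ by inspection since no $T_y$ with $y \in V_j$, $j \ge 1$, contains a single vertex (such $T_y$ has at least $3$ vertices), and $T_x^+$ is a single edge with $2$ vertices, which likewise cannot contain $T_y$ for $j \ge 1$. Hence the observation holds for all $k \ge 0$.
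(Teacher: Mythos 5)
Your proof is correct. The paper states this as an unproved observation immediately after recording the cardinality formulas $|V(T_x)| = 2^{k+1}-1$ and $|V(T_x^+)| = 2^{k+1}$, and your argument — lower bound by taking $x$ itself as witness, upper bound by the vertex-count comparison $2^{j+1}-1 > 2^{k+1}$ for $j > k$, together with the degenerate $k=0$ case — is exactly the intended justification.
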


\begin{df}[Boundary size]
Let $\partial(F)$ denote the size of {\em boundary} of $F$ in $T_\infty$:
\[
  \partial(F) \defeq |V(F) \cap V(T_\infty \setminus F)|.
\]
\end{df}

\begin{obs}\normalfont
For any $x\in V_k$, we have $\partial(T_x) = \partial(T_x^+) = 1$, as the boundaries in the respective graphs are simply the singletons $\{x\}$ and $\{\mr{parent}(x)\}$. Another example is as follows: if $x\in V_k$ for some $k\geq 2$ and $F$ is the subgraph of $T_x$ induced by the set of vertices $V(T_x)\setminus V_0$, then $\partial(F) = 2^{k-1} + 1$ as all vertices in $V_1(T_x)$ (along with $x$) lie in the boundary of $F$.
\end{obs}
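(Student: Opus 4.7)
The plan is to prove each of the three boundary computations directly from the definition $\partial(H) = |V(H) \cap V(T_\infty \setminus H)|$, which, given the paper's convention that graphs have no isolated vertices, equivalently counts the vertices of $V(H)$ that are endpoints of at least one edge of $T_\infty$ lying outside $E(H)$.

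For $\partial(T_x) = 1$, the key fact is that $T_x$ is a complete binary subtree of $T_\infty$ rooted at $x$. Every non-root vertex of $T_x$ has all its $T_\infty$-neighbors (parent and both children) inside $V(T_x)$, with all corresponding edges in $E(T_x)$; only $x$ itself has a $T_\infty$-edge outside $E(T_x)$, namely the edge to $\mr{parent}(x) \in V_{k+1}$. Hence the boundary is exactly $\{x\}$.

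For $\partial(T_x^+) = 1$, the additional edge $\{x,\mr{parent}(x)\}$ enlarges the vertex set to $V(T_x) \cup \{\mr{parent}(x)\}$. Now $x$ has all three of its $T_\infty$-incident edges inside $E(T_x^+)$, so $x$ leaves the boundary; meanwhile, $\mr{parent}(x)$ retains two $T_\infty$-edges going outside $T_x^+$ (up to its own parent, and down to its other child, the sibling of $x$), so it enters the boundary. The net effect is that the boundary size stays $1$, now equal to $\{\mr{parent}(x)\}$.

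For $\partial(F) = 2^{k-1}+1$, I would stratify $V(F) = V(T_x) \setminus V_0$ by level $V_j(T_x)$. Each $v \in V_1(T_x)$ is adjacent in $T_\infty$ to two leaves in $V_0(T_x)$, and those edges are absent from $E(F)$, so $v$ lies on the boundary. The root $x \in V_k(T_x)$ has its edge to $\mr{parent}(x)$ outside $E(F)$, so $x$ also lies on the boundary. For the intermediate levels $2 \le j \le k-1$, every $v \in V_j(T_x)$ has its parent in $V_{j+1}(T_x) \subseteq V(F)$ and its two children in $V_{j-1}(T_x) \subseteq V(F)$, with all three corresponding edges in $E(F)$; so such a $v$ contributes nothing to the boundary. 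Summing and invoking Observation~\ref{obs:levelsize} gives $\partial(F) = |V_1(T_x)| + 1 = 2^{k-1}+1$. The arguments are purely local combinatorial bookkeeping, and I do not anticipate a substantive obstacle; the one point that merits care is the cancellation in the second claim, where adding a single edge simultaneously expels $x$ from the boundary and enrolls $\mr{parent}(x)$.
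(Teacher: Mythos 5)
Your proof is correct and follows exactly the same direct boundary-counting reasoning that the paper's inline justification sketches; the paper states this as an observation with only a one-line explanation, and your elaboration (including the cancellation when passing from $T_x$ to $T_x^+$, and the level-by-level count for $F$) fills it in faithfully.
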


\begin{df}[Grounded and ungrounded subgraphs of $T_\infty$]
Let $F,H$ be finite subgraphs of $T_\infty$. We say that $F$ is {\em grounded} if it is connected and $V(F) \cap V_0 \ne \emptyset$ (that is, $F$ is a tree, at least one of whose leaves is also a leaf of $T_\infty$).  
We say that $H$ is {\em ungrounded} if it is non-empty and connected and $V(H) \cap V_0 = \emptyset$ (that is, $H$ is a non-empty tree, none of whose leaves is a leaf of $T_\infty$).
\end{df}

\begin{figure}[H]\label{fig:introductory}
\centering
\includegraphics[scale=.55]{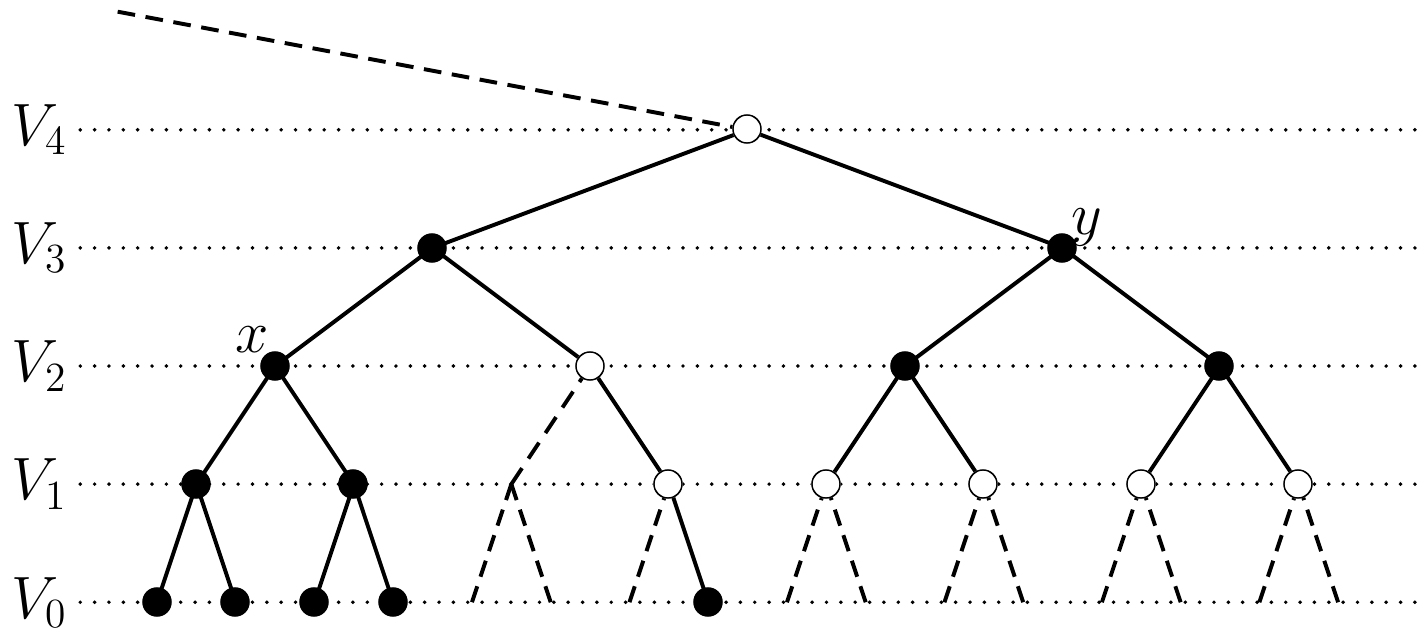}
\caption{\small Example of a grounded graph $F \subset T_\infty$. The dotted lines indicate the various levels that $V(F)$ intersects and dashed lines indicate (some of the) edges in $T_\infty \setminus F$. The nodes colored white lie in the boundary of $F$, and therefore $\partial(F) = 7$ and $\Delta(F) = 11/3$. The max-complete height $\lambda(F) = 2$, since $T_x \subset F$ for $x \in V_2$ and $x$ is the highest such node.  The subgraph $H = F \cap T_y$ is ungrounded with $\lambda(H) = 0$. 
}
\end{figure}

We shall think of the function $\partial(F)$ as essentially a proxy for $\Delta(F)$, as it has the advantage of having a simple combinatorial definition.  This is justified by the following:

\begin{la}\label{la:partial}
For every $F \subset T_\infty$, we have $\Delta(F) \ge \partial(F)/3$. 
\end{la}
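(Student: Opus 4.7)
The plan is to apply Lemma \ref{la:markov} directly and exploit the fact that the transition probabilities of the uniform random walk on $T_\infty$ are bounded below by $1/3$ on every edge. Specifically, with $M$ the transition matrix of the uniform random walk on $T_\infty$, Lemma \ref{la:markov} yields
\[
  \Delta(F) \;=\; \sum_{v \in V(F)} \; \sum_{w \in V(T_\infty)\,:\,\{v,w\} \notin E(F)} M_{v,w}.
\]
All terms in this double sum are nonnegative, so any lower bound follows by selecting a convenient sub-collection of terms.

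The first step is to record the explicit values of $M$: we have $M_{v,w} = 1$ if $v$ is a leaf of $T_\infty$ and $w$ is its unique neighbor, $M_{v,w} = 1/3$ if $v$ is a non-leaf and $\{v,w\} \in E(T_\infty)$, and $M_{v,w} = 0$ otherwise. In particular, whenever $\{v,w\}$ is an edge of $T_\infty$, we have $M_{v,w} \geq 1/3$.

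The second step is to extract one term per boundary vertex. By definition, each $v \in V(F) \cap V(T_\infty \setminus F)$ is incident, in $T_\infty$, to at least one edge that does not belong to $F$; choose such a neighbor $w^\ast(v)$. Then $\{v,w^\ast(v)\} \notin E(F)$, so the pair $(v,w^\ast(v))$ contributes $M_{v,w^\ast(v)} \ge 1/3$ to the double sum above. Restricting the outer sum to boundary vertices and keeping only the single summand indexed by $w^\ast(v)$ yields
\[
  \Delta(F) \;\ge\; \sum_{v \in V(F) \cap V(T_\infty \setminus F)} M_{v,w^\ast(v)} \;\ge\; \frac{\partial(F)}{3},
\]
as desired. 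There is essentially no obstacle here: once the identity of Lemma \ref{la:markov} is invoked, the result is forced by the uniform lower bound $1/3$ on the non-zero entries of $M$ and the nonnegativity of all the omitted terms. The only small point to note is that the bound $M_{v,w} \ge 1/3$ applies equally well at leaves, where $M_{v,w}=1$, so no separate case analysis for leaves versus internal vertices is needed.
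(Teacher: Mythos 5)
Your proof is correct and follows the paper's argument exactly: both invoke the identity from Lemma~\ref{la:markov} and then bound the sum from below by keeping one term of value at least $1/3$ for each boundary vertex, using nonnegativity to drop the rest. There is nothing to add or adjust.
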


It also holds that $\Delta(F) \le 2\partial(F)/3$ for $F$ without isolated vertices (or $\Delta(F) \le \partial(F)$ if we allow isolated vertices), but we will not need this upper bound.

\begin{proof}
Since $\theta$ is the threshold weighting induced by $M$, Lemma \ref{la:markov} tells us
\[
  \Delta(F) 
  =
  \sum_{v \in V(F)}\ \sum_{w \in V(T_\infty) \,:\, \{v,w\} \in E(T_\infty \setminus F)} M_{v,w}.
\]
Note that $v \in V(F)$ contributes to this sum if, and only if, it belongs to the boundary of $F$ (i.e.,\ $v \in V(F) \cap V(T_\infty \setminus F)$). Since $M_{v,w} \ge 1/3$ whenever $\{v,w\} \in E(F)$, the claim follows.
\end{proof}

We are now ready to state the main theorem of the section. 

\begin{thm}\label{thm:philb}
Let $\eps = 1/30$ and $\delta = 2/5$.
Then for every join-tree $A$,
\[
  \Phi(A) \ge \eps \lambda(A) + \delta \Delta(A).
\]
\end{thm}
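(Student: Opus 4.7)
Plan: I will prove Theorem~\ref{thm:philb} by strong induction on the size of the join-tree $A$, following the strategy foreshadowed by the remark after Lemma~\ref{la:PhiTk} that the argument is essentially carried by inequality $(\ddag)$. The base case is immediate: for $A = \sq{}$ both sides vanish, and for $A = \sq{e}$ we have $\lambda(A) = 0$ (a single edge cannot contain $T_x$ for any $x \in V_k$ with $k \ge 1$), so the claim collapses to $\Phi(A) = \Delta(A) \ge \delta \Delta(A)$, which holds since $\delta < 1$. For the inductive step, write $A = \sq{B,C}$ and set $k = \lambda(A)$; if $k = 0$ then $(\ddag)$ with $D = E = \sq{}$ already yields $\Phi(A) \ge \Delta(A) \ge \delta \Delta(A)$. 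So assume $k \ge 1$ and fix a witness $x \in V_k$ with $T_x \subseteq \Gr{A}$.

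I split into two cases according to whether a single child of $A$ captures the witness. In the \emph{absorbed case}, WLOG $T_x \subseteq \Gr{B}$, so $\lambda(B) = k$; monotonicity of $\Phi$ closes the case whenever $\Delta(B) \ge \Delta(A)$, and otherwise $C$ contributes content outside $B$, in which case I apply $(\ddag)$ with $D = B$, $E = C$ (using $A \ominus (B \cup C) = \emptyset$) to obtain
\[
\Phi(A) \ge \tfrac12 \Phi(B) + \tfrac12 \Phi(C \ominus B) + \tfrac12 \Delta(A),
\]
reducing the target to $\tfrac{\delta}{2}\bigl(\Delta(B) + \Delta(C \ominus B)\bigr) + \tfrac{1-2\delta}{2}\Delta(A) \ge \tfrac{\eps k}{2}$, which is secured using Lemma~\ref{la:partial}. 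In the \emph{split case}, neither child contains $T_x$ and the edges of $T_x$ are genuinely distributed between $\Gr{B}$ and $\Gr{C}$. I choose sub-join-trees $D \preceq B$ and $E \preceq C$ with $\lambda(D) + \lambda(E \ominus D) \ge 2(k-1)$ by descending into the sub-join-tree structures of $B$ and $C$ to track the two subtrees $T_{x_L}, T_{x_R}$ rooted at $x$'s children $x_L, x_R \in V_{k-1}$. Applying $(\ddag)$ and the induction hypothesis gives
\[
\Phi(A) \ge \eps(k-1) + \tfrac{\delta}{2}\bigl(\Delta(D) + \Delta(E \ominus D)\bigr) + \tfrac12 \Delta(A) + \tfrac12 \Delta(A \ominus (D \cup E)),
\]
and the $-\eps$ deficit in the $\lambda$-sum is absorbed by the surplus $\tfrac{1-2\delta}{2}\Delta(A) = \tfrac{1}{10}\Delta(A) \ge \tfrac{1}{30}\partial(A) \ge \tfrac{1}{30} = \eps$, via Lemma~\ref{la:partial} together with the fact that splitting $T_x$ between $B$ and $C$ forces a boundary vertex of $\Gr{A}$ (for instance $x$ itself, whenever its two incident edges in $T_x$ lie in different children).

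The main obstacle will be the combinatorial case analysis underlying the split case: systematically identifying $D \preceq B$ and $E \preceq C$ that realize $\lambda(D) + \lambda(E \ominus D) \ge 2(k-1)$ and simultaneously produce a boundary vertex of $\Gr{A}$ requires a careful recursive descent through the sub-join-tree structure, tracking how $T_x$ (and recursively its sub-subtrees $T_{x_L}, T_{x_R}$) can be partitioned between and within the two children. The constants $\eps = 1/30$ and $\delta = 2/5$ are precisely calibrated so that this balance is tight in the single-boundary-vertex regime: $\delta = 2/5$ is the largest value making $\tfrac{1-2\delta}{2}\Delta(A) \ge \eps$ feasible when $\partial(A) = 1$, which via Lemma~\ref{la:partial} forces $\eps \le (1-2\delta)/6 = 1/30$.
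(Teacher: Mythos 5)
Your plan diverges from the paper's proof in a way that creates a genuine gap, and I don't think the plan as stated can be repaired without introducing the paper's central lemma.

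The core difficulty is the \emph{split case}, where you propose to find $D \preceq B$ and $E \preceq C$ with $\lambda(D) + \lambda(E \ominus D) \ge 2(k-1)$ by ``tracking'' the two subtrees $T_{x_L}, T_{x_R}$. This is not achievable in general. The edges of $T_x$ can be interleaved between $B$ and $C$ so finely that \emph{no} sub-join-tree of $A$ has a graph containing any complete subtree $T_y$ with $y$ above level $0$, i.e.\ $\lambda(D') = 0$ for every proper $D' \prec A$. (Concretely with $k=2$: distribute the six edges of $T_x$ so that $\Gr{B}$ contains $\{x,x_L\}, \{x_L,l_1\}, \{x_R,l_3\}$ and $\Gr{C}$ the rest; then no sub-join-tree of $B$ or $C$ contains $T_{x_L}$ or $T_{x_R}$.) When $\lambda$-values of children collapse to $0$, your sum $\lambda(D) + \lambda(E\ominus D)$ cannot approach $2(k-1)$, and the induction stalls. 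This is precisely where the paper's argument departs: it does not try to find sub-join-trees of large $\lambda$. Instead it tracks the \emph{boundary} $\partial$ and exploits the tradeoff of Lemma~\ref{la0}, which says $\lambda(G)+\partial(G) \ge \log(|E(G)|+1)$. Small $\lambda$ is compensated by large $\partial$ (hence large $\Delta$ via Lemma~\ref{la:partial}), and the right sub-join-trees are found by descending $A$ to a $B'$ with $2^{k-1}\le |E(B')\cap E(T_x)|\le 2^k-1$ and then invoking Lemmas~\ref{la1} and~\ref{la2}. Your sketch never needs $\partial$ or $\log|E|$, which is the sign that a key mechanism is missing.

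There is also a quantitative gap in your \emph{absorbed case}. After invoking $(\ddag)$ with $D=B$, $E=C$, you reduce to showing $\tfrac{\delta}{2}(\Delta(B)+\Delta(C\ominus B)) + \tfrac{1-2\delta}{2}\Delta(A) \ge \tfrac{\eps k}{2}$, and claim Lemma~\ref{la:partial} secures this. But Lemma~\ref{la:partial} gives lower bounds on $\Delta$ in terms of $\partial$, which need not scale with $k$: for example $\Gr{B} = T_x$ has $\Delta(B)=1/3$ regardless of $k$. If $\Gr{C}$ is a far-away edge, so $\Delta(A) = 5/3$ while $\lambda(A)=k$ is large, then the left side is a constant while the right side grows linearly, and the inequality fails for $k>30$. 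The paper avoids this by first disposing of the case $\Delta(A) \ge \eps k / (1-\delta) = k/18$ (where $\Phi(A)\ge\Delta(A)$ already suffices) and then exploiting, via Lemma~\ref{la0} and Lemma~\ref{la2}, the fact that if $\Delta(A)$ is small then some sub-join-tree must have either a large $\lambda$ or a large boundary. None of that structure appears in your sketch. Your heuristic that the constants $\eps=1/30$, $\delta=2/5$ are calibrated by balancing $\tfrac{1-2\delta}{2}\Delta(A)\ge\eps$ in the single-boundary regime is also not what drives the actual constants; the paper's constants arise from several constraints appearing simultaneously across the three cases of Lemma~\ref{la2} (for instance $\delta>6\eps$, $\delta>3\eps$, and the budget $\Delta(A),\Delta(B)\le k/18$).

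In short, the ``absorbed vs.\ split'' dichotomy over the two children of $A$ is too coarse: what is needed is a descent through the whole join-tree, and a quantitative $\lambda + \partial$ tradeoff to handle the regime where no sub-join-tree has large $\lambda$. These are the roles of Lemmas~\ref{la0}, \ref{la1}, \ref{lem:boundarylarge}, and especially the trichotomy of Lemma~\ref{la2}, none of which your proposal identifies.
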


Theorem \ref{thm:philb} directly implies Lemma \ref{la:PhiTk}, which in turn yields the lower bound $\tau(T_k) = \Omega(k)$ of Theorem \ref{thm:Tk}.  
To see why, let $\theta'$ be the threshold weighting on $T_k$ coming from the uniform random walk (Example \ref{ex:theta}).  
Viewing $T_k$ a subgraph of $T_\infty$, note that  $\sum_{e \in E(T_k)} (\theta'(e) - \theta(e)) = 2(\frac56 - \frac23) = \frac13$.  
For any join-tree $A$ with graph $T_k$, 
Lemma \ref{la:infinite} and Theorem \ref{thm:philb} now imply
\[
  \Phi_{\theta'}(A) 
  \ge 
  \Phi(A) - \frac13 
  \ge 
  \frac{1}{30}\lambda(T_k) + \frac{2}{5}\Delta(T_k) - \frac13  
  =
  \frac{1}{30}k + \frac{2}{5}\cdot\frac13 - \frac13
  =
  \frac{1}{30}k - \frac{1}{5}. 
\]

We proceed with a few definitions and lemmas needed for the proof of Theorem \ref{thm:philb} in Section \ref{sec:philb}.  In order to present the main arguments first, proofs of three auxiliary lemmas (\ref{lem:boundarylarge}, \ref{la0}, \ref{la1}) are postponed to Section \ref{sec:lemmas}.

\begin{la}\label{lem:boundarylarge}
Let $H$ be a non-empty finite subgraph of $T_\infty$, all of whose components are ungrounded. Then $\partial(H) \geq \frac12 (|E(H)|+3)$.
\end{la}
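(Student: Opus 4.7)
The plan is to decompose the inequality across the connected components of $H$ and then prove a slightly stronger per-component bound by a handshake count. The single structural fact about $T_\infty$ that will be invoked is that every non-leaf vertex has exactly three neighbors: by the construction of $T_\infty$ as the rootless layered binary tree $\bigcup_{k} T_k$, each vertex at level $j\ge 1$ has two children at level $j-1$ and one parent at level $j+1$.

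For the reduction to components, let $C$ range over the connected components of $H$. Any edge of $H$ at a vertex $v\in V(C)$ lies in $C$, so $v$ witnesses membership in $V(H)\cap V(T_\infty\setminus H)$ iff $v$ has some $T_\infty$-neighbor joined to it by an edge outside $C$. Writing $\partial(C)$ for the $T_\infty$-boundary of $C$, we obtain $\partial(H)=\sum_C \partial(C)$ and $|E(H)|=\sum_C |E(C)|$, so it suffices to establish
\[
\partial(C)\,\ge\,\tfrac{1}{2}\bigl(|E(C)|+3\bigr)
\]
for every ungrounded component $C$: summing over the $k\ge 1$ components then gives $\partial(H)\ge\tfrac12(|E(H)|+3k)\ge\tfrac12(|E(H)|+3)$.

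To prove the per-component inequality, fix an ungrounded $C$. Since $V(C)\cap V_0=\emptyset$, every vertex of $C$ is a non-leaf of $T_\infty$ and hence has $T_\infty$-degree $3$, so $v\in\partial(C)$ iff $\deg_C(v)<3$. Because the standing convention forbids isolated vertices, $C$ is a finite tree with $|V(C)|\ge 2$ and $|E(C)|=|V(C)|-1$. Letting $b_i$ denote the number of degree-$i$ vertices of $C$ for $i\in\{1,2,3\}$, the identities $b_1+b_2+b_3=|V(C)|$ and $b_1+2b_2+3b_3=2|E(C)|=2|V(C)|-2$ subtract to give $b_2+2b_3=|V(C)|-2$, whence
\[
\partial(C)\,=\,b_1+b_2\,=\,|V(C)|-b_3\,\ge\,\tfrac{1}{2}\bigl(|V(C)|+2\bigr)\,=\,\tfrac{1}{2}\bigl(|E(C)|+3\bigr).
\]

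I do not expect a serious obstacle here; the argument is essentially a handshake calculation together with the observation that ungroundedness places every vertex of $C$ among the degree-$3$ vertices of $T_\infty$. The only small points requiring care are to cite explicitly the no-isolated-vertices convention (so that each component contributes a genuine $+3$ on the right-hand side) and to note that the at-least-one-component fact already yields the advertised constant $+3$, with slack whenever $H$ is disconnected.
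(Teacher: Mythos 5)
Your proof is correct, and it takes a genuinely different and cleaner route than the paper's. The paper reduces to the connected case in the same way, but then proceeds by induction on $|E(H)|$, with two base cases ($|E(H)|=1,2$) and a two-branch induction step (either remove a pair of sibling leaves, or remove a pendant path of length two). Your argument replaces all of that with a single handshake count: since ungroundedness forces every vertex of $C$ to be a non-leaf of $T_\infty$ and hence have $T_\infty$-degree exactly $3$, boundary membership is equivalent to $\deg_C(v)<3$, and the two identities $b_1+b_2+b_3=|V(C)|$ and $b_1+2b_2+3b_3=2|V(C)|-2$ (using that a connected subgraph of a tree is a tree with at least one edge) yield $b_3\le\frac{|V(C)|-2}{2}$ and thus $\partial(C)=|V(C)|-b_3\ge\frac12(|E(C)|+3)$ directly. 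Both proofs ultimately hinge on the same structural fact about $T_\infty$ (cubic at non-leaves), but yours makes it explicit as a degree count and avoids the case analysis and induction entirely; the paper's inductive removal argument is arguably closer in spirit to the rest of its toolbox (the surrounding lemmas also proceed by structural induction), but for this lemma your approach is shorter and more transparent. One small stylistic point: the paper's notation $\partial(\cdot)$ is always taken relative to the ambient $T_\infty$, so when you decompose $\partial(H)=\sum_C\partial(C)$ you should (as you implicitly do) justify it by noting that components are vertex-disjoint, so an $H$-edge at $v\in V(C)$ is necessarily a $C$-edge and hence $v$ lies in $V(T_\infty\setminus H)$ iff it lies in $V(T_\infty\setminus C)$.
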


We make a note of its following corollary here.

\begin{cor}\label{la:no-branch}
Suppose $H$ be a finite connected subgraph of $T_\infty$ and $y \in V(H)$ such that $E(H) \cap E(T_y) \neq \emptyset$ and $H$ does \underline{not} contain any path from $y$ to a leaf of $T_y$. Then $\partial(H) \ge \frac12(|E(H)\cap E(T_y)|+1)$.
\end{cor}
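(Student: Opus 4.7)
The plan is to reduce Corollary~\ref{la:no-branch} directly to Lemma~\ref{lem:boundarylarge} by showing that the natural witness subgraph $F$ of $T_\infty$ with edge set $E(H) \cap E(T_y)$ (and no isolated vertices, per the standing convention) is itself connected and ungrounded, so Lemma~\ref{lem:boundarylarge} applies to $F$ as a whole. The driving point is the tree structure of $T_\infty$: since any two vertices of a tree are joined by a unique path, all of $F$ is forced to sit in a single $T_y$-connected piece attached at $y$.

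First I would verify that $y \in V(F)$. If not, pick any edge $\{u,v\} \in E(F)$ (which exists by the hypothesis $E(H) \cap E(T_y) \neq \emptyset$); then $u, v \in V(T_y) \setminus \{y\}$. Since $H$ is connected and $y, u \in V(H)$, there is an $H$-path from $y$ to $u$; in a tree, this path must coincide with the unique $T_\infty$-path from $y$ to $u$, which begins with the edge from $y$ to the child of $y$ that is an ancestor of $u$ and stays inside $T_y$. In particular, its first edge lies in $E(H) \cap E(T_y) = E(F)$ and is incident to $y$, a contradiction. The same uniqueness argument shows that every $v \in V(F)$ is $F$-connected to $y$, so $F$ is connected. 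Finally, if $F$ contained a leaf of $T_y$ (necessarily in $V_0$), the connectedness of $F$ would produce a path inside $F \subseteq H$ from $y$ to that leaf, contradicting the hypothesis on $H$. Thus $V(F) \cap V_0 = \emptyset$ and $F$ is ungrounded.

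With $F$ now a non-empty, connected, ungrounded subgraph of $T_\infty$, Lemma~\ref{lem:boundarylarge} gives $\partial(F) \geq \tfrac{1}{2}(|E(F)|+3) = \tfrac{1}{2}(|E(H) \cap E(T_y)|+3)$. The remaining step is to transfer boundary vertices of $F$ to boundary vertices of $H$, losing at most one. For any boundary vertex $v$ of $F$ with $v \neq y$, we have $v \in V(T_y) \setminus \{y\}$; every $T_\infty$-edge incident to such a $v$ in fact lies in $T_y$ (its parent and its children in $T_\infty$ are all ancestors or descendants of $v$ inside $T_y$). The $T_\infty$-edge witnessing $v$'s membership in the boundary of $F$ is therefore an edge in $E(T_y) \setminus E(F) = E(T_y) \setminus E(H)$, so it is missing from $H$ and certifies $v$ as a boundary vertex of $H$. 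Only the vertex $y$ itself might be a boundary vertex of $F$ without being a boundary vertex of $H$, since $y$'s parent-edge in $T_\infty$ lies outside $T_y$ and could still belong to $H$. Hence $\partial(H) \geq \partial(F) - 1 \geq \tfrac{1}{2}(|E(H) \cap E(T_y)|+1)$, as required.

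The main subtlety (rather than a real obstacle) is the \emph{a priori} possibility that $F$ has several components, some of which are grounded -- Lemma~\ref{lem:boundarylarge} is silent on grounded subgraphs, and a direct component-by-component argument would not go through. The tree structure of $T_\infty$ eliminates this concern completely: there is simply no room in $T_\infty$ for an $H$-path connecting a piece of $F$ back to $y$ without reusing an edge of $T_y$, so any edge of $F$ drags its endpoints into the same $F$-component as $y$. The only genuine quantitative loss is the single vertex $y$ at the top, which accounts for the degradation of the constant $3$ in Lemma~\ref{lem:boundarylarge} to the $1$ appearing in the corollary.
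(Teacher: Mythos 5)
Your proof is correct and follows the same route as the paper's: define $F$ with edge set $E(H)\cap E(T_y)$, show it is non-empty, connected and ungrounded, apply Lemma~\ref{lem:boundarylarge}, and then transfer boundary vertices from $F$ to $H$ at a cost of at most one (the vertex $y$). You simply fill in the tree-uniqueness details the paper asserts without proof, and you state the boundary-transfer inclusion in the correct direction (boundary of $F$ into boundary of $H$, minus $y$), where the paper's prose has the inclusion reversed though it uses the right inequality.
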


\begin{proof}
Let $F$ be the graph with edge set $E(F) \coloneqq E(H)\cap E(T_y)$. Note that $F$ is non-empty, connected and ungrounded.
Observe that $\partial(H) \geq \partial(F) - 1$ because all vertices in the boundary of $H$, with the only possible exception of $y$, also lie in the boundary of $F$. Hence by Lemma~\ref{lem:boundarylarge}, $\partial(H) \geq \frac12(|E(H)\cap E(T_y)|+1)$.
\end{proof}

The second auxiliary lemma gives a useful inequality relating $\partial(G)$, $\lambda(G)$ and $|E(G)|$.

\begin{la}\label{la0}
For every finite subgraph $G$ of $T_\infty$, we have
$\lambda(G) + \partial(G) \ge \log(|E(G)|+1)$.
\end{la}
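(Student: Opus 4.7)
The claim is equivalent to $|E(G)|+1 \le 2^{\lambda(G)+\partial(G)}$. I would first reduce to the connected case: if $G$ has connected components $G_1,\dots,G_c$, then $\lambda(G) = \max_i \lambda(G_i)$ and $\partial(G) = \sum_i \partial(G_i)$, with each $\partial(G_i) \ge 1$, since every finite nonempty subtree of $T_\infty$ has its topmost vertex on the boundary (its $T_\infty$-parent lies strictly above). The elementary inequality $\sum_i 2^{a_i} \le 2^{\sum_i a_i}$ valid for $a_i \ge 1$ then transports the connected bound to $G$.

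For connected $G$ (a finite subtree of $T_\infty$, so that $|V(G)| = |E(G)|+1$), my plan is to prove by simultaneous induction on $|V(G)|$ the two statements
\begin{align*}
\mathrm{(I)}\ &\ |V(G)| \le 2^{\lambda(G)+\partial(G)}, \\
\mathrm{(II)}\ &\ |V(G)| \le 2^{\lambda(G)+\partial(G)}-1 \ \text{ when additionally both }T_\infty\text{-children of the top vertex of }G\text{ lie in }V(G).
\end{align*}
In the inductive step, let $r \in V_m$ be the top vertex of $G$ and let $c_1,c_2$ be its $T_\infty$-children. Suppose first that $G$ is top-branching (i.e., $c_1,c_2 \in V(G)$). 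I would decompose $G = G_1 \cup G_2$ with $G_i := G \cap T_{c_i}^+$, so that $|V(G)| = |V(G_1)|+|V(G_2)|-1$, and apply (I) to each $G_i$. The key structural fact is that $\partial(G_i)=1$ forces $G_i = T_{c_i}^+$, hence $\lambda(G_i)=m-1$; consequently $\partial_1=\partial_2=1$ collapses into $G=T_r$, where (II) holds with equality, and in the complementary subcase $T_r \not\subseteq G$ the exclusion of $\partial_1=\partial_2=1$ supplies exactly the arithmetic slack required to deduce (II).

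Suppose instead that $G$ is top-unary. I would follow the unary chain $r=r_0,r_1,\dots,r_s$ downward, each $r_{i+1}$ being the unique $G$-child of $r_i$, until reaching either a leaf of $G$---in which case $G$ is a path of length $s$ and (I) follows by a direct computation---or a branching vertex $r_s$. In the latter case, setting $G^{**} := G \cap T_{r_s}$ produces a top-branching subtree satisfying $\lambda(G)=\lambda(G^{**})$ and $\partial(G)=s+\partial(G^{**})-1$; applying the stronger IH (II) to $G^{**}$ then yields $|V(G)| = s + |V(G^{**})| \le s + 2^{\lambda^{**}+\partial^{**}}-1 \le 2^{s-1}\cdot 2^{\lambda^{**}+\partial^{**}} = 2^{\lambda(G)+\partial(G)}$, where the final inequality reduces to the elementary bound $s - 1 \le (2^{s-1}-1)\cdot 2^{\lambda^{**}+\partial^{**}}$ valid for all $s \ge 1$.

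\textbf{The main obstacle} I expect is the arithmetic in the top-branching subcase with $T_r \not\subseteq G$: the naive sum $2^{\lambda_1+\partial_1}+2^{\lambda_2+\partial_2}$ falls short of the target $2^{\max_i \lambda_i + \partial_1+\partial_2-1}$ by exactly one unit in the boundary configuration $\partial_1=\partial_2=1$, and an analogous shortfall appears in the top-unary case where a unary ancestor contributes to $|V|$ but not to $\lambda+\partial$. Strengthening the hypothesis to (II) for top-branching graphs supplies precisely the missing unit of slack in both situations. The verification then reduces to a careful case split on $(\partial_1,\partial_2) \in \{1,\,{\ge}2\}^2$, systematically using the structural implication $\partial(G_i)=1 \Rightarrow G_i = T_{c_i}^+$ together with the hypothesis $T_r \not\subseteq G$ to rule out the problematic configurations.
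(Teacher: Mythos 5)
Your proposal is correct in substance but takes a genuinely different route from the paper. The paper splits on whether $G$ is \emph{grounded} (connected and meeting $V_0$) or \emph{ungrounded}: for ungrounded $G$ it invokes Lemma \ref{lem:boundarylarge} to get $\partial(G) \ge \tfrac12(|E(G)|+3)$ (giving the bound immediately since $\lambda(G)=0$), while for grounded $G$ it fixes the max-complete vertex $x \in V_{\lambda(G)}$ and the top vertex $y \in V_m$ and shows every vertex on the $y$-to-$x$ path (except $x$) contributes a boundary element, splitting further on whether one or both children of $y$ lie in $G$. You instead recurse directly at the top vertex, splitting into top-branching versus top-unary, with a strengthened hypothesis (II) carrying the required unit of slack. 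Your route avoids Lemma \ref{lem:boundarylarge} entirely (though that lemma is needed elsewhere in the paper, so this is a structural convenience rather than a net savings), at the cost of introducing the auxiliary statement (II) and a case analysis on $(\partial_1,\partial_2)$.

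Two small imprecisions in the write-up, neither fatal. First, in the disconnected reduction the inequality $\sum_i 2^{a_i} \le 2^{\sum_i a_i}$ should be applied with $a_i=\partial(G_i)$ \emph{after} first replacing each $\lambda(G_i)$ by $\lambda(G)=\max_j\lambda(G_j)$: i.e.\ $\sum_i 2^{\lambda_i+\partial_i} \le 2^{\max_j\lambda_j}\sum_i 2^{\partial_i} \le 2^{\max_j\lambda_j+\sum_i\partial_i}$. Applying it to $a_i=\lambda_i+\partial_i$ as literally written would give exponent $\sum_i\lambda_i+\sum_i\partial_i$, which overshoots. Second, in the top-branching step, ruling out $(\partial_1,\partial_2)=(1,1)$ alone is \emph{not} enough: the raw inequality $2^{\partial_1}+2^{\partial_2}\le 2^{\partial_1+\partial_2-1}$ also fails at $(1,2)$. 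What saves you is the sharper consequence of your structural fact: $\partial(G_i)\ge 2 \Rightarrow G_i\neq T_{c_i}^+ \Rightarrow \lambda(G_i)\le m-2$. With this, $\lambda(G)=\max(\lambda_1,\lambda_2)$ drops by one whenever some $\partial_i\ge 2$, and a direct check of the cases $(\partial_1,\partial_2)\in\{1,\ge2\}^2$ (using $(1,1)\Rightarrow G=T_r$ handled by hand) then closes. You do gesture at exactly this implication via the contrapositive, so the idea is present; just state the $\lambda_i\le m-2$ refinement explicitly when you execute the case split.
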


(This is tight when $G = T_x^+$ for some $x \in V_k$, in which case $\lambda(G) = k$ and $\partial(G) = 1$ and $|E(G)| = 2^{k+1}-1$.)
The third auxiliary lemma shows that subgraphs of $T_k$ that contain at most half the edges of $T_k$ and have boundary size $j$ ($\le k/2$) have empty intersection with a large complete subtree of $T_k$ of height $k - j$.

\begin{la}\label{la1}
Let $x \in V_k$ and suppose $G \subseteq T_x$ such that $|E(G \cap T_x)| \le 2^k-1$ and $\partial(G) \le k/6$. 
Then there exists a vertex $z \in V_{k - \partial(G)}(T_x)$ such that $E(G) \cap E(T^+_z) = \emptyset$.
\end{la}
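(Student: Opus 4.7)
My plan is to prove Lemma~\ref{la1} by contradiction.  Write $j = \partial(G)$ and assume that every $z \in V_{k-j}(T_x)$ is \emph{bad}, i.e.\ $E(G) \cap E(T_z^+) \neq \emptyset$; the goal is to derive $\partial(G) > k/6$, contradicting the hypothesis.

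Applying Lemma~\ref{la0} to $G$ directly gives $\lambda(G) \ge \log(|E(G)|+1) - \partial(G)$, so $G$ contains a complete subtree $T_u$ for some $u \in V_l(T_x)$ with $l := \lambda(G)$ relatively large; since $|E(G)| \le 2^k-1 < |E(T_x)|$, we also have $l \le k-1$, so $u$ is a proper descendant of $x$.  The key structural observation is that for any candidate $z \in V_{k-j}(T_x) \setminus V(T_u)$, the edge set $E(T_z^+)$ is disjoint from $E(T_u)$, because $z$ and $\mr{parent}(z)$ both lie outside $V(T_u)$.  Writing $X := G \setminus T_u$, we therefore have $E(G) \cap E(T_z^+) = E(X) \cap E(T_z^+)$ for $z \notin V(T_u)$.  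Since the edge sets $E(T_z^+)$ are pairwise disjoint across $z$, the number of outside bad positions is at most $|E(X)| \le 2^k - 2^{l+1}+1$, while the number of outside candidate positions is $2^j - 2^{\max(0,\,l-(k-j))}$.

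\textbf{Base regime $l \ge k-1$.}  Here the edge budget for $X$ is at most $1$ while the outside candidate count is at least $2^{j-1}$, so pigeonhole finds a good $z$ outside $V(T_u)$ for $j \ge 2$.  A short case analysis handles $j \in \{0,1\}$ (where $G$ must essentially equal $T_u$ or $T_u^+$).

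\textbf{Intermediate regime.}  When $l$ is smaller, I use a boundary argument on $X$.  One checks that $V(X) \cap V(T_u) \subseteq \{u\}$ (every interior vertex of $T_u$ has all its $T_\infty$-neighbor edges inside $T_u$, hence not in $X$), and then that $\partial(X) \le \partial(G) + 1 = j+1 \le k/6 + 1$.  Applying Lemma~\ref{lem:boundarylarge} to the ungrounded components of $X$ yields $|E(X^{ung})| \le 2\partial(X) \le 2(j+1)$, bounding the ungrounded-bad outside positions by $k/3 + 2$.  For the grounded components, a component $C$ of $X$ touching $m$ distinct $T_z^+$'s must (by connectedness) contain a Steiner substructure linking these $m$ regions through high-level ancestors; applying Corollary~\ref{la:no-branch} to appropriate sub-pieces of $C$ extracts $\Omega(m)$ of boundary contribution per such component, which by the $V(X) \cap V(T_u) \subseteq \{u\}$ correspondence transfers (up to $+1$) to $\partial(G)$.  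Summing over grounded components bounds the grounded-bad outside positions similarly in terms of $\partial(G)$.  Combining these two bounds against the outside candidate count $2^j - 2^{\max(0,\, l-(k-j))}$ forces $\partial(G) > k/6$, the desired contradiction.

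The main obstacle will be the grounded-component analysis in the intermediate regime: showing precisely that a grounded component of $X$ witnessing $m$ distinct bad positions must pay $\Omega(m)$ in boundary, without double-counting vertices shared between sibling $T_z^+$'s (the only possible sharing happens at the parent vertices $\mr{parent}(z)$).  This needs Corollary~\ref{la:no-branch} applied at the right granularity---essentially once per ``branching point'' of the Steiner substructure of $C$---and the resulting bound then combines cleanly with Lemma~\ref{lem:boundarylarge}'s bound on the ungrounded contribution to close the gap between $\partial(G) \le k/6$ and the forced lower bound.
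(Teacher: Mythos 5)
Your proposal takes a genuinely different route from the paper, but it has a real gap and at least one technical error.

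\textbf{What the paper does.} The paper's proof does not isolate the max complete subtree $T_u$ at all. Instead, assuming $\partial(G)\ge 2$, it directly partitions the $2^{\partial(G)}$ candidate vertices $z\in V_{k-\partial(G)}(T_x)$ into the ``fully covered'' set $Y=\{y: T_y^+\subseteq G\}$, the ``partially hit'' set $Z$, and the good vertices. Then $|Y|$ is bounded by the edge budget ($|Y|(2^{k-\partial(G)+1}-1)\le |E(G)|\le 2^k-1$), and $|Z|$ is bounded by the boundary budget: each $z\in Z$ forces a distinct boundary vertex of $G$ inside $V(T_z)$, so $|Z|\le \partial(G)-1$. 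The final step is a short numerical verification that $\frac{2^k-1}{2^{k-b+1}-1}+b-1 < 2^b$ for $2\le b\le k/6$. This never touches $\lambda(G)$, Lemma~\ref{lem:boundarylarge}, or Corollary~\ref{la:no-branch}.

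\textbf{Where your proposal breaks down.} First, a concrete error: you claim that for $z\in V_{k-j}(T_x)\setminus V(T_u)$ the edge set $E(T_z^+)$ is disjoint from $E(T_u)$ ``because $z$ and $\mathrm{parent}(z)$ both lie outside $V(T_u)$.'' This is a non sequitur when $\lambda(G)<k-j$: then $u$ is a \emph{strict descendant} of exactly one $z_0\in V_{k-j}(T_x)$, so $T_u\subseteq T_{z_0}$ even though $z_0\notin V(T_u)$, and $E(G)\cap E(T_{z_0}^+)\supseteq E(T_u)\ne\emptyset$ while $E(X)\cap E(T_{z_0}^+)$ may be empty. So ``bad outside positions $\le |E(X)|$'' misses $z_0$. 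Second and more seriously, the ``grounded component'' step in your intermediate regime --- extracting $\Omega(m)$ boundary per grounded component of $X$ touching $m$ candidate subtrees --- is exactly the content you flag as the main obstacle and never prove. It is not clear that Corollary~\ref{la:no-branch} delivers this: that corollary requires $H$ to contain \emph{no} path from $y$ to a leaf of $T_y$, but a grounded component by definition reaches a leaf of $T_\infty$, so you would need a nontrivial surgery to carve off pieces to which the corollary applies. Moreover, when $\lambda(G)\ge k-j$, a grounded component of $X$ can contain another complete subtree $T_v$ at level up to $\lambda(G)$, and $T_v$ touches $2^{\lambda(G)-(k-j)}$ candidate positions at boundary cost~$1$, so the desired $\Omega(m)$ accounting simply fails in that range. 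You would at minimum need a much more delicate regime split than ``$l\ge k-1$ vs.\ otherwise.''

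\textbf{Summary.} The paper's $Y/Z$ observation --- each partially-hit $T_z^+$ contributes a boundary vertex of $G$ in the pairwise-disjoint set $V(T_z)$ --- is the clean idea that sidesteps everything your proposal struggles with. Your approach might be salvageable with considerable additional work, but as written it is both incomplete and incorrect in the disjointness step.
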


We now state and prove the main lemma used in the proof of Theorem \ref{thm:philb}.

\begin{la}\label{la2}
For any integers $1 \le t \le \ell$, let $z \in V_\ell$ and suppose $A$ is a join-tree such that $T_z \subseteq \Gr{A}$.  Then one the following conditions holds:
\begin{enumerate}[\normalfont\quad (i)]
\item
There exists $D \preceq A$ such that $\partial(D) \ge t$ and $\lambda(D)+\partial(D) \ge \ell$.
\item
There exists $C \prec A$ with $\lambda((C\cap T_z) \ominus \{z\}) + \partial((C\cap T_z) \ominus \{z\}) \ge \ell - t$.
\item
There exists $E \prec A$ such that $\partial(E) \ge \ell - t$.
\end{enumerate}
\end{la}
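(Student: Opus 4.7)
My plan is to prove Lemma \ref{la2} by a greedy descent through $A$ that preserves the invariant $T_z \subseteq \Gr{A_i}$, followed by a structural analysis of the terminal ``split'' point. Specifically, I would construct a maximal chain $A = A_0 \succ A_1 \succ \cdots \succ A_n$ in which each $A_{i+1}$ is a child of $A_i$ with $T_z \subseteq \Gr{A_{i+1}}$. Since $\ell \ge 1$ forces $|E(T_z)| = 2^{\ell+1}-2 \ge 2$, $A_n$ cannot be atomic, so $A_n = \sq{B,C}$ for some $B,C \prec A$ satisfying $T_z \subseteq \Gr{B} \cup \Gr{C}$ but with neither $\Gr{B}$ nor $\Gr{C}$ alone containing $T_z$.

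If some $A_i$ along this chain satisfies $\partial(A_i) \ge t$, I would take $D = A_i$; because $T_z \subseteq \Gr{A_i}$ gives $\lambda(A_i) \ge \ell$, we get $\lambda(D)+\partial(D) \ge \ell+t \ge \ell$, establishing condition (i). Otherwise assume $\partial(A_i) < t$ for every $i \le n$. If $\max(\partial(B),\partial(C)) \ge \ell - t$, taking $E$ to be the corresponding child of $A_n$ gives condition (iii). In the remaining case $\partial(B), \partial(C) < \ell - t$, I must derive condition (ii).

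For (ii), set $F_B \defeq T_z \cap \Gr{B}$ and $F_C \defeq T_z \cap \Gr{C}$, so $F_B \cup F_C = T_z$ with both factors proper. A crucial observation is that every $T_\infty$-edge incident to a vertex $v \in V(T_z) \setminus \{z\}$ lies in $E(T_z)$ (since $T_z$, as a complete subtree of $T_\infty$ rooted at $z$, is induced on $V(T_z)$ except at $z$'s $T_\infty$-parent), and hence $\partial(F_B) \le \partial(B)+1$ and $\partial(F_C) \le \partial(C)+1$, both strictly below $\ell-t+1$. WLOG take $|E(F_B)| \le |E(T_z)|/2 = 2^\ell - 1$ and invoke Lemma \ref{la1} on $F_B$ (with $x=z$, $k=\ell$) to obtain $v \in V_{\ell-\partial(F_B)}(T_z)$ with $E(F_B) \cap E(T_v^+) = \emptyset$; because $F_B \cup F_C \supseteq T_v^+$, this forces $T_v^+ \subseteq F_C$. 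Setting $\hat F_C \defeq (C \cap T_z) \ominus \{z\}$, I split on whether the $F_C$-component containing $T_v^+$ meets $z$: if not, then $T_v \subseteq \hat F_C$, yielding $\lambda(\hat F_C) \ge \ell - \partial(F_B)$, which combined with the bound on $\partial(F_B)$ delivers $\lambda(\hat F_C) + \partial(\hat F_C) \ge \ell - t$. If instead the $z$-component $K_z$ of $F_C$ swallows $T_v^+$, I use the fact (established above) that every edge from $V(K_z)$ to $V(T_z) \setminus V(K_z)$ must lie in $E(F_B) \setminus E(F_C)$, so that the number of chopped-off branches is controlled by $\partial(F_C)$; then Corollary \ref{la:no-branch} applied to $K_z$ (or an edge-counting argument) bounds $|E(K_z)|$ away from $|E(F_C)|$, Lemma \ref{la0} translates the surviving edge count into a $\lambda+\partial$ bound for $\hat F_C$, and together these recover condition (ii).

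The main obstacle I anticipate is the subcase where the $T_v$ produced by Lemma \ref{la1} lies inside $K_z$ and is therefore lost in $\hat F_C$: here I must compensate, showing that one of the truncated pieces $T_{u_i}$ (rooted outside $V(K_z)$) either contains a sufficiently tall complete subtree that survives in $\hat F_C$, or contributes enough boundary. A secondary subtlety is the regime $\partial(F_B) > \ell/6$ where Lemma \ref{la1} does not apply directly; in that range the bound $\partial(F_B) < \ell - t + 1$ forces $t > 5\ell/6$, which should make condition (iii) essentially immediate or else admit a direct application of Corollary \ref{la:no-branch} to $F_B$.
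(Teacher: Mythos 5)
Your greedy descent preserving $T_z \subseteq \Gr{A_i}$ is a natural first move, and the dispatch to condition (i) (some $\partial(A_i) \ge t$, using $\lambda(A_i) \ge \ell$) and condition (iii) (some child with boundary $\ge \ell - t$) is sound. The difficulty is concentrated in the terminal case $\partial(B),\partial(C) < \ell - t$, and there the argument has two genuine gaps. First, the ``WLOG $|E(F_B)| \le 2^\ell - 1$'' is not available: $F_B$ and $F_C$ only satisfy $F_B \cup F_C = T_z$, they can overlap heavily (e.g.\ each equal to $T_z$ minus a different leaf), so both can have far more than half the edges, and then Lemma~\ref{la1} has no valid target. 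Second, and more fundamentally, even when you do obtain $v \in V_{\ell - \partial(F_B)}(T_z)$ with $T_v \subseteq \hat F_C$, your only handle on $\partial(F_B)$ is $\partial(F_B) \le \partial(B) + 1 \le \ell - t$, so the conclusion is $\lambda(\hat F_C) \ge \ell - \partial(F_B) \ge t$ --- not $\ge \ell - t$. This establishes condition (ii) only when $t \ge \ell/2$, whereas in the outer proof of Theorem~\ref{thm:philb} the parameter $t = 6\Delta(A) + 3\Delta(B)$ is small compared to $\ell = k - \partial(B)$, so the case your bound fails is exactly the case that matters. Your ``secondary subtlety'' is also misread: $\ell/6 < \partial(F_B) < \ell - t + 1$ forces $t < 5\ell/6 + 1$ (small $t$), not $t > 5\ell/6$, and small $t$ makes conditions (ii) and (iii) \emph{harder}, not trivial.

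The paper's proof takes a different route that sidesteps both problems. It descends under a weaker invariant --- stop at the last $B \preceq A$ containing a full path $P$ from $z$ to a leaf of $T_z$, rather than all of $T_z$ --- which allows descending much further. It then introduces the maximal $j$ such that $T_y \subseteq \Gr B$ for some $y \in V_j(T_z) \cap P$, and shows $\partial(B) \ge \ell - j$ (each level of $P$ above $y$ contributes a boundary vertex in the off-path subtree); this yields condition (i) when $j \le \ell - t$. For $j \ge \ell - t + 1$, it takes the child $C$ of $B$ with $|E(C) \cap E(T_y)| \ge 2^j - 1$ and splits on whether $C$ contains the $z$--$y$ path and on the size of the component $H$ of $\Gr C$ at $y$, translating edge counts into $\lambda + \partial$ bounds via Lemma~\ref{la0} and using Corollary~\ref{la:no-branch} for large $H$ --- never invoking Lemma~\ref{la1} inside this proof (that lemma is used only in Theorem~\ref{thm:philb} to select $z$ in the first place). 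Because $j$ is pinned to $\ge \ell - t + 1$ from the start, the edge-count-to-$\lambda{+}\partial$ translation lands on the $\ell - t$ threshold directly, without the factor-of-two loss your version incurs.
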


\begin{proof}
Descend in the join-tree $A$ until reaching a $B \preceq A$ such that $B$ contains a path $P$ from $z$ to a leaf of $T_z$, but no $B' \prec B$ contains a path from $z$ to a leaf of $T_z$.

Let $j \in \{1,\dots,\ell\}$ be maximal such that there exists $y \in V_j(T_z)\cap P$ such that $T_y \subseteq \Gr B$.  We claim that $\partial(B) \ge \ell-j$. To see this, note that for every vertex $v \neq y$ on the path from $z$ to $y$ (a subpath of $P$), if $c(v)$ denotes the child of $v$ that is {not} on the path $P$, then $G$ does {not} contain $T^+_{c(v)}$ (because if it did then $\lambda(G) > k$). As a result, it must be the case that for every vertex $v\neq y$ on the path from $z$ to $y$, some vertex in $V(G_B) \cap V(T^+_{c(v)})$ lies in the boundary of $G$ and so, $\partial(G) \geq \ell - j$. 

As an illustration of this argument, consider the graph $G_B$ in Figure \ref{fig:j-small}. Then for every vertex $v\in \{v_2,\ldots,v_5\}$, either $v$ itself is in the boundary (like $v_2$ and $v_5$) or some vertex in $T_{c(v)}$ is in the boundary (for $v_3$, it is $c(v_3)$ for example and for $v_4$, it is either child of $c(v_4)$).

\begin{figure}
\centering
\begin{subfigure}{0.45\textwidth}
    \centering
    \includegraphics[scale=0.5]{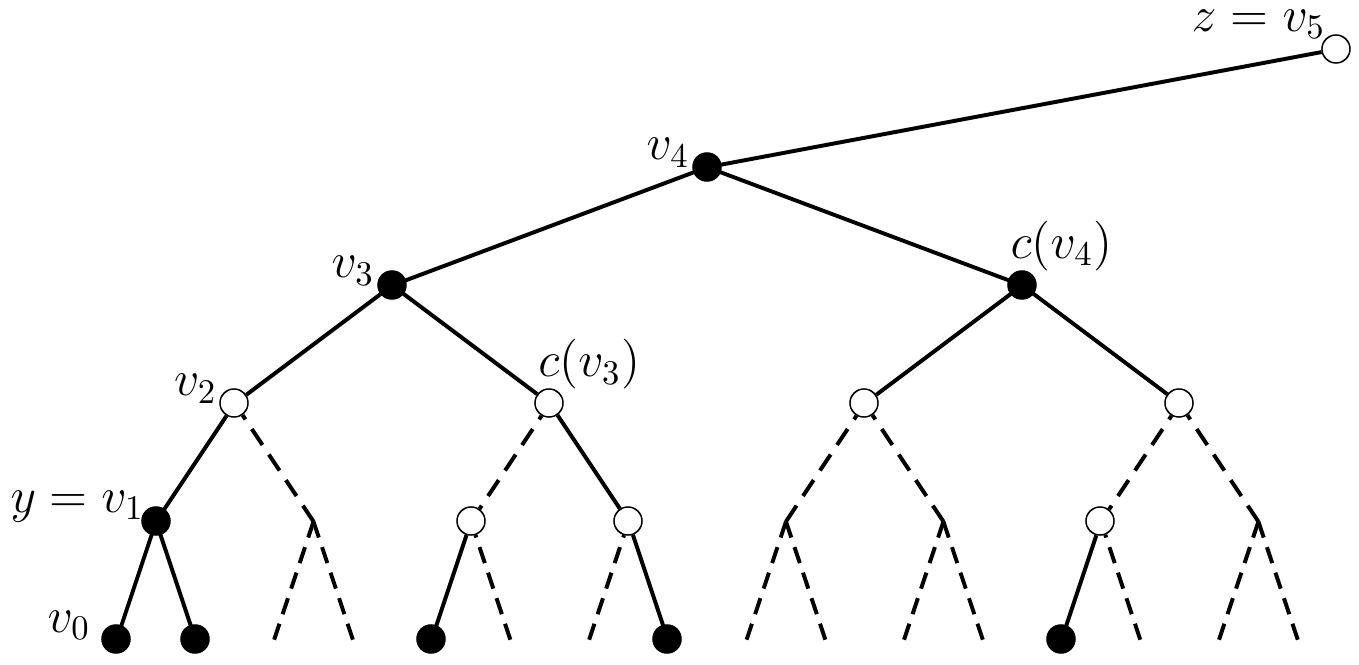} 
    \caption{\normalfont\footnotesize Here $j = 1$ is ``small''. For each $i \in \{2,\dots,5\}$, either $v_i$ or a node in its right subtree lies in the boundary of $B$.} 
    \vspace{2ex}
    \label{fig:j-small}
  \end{subfigure}\quad
  \begin{subfigure}{0.45\textwidth}
    \centering
    \includegraphics[scale=0.5]{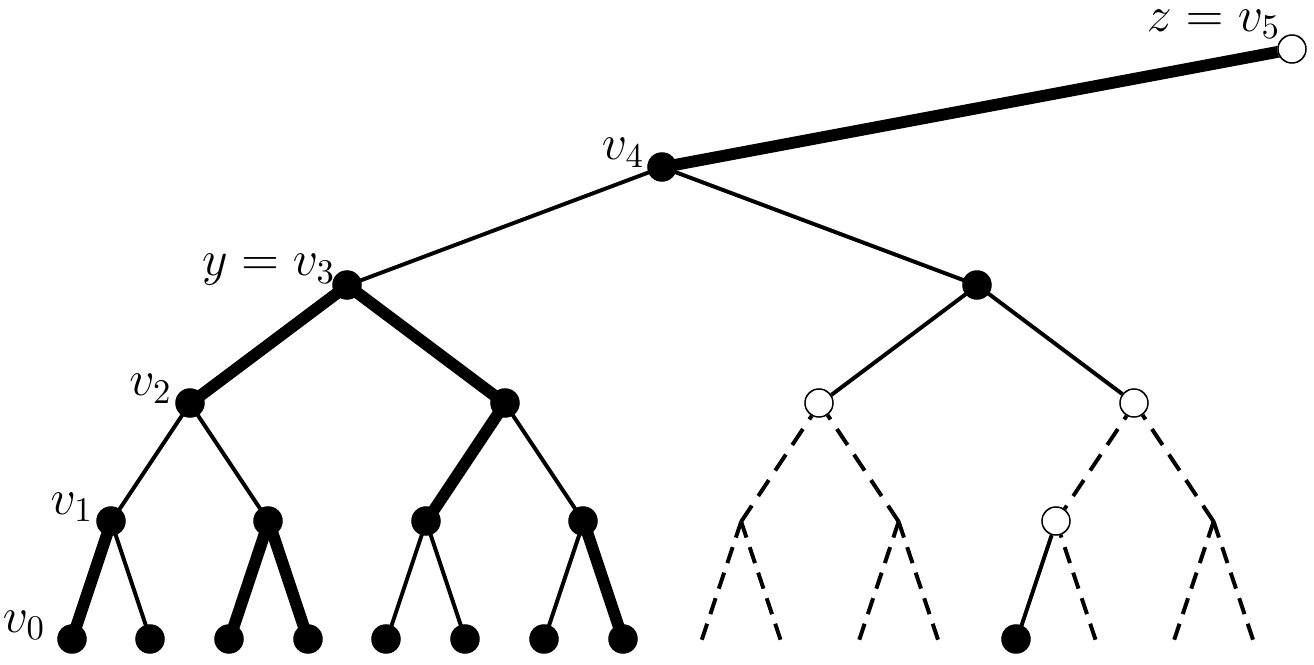} 
    \caption{\normalfont\footnotesize $C$ doesn't contain the path between $y$ and $z$. So we can directly use Lemma~\ref{la0} on $(C\cap T_z)\ominus \{z\}$.} 
    \vspace{2ex}
    \label{fig:zy_not_connected}
  \end{subfigure}\\
  \begin{subfigure}{0.45\linewidth}
    \centering
    \includegraphics[scale=0.5]{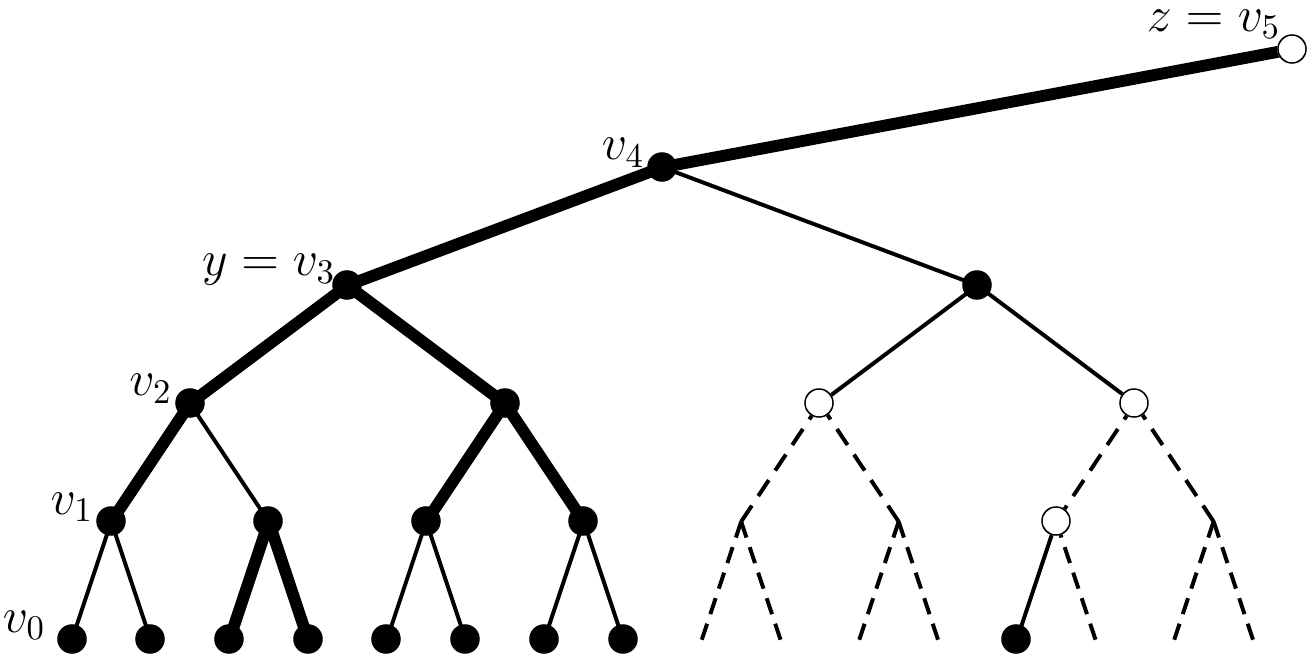} 
    \caption{\normalfont\footnotesize $C$ contains the path from $z$ to $y$ and $H\cap T_y$ is ``large''. But also $H$ is ungrounded and thus has large boundary.} 
    \vspace{1ex}
    \label{fig:C_y-large}
  \end{subfigure}\quad 
  \begin{subfigure}{0.45\linewidth}
    \centering
    \includegraphics[scale=0.5]{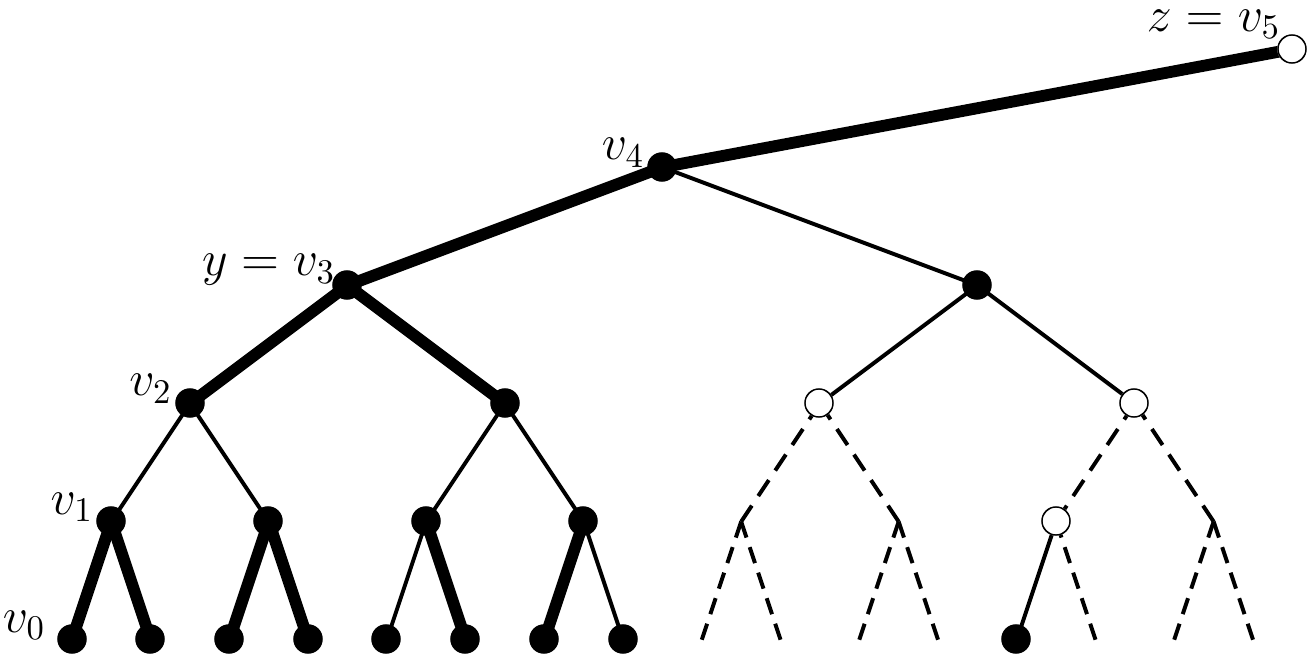} 
    \caption{\normalfont\footnotesize $H\cap T_y$ is ``small''. But then again, $(C\cap T_z)\ominus \{z\}$ is large and so Lemma~\ref{la0} is applicable.} 
    \vspace{1ex}
    \label{fig:C_y-small}
  \end{subfigure}
  \caption{\small An illustration with $\ell = 5$, $t = 3$.  In each example, solid lines denote the edges of $B$, thickly shaded lines denote the subgraph $C$, and dashed lines denote edges of $T_z \setminus B$. 
  Nodes shaded white lie in the boundary of $B$ (thought not necessarily the boundary of $C$).  $P$ is the path $v_0v_1v_2v_3v_4v_5$.}
\end{figure}

Consider the case that $j \le \ell-t$ (again see Figure~\ref{fig:j-small}).  Letting $D \defeq B$, we have $\partial(D) \ge \ell-j \ge t$ and $\lambda(D) \ge j$, so condition (i) is satisfied. We shall therefore proceed under the assumption that 
\[
  j \ge \ell - t + 1.
\]

Since $T_y \subseteq \Gr{B}$, at least one child $C$ of $B$ satisfies 
\[
  |E(C) \cap E(T_y)| \ge \frac12|E(T_y)| = 2^j-1.
\]
Fix one such $C$.

Consider the case that $C$ does \underline{not} contain the path between $z$ and $y$ (see Figure~\ref{fig:zy_not_connected}). Then $C\cap T_y \subseteq C \ominus \{z\}$, so by Lemma \ref{la0},
\[
  \lambda((C\cap T_z) \ominus \{z\}) 
  + \partial((C\cap T_z) \ominus \{z\}) 
  &\ge 
  \log(|E(C \cap T_y)|+1) 
  \ge
  \log(2^j)
  \ge
  \ell-t + 1.
\]
In this case, we satisfy condition (ii).  We shall therefore proceed under the assumption that $C$ contains the path between $z$ and $y$. 

Note that $C$ does \underline{not} contain a path from $y$ to any leaf of $T_y$ (since otherwise $C$ would contain a path from $z$ to a leaf of $T_z$, contradicting the way we choose $B \preceq A$).
Let $H$ be the connected component of $\Gr C$ that contains $y$ (and hence also contains the path between $z$ and $y$).  

We now consider two final cases, depending on the size of $|E(H) \cap E(T_y)|$.
First, assume $|E(H) \cap E(T_y)| \ge 2(\ell-t)$ (see Figure~\ref{fig:C_y-large}). 
In this case, Corollary~\ref{la:no-branch} implies
\[
  \partial(C) \ge \partial(H) \ge \frac12\Big(|E(H)\cap E(T_y)|+1\Big) \ge \ell-t.
\]
We satisfy condition (iii) setting $E \defeq C$.

Finally, assume $|E(H) \cap E(T_y)| \le 2(\ell-t) - 1$ (see Figure~\ref{fig:C_y-small}). We have
\[
  |E((C\cap T_z) \ominus \{z\})| 
  &\ge |E(C) \cap E(T_y)| - |E(H) \cap E(T_y)|\\
  &\ge (2^j - 1) - (2(\ell-t) - 1)\\
  &\ge 2^{\ell-t+1} - 2(\ell-t).
\]
Lemma \ref{la0} now implies
\[
  \lambda((C\cap T_z) \ominus \{z\}) 
  + \partial((C\cap T_z) \ominus \{z\}) 
  &\ge 
  \log(|E((C\cap T_z) \ominus \{z\})| + 1) \\
  &\ge
  \log(2^{\ell-t+1} - 2(\ell-t) + 1) \\
  &> 
  \ell-t 
\]
since $2^{x+1} - 2x + 1 > 2^x$ for all $x \ge 0$. 
Therefore, condition (ii) is again satisfied in this final case. 
\end{proof}

\subsection{Proof of Theorem \ref{thm:philb}}\label{sec:philb}

We now prove Theorem \ref{thm:philb}: the lower bound $\Phi(A) \ge \eps\lambda(A) + \delta\Delta(A)$ where $\eps = 1/30$ and $\delta = 2/5$.

We argue by a structural induction on join-trees $A$.  First, consider the case that $\Gr{A}$ is empty, then $\Phi(A) = 0$ and $\eps\lambda(A) + \delta\Delta(A) = 0$. We shall assume that $\Gr{A}$ is non-empty.

Next consider the base case where $A$ is the atomic join-tree $\sq{e}$ for an edge $e \in E(T_\infty)$.  In this case, we have $\lambda(A) \le 1$ and 
\[
  \Phi(A) = \Delta(A) = 
  \begin{cases}
    2/3 &\text{if $e$ contains a leaf,}\\
    4/3 &\text{otherwise.}
  \end{cases}
\]
Therefore, $\eps\lambda(A) + \delta\Delta(A) \le (1/30) + (2/5)(4/3) = 17/30$.
We are done, since $\Phi(A) \ge 2/3 > \eps\lambda(A) + \delta\Delta(A)$.

From now on, let $A$ be a non-atomic join-tree with whose graph is non-empty.  Let
\[
  k \defeq \lambda(A).
\]
Our goal is thus to prove $\Phi(A) \ge \eps k + \delta\Delta(A)$, which we do by analyzing numerous cases.

Consider first the case that $k = 0$. In this case, we clearly have $\Phi(A) \ge \eps k + \delta \Delta(A)$ (since $\Phi \ge \Delta \ge 0$ and $\delta < 1$).
So shall proceed on the assumption that $k \ge 1$. 

Since $\Phi \ge \Delta$, we are done if $\Delta(A) \ge \eps k + \delta\Delta(A)$. So we shall proceed on the additional assumption that
\begin{equation}\label{eq:DeltaA}
  \Delta(A) \le \frac{\eps}{1-\delta} k
  = \frac{1}{18} k.
\end{equation}

\paragraph{Fixing $x \in T_k$ with $T_k \subseteq \Gr{A}$:}
By definition of $\lambda(A)$, there exists a vertex $x \in T_k$ such that $T_k \subseteq \Gr{A}$.  Let us fix any such $x$.

\paragraph{Fixing $B \preceq A$ with $2^{k-1} \le |E(B) \cap E(T_x)| \le 2^k-1$:}  
We next fix a sub-join-tree $B \preceq A$ satisfying $2^{k-1} \le |E(B) \cap E(T_x)| \le 2^k-1$. 
To see that such $B$ exists, first note that $|E(A) \cap E(T_x)| = |E(T_x)| = 2^{k+1}-2$. Consider a walk down $A$ which at each step descends to a child $C$ which maximizes $|E(C) \cap E(T_x)|$.  This quantity shrinks by a factor $\ge 1/2$ at each step, eventually reaching size $1$.  Therefore, at some stage, we reach a sub-join-tree $B$ such that the intersection size $|E(B) \cap E(T_x)|$ is between $2^{k-1}$ and $2^k-1$.\bigskip

Observe that $\Phi(A) \ge \Phi(B) \ge \Delta(B)$ (by ($\dag$) and the fact that $\Phi \ge \Delta$ for all join-trees).  Therefore, we are done if $\Delta(B) \ge \eps k + \delta\Delta(A)$. So we shall proceed under the additional assumption that
\begin{equation}\label{eq:DeltaB}
  \Delta(B) \le \eps k + \delta\Delta(A)
  =
  \frac{1}{30}k + \frac{2}{5}\Delta(A)
  \le
  \Big(\frac{1}{30} + \frac{2}{5}\cdot\frac{1}{18}\Big)k
  =
  \frac{1}{18}k.
\end{equation}

Since $|E(B)| \ge |E(B) \cap E(T_x)| \ge 2^{k-1}$, Lemma \ref{la0} tells us that $\lambda(B) + \partial(B) \ge k-1$.  We make note of the fact that this implies 
\begin{align}\label{eq:PhiB}
  \Phi(B) 
  &\ge
  \eps\lambda(B) + \delta\Delta(B) &&\text{(induction hypothesis)}\\
\notag  &\ge
  \eps \Big(k - \partial(B) - 1\Big) + \delta \Delta(B)\\
\notag  &\ge
  \eps k + (\delta - 3\eps) \Delta(B) - \eps &&\text{(using $-\partial \ge -3\Delta$)}.
\end{align}

\paragraph{Fixing $z \in V_{k-\partial(B)}(T_x)$ with $E(B) \cap E(T_z^+) = \emptyset$:}
Note that $\partial(B) \le 3\Delta(B) \le k/6$ by (\ref{eq:DeltaB}).  Since 
$|E(B) \cap E(T_x)| \le 2^k - 1$, the hypotheses of Lemma \ref{la1} are satisfied with respect to the vertex $x$ and the graph $\Gr{B} \cap T_x$.
Therefore, we may fix a vertex $z \in V_{k-\partial(B)}(T_x)$ such that $E(B) \cap E(T_z^+) = \emptyset$.\bigskip

We next introduce a parameter
\[
  t \defeq 6\Delta(A) + 3\Delta(B).
\] 
Note that $t$ is an integer, since $3\Delta$ is integral. 
Our choice of parameters moreover ensure that $1 \le t \le k/2$, since $\Delta(A),\Delta(B) \le k/18$ by (\ref{eq:DeltaA}),(\ref{eq:DeltaB}) and $\Delta(A),\Delta(B) \ge 1/3$ for all nonempty graphs.  
Since $k - \partial(B) \ge k - 3\Delta(B) \ge 5k/6$, it follows that $t < k - \partial(B)$.
\bigskip

Since $z \in V_{k-\partial(B)}$ and $T_z \subseteq T_x \subseteq \Gr{A}$, Lemma \ref{la2} (with respect to $z$ and $1 \le t < \ell \defeq k - \partial(B)$) tells us that one of the following conditions holds:
\begin{enumerate}[\normalfont\quad (i)]
\item
There exists $D \preceq A$ such that $\partial(D) \ge t$ and $\lambda(D) + \partial(D) \ge k - \partial(B)$. 
\item
There exists $C \prec A$ with $\lambda((C\cap T_z) \ominus \{z\}) + \partial((C\cap T_z) \ominus \{z\}) \ge k-\partial(B)-t$.
\item
There exists $E \prec A$ with $\partial(E) \ge k-\partial(B)-t$.
\end{enumerate}
We will show that $\Phi(A) \ge \eps k + \delta \Delta(A)$ in each of these three cases.

\paragraph{Case (i):}
Suppose $D = A$. Then it follows that
$
\partial (A) \geq t > 6 \Delta(A)
$,
but this contradicts Lemma~\ref{la:partial} by which $\Delta(A) \geq \partial(A)/3$, as boundary of a non-empty graph is always non-empty. Thus, $D\prec A$ and we have
\[
  \Phi(A) 
  \stackrel{(\dag)}{\ge} 
  \Phi(D) 
  \vphantom{\big|}&\ge 
  \eps \lambda(D) + \delta \Delta(D)
  &&\text{(induction hypothesis)}\\
  &\ge 
  \eps \lambda(D) +  \frac{\delta}{3} \partial(D)
  &&\text{(since $\Delta \ge \partial/3$)}\\
  &\ge
  \eps \Big(k-\partial(B)\Big) + \Big(\frac{\delta}{3} - \eps\Big) \partial(D)
  &&\text{(since $\lambda(D) + \partial(D) \ge k - \partial(B)$ by Case (i))}\\
  &\ge
  \eps \Big(k-\partial(B)\Big) + \Big(\frac{\delta}{3} - \eps\Big) t
  &&\text{(since $\delta/3 > \eps$ and $\partial(D) \ge t$ by Case (i))}\\
  &=
  \eps \Big(k-\partial(B)\Big) + \Big(\frac{\delta}{3} - \eps\Big)\Big(6\Delta(A) + 3\Delta(B)\Big)\\
  &\ge
  \eps \Big(k - 6\Delta(A) - 6\Delta(B)\Big) + \delta\Big(2\Delta(A) + \Delta(B) \Big)
  &&\text{(since $-\partial \ge -3\Delta$)}\\
  &= \eps k + \delta \Delta(A) +   \Big( \Delta(A) + \Delta(B) \Big) (\delta - 6\eps)\\
  &\ge
  \eps k + \delta \Delta(A)
  &&\text{(since $\delta > 6\eps$)}.
\]

\paragraph{Case (ii):} 
Recall that $z \in V_{k-\partial(B)}(T_x)$ was chosen such that $E(B) \cap E(T_z^+) = \emptyset$. 
It follows that the graph of $(C\cap T_z) \ominus \{z\}$ is a union of connected components of $C \ominus S$. 
Therefore,
$\lambda(C \ominus B) \ge \lambda((C\cap T_z) \ominus \{z\})$ and $\Delta(C \ominus B) \ge \Delta((C\cap T_z) \ominus \{z\})$.
Case (ii) now implies
\[
  \lambda(C \ominus B) +  \partial(C \ominus B)
  \ge 
  k - \partial(B) - t
\]
It follows that
\[
  \Phi(C \ominus B)
  &\ge
  \eps\lambda(C \ominus B) + \delta\Delta(C \ominus B)
  &&\text{(induction hypothesis)}\\
  \vphantom{\Big|}&\ge
  \eps\Big(k - \partial(B) - \partial(C \ominus B) - t \Big) + \delta\Delta(C \ominus B)
  &&\text{(by the above inequality)}\\
  &\ge
  \eps\Big(k  - 6\Delta(A) - 6\Delta(B) - 3\Delta(C \ominus B) \Big) + \delta\Delta(C \ominus B)
  &&\text{(using $-\partial \ge -3\Delta$)}\\
  &=
  \eps k - 6\eps \Big(\Delta(A) + \Delta(B)\Big) + (\delta - 3\eps)\Delta(C \ominus B)\\
  &\ge
  \eps k - 6\eps \Big(\Delta(A) + \Delta(B)\Big) 
  &&\text{(since $\delta > 3\eps$)}.
\]
By inequality ($\ddag$) and the induction hypothesis, we have
\[
  \Phi(A)
  &\stackrel{(\ddag)}{\ge}
  \frac12\Phi(B) + \frac12\Phi(C \ominus B) + \frac12\Delta(A)\\
  &\ge
  \frac12\Big(\eps k + (\delta - 3\eps) \Delta(B) - \eps\Big)
  + \frac12\Big(\eps k -  6\eps \Big(\Delta(A) + \Delta(B)\Big) \Big)
  + \frac12\Delta(A)
  &&\text{(by (\ref{eq:PhiB}) and the above)}\\
  &=
  \eps k + \frac{1-6\eps}{2}\Delta(A) + \frac{\delta - 9\eps}{2}\Delta(B) - \frac{\eps}{2}\\  
  \vphantom{\Big|}&=
  \eps k + \delta\Delta(A) + \frac{1}{20}\Delta(B) - \frac{1}{60}\\
  \vphantom{\Big|}&\ge
  \eps k + \delta \Delta(A).
\]
The final step above uses $\Delta(B) \ge \partial(B)/3 \ge 1/3$ since $\Gr{B}$ is a nonempty subgraph of $T_\infty$.

\paragraph{Case (iii):} We have
\[
  \Phi(A) \stackrel{(\dag)}\ge \Phi(E) \ge \Delta(E) \ge \frac13\partial(E)
  &\ge \frac13\Big(k - \partial(B) - t\Big)
  &&\text{(by the inequality of Case (iii))}\\
  &\ge \frac13\Big(k - 6\Delta(A) - 6\Delta(B)\Big)
  &&\text{(using $-\partial \ge -3\Delta$)}\\
  &\vphantom{\Big|}= \eps k + \delta\Delta(A) + 
  \Big(\frac13-\eps\Big)k - (2+\delta)\Delta(A) - 2\Delta(B)\hspace{-3in}\\
  &= \eps k + \delta\Delta(A) + 
  \Big(\frac{3}{10}k - \frac{12}{5}\Delta(A) - 2\Delta(B)\Big).\hspace{-3in}
\]
Recalling that $\Delta(A),\Delta(B) \le k/18$ by (\ref{eq:DeltaA}), (\ref{eq:DeltaB}), we have
\[
  \frac{3}{10}k - \frac{12}{5}\Delta(A) - 2\Delta(B)
  &\ge
  \Big(\frac{3}{10} - \frac{12}{5}\cdot\frac{1}{18} - 2\cdot\frac{1}{18}\Big)k
  =
  \frac{5}{90}k > 0.
\]
This establishes that $\Phi(A) \ge \eps k + \delta \Delta(A)$ in the final case, which concludes the proof of the theorem.\qed

\subsection{Proofs of Lemmas \ref{lem:boundarylarge}, \ref{la0}, \ref{la1}}\label{sec:lemmas}

\paragraph{Proof of Lemma \ref{lem:boundarylarge}.}
{\em 
Let $H$ be a non-empty finite subgraph of $T_\infty$, all of whose components are ungrounded. Then $\partial(H) \geq \frac12 (|E(H)|+3)$.}\medskip

As $\partial(\cdot)$ and $|E(\cdot)|$ are additive over disjoint components, it suffices to prove the lemma in the case where $H$ is connected. Let $y$ be the unique highest vertex in $H$ (i.e.,\ belonging to $V_k$ for the maximal $k$), which we view as the ``root'' of $H$.

We now argue by induction on the size of $E(H)$.
If $|E(H)| = 1$, then both $y$ and one of its children lie in the boundary of $H$ and hence, we are done. If $|E(H)| = 2$, then either $H$ is the graph induced by $y$ and its two children, or $H$ is a path of length $2$ emanating from $y$. In either case, as $H$ is ungrounded, all three vertices are in the boundary of $H$ and the claim follows. 

So assume that $|E(H)| > 2$. We consider two cases:\medskip

\noindent\underline{Case 1}: $H$ contains a vertex $v$ such that both its children $v_1$ and $v_2$ are leaves of $H$. Let $H'$ be the subgraph of $H$ induced by $V(H)\setminus \{v_1,v_2\}$. By the induction hypothesis, $\partial(H') \geq \frac12 (|E(H')|+3)$. But $\partial(H) = \partial(H') + 1$ because $v_1,v_2$ are in the boundary of $H$ while $v$ is not. Therefore,
\[
\partial(H) = \partial(H') + 1 \geq \frac12 (|E(H')|+5) = \frac12 (|E(H)|+3).
\]

\noindent\underline{Case 2}: There exists a leaf $u$ of $H$ such that
$v \defeq \mr{parent}(u) \in V(H)$ and $w \defeq \mr{parent}(v)\in V(H)$ and $\deg_H(v) = 2$. Define $H'$ to be the subgraph of $H$ induced by $V(H)\setminus \{u,v\}$. By the induction hypothesis, $\partial(H') \geq \frac12 (|E(H')|+3)$. But $\partial(H) \geq \partial(H') + 1$ because $u,v$ are in the boundary of $H$ while $w$ \emph{may} or \emph{may not} be. In either case, we again have,
\[
\partial(H) \geq \partial(H') + 1 \geq \frac12 (|E(H')|+5) = \frac12 (|E(H)|+3),
\]
which completes the proof.\qed
\vspace{1ex}

\paragraph{Proof of Lemma \ref{la0}.}
{\em For all finite subgraphs $G$ of $T_\infty$, 
$\lambda(G) + \partial(G) \ge \log(|E(G)|+1)$.
}\medskip

We will prove the following equivalent inequality
\begin{equation}\label{eq:stronger}
  |E(G)| + 1\le 2^{\partial(G)+\lambda(G)}.
\end{equation}
We claim that it suffices to establish (\ref{eq:stronger}) for connected graphs $G$. To see why, assume (\ref{eq:stronger}) holds for two vertex-disjoint graphs $G$ and $H$.  Then we have
\[
  |E(G\cup H)| + 1 = (|E(G)| + 1) + (|E(H)| + 1) - 1 
  &\le 
  2^{\partial(G)+\lambda(G)}
  +
  2^{\partial(H)+\lambda(H)}-1\\
  &\le
  2^{\max\{\lambda(G),\lambda(H)\}}\cdot(2^{\partial(G)} + 2^{\partial(H)})-1\\
  &\leq
  2^{\max\{\lambda(G),\lambda(H)\}+\partial(G) + \partial(H)}-1\\
  &=
  2^{\partial(G \cup H)+\lambda(G \cup H)}-1< 2^{\partial(G \cup H)+\lambda(G \cup H)}
\]
proving (\ref{eq:stronger}) for the graph $G \cup H$.

We now prove (\ref{eq:stronger}) assuming $G$ is connected.  
If $G$ is ungrounded, then we have $\lambda(G) = 0$ and $\partial(G) \ge \frac12(|E(G)|+3)$ by Lemma \ref{lem:boundarylarge}, and so
\[
  |E(G)| + 1\le 2\partial(G)-2 < 2^{\partial(G)} = 2^{\partial(G)+\lambda(G)}.
\]

So assume now that $G$ is grounded. 
Let $y \in V_{m}$ be the unique vertex in $G$ of maximum height. Let $k \defeq \lambda(G)$ (note that $0 \le k \le {m}$) and fix a choice of $x \in V_k$ such that $T_x \subseteq G$.  
If $y = x$, then $G = T_x$ and therefore $\lambda(G) = k$ and $\partial(G) = 1$ and $|E(G)| = 2^{k+1} - 2$, so the inequality follows. 
So assume that $y \neq x$. 

As $G$ is connected, it contains the path $P$ from $y$ to $x$. Consider the case when $G$ contains only one child $y'$ of $y$. (See Figure \ref{fig:bdry1} for an example.) Then 
$G\subseteq T^+_{y'}$ and therefore $|E(G)| \leq |E(T^+_{y'})| = 2^{m} - 1$. Further, we claim that $\partial(G) \geq {m} - k$. To see this, note that for every vertex $v \neq x$ on the path $P$, if $c(v)$ denotes the child of $v$ that is {not} on the path $P$, then $G$ does {not} contain $T^+_{c(v)}$ (because otherwise, $\lambda(G) > k$). As a result, it must be the case that for every $v\neq x$ on the path $P$, some vertex in $V(G) \cap V(T^+_{c(v)})$ lies in the boundary of $G$ and so, $\partial(G) \geq {m} - k$. 
Therefore, the desired inequality follows as 
$|E(G)| + 1\leq 2^{m} = 2^{(m-k) + k}  \leq 2^{\partial(G)+\lambda(G)}.$
\begin{figure}[H]
\centering
\begin{subfigure}{0.45\linewidth}
\includegraphics[scale=0.5]{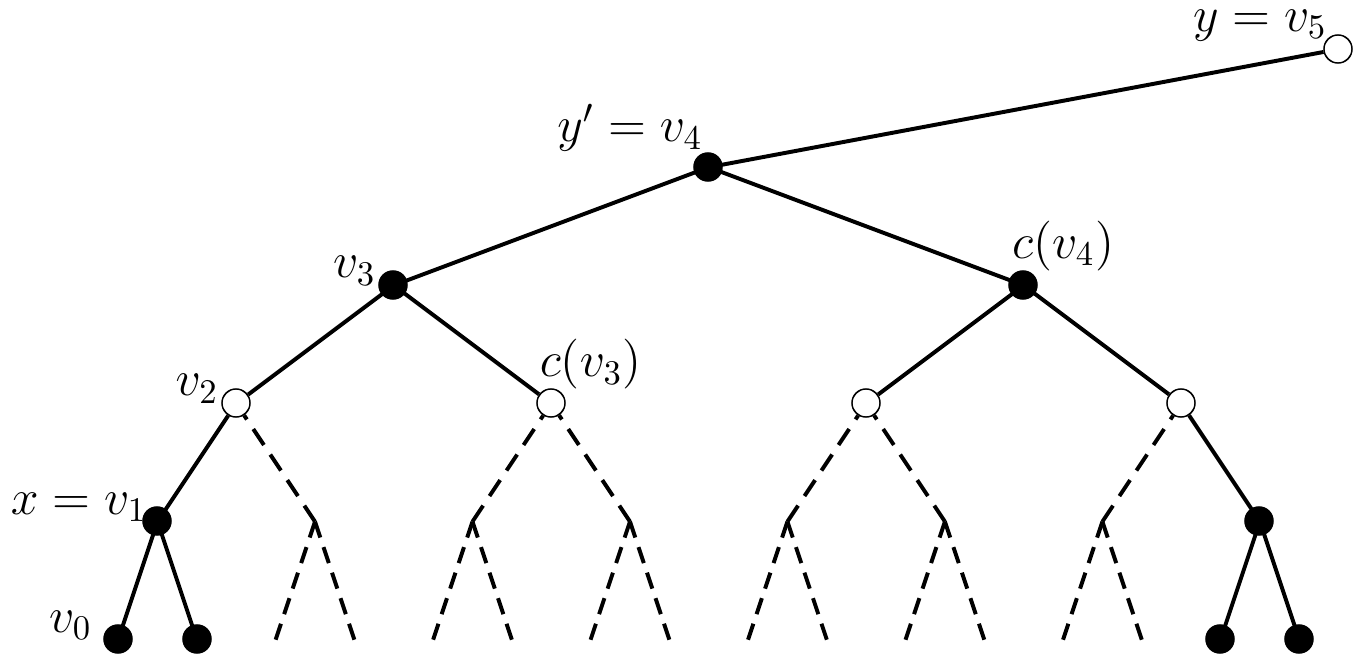}
\caption{An example where ${m} = 5$ and $y$ has only one child in $G$. 
Every vertex on the path from $y$ to $\mr{parent}(x)$ ``contributes'' to the boundary of $G$. 
(Here each $v_i$ belongs to $V_i$.)}
\label{fig:bdry1}
\end{subfigure}\quad\ 
\begin{subfigure}{0.45\linewidth}
\centering
\includegraphics[scale=0.5]{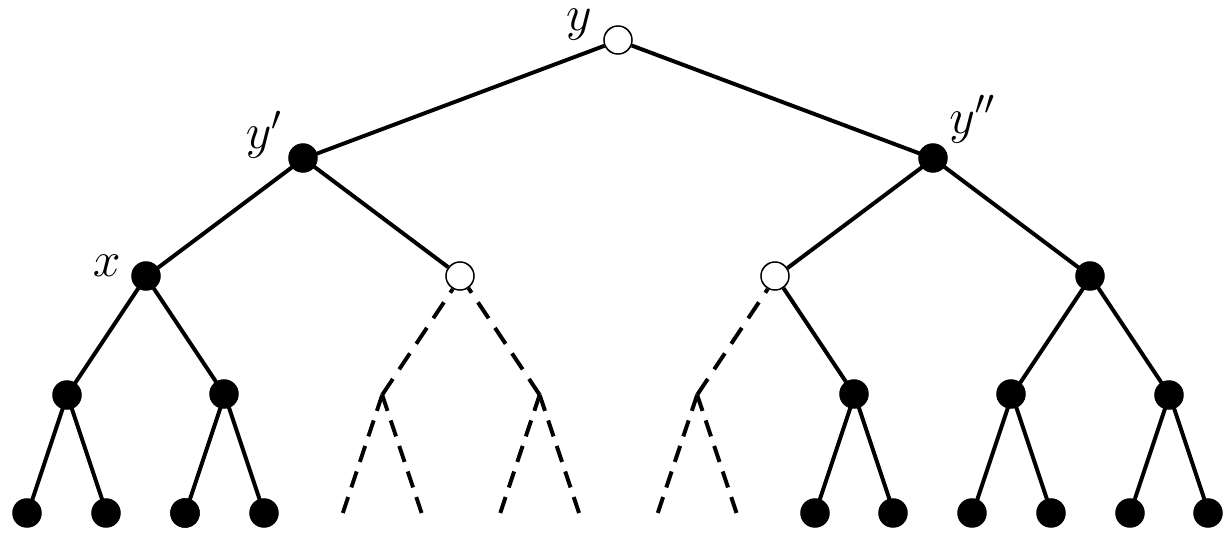}
\caption{An example where ${m} = 4$ and $G$ contains both children of $y$. 
Every vertex on the path from $y$ to $\mr{parent}(x)$ ``contributes'' to the boundary of $G$. Furthermore, $T_{y''} \not\subseteq G$, thereby providing an additional vertex in the boundary. }
\label{fig:bdry2}
\end{subfigure}
\caption{Examples of two cases that arise when the graph $G$ is grounded.}
\end{figure}

Next, suppose that $G$ contains both children of $y$. Let $y'$ be the child of $y$ on the path to $x$, and let $y''$ be its sibling.
(See Figure \ref{fig:bdry2} for an example.) Note that $G$ cannot contain the complete binary tree $T_{y''}$ (since otherwise $\lambda(G) > k$ contradicting our choice of $x$).
Therefore, there is at least one vertex in $T_{y''}$ that lies in the boundary of $G$. As a result, $\partial(G) \geq {m} - k + 1$ (the ${m}-k$ vertices identified by the previous argument, along with the additional boundary element in $V(G) \cap V(T_{y''})$).
Also as $G\subseteq T_y$, we have $|E(G)| \leq 2^{{m}+1} - 2$ and thus, $|E(G)| + 1< 2^{m+1} = 2^{(m-k+1) + k}  \leq 2^{\partial(G)+\lambda(G)}.$
\qed

\paragraph{Proof of Lemma \ref{la1}.} 
{\em Let $x \in V_k$ and suppose $G \subseteq T_x$ such that $|E(G)| \le 2^k-1$ and $\partial(G) \le k/6$. Then there exists a vertex $z \in V_{k - \partial(G)}(T_x)$ such that $E(G) \cap E(T^+_z) = \emptyset$.}\medskip

The claim is easy to establish when $G$ is empty or of the form $T_y$ or $T_y^+$ where $y \in V(T_\infty)$ (that is, in the cases where $\partial(G) \le 1$).  So we shall assume that $\partial(G) \ge 2$.

Note that $T_x$ contains $2^{\partial(G)}$ vertices with height $k-\partial(G)$, that is, $|V_{k-\partial(G)}(T_x)| = 2^{\partial(G)}$ (Observation \ref{obs:levelsize}).
Let 
\[
  Y &\defeq \{y \in V_{k-\partial(G)}(T_x) : T^+_y \subseteq G\},\\
  Z &\defeq \{z \in V_{k-\partial(G)}(T_x) : \emptyset \subsetneqq G \cap T^+_z \subsetneqq G\}.
\]
It suffices to show that $|Y| + |Z| < 2^{\partial(G)}$.

Since $\bigsqcup_{y \in Y} E(T^+_y) \subseteq E(G)$, it follows that
\[
  |Y|\cdot (2^{k-\partial(G)+1}-1) \le |E(G)| \le 2^k-1.
\]
We next observe that $\partial(G) \ge |Z| + 1$. This is because for each $z \in Z$, $G$ has at least one boundary element in the set $V(G \cap T_z)$; and we get an additional boundary element by consider any element $w \in V(G)$ of maximum height, noting that $w$ cannot lie in $V(T_z)$ for any $z \in Z$.
Therefore,
\[
  |Y| + |Z| &\le \frac{2^k-1}{2^{k-\partial(G)+1}-1} + \partial(G) - 1.
\]
Letting $b \defeq \partial(G)$, it suffices to show that
\[
  \frac{2^k-1}{2^{k-b+1}-1} + b - 1 < 2^b 
  \text{, or equivalently,\ }
  2^b + (b-1)(2^{k-b+1}-1) < 2^k + 1.
\]
This numerical inequality is simple to verify 
for all $2 \le b \le k/6$, so we are clearly done by the assumption that $2 \le \partial(G) \le k/6$.

\section{A better potential function}\label{sec:Phi} 

We again fix a graph $G$ and a threshold weighting $\theta$.  In this section, we define a potential function $\Phi_\theta(A|S)$ with two parameters: a join-tree $A$ and a set $S \subseteq V(G)$.
This improved potential function serves the same purpose of lower-bounding $\tau(G)$. 
In Section \ref{sec:Pk} we use $\Phi_\theta(A|S)$ to obtain a better lower bound on $\tau(P_k)$.

Let us begin by recalling the defining inequalities for $\Phi_\theta(A)$:
\begin{align}
\tag{$\dag$}
  \Phi_\theta(A) &\ge \Phi_\theta(D) + \Delta_\theta(C \ominus D) + \Delta_\theta(A \ominus (C \cup D))
  &&\text{if $A \in \{\sq{B,C},\sq{C,B}\}$ and $D \preceq B$},\\
\tag{$\ddag$}
  \Phi_\theta(A) &\ge \frac12\Big(\Phi_\theta(D) + \Phi_\theta(E \ominus D) + \Delta_\theta(A) + \Delta_\theta(A \ominus (D \cup E))\Big)
  &&\text{if $D,E \prec A$}.
\end{align}
A first observation toward improving $\Phi_\theta(A)$ is that we could have included an additional inequality in the definition of $\Phi_\theta(A)$ while maintaining Theorem \ref{thm:tau} (the lower bound on $\tau(G)$ in terms of $\Phi_\theta(A)$).  We call this inequality ($\gad$), since it is a variant of ($\dag$):
\begin{align}
\tag{$\gad$}
  \Phi_\theta(A) &\ge \Phi_\theta(D \ominus C) + \Delta_\theta(C) + \Delta_\theta(A \ominus (C \cup D))
  &&\text{if $A \in \{\sq{B,C},\sq{C,B}\}$ and $D \preceq B$}.
\end{align}
A second observation is that, in the recursive view of $\Phi_\theta(A)$, we ``shrink'' more than necessary by passing to $\Phi_\theta(D \ominus C)$ in ($\gad$) and $\Phi_\theta(E \ominus D)$ in ($\ddag$). Recall that $D \ominus C$ is a join-tree with graph $\Gr{D} \ominus V(C)$ formed by the connected components of $\Gr{D}$ that are vertex-disjoint from $V(C)$.  Rather than recursing on $D \ominus C$, we can instead simply recurse on $D$ while treating ``${\ominus}\,C$'' as an extra parameter.  These two observations lead to the definition of $\Phi_\theta(A|S)$ below.

\begin{notn}
The following alternative notation for $\Delta_\theta$ will be convenient in what follows.  For a graph $F \subseteq G$ and a set $S \subseteq V(G)$, we write $\Delta_\theta(F|S)$ for $\Delta_\theta(F \ominus S)$.  Similarly, for a join-tree $A$, we write $\Delta_\theta(A|S)$ for $\Delta_\theta(\Gr{A} \ominus S)$.  
\end{notn}

\begin{df}[The potential function $\Phi_\theta(A|S)$]
\label{df:Phi2}
Let $\Phi_\theta : \{$join-trees for subgraphs of $G\} \times \{$subsets of $V(G)\} \to \R_{\ge 0}$ be the unique pointwise minimum function --- written as $\Phi_\theta(A|S)$ rather than $\Phi_\theta(A,S)$ --- satisfying the following inequalities for all sets $S \subseteq V(G)$ and join-trees $A,B,C,D$:
\begin{align}
\tag{$\dag$}
  &&&&\Phi_\theta(A|S) &\ge \vphantom{\frac12}
  \Phi_\theta(B|S) + \Delta_\theta(C|S \cup B)
  &&\hspace{-0in}\text{if $A \in \{\sq{B,C},\sq{C,B}\}$},\\
\tag{$\gad$}
  &&&&\Phi_\theta(A|S) &\ge 
  \Delta_\theta(B|S) + \Phi_\theta(C|S \cup B)
  &&\hspace{-0in}\text{if $A \in \{\sq{B,C},\sq{C,B}\}$},\\
\tag{$\ddag$}
  &&&&\Phi_\theta(A|S) &\ge \frac12\Big(\Phi_\theta(D|S) + \Phi_\theta(A|S \cup D) + \Delta_\theta(A|S)\Big)
  &&\text{if $D \prec A$}.&&&&
\end{align}
Alternatively, $\Phi_\theta(A|S)$ has the following recursive characterization:

\begin{itemize}
\item
If $A$ is an atomic join-tree, then $\Phi_\theta(A|S) \defeq \Delta_\theta(A|S)$.
\item
If $A = \sq{B,C}$, then
\[
  \Phi_\theta(A|S) \defeq 
  \max
  \left\{
  \begin{aligned}
  &\vphantom{\Big|}\Phi_\theta(B|S) + \Delta_\theta(C|S \cup B),\ 
  \Phi_\theta(C|S) + \Delta_\theta(B|S \cup C),\:\\
  &\Delta_\theta(B|S) + \Phi_\theta(C|S \cup B),\ 
  \Delta_\theta(C|S) + \Phi_\theta(B|S \cup C),\:\\
  &\max_{D \prec A}\, \smash{\frac12}\Big(\Phi_\theta(D|S) + \Phi_\theta(A|S \cup D) + \Delta_\theta(A|S)\Big)
  \end{aligned}
  \right\}.
\]
(To avoid circularity, we take this $\max_{D \prec A}$ over proper sub-join-trees $D \prec A$ such that $V(D) \nsubseteq S$.)
\end{itemize}
That is, at least one among inequalities ($\dag$), ($\gad$), ($\ddag$) is tight for each join-tree $A$.  
\end{df}

\begin{rmk}
We could have defined $\Phi_\theta(A|S)$ in a stronger manner that ensures $\Phi_\theta(A \ominus S) \le \Phi_\theta(A|S)$ for all $A$ and $S$ (in order to claim that $\Phi_\theta(A|\emptyset)$ improves $\Phi_\theta(A)$) by using the following more general versions of ($\dag$), ($\gad$), ($\ddag$):
\begin{align}
\tag{$\dag'$}
  \Phi_\theta(A|S) &\ge \vphantom{\frac12}\Phi_\theta(D|S) + \Delta_\theta(C |S \cup D) + \Delta_\theta(A |S \cup C \cup D)
  &&\hspace{-.8in}\text{if $A \in \{\sq{B,C},\sq{C,B}\}$ and $D \preceq B$},\\
\tag{$\gad'$}
  \Phi_\theta(A|S) &\ge \Phi_\theta(D |S \cup C) + \Delta_\theta(C|S) + \Delta_\theta(A |S \cup C \cup D)
  &&\hspace{-.8in}\text{if $A \in \{\sq{B,C},\sq{C,B}\}$ and $D \preceq B$},\\
\tag{$\ddag'$}
  \Phi_\theta(A|S) &\ge \frac12\Big(\Phi_\theta(D|S) + \Phi_\theta(E |S \cup D) + \Delta_\theta(A|S) + \Delta_\theta(A |S \cup D \cup E)\Big)
  &&\text{if $D,E \prec A$}.
\end{align}
We chose the simpler Definition \ref{df:Phi2} since it is sufficient for our lower bound on $\tau(P_k)$ in Section \ref{sec:Pk}. Definition \ref{df:Phi2} also leads to a mildly simpler proof of Theorem \ref{thm:general-chiPhi} in Section \ref{sec:pathset}, compared with inequalities ($\dag')$, ($\gad'$), ($\ddag'$) above.
\end{rmk}

The following theorem shows that $\Phi_\theta(A|\emptyset)$ serves the same purpose as $\Phi_\theta(A)$ of lower-bounding the invariant $\tau(G)$.

\begin{thm}\label{thm:tau2}
The invariant $\tau(G)$ in Theorem \ref{thm:main-lb} satisfies
\[
  \tau(G) \ge 
  \max_{\textup{threshold weightings $\theta$ for $G$}}\  
  \min_{\textup{join-trees $A$ with graph $G$}}\ \Phi_\theta(A|\emptyset).
\]
\end{thm}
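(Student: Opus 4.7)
The plan is to prove Theorem \ref{thm:tau2} in direct parallel with Theorem \ref{thm:tau}, by introducing a parameterized pathset complexity measure $\chi_\theta(A|S)$ and showing that it satisfies the three recursive inequalities ($\dag$), ($\gad$), ($\ddag$) of Definition \ref{df:Phi2}. Since the pointwise minimum $\Phi_\theta(A|S)$ is characterized by these inequalities, this will give $\Phi_\theta(A|S) \le \chi_\theta(A|S)$ for all $A$ and $S$, and specializing to $S = \emptyset$ will recover the promised bound on $\tau(G)$.

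First, once the pathset framework of Section \ref{sec:pathset} is in place, providing the one-parameter complexity measure $\chi_\theta(A)$ used in Theorem \ref{thm:tau}, I would extend it to accept a second argument $S \subseteq V(G)$. The intended meaning is that $\chi_\theta(A|S)$ measures the pathset complexity of $\Gr{A}$ after marginalizing out (or ``freeing'') the coordinates indexed by $S$, so that $\chi_\theta(A|\emptyset) = \chi_\theta(A)$ and so that $\Delta_\theta(A|S) = \Delta_\theta(\Gr{A} \ominus S)$ continues to serve as its ``atomic'' upper bound. In particular, only components of $\Gr{A}$ disjoint from $S$ contribute to the atomic case $\chi_\theta(A|S) = \Delta_\theta(A|S)$.

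Next I would verify each of the three parameterized inequalities. For ($\dag$), writing $A \in \{\sq{B,C},\sq{C,B}\}$, the pathset for $A$ can be built by first building the $B$-pathset conditioned on $S$ (costing $\chi_\theta(B|S)$) and then extending along $C$ with only a projection penalty $\Delta_\theta(C|S \cup B)$, because the $C$-coordinates not already pinned by $S$ or $V(B)$ have joint density at most $n^{-\Delta_\theta(C \ominus (S \cup V(B)))}$; this is precisely the $\Seq$ composition in the pathset algebra. Inequality ($\gad$) is obtained by swapping the roles of $B$ and $C$. For ($\ddag$), I would adapt the $\Exists$ (unique-join) operation from \cite{rossman2018formulas}: given $D \prec A$, the pathset for $A|S$ splits into a ``generic'' part, whose projection onto $V(D)$ is heavy and hence has complexity bounded by $\Delta_\theta(A|S)$, and a ``unique'' part, which factors as a pairing of a $D$-pathset (cost $\chi_\theta(D|S)$) with the residual $A$-pathset (cost $\chi_\theta(A|S \cup D)$); the factor $\tfrac12$ in ($\ddag$) reflects the optimal balance between the two.

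The main obstacle I anticipate lies in verifying ($\ddag$) in the parameterized setting. The $S = \emptyset$ version relies on a delicate average-case restriction argument from \cite{rossman2018formulas}, and extending it requires showing that the key density estimates and the unique-disjoint decomposition underlying the $\Exists$ operation commute with the projection ``${\ominus}\,S$''. Concretely, one must check that freeing the $S$-coordinates preserves both the density inequalities and the combinatorial guarantee of uniqueness over the non-$S$ columns. Once this is established, pointwise minimality of $\Phi_\theta(A|S)$ among functions satisfying ($\dag$), ($\gad$), ($\ddag$) immediately yields $\Phi_\theta(A|\emptyset) \le \chi_\theta(A|\emptyset)$, and the bound $\tau(G) \ge \max_\theta \min_A \Phi_\theta(A|\emptyset)$ follows from the same pathset-to-formula-size argument used in Theorem \ref{thm:tau}.
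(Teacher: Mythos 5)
Your high-level plan matches the paper's route: the proof of Theorem~\ref{thm:tau2} is to introduce the two-parameter pathset complexity $\chi_{A|S}(\A)$ on relations $\A \subseteq [n]^{V(A)\setminus S}$ (Definition~\ref{df:chi}), prove the density bound $\mu(\A) \le (1/n)^{\Phi_\theta(A|S)}\cdot\chi_{A|S}(\A)$ by structural induction on $A$ (with a secondary induction on $|V(A)\setminus S|$), case-splitting on which of $(\dag)$, $(\gad)$, $(\ddag)$ is tight for $\Phi_\theta(A|S)$, and then specialize to $S=\emptyset$ to conclude. So in outline you have the same argument as Theorem~\ref{thm:general-chiPhi}, and you are right that the pathset-to-formula-size step is identical to Theorem~\ref{thm:tau} and can be left as a black box.

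Where you go off course is in your anticipated treatment of $(\ddag)$. The paper's proof of the $(\ddag)$ case does \emph{not} invoke the $\Exists$ (unique-join) operation or any average-case restriction argument; those belong to the deferred reduction from pathset complexity to $\AC^0$/monotone formula size, not to the $\Phi$-versus-$\chi$ density bound. Instead the paper first records two routine structural facts about $\chi_{A|S}$: the Projection Lemma (\ref{la:proj}), which says projecting a relation onto a sub-join-tree $B \preceq A$ cannot increase its pathset complexity, and the Restriction Lemma (\ref{la:rest}), which says restricting to a superset $T \supseteq S$ of freed vertices cannot increase it either; both follow directly from restricting a witnessing family and need no probabilistic input. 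With these in hand, the $(\ddag)$ case for a pathset $\A_i \in \P{A|S}$ is handled by deriving two independent upper bounds on $\mu(\A_i)$ --- one of the form $(1/n)^{\Phi(D|S)+\Phi(A|S\cup D)}\cdot\chi_{A|S}(\A_i)^2$ via Lemma~\ref{la:split}, the induction hypothesis, and the Projection and Restriction Lemmas, and the other $(1/n)^{\Delta(A|S)}$ from the pathset density constraint --- and then taking the geometric mean of the two. The factor $\tfrac12$ in $(\ddag)$ is exactly this Cauchy-Schwarz step, not a ``balance'' between a generic and a unique part. So the obstacle you flagged as the main difficulty is not present in the paper's proof; once you replace your planned $\Exists$-based argument with the Projection/Restriction Lemmas plus the geometric-mean trick, your outline matches the paper's.

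One smaller point: you describe $\chi_\theta(A|S)$ as a single scalar, but the object actually used is the relation-valued complexity $\chi_{A|S}(\A)$, and the inequalities you want to verify are density statements $\mu(\A) \le (1/n)^{\cdots}\chi_{A|S}(\A)$ rather than inequalities on a bare potential function; keeping the relation $\A$ explicit is what makes the inductive case analysis go through cleanly.
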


In the next section, we will finally state the definition of $\tau(G)$ and prove Theorem \ref{thm:tau2}.  We will then use this theorem to prove a lower bound on $\tau(P_k)$ using $\Phi_\theta(A|S)$ in Section \ref{sec:pathset}.  (The argument in Section \ref{sec:pathset} is purely graph-theoretic and does not require the material in Section \ref{sec:pathset} if Theorem \ref{thm:tau2} is taken for granted.)

\begin{rmk}
The authors first proved the lower bound for $\tau(T_k)$ using $\Phi_\theta(A)$ before considering the improved potential function $\Phi_\theta(A|S)$.  It is conceivable that the use of $\Phi_\theta(A|S)$ would simplify or improve the constant in our $k/30$ lower bound.  
On the other hand, a suitable induction hypothesis would have to account for the additional parameter $S$, so it is unclear whether a dramatic simplification can be achieved.
\end{rmk}

\section{The pathset framework}\label{sec:pathset}

In this section we present the {\em pathset framework}, state the definition of $\tau(G)$, and prove Theorem \ref{thm:tau2} (which bounds $\tau(G)$ in terms of the potential function $\Phi_\theta(A|S)$).  All definitions and results in this section are from papers \cite{rossman2015correlation,rossman2018formulas,RossmanICM} (with  minor modifications); a few straightforward lemmas are stated without proof.  The reader is referred to those papers for much more context, illustrative examples, and an explanation of how $\tau(G)$ provides a lower bound on the $\ACzero$ and monotone formulas size of $\SUB(G)$.

Throughout this section, we fix a threshold-weighted graph $(G,\theta)$ and an arbitrary positive integer $n$.  Let $F$ range over subgraphs of $G$, let $S,T$ range over subsets of $V(G)$, and let $A,B,C,D$ range over join-trees for subgraphs of $G$.

\begin{df}[Relations, density, join, projection, restriction]\
Let $V,W,T$ be arbitrary finite sets.
\begin{itemize}
\item
For a tuple $x \in [n]^V$ and subset $U \subset V$, let $x_U \in [n]^U$ denote the restriction of $x$ to coordinates in $U$.
\item
For a relation $\A \subseteq [n]^V$, the {\em density} of $\A$ is denoted
\[
  \mu(\A) \defeq |\A|/n^{|V|}.
\]
\item
For relations $\A \subseteq [n]^V$ and $\B \subseteq [n]^W$, the join $\A \bowtie \B \subseteq [n]^{V \cup W}$ is the relation defined by
\[
  \A \bowtie \B \defeq \{z \in [n]^{V \cup W} : x_V \in \A \text{ and } z_W \in \B\}.
\]
\item
For $\A \subseteq [n]^V$ and $U \subseteq V$, the {\em projection} $\proj{U}{\A} \subseteq [n]^U$ is defined by
\[
  \proj{U}{\A} \defeq \{x \in [n]^U : \exists y \in \A \text{ s.t.\ } y_U = x\}.
\]
\item
For $\A \subseteq [n]^V$ and $z \in [n]^T$, the {\em restriction} $\RHO{V \setminus T}{\A}{z} \subseteq [n]^{V \setminus T}$ is defined by
\[
  \RHO{V \setminus T}{\A}{z} \defeq \{x \in [n]^{V \setminus T} : \exists y \in \A \text{ s.t.\ } y_{V \setminus T} = x_{V \setminus T} \text{ and } y_{V \cap T} = z_{V \cap T}\}.
\]
\end{itemize}
\end{df}

The next two lemmas bound the density of relations in terms of projections and restrictions.

\begin{la}\label{la:split}
For every relation $\A \subseteq [n]^V$ and $U \subseteq V$, 
\[
  \mu(\A) &\le \mu(\proj{U}{\A}) \cdot \max_{z \in [n]^U} \mu(\RHO{V \setminus U}{\A}{z}).
\]
\end{la}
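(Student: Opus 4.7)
The plan is a direct counting argument based on partitioning $\A$ into fibers above its projection. Write $|V| = |U| + |V \setminus U|$ and decompose
\[
  \A \;=\; \bigsqcup_{z \in \proj{U}{\A}}\ \big\{y \in \A : y_U = z\big\},
\]
noting that the tuples in the fiber over $z$ are in bijection with $\RHO{V \setminus U}{\A}{z} \subseteq [n]^{V \setminus U}$ (since the $U$-coordinates are pinned to $z$ and the restriction definition specializes, for $T = U$, to exactly this fiber).

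From here I would bound the size of $\A$ by the number of fibers times the size of the largest fiber:
\[
  |\A| \;=\; \sum_{z \in \proj{U}{\A}} \big|\RHO{V \setminus U}{\A}{z}\big|
  \;\le\; \big|\proj{U}{\A}\big| \cdot \max_{z \in [n]^U} \big|\RHO{V \setminus U}{\A}{z}\big|,
\]
where extending the max from $z \in \proj{U}{\A}$ to $z \in [n]^U$ is harmless since the fibers over $z \notin \proj{U}{\A}$ are empty. Dividing both sides by $n^{|V|} = n^{|U|} \cdot n^{|V \setminus U|}$ and distributing the factors gives exactly the claimed inequality $\mu(\A) \le \mu(\proj{U}{\A}) \cdot \max_z \mu(\RHO{V \setminus U}{\A}{z})$.

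There is no substantive obstacle here; the only thing to be careful about is to check that the definition of $\RHO{V \setminus U}{\A}{z}$ specializes correctly when $T = U$ (so $V \cap T = U$ and $V \setminus T = V \setminus U$), which makes the fibers of $\A$ over $\proj{U}{\A}$ literally equal to the restrictions $\RHO{V \setminus U}{\A}{z}$. Once that identification is made, the argument is a one-line counting bound.
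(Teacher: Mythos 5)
Your proof is correct. The paper explicitly leaves this lemma unproved (it is listed among the "straightforward lemmas stated without proof"), and the fiber-decomposition counting argument you give is exactly the standard argument one would supply: partition $\A$ by its $U$-coordinates, observe that each nonempty fiber has the same cardinality as the corresponding restriction $\RHO{V\setminus U}{\A}{z}$, bound the sum by (number of nonempty fibers) $\times$ (max fiber size), and divide by $n^{|V|} = n^{|U|}n^{|V\setminus U|}$. The one point you flag — checking that the general definition of $\RHO{V\setminus T}{\A}{z}$ specializes to the fiber when $T=U\subseteq V$ — is indeed the only thing worth verifying, and you handle it correctly.
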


\begin{la}\label{la:split2}
For all relations $\A \subseteq [n]^V$ and $\B \subseteq [n]^W$,
\[
  \mu(\A \bowtie \B) &\le \mu(\A) \cdot \max_{z \in [n]^V} \mu(\RHO{W}{\B}{z}).
\]
\end{la}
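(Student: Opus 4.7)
The plan is to prove this by a direct fibration-counting argument, mirroring the proof of Lemma \ref{la:split}. I would start by decomposing $|\A\bowtie\B|$ over fibers indexed by $\A$: for any $y\in[n]^V$, a tuple $x\in[n]^{V\cup W}$ satisfies $x\in\A\bowtie\B$ with $x_V=y$ if and only if $y\in\A$, and moreover $(y_{V\cap W},x_{W\setminus V})\in\B$. Since the coordinates $x_V=y$ are prescribed, the number of such $x$ is the number of ways to complete the free coordinates $x_{W\setminus V}$, which is precisely $|\RHO{W\setminus V}{\B}{y}|$ (and this is how $\RHO{W}{\B}{y}$ should be read when $y\in[n]^V$). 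Summing over $y\in\A$ gives the identity
\[
  |\A\bowtie\B|\;=\;\sum_{y\in\A}|\RHO{W\setminus V}{\B}{y}|.
\]

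From here I would simply upper-bound the sum by $|\A|\cdot\max_{y\in[n]^V}|\RHO{W\setminus V}{\B}{y}|$, then divide by $n^{|V\cup W|}=n^{|V|}\cdot n^{|W\setminus V|}$, using the disjoint decomposition $V\cup W=V\sqcup(W\setminus V)$. The factor $|\A|/n^{|V|}$ becomes $\mu(\A)$, and the factor $\max_y|\RHO{W\setminus V}{\B}{y}|/n^{|W\setminus V|}$ becomes $\max_{z\in[n]^V}\mu(\RHO{W}{\B}{z})$, yielding the claim.

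There is no substantive obstacle here: the only point that requires care is the bookkeeping of the disjoint decomposition of $V\cup W$ and the matching factorization of $n^{|V\cup W|}$, together with the convention that restricting a relation on $W$ by a tuple indexed by $V$ fixes exactly the $V\cap W$ coordinates and leaves the $W\setminus V$ coordinates free. Everything else is a one-line counting identity followed by a max-bound on the sum.
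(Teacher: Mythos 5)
Your proof is correct, and since the paper states Lemma~\ref{la:split2} without proof (it is one of the ``straightforward lemmas stated without proof''), the fibration-counting argument you give is exactly the intended one: the identity $|\A\bowtie\B|=\sum_{y\in\A}|\RHO{W\setminus V}{\B}{y}|$ follows directly from the definitions of $\bowtie$ and $\rest$, and the bound and normalization by $n^{|V\cup W|}=n^{|V|}\cdot n^{|W\setminus V|}$ finish it. You are also right that the subscript $W$ in the paper's $\RHO{W}{\B}{z}$ is a mild notational overloading and should be read as leaving free exactly the coordinates $W\setminus V$, consistent with the restriction convention used elsewhere.
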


For subgraph $F \subseteq G$ and sets $S \subseteq V(G)$, we will be interested in relations $\A \subseteq [n]^{V(G)\setminus S}$ called {\em $G|S$-pathsets} that satisfy certain density constraints.  These density constraints are related to subgraph counts in the random graph $\mb X_{\theta,n}$ (see \cite{RossmanICM}). 

\begin{df}[Pathsets]
Let $F \subseteq G$ and $S \subseteq V(G)$.
\begin{itemize}
\item
We write $[n]^{F|S}$ for $[n]^{V(F)\setminus S}$.
\item
We write $\Rel{F|S}$ for the set of relations $\A \subseteq [n]^{F|S}$.  (That is, $\Rel{F|S}$ is the power set of $[n]^{F|S}$.)
\item
For $\A \in \Rel{F|S}$ and $F' \subseteq F$, let
$
  \proj{F'|S}{\A} \defeq \proj{V(F') \setminus S}{\A}.
$
\item
For $\A \in \Rel{F|S}$ and $T \supseteq S$ and $z \in [n]^T$, let
$
  \RHO{F|T}{\A}{z}
  \defeq 
  \RHO{V(F)\setminus T}{\A}{z}.
$
\item
A relation $\A \in \Rel{F|S}$ is a {\em $F|S$-pathset} if it satisfies
\[
\mu(\RHO{F|T}{\A}{z}) \le (1/n)^{\Delta(F|T)}
\]
for all $T \supseteq S$ and $z \in [n]^T$. The set of all $F|S$-pathsets is denoted by $\P{F|S}$.
\end{itemize}
\end{df}

The next lemma is immediate from the definition of $\P{F|S}$.

\begin{la}[Pathsets are closed under restriction]
For all $\A \in \P{F|S}$ and $T \supseteq S$ and $z \in [n]^T$, we have $\RHO{F|T}{\A}{z} \in \P{F|T}$.
\end{la}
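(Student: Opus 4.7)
The plan is to reduce the claim to a single application of the defining pathset inequality for $\A$, by observing that a restriction of a restriction can be realized as a single restriction by a ``merged'' tuple.

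First, I would unpack the definitions. Fix $T'' \supseteq T$ and $z'' \in [n]^{T''}$; our goal is to show that
\[
  \mu\bigl(\RHO{F|T''}{\RHO{F|T}{\A}{z}}{z''}\bigr) \le (1/n)^{\Delta(F|T'')}.
\]
Unfolding the two restrictions, a tuple $x \in [n]^{V(F)\setminus T''}$ lies in $\RHO{F|T''}{\RHO{F|T}{\A}{z}}{z''}$ iff there exists $y \in \A \subseteq [n]^{V(F)\setminus S}$ such that $y_{V(F)\setminus T''} = x$, $y_v = z''_v$ for every $v \in (V(F)\cap T'')\setminus T$, and $y_v = z_v$ for every $v \in (V(F)\cap T)\setminus S$ (the last condition coming from $x \in \RHO{F|T}{\A}{z}$ and using $T \supseteq S$).

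The key step is the merging observation. Define $w \in [n]^{T''}$ by $w_v \defeq z_v$ if $v \in T$ and $w_v \defeq z''_v$ if $v \in T''\setminus T$. Then the two agreement conditions on $y$ above combine into the single condition $y_{(V(F)\setminus S)\cap T''} = w_{(V(F)\setminus S)\cap T''}$. Hence
\[
  \RHO{F|T''}{\RHO{F|T}{\A}{z}}{z''} \;=\; \RHO{F|T''}{\A}{w}.
\]
Since $T'' \supseteq T \supseteq S$ and $\A \in \P{F|S}$, the defining inequality for $\A$ applied to $T''$ and $w$ gives $\mu(\RHO{F|T''}{\A}{w}) \le (1/n)^{\Delta(F|T'')}$, which is exactly what we need. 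As $T''$ and $z''$ were arbitrary, $\RHO{F|T}{\A}{z} \in \P{F|T}$.

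There is no real obstacle here: the proof is a pure definition chase, and the only thing to be careful about is the bookkeeping of which coordinates are indexed by $V(F)\setminus S$, $V(F)\setminus T$, and $V(F)\setminus T''$, and verifying that $T \supseteq S$ makes the intersection $(V(F)\cap T)\setminus S$ come out right so that the two partial conditions on $y$ glue into the single condition matching $w$.
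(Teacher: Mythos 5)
Your proof is correct. The paper omits a proof, describing this lemma as ``immediate from the definition of $\P{F|S}$,'' and your definition-chase is exactly the intended argument: you unfold the nested restriction, observe that the agreement constraints on $(V(F)\setminus S)\cap T$ coming from $z$ and on $(V(F)\setminus T)\cap T''$ coming from $z''$ partition $(V(F)\setminus S)\cap T''$ (using $S \subseteq T \subseteq T''$), so they glue into a single agreement with the merged tuple $w$, giving $\RHO{F|T''}{\RHO{F|T}{\A}{z}}{z''} = \RHO{F|T''}{\A}{w}$; then one application of the defining pathset inequality for $\A$ at $(T'',w)$ finishes it. No gap.
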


We next introduce, for each join-tree $A$ and set $S$, a complexity measure $\chi_{A|S}$ on relations $\A \subseteq [n]^{V(A)\setminus S}$.  Very roughly speaking, $\chi_{A|S}$ measures the cost of ``constructing'' $\A$ via operations $\cup$ and $\bowtie$, where all intermediate relations are subject to pathset constraints and the pattern of joins is given by $A$.

\begin{df}[Pathset complexity $\chi_{A|S}(\A)$]\label{df:chi}\ 
\begin{itemize}
\item
For a join-tree $A$, let
  $\Rel{A|S} \defeq \Rel{\GG{A}|S}$, 
  $\P{A|S} \defeq \P{\GG{A}|S}$,
etcetera.
\item
For an atomic join-tree $A$ and relation $\A \in \Rel{A|S}$, the {\em $A|S$-pathset complexity} of $\A$, denoted $\chi_{A|S}(\A)$, is the minimum number $m$ of pathsets $\A_1,\dots,\A_m \in \P{A|S}$ such that $\A \subseteq \bigcup_{i=1}^m \A_i$.
\item
For a non-atomic join-tree $A = \sq{B,C}$ and relation $\A \in \Rel{A|S}$, 
the {\em $A|S$-pathset complexity} of $\A$, denoted $\chi_{A|S}(\A)$, is 
the minimum value of $\sum_{i=1}^m \max\{\chi_{B|S}(\B_i),\, \chi_{C|S}(\C_i)\}$ over families $\{(\A_i,\B_i,\C_i)\}_{i\in[m]}$ 
satisfying
\begin{enumerate}[$\circ$]
\item
$(\A_i,\B_i,\C_i) \in \P{A|S} \times \P{B|S} \times \P{C|S}$ for all $i$,
\item
$\vphantom{\big|}\A_i \subseteq \B_i \bowtie \C_i$ for all $i$, and
\item
$\A \subseteq \bigcup_{i=1}^m \A_i$.
\end{enumerate}
We express this concisely as:
\[
  \chi_{A|S}(\A) \defeq \min_{\{(\A_i,\B_i,\C_i)\}_i}\, \ts\sum_i\, \max\{\chi_{B|S}(\B_i),\, \chi_{C|S}(\C_i)\}.
\]
We refer to any family $\{(\A_i,\B_i,\C_i)\}_i$ achieving this minimum as a {\em witnessing family} for $\chi_{A|S}(\A)$.
\end{itemize}
\end{df}

The next lemma lists three properties of $\chi_{A|S}$, which are easily deduced from the definition.

\begin{la}[Properties of $\chi_{A|S}$]\label{la:chi-props}\ 
\begin{itemize}
\item
$\chi_{A|S}$ is subadditive: $\chi_{A|S}(\A_1 \cup \A_2) \le \chi_{A|S}(\A_1) + \chi_{A|S}(\A_2)$,
\item
$\chi_{A|S}$ is monotone: $\A_1 \subseteq \A_2 \ \Longrightarrow\  \chi_{A|S}(\A_1) \le \chi_{A|S}(\A_2)$,
\item 
if $A = \sq{B,C}$ and $\B \in \P{B|S}$ and $\C \in \P{C|S}$, then $\chi_{A|S}(\B \bowtie \C) \le \max\{\chi_{B|S}(\B),\,\chi_{C|S}(\C)\}$.
\end{itemize}
\end{la}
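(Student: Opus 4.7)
All three properties follow fairly directly from unpacking Definition~\ref{df:chi}; the only step that requires real calculation is a closure property of pathsets under join.

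For \emph{subadditivity}, I will take near-optimal witnessing families for $\chi_{A|S}(\A_1)$ and $\chi_{A|S}(\A_2)$ and concatenate them. In the atomic case these are just finite covers of each $\A_j$ by $A|S$-pathsets, and the union cover uses $\chi_{A|S}(\A_1) + \chi_{A|S}(\A_2)$ pathsets while still covering $\A_1 \cup \A_2$. In the non-atomic case $A = \sq{B,C}$, the concatenated list of triples $(\A_i, \B_i, \C_i)$ is itself a valid witnessing family for $\A_1 \cup \A_2$, because the pathset and join conditions are local to each triple and therefore preserved, and the union of covers covers the union; the total cost is additive.

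For \emph{monotonicity}, I observe that any witnessing family for $\chi_{A|S}(\A_2)$ automatically covers any $\A_1 \subseteq \A_2$ --- the only defining condition that refers to the covered relation is $\A \subseteq \bigcup_i \A_i$, and this only weakens under shrinking $\A$. Hence the same family witnesses $\chi_{A|S}(\A_1)$ at unchanged cost.

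For the \emph{join bound}, I will exhibit the singleton witnessing family $\{(\B \bowtie \C,\, \B,\, \C)\}$ for $\chi_{A|S}(\B \bowtie \C)$. Its three defining conditions reduce to: $\B \bowtie \C \in \P{A|S}$ (the only nontrivial check); $\B \in \P{B|S}$ and $\C \in \P{C|S}$ (both assumed); and $\B \bowtie \C \subseteq \B \bowtie \C$ together with $\B \bowtie \C \subseteq \bigcup_i \A_i$ (both trivial). The cost is then exactly $\max\{\chi_{B|S}(\B),\, \chi_{C|S}(\C)\}$, as required.

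The main obstacle is verifying that $\B \bowtie \C \in \P{A|S}$. For each $T \supseteq S$ and $z \in [n]^T$, I plan to first establish the identity $\RHO{A|T}{\B \bowtie \C}{z} = \RHO{B|T}{\B}{z} \bowtie \RHO{C|T}{\C}{z}$ directly from the definitions of join and restriction, and then bound the density $\mu(\RHO{A|T}{\B \bowtie \C}{z})$ by iterating Lemma~\ref{la:split} along a carefully chosen sequence of projections onto vertices of $V(A) \setminus T$, at each step invoking the pathset property of $\B$ or $\C$ at an appropriately enlarged restriction set $T' \supseteq T$, so that the accumulated $\Delta(\cdot|T')$ exponents telescope to $\Delta(A|T)$. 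This ``join of pathsets is a pathset'' step is a standard fact in the pathset framework; I expect the bookkeeping of the $\Delta$ exponents under the $\ominus$ operation --- which behaves subtly when $V(B)$ and $V(C)$ overlap, so that a single application of Lemma~\ref{la:split2} at the outermost level is generally insufficient --- to be the chief technical nuisance, but the argument itself is routine.
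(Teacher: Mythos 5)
Your treatment of the first two bullets is fine: for subadditivity concatenate witnessing families, for monotonicity observe that the covered relation only enters through $\A \subseteq \bigcup_i \A_i$. Those match what one expects.

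The plan for the third bullet, however, does not work, and the gap is not the routine bookkeeping you anticipate. You correctly isolate the crux as verifying $\B \bowtie \C \in \P{A|S}$, which is what the singleton witnessing family requires. But this claim is false in general. A concrete counterexample: take $G = T_k$ (for $k \ge 4$) with the random-walk threshold weighting, and let $v$ be a vertex at height $2$ with children $a,b$ and parent $c$. All three incident edges are non-leaf edges, so $\theta = 2/3$ on each of them (cf.\ Example~\ref{ex:theta}; note the two cases in the displayed formula in Section~\ref{sec:Tk} appear to be transposed). Set $S = \emptyset$, $\Gr{B} = \{\{v,a\},\{v,c\}\}$, $\Gr{C} = \{\{v,a\},\{v,b\}\}$, so $\Gr{A} = \{\{v,a\},\{v,b\},\{v,c\}\}$. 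Then $\Delta(\Gr{B}) = \Delta(\Gr{C}) = 5/3$ while $\Delta(\Gr{A}) = 2 > 5/3$. Choose $\B_0 \subseteq [n]^{\{v,a\}}$ with $|\B_0| = n^{1/3}$ and put $\B = \B_0 \times [n]^{\{c\}}$, $\C = \B_0 \times [n]^{\{b\}}$. One checks that $\B \in \P{B}$ and $\C \in \P{C}$ (each has density $n^{-5/3}$ and trivially satisfies the restriction constraints), yet $\B \bowtie \C = \B_0 \times [n]^{\{b,c\}}$ has density $n^{-5/3} > n^{-2} = (1/n)^{\Delta(\Gr{A})}$, so $\B \bowtie \C \notin \P{A}$. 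The underlying reason is that with $\theta < 1$ on edges incident to a shared vertex, $\Delta(\Gr{A})$ can strictly exceed both $\Delta(\Gr{B})$ and $\Delta(\Gr{C})$, and then no ``split/restriction chain rule'' argument of the type you sketch can recover the required exponent --- the correct bound here would need a ``sum-times-sum'' estimate along the shared coordinates, and even that is ruled out once $E(B) \cap E(C) \neq \emptyset$.

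So the singleton witnessing family is not valid, and the third bullet cannot be ``easily deduced'' along the lines you propose. The statement the paper actually invokes (in Case $(\ddag)$ of Theorem~\ref{thm:general-chiPhi}) is the weaker and genuinely immediate fact: if $\A \in \P{A|S}$, $\B \in \P{B|S}$, $\C \in \P{C|S}$ and $\A \subseteq \B \bowtie \C$, then $\chi_{A|S}(\A) \le \max\{\chi_{B|S}(\B),\chi_{C|S}(\C)\}$, because $\{(\A,\B,\C)\}$ is a legitimate singleton family since it is the relation $\A$, not $\B \bowtie \C$, that must be a pathset. If the third bullet is intended as literally written, a proof would have to construct a nontrivial multi-term family that decomposes $\B$ and $\C$ jointly, not just verify a pathset membership; you should either prove that stronger statement by such a construction or replace the bullet by the weaker version, which is all that downstream applications use.
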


The next two lemmas show that pathset complexity is non-increasing under restrictions, as well as under projections to sub-join-trees. 

\begin{la}[Projection Lemma]\label{la:proj}
For all $\A \in \Rel{A|S}$ and $B \preceq A$, we have 
\[
  \chi_{B|S}(\proj{B|S}{\A}) 
  &\le 
  \chi_{A|S}(\A).
\]
\end{la}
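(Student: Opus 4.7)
My plan is to prove the Projection Lemma by structural induction on the join-tree $A$, with the key observation being that projecting a join $\C \bowtie \D$ onto a vertex set contained in $V(C)$ is contained in the projection of $\C$ alone.

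For the base case, suppose $A$ is atomic, so $A \in \{\sq{\,},\sq{e}\}$. The only sub-join-trees of $A$ are $A$ itself (in which case the inequality is immediate since $\proj{A|S}{\A} = \A$) and, if $A = \sq{e}$, the trivial tree $\sq{\,}$. In the latter case, $V(\sq{\,}) = \emptyset$, so $\proj{\sq{\,}|S}{\A}$ is either $\emptyset$ or $\{(\,)\}$. Since the empty-graph pathset constraint is vacuous, $\{(\,)\} \in \P{\sq{\,}|S}$, so its $\chi_{\sq{\,}|S}$-complexity is at most $1$, which is at most $\chi_{A|S}(\A)$ whenever $\A$ is non-empty; and when $\A$ is empty, both sides are $0$.

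For the inductive step, let $A = \sq{C,D}$ and let $B \preceq A$. The case $B = A$ is trivial, so assume $B \prec A$, meaning $B = \sq{\,}$, $B \preceq C$, or $B \preceq D$; by symmetry between $C$ and $D$, it suffices to treat $B \preceq C$ (the case $B = \sq{\,}$ is covered by this, viewing $\sq{\,}$ as a sub-join-tree of $C$). Fix a witnessing family $\{(\A_i,\C_i,\D_i)\}_{i \in [m]}$ achieving $\chi_{A|S}(\A) = \sum_i \max\{\chi_{C|S}(\C_i),\chi_{D|S}(\D_i)\}$, so that $\A \subseteq \bigcup_i \A_i$ and $\A_i \subseteq \C_i \bowtie \D_i$. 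Since $V(B) \subseteq V(C)$, we have
\[
  \proj{B|S}{\A} \subseteq \bigcup_{i=1}^m \proj{B|S}{\A_i} \subseteq \bigcup_{i=1}^m \proj{B|S}{\C_i \bowtie \D_i} \subseteq \bigcup_{i=1}^m \proj{B|S}{\C_i},
\]
where the last inclusion uses that any tuple in $[n]^{V(B) \setminus S}$ extending to a tuple in $\C_i \bowtie \D_i$ in particular extends to a tuple in $\C_i$.

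By subadditivity and monotonicity of $\chi_{B|S}$ (Lemma \ref{la:chi-props}) together with the inductive hypothesis applied to the pair $(C,B)$, we obtain
\[
  \chi_{B|S}(\proj{B|S}{\A}) \le \sum_{i=1}^m \chi_{B|S}(\proj{B|S}{\C_i}) \le \sum_{i=1}^m \chi_{C|S}(\C_i) \le \sum_{i=1}^m \max\{\chi_{C|S}(\C_i),\chi_{D|S}(\D_i)\} = \chi_{A|S}(\A),
\]
completing the induction. I do not anticipate a major obstacle; the only mild subtlety is verifying that the projection inclusion $\proj{B|S}{\C_i \bowtie \D_i} \subseteq \proj{B|S}{\C_i}$ holds uniformly (including when $\D_i$ is empty), and that the base-case degeneracies with $B = \sq{\,}$ are handled consistently with the convention that the empty pathset has complexity $0$.
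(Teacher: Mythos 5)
Your proof is correct and follows essentially the same approach as the paper's: fix a witnessing family for $\chi_{A|S}(\A)$, observe that the projection onto $V(B)\setminus S$ factors through the appropriate child's relations $\C_i$, and conclude by subadditivity of $\chi_{B|S}$. The paper streamlines this by first reducing to the case where $B$ is a direct child of $A$ (so the projection lands directly in $\bigcup_i \B_i$ and no further induction is needed), whereas you carry the general $B\preceq C$ through a structural induction and handle the degenerate $B=\sq{\,}$ base case explicitly — a minor presentational difference, not a different route.
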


\begin{proof}
It suffices to prove the lemma in the case where $A = \sq{B,C}$ (since $\prec$ is the transitive closure of ``$B$ is a child of $A$''). 
Fix a witnessing family $\{(\A_i,\B_i,\C_i)\}_i$ for $\chi_{A|S}(\A)$. Note that $\proj{B|S}{\A} \subseteq \bigcup_i \B_i$, since $\A \subseteq \bigcup_i \A_i$ and $\A_i \subseteq \B_i \bowtie \C_i$. It follows that
\[
  \chi_{B|S}(\proj{B|S}{\A})
  &\le
  \chi_{B|S}(\ts\bigcup_i \B_i)\\
  &\le
  \ts\sum_i \chi_{B|S}(\B_i)\\
  &\le
  \ts\sum_i \max\{\chi_{B|S}(\B_i),\, \chi_{C|S}(\C_i)\}\\
  &=
  \chi_{A|S}(\A).\qedhere
\]
\end{proof}

\begin{la}[Restriction Lemma]\label{lem:res}\label{la:rest}
For all $\A \in \Rel{A|S}$ and $T \supseteq S$ and $z \in [n]^T$, we have
\[
  \chi_{A|T}(\RHO{A|T}{\A}{z}) 
  &\le 
  \chi_{A|S}(\A).
\]
\end{la}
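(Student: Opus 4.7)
The plan is to argue by structural induction on the join-tree $A$, paralleling the proof of the Projection Lemma (Lemma \ref{la:proj}) but substituting ``restriction by $z$'' for ``projection to $B$'' throughout. The argument ultimately rests on three compatibility facts: restriction preserves the pathset property (as noted in the lemma immediately preceding Definition \ref{df:chi}), restriction commutes with unions, and the restriction of a join is contained in the join of the restrictions.

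For the atomic base case, $\chi_{A|S}(\A)$ is by definition the least $m$ such that $\A \subseteq \bigcup_{i=1}^m \A_i$ with each $\A_i \in \P{A|S}$. Restricting each $\A_i$ by $z$ yields $\RHO{A|T}{\A_i}{z} \in \P{A|T}$ (by closure of pathsets under restriction) and $\RHO{A|T}{\A}{z} \subseteq \bigcup_i \RHO{A|T}{\A_i}{z}$ (since restriction distributes over unions), which witnesses $\chi_{A|T}(\RHO{A|T}{\A}{z}) \le m = \chi_{A|S}(\A)$.

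For the inductive step, write $A = \sq{B,C}$ and fix a witnessing family $\{(\A_i,\B_i,\C_i)\}_i$ for $\chi_{A|S}(\A)$. Applying restriction by $z$ componentwise produces the candidate family $\{(\RHO{A|T}{\A_i}{z},\RHO{B|T}{\B_i}{z},\RHO{C|T}{\C_i}{z})\}_i$, and I claim it satisfies the three conditions of Definition \ref{df:chi} relative to $\RHO{A|T}{\A}{z}$: the pathset conditions (by closure), the cover $\RHO{A|T}{\A}{z} \subseteq \bigcup_i \RHO{A|T}{\A_i}{z}$ (by distributivity), and the join inclusions $\RHO{A|T}{\A_i}{z} \subseteq \RHO{B|T}{\B_i}{z} \bowtie \RHO{C|T}{\C_i}{z}$ (by direct unpacking of definitions). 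Invoking the inductive hypothesis on $B$ and $C$ then gives
\[
  \chi_{A|T}(\RHO{A|T}{\A}{z})
  &\le \ts\sum_i \max\{\chi_{B|T}(\RHO{B|T}{\B_i}{z}),\, \chi_{C|T}(\RHO{C|T}{\C_i}{z})\} \\
  &\le \ts\sum_i \max\{\chi_{B|S}(\B_i),\,\chi_{C|S}(\C_i)\}
  \ =\ \chi_{A|S}(\A),
\]
as required.

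The only mildly delicate point is the join inclusion used in the inductive step: given $x \in \RHO{A|T}{\A_i}{z}$, by definition there exists $y \in \A_i \subseteq \B_i \bowtie \C_i$ agreeing with $x$ off $T$ and with $z$ on $T$; its further projections to $V(B) \setminus T$ and $V(C) \setminus T$ then certify $x \in \RHO{B|T}{\B_i}{z} \bowtie \RHO{C|T}{\C_i}{z}$. This is the only place where the index-set bookkeeping among $V(A),V(B),V(C),S,T$ does nontrivial work, and it is still completely routine. I therefore do not foresee any real obstacle; the argument is a word-for-word analogue of the Projection Lemma, with restriction replacing projection throughout.
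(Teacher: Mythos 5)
Your proof is correct and matches the paper's own argument essentially verbatim: both induct on the join-tree, restrict a witnessing family componentwise, and cite closure of pathsets under restriction together with the (routine) join inclusion before applying the inductive hypothesis. The paper leaves the verification of the three conditions for the restricted family as an observation, while you spell out the join-inclusion bookkeeping; that is the only difference, and it is one of exposition rather than substance.
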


\begin{proof}
We argue by induction on join-trees $A$. The lemma is trivial in the base case where $A$ is atomic. So assume $A = \sq{B,C}$. Fix a witnessing family $\{(\A_i,\B_i,\C_i)\}_i$ for $\chi_{A|S}(\A)$.
Observe that the family of restricted triples $\{(\RHO{A|T}{\A_i}{z},\RHO{B|T}{\B_i}{z},\RHO{C|T}{\C_i}{z})\}_i$ satisfies:
\begin{enumerate}[\hspace{\parindent}$\circ$]
\item
$(\RHO{A|T}{\A_i}{z},\RHO{B|T}{\B_i}{z},\RHO{C|T}{\C_i}{z}) \in \P{A|T}\times\P{B|T}\times\P{C|T}$ for all $i$,
\item
$\RHO{A|T}{\A_i}{z} \subseteq \RHO{B|T}{\B_i}{z} \bowtie \RHO{C|T}{\C_i}{z}$ for all $i$, 
\item
$\RHO{A|T}{\A}{z} \subseteq \bigcup_i \RHO{A|T}{\A_i}{z}$.
\end{enumerate}
Therefore,
\[
  \chi_{A|T}(\RHO{A|T}{\A}{z}) &\le 
  \ts\sum_i\max\{\chi_{B|T}(\RHO{B|T}{\B_i}{z}),\, 
  \chi_{C|T}(\RHO{C|T}{\C_i}{z})\}\\
  &\le
  \ts\sum_i\max\{\chi_{B|S}(\B_i),\, 
  \chi_{C|S}(\C_i)\}
  \qquad\text{(induction hypothesis)}\\
  &=
  \chi_{A|S}(\A).\qedhere
\]
\end{proof}

We now prove the key theorem bounding pathset complexity $\chi_{A|S}(\A)$ in terms of the density of $\A$ and the potential function $\Phi(A|S)$.

\begin{thm}\label{thm:general-chiPhi}
For every join-tree $A$ and set $S$ and relation $\A \in \Rel{A|S}$, we have
\[
  \mu(\A) \le (1/n)^{\Phi(A|S)} \cdot \chi_{A|S}(\A).
\]
\end{thm}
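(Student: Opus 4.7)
The plan is to argue by induction on the pair $(|V(G) \setminus S|,\, |A|)$ in lexicographic order, where $|A|$ is the number of nodes of the join-tree $A$. Each of the defining inequalities $(\dag)$, $(\gad)$, $(\ddag)$ refers only to strictly smaller instances in this order: $(\dag)$ and $(\gad)$ descend to a child ($B$ or $C$) with $S$ unchanged, whereas $(\ddag)$ descends either to the proper sub-join-tree $D \prec A$ (same $S$) or to the same $A$ with $S$ replaced by $S \cup V(D) \supsetneq S$ --- the strict containment being guaranteed by the caveat $V(D) \nsubseteq S$ in Definition \ref{df:Phi2}.

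The base case $A$ atomic is immediate: $\Phi(A|S) = \Delta(A|S)$ and $\chi_{A|S}(\A)$ is the minimum number $m$ of pathsets in $\P{A|S}$ covering $\A$, each of which has density at most $(1/n)^{\Delta(A|S)}$ by the defining property at $T = S$. For the inductive step, write $A = \sq{B,C}$ and fix a witnessing family $\{(\A_i, \B_i, \C_i)\}_i$ for $\chi_{A|S}(\A)$. Since $\A \subseteq \bigcup_i \A_i$, subadditivity of $\mu$ reduces the claim to proving, for each $i$,
\[
\mu(\A_i) \le (1/n)^{\Phi(A|S)} \cdot \max\{\chi_{B|S}(\B_i),\, \chi_{C|S}(\C_i)\}.
\]
At least one of $(\dag)$, $(\gad)$, $(\ddag)$ is tight at $(A, S)$, and I split into three cases accordingly. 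In the $(\dag)$ case, $\Phi(A|S) = \Phi(B|S) + \Delta(C|S \cup B)$, and Lemma \ref{la:split2} applied to $\A_i \subseteq \B_i \bowtie \C_i$ gives $\mu(\A_i) \le \mu(\B_i) \cdot \max_z \mu(\RHO{C|S \cup B}{\C_i}{z})$. The first factor is bounded by the inductive hypothesis applied to $B$, the second by the pathset property of $\C_i \in \P{C|S}$ at $T = S \cup V(B)$, and the product matches the desired bound since $\chi_{B|S}(\B_i) \le \max\{\chi_{B|S}(\B_i), \chi_{C|S}(\C_i)\}$. Case $(\gad)$ is symmetric with $B$ and $C$ interchanged.

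The main obstacle is case $(\ddag)$, where $\Phi(A|S) = \frac{1}{2}\bigl(\Phi(D|S) + \Phi(A|S \cup D) + \Delta(A|S)\bigr)$ for some $D \prec A$ with $V(D) \nsubseteq S$. I combine two upper bounds on $\mu(\A_i)$ via a geometric-mean step. From the pathset constraint at $T = S$, $\mu(\A_i) \le (1/n)^{\Delta(A|S)}$; from Lemma \ref{la:split} with $U = V(D) \setminus S$,
\[
\mu(\A_i) \le \mu(\proj{D|S}{\A_i}) \cdot \max_{z \in [n]^{V(D) \setminus S}} \mu(\RHO{A|S \cup D}{\A_i}{z}).
\]
The inductive hypothesis applied to $D$ (strictly smaller join-tree, same $S$) bounds $\mu(\proj{D|S}{\A_i}) \le (1/n)^{\Phi(D|S)} \cdot \chi_{D|S}(\proj{D|S}{\A_i})$, while the inductive hypothesis applied to $(A, S \cup V(D))$ (strictly lower in lex order) bounds each restriction by $(1/n)^{\Phi(A|S \cup D)} \cdot \chi_{A|S \cup D}(\RHO{A|S \cup D}{\A_i}{z})$. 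Multiplying the pathset bound by this split bound gives, for a worst-case $z^\ast$,
\[
\mu(\A_i)^2 \le (1/n)^{\Delta(A|S) + \Phi(D|S) + \Phi(A|S \cup D)} \cdot \chi_{D|S}(\proj{D|S}{\A_i}) \cdot \chi_{A|S \cup D}(\RHO{A|S \cup D}{\A_i}{z^\ast}).
\]
Finally, item 3 of Lemma \ref{la:chi-props} applied to the singleton witness yields $\chi_{A|S}(\B_i \bowtie \C_i) \le \max\{\chi_{B|S}(\B_i), \chi_{C|S}(\C_i)\}$, hence by monotonicity $\chi_{A|S}(\A_i)$ is bounded by the same maximum; the Projection and Restriction Lemmas then dominate both $\chi$-factors on the right by this maximum. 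Taking square roots produces exactly the target bound, with the $\frac{1}{2}$ factor in $(\ddag)$ being precisely what makes the geometric mean close.
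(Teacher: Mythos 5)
Your overall strategy matches the paper's: induction (you phrase it more precisely as a lexicographic induction on $(|V(G)\setminus S|, |A|)$, which correctly explains why the $(\ddag)$ recursion to $\Phi(A|S\cup D)$ terminates given the $V(D)\nsubseteq S$ caveat --- the paper just says ``induction on join-trees,'' which under-sells this point), and your treatment of cases $(\dag)$ and $(\ddag)$ reproduces the paper's argument faithfully, including the geometric-mean step via the Projection and Restriction Lemmas and item 3 of Lemma \ref{la:chi-props}.

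However, the dismissal of case $(\gad)$ as ``symmetric with $B$ and $C$ interchanged'' is wrong. Swapping $B$ and $C$ in your $(\dag)$ argument produces the bound for $\Phi(A|S)=\Phi(C|S)+\Delta(B|S\cup C)$, which is the \emph{other symmetric instance of $(\dag)$}, not $(\gad)$. Rule $(\gad)$ reads $\Phi(A|S)=\Delta(B|S)+\Phi(C|S\cup B)$: the $\Phi$-term and $\Delta$-term have swapped, and crucially the $\Phi$-term now lives over the set $S\cup B$ rather than $S$. To handle it you start from the same split
\[
\mu(\A_i)\le\mu(\B_i)\cdot\max_{z}\mu(\RHO{C|S\cup B}{\C_i}{z}),
\]
but now bound $\mu(\B_i)$ by the pathset constraint $\mu(\B_i)\le(1/n)^{\Delta(B|S)}$, bound each restricted factor by the inductive hypothesis at $(C,S\cup B)$, giving $(1/n)^{\Phi(C|S\cup B)}\cdot\chi_{C|S\cup B}(\RHO{C|S\cup B}{\C_i}{z})$, and then invoke the Restriction Lemma \ref{la:rest} to pass from $\chi_{C|S\cup B}(\RHO{C|S\cup B}{\C_i}{z})$ down to $\chi_{C|S}(\C_i)$. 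So the Restriction Lemma is needed in this case too, not just in $(\ddag)$, and the processing of the two factors is genuinely non-symmetric with $(\dag)$. With this repaired, the argument is the one in the paper.
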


\begin{proof}
We argue by induction on join-trees $A$. The base case where $A$ is atomic is straightforward. Suppose $\chi_{A|S}(\A) = m$ where $\A_1,\dots,\A_m \in \P{A|S}$ with $\A \subseteq \bigcup_i\A_i$. We have $\mu(\A) \le \sum_i \mu(\A_i)$. For each $i$, we have $\mu(\A_i) \le (1/n)^{\Phi(A|S)}$ (by definition of $\A_i \in \P{A|S}$). Therefore, $\mu(\A) \le (1/n)^{\Phi(A|S)} \cdot m = (1/n)^{\Phi(A|S)} \cdot \chi_{A|S}(\A)$ as required.

So we assume $A = \sq{B,C}$ is non-atomic. Fix a witnessing family $\{(\A_i,\B_i,\C_i)\}_i$ for $\chi_A(\A)$. Since $\mu(\A) \le \sum_i \mu(\A_i)$, it suffices to show, for each $i$, that
\begin{align}\label{eq:to-show}
  \mu(\A_i) 
  &\le 
  (1/n)^{\Phi(A|S)} \cdot \max\{\chi_{B|S}(\B_i),\, \chi_{C|S}(\C_i)\}.
\end{align}
We establish (\ref{eq:to-show}) by considering the three different cases for $\Phi(A|S)$, according to which of the inequalities ($\dag$), ($\gad$) ($\ddag$) is tight.

\paragraph{Case ($\dag$): $\Phi(A|S) = \Phi(B|S) + \Delta(C|S \cup B)$ (or symmetrically $\Phi(A|S) = \Phi(C|S) + \Delta(B|S \cup C)$)}\ \\

Since $\A_i \subseteq \B_i \bowtie \C_i$ and $\B_i \in \P{B|S}$, we have $\proj{B|S}{\A_i} \subseteq \B_i$ and $\mu(\RHO{C|S \cup B}{\A_i}{z} \subseteq \C_i$ for any $z \in [n]^{B|S}$. We now have (\ref{eq:to-show}) as follows:
\[
  \mu(\A_i) 
  &\le 
  \mu(\proj{B|S}{\A_i})
  \max_{z \in [n]^{B|S}}
  \mu(\RHO{C|S \cup B}{\A_i}{z})
  &&\text{(Lemma \ref{la:split2})}\\
  &\le
  \mu(\B_i)
  \max_{z \in [n]^{B|S}}
  \mu(\RHO{C|S \cup B}{\C_i}{z})
  &&\text{(by above observations)}\\
  &\le 
  (1/n)^{\Delta(C|S \cup B)} \cdot
  \mu(\B_i)
  &&\text{(since $\C_i \in \P{C|S}$)}\\
  &\le 
  (1/n)^{\Delta(C|S \cup B)} \cdot (1/n)^{\Phi(B|S)} \cdot
  \chi_{B|S}(\B_i)
  &&\text{(induction hypothesis)}\\
  &\le 
  (1/n)^{\Phi(A|S)} \cdot
  \max\{\chi_{B|S}(\B_i),\, \chi_{C|S}(\C_i)\}.
\]

\paragraph{Case ($\protect\gad$): $\Phi(A|S) = \Delta(B|S) + \Phi(C|S \cup B)$ (or symmetrically $\Phi(A|S) = \Delta(C|S) + \Phi(B|S \cup C)$)}\ \\

We show (\ref{eq:to-show}) as follows:
\[
  \mu(\A_i) 
  &\le
  \mu(\B_i)
  \max_{z \in [n]^{B|S}}
  \mu(\RHO{C|S \cup B}{\C_i}{z})
  &&\text{(as in the previous case)}\\
  &\le
  (1/n)^{\Delta(B|S)}
  \max_{z \in [n]^{B|S}}
  \mu(\RHO{C|S \cup B}{\C_i}{z})
  &&\text{(since $\B_i \in \P{B|S}$)}\\
  &\le
  (1/n)^{\Delta(B|S)}
  \max_{z \in [n]^{B|S}}
  (1/n)^{\Phi(C|S\cup B)} \cdot \chi_{C|S \cup B}(\RHO{C|S \cup B}{\C_i}{z})
  &&\text{(induction hypothesis)}\\
  &\le 
  (1/n)^{\Delta(B|S)} \cdot (1/n)^{\Phi(C|S\cup B)} \cdot \chi_{C|S}(\C_i)
  &&\text{(Restriction Lemma \ref{la:rest})}\\
  &\le
  (1/n)^{\Phi(A|S)} \cdot \max\{\chi_{B|S}(\B_i),\, \chi_{C|S}(\C_i)\}.
\]

\paragraph{Case ($\ddag$): $\Phi(A|S) = \frac12\big(\Phi(D|S) + \Phi(A|S \cup D) + \Delta(A|S)\big)$ for some $D \prec A$}\ \\

We have
\[
  \mu(\A_i) 
  &\le
  \mu(\proj{D|S}{\A_i})
  \max_{z \in [n]^{D|S}}
  \mu(\RHO{A|S \cup D}{\A_i}{z})
  &&\text{(Lemma \ref{la:split})}\\
  &\le
  (1/n)^{\Phi(D|S) + \Phi(A|S \cup D)} \cdot \chi_{D|S}(\proj{D|S}{\A_i})
  \max_{z \in [n]^{D|S}}
  \chi_{A|S \cup D}(\RHO{A|S \cup D}{\A_i}{z})
  &&\text{(induction hyp.)}\\
  &\le
  (1/n)^{\Phi(D|S) + \Phi(A|S \cup D)}  \cdot  \chi_{A|S}(\A_i) \cdot
  \chi_{A|S \cup D}(\RHO{A|S \cup D}{\A_i}{z})
  &&\text{(Proj.\ Lemma \ref{la:proj})}\\
  &\le
  (1/n)^{\Phi(D|S) + \Phi(A|S \cup D)} \cdot \big(\chi_{A|S}(\A_i)\big)^2
  &&\text{(Rest.\ Lemma \ref{la:rest})}.
\]
Since $\A_i \in \P{A|S}$, we also have 
\[
  \mu(\A_i) \le (1/n)^{\Delta(A|S)}.
\] 
Taking the product of the square roots of these two bounds on $\mu(\A_i)$, we conclude
\[
  \mu(\A_i) 
  &\le 
  (1/n)^{\frac12(\Phi(D|S)+\Phi(A|S \cup D)+\Delta(A|S))} \cdot \chi_{A|S}(\A_i)\\
  &\le
  (1/n)^{\Phi(A|S)} \cdot \chi_{A|S}(\B_i \bowtie \C_i)
  &&\text{(since $\A_i \subseteq \B_i \bowtie \C_i$)}\\
  &\le
  (1/n)^{\Phi(A|S)} \cdot \max\{\chi_{B|S}(\B_i),\, \chi_{C|S}(\C_i)\}
  &&\text{(Lemma \ref{la:chi-props})}.
\]
Having established (\ref{eq:to-show}) in all three cases, the proof is complete.
\end{proof}

Finally, we define the graph invariant $\tau(G)$. Since we  no longer fix a particular threshold weighting $\theta$ and positive integer $n$, we include these as subscripts to the pathset complexity function by writing $\chi_{\theta,n,A|S}(\A)$ for relations $\A \subseteq [n]^{V(G) \setminus S}$.

\begin{df}[The parameter $\tau(G)$ of Theorem \ref{thm:main-lb}]\label{df:tau}
For a graph $G$, let $\tau(G) \in \R_{\ge 0}$ be the minimum real number such that
\[
  \chi_{\theta,n,A|S}(\A) \ge n^{\tau(G)} \cdot \mu(\A)
\]
for every threshold weighting $\theta$ on $G$, join-tree $A$ with graph $G$, positive integer $n$, and relation $\A \subseteq [n]^{V(G)}$.
\end{df}

It is evident from this definition that Theorem \ref{thm:tau2} is an immediate corollary of Theorem \ref{thm:general-chiPhi}.

\section{Lower bound $\tau(P_k) \ge \log_{\sqrt 5 + 5}(k)$}\label{sec:Pk}

Throughout this section, we fix the infinite pattern graph $P_\infty$ and threshold weighting $\theta : E(P_\infty) \to \{1\}$.  Let $F$ range over finite subgraphs of $P_\infty$, and let $S$ range over finite subsets of $V(P_\infty)$ ($=\Z$).  We suppress $\theta$, writing $\Delta(F|S)$ instead of $\Delta_\theta(F|S)$. 

For integers $i < j$, let $P_{i,j} \subseteq P_\infty$ be the path from $i$ to $j$ (with edges $\{i,i+1\},\{i+1,i+2\},\dots,\{j-1,j\}$).  For $k \ge 0$, let $P_k \defeq P_{0,k}$.

\begin{df}[Open/half-open/closed components of $F|S$]
For integers $i < j$ such that $P_{i,j} \subseteq F$, we say that
\begin{itemize}
\item
$(i,j)$ is an {\em open component} of $F|S$ if $V(P_{i,j}) \cap S = \{i,j\}$,
\item
$[i,j)$ is a {\em half-open component} of $F|S$ if $V(P_{i,j}) \cap S = \{i\}$ and $\{j,j+1\} \notin E(G)$,
\item
$(i,j]$ is a {\em half-open component} of $F|S$ if $V(P_{i,j}) \cap S = \{j\}$ and $\{i-1,i\} \notin E(F)$,
\item
$[i,j]$ is a {\em closed component} of $F|S$ if $V(P_{i,j}) \cap S = \emptyset$ and $\{i,i-1\} \notin E(F)$ and $\{j,j+1\} \notin E(F)$.
\end{itemize}
We shall use the term `interval' and component interchangeably. In each of the four cases above, we define the length of the interval to be $j-i$ and refer to $i$ and $j$ as the left and right `end-points' of that interval, respectively. We shall also refer to the length of an interval $I$ by $|I|$. We treat a join-tree $A$ as its graph $\Gr{A}$ when speaking of the {\em open/half-open/closed components} of $A|S$.

As a matter of notation, let $V((i,j)) = \{i+1,\dots,j-1\}$ and $V([i,j)) = \{i+1,\dots,j\}$ and $V((i,j]) = \{i+1,\dots,j\}$ and $V([i,j]) = \{i,\dots,j\}$.
\end{df}

\begin{figure}[H]
\centering
\includegraphics[scale=0.60]{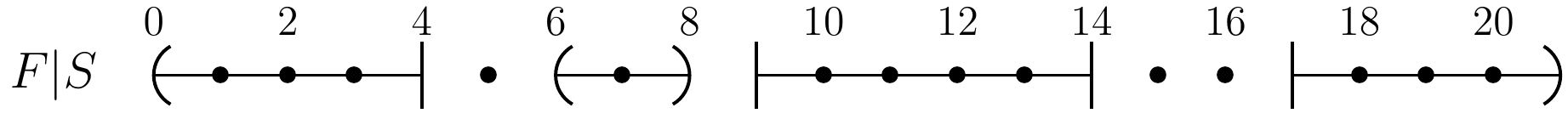}
\caption{\small An example where $F$ is the graph $P_{0,4} \cup P_{6,8} \cup P_{9,14} \cup P_{17,21}$ and $S = \{0,6,8, 21\}$. Then $F|S$ consists of four components: $(0,4], (6,8),[9,14],$ and $[17,21)$. We follow the same format for the rest of the figures in this section: `$($', `$)$' indicate that the corresponding (left or right, respectively) end-point of that component is in $S$, while `$|$' indicates that it is not. We shall skip marking the internal vertices of an interval.}
\label{fig:P_inf}
\end{figure}

\begin{la}\label{lem:Delta=closed}
$\Delta(F|S)$ equals the number of closed components of $F|S$.
\end{la}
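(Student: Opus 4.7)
The plan is to unpack each side of the claimed equality in terms of the underlying definitions and show that they match, via the (immediate) fact that $|V(H)| - |E(H)|$ counts connected components when $H$ is a forest.

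First, I would recall that on $P_\infty$ the threshold weighting is the constant $1$, so for any finite subgraph $H \subseteq P_\infty$ one has
\[
  \Delta_\theta(H) \;=\; |V(H)| \;-\; |E(H)|.
\]
Because every finite subgraph of $P_\infty$ is a disjoint union of finite paths (in particular a forest), the quantity $|V(H)| - |E(H)|$ equals the number of connected components of $H$. By definition, $\Delta(F|S) = \Delta_\theta(F \ominus S)$, so the left-hand side is exactly the number of connected components of $F \ominus S$.

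Next I would identify the connected components of $F \ominus S$ with the closed components of $F|S$. Recall $F \ominus S$ is obtained from $F$ by deleting those connected components that meet $S$. Since $F$ is a disjoint union of paths in $P_\infty$, each connected component is a maximal sub-path $P_{i,j} \subseteq F$ (characterised by the conditions $\{i-1,i\}, \{j,j+1\} \notin E(F)$). Such a component survives in $F \ominus S$ iff $V(P_{i,j}) \cap S = \emptyset$, which is precisely the definition of $[i,j]$ being a closed component of $F|S$. The other three kinds of intervals (open, right-half-open, left-half-open) correspond to components of $F$ that do meet $S$, and are therefore deleted when passing to $F \ominus S$.

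Combining the two paragraphs, the number of closed components of $F|S$ equals the number of connected components of $F \ominus S$, which equals $\Delta(F|S)$, completing the proof. There is really no obstacle here beyond carefully matching the case distinction in the definition of a component of $F|S$ to the ``vertex-disjoint from $S$'' condition in the definition of $F \ominus S$; all four interval types together exhaust the maximal sub-paths of $F$, and only the closed ones avoid $S$.
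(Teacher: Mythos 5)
Your argument is correct and is essentially the paper's proof, just written out in more detail: both unwind $\Delta(F|S) = \Delta(F\ominus S) = |V(F\ominus S)| - |E(F\ominus S)|$, observe this counts connected components of the forest $F\ominus S$, and then identify those surviving components with the closed components of $F|S$.
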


\begin{proof}
We have
$
  \Delta(F|S) = \Delta(F \ominus S) = |V(F \ominus S)| - |E(F \ominus S)| = \#\{$connected components of $F \ominus S\}
  = \#\{$closed components of $F|S\}.
$
\end{proof}
Recall that for any any graph $F$ and set $T$, we denote by $F[T]$ the subgraph induced by the vertices of $V(F)\cap T$. We have the following observation that follows immediately from Lemma~\ref{lem:Delta=closed}:

\begin{obs}\label{obs:DeltaSplit}
For a graph $F$, set $S$ and a set $T$ such that for every open/half-open/closed component $K$ of $F|S$, either $T \cap V(K) = \emptyset$ or $V(K) \subseteq T$,
\[
\Delta(F|S) = \Delta(F[T]|S) + \Delta (F|S\cup T)
\]
where note that $\Delta(F[T]|S)$ equals the number of closed components of $F|S$ that $T$ contains.
\end{obs}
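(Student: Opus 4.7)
The plan is to prove the identity by decomposing the closed components of $F|S$ according to whether they lie inside $T$ or are disjoint from $T$, and then matching the two resulting counts against the two $\Delta$-terms on the right-hand side. By Lemma \ref{lem:Delta=closed}, each of $\Delta(F|S)$, $\Delta(F[T]|S)$, and $\Delta(F|S \cup T)$ counts the closed components of the corresponding ``$\cdot \,|\, \cdot$''. The hypothesis on $T$ partitions the closed components of $F|S$ into those with $V(K) \subseteq T$ and those with $V(K) \cap T = \emptyset$. So it suffices to exhibit two bijections:
(a) closed components of $F|S$ contained in $T$ correspond to closed components of $F[T]|S$;
(b) closed components of $F|S$ disjoint from $T$ correspond to closed components of $F|S \cup T$.

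First I would verify (b), which is purely a definition chase. A closed component $[i,j]$ of $F|S \cup T$ satisfies $P_{i,j} \subseteq F$, $V(P_{i,j}) \cap (S \cup T) = \emptyset$, and $\{i-1,i\}, \{j,j+1\} \notin E(F)$. These are exactly the conditions for $[i,j]$ to be a closed component of $F|S$ with $V(P_{i,j}) \cap T = \emptyset$, so this direction requires nothing from the hypothesis.

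Next I would prove (a). The forward direction is easy: if $[i,j]$ is a closed component of $F|S$ with $V(P_{i,j}) \subseteq T$, then $P_{i,j} \subseteq F[T]$, and since $E(F[T]) \subseteq E(F)$ the boundary edges $\{i-1,i\}$ and $\{j,j+1\}$ continue to miss $E(F[T])$. The reverse direction is where the assumption on $T$ does its work. Suppose $[i,j]$ is a closed component of $F[T]|S$; I must check $\{i-1,i\} \notin E(F)$ (and symmetrically on the right). If not, then $\{i-1,i\} \in E(F) \setminus E(F[T])$ forces $i - 1 \notin T$, while $i \in T$ and $i \notin S$. But then $i$ and $i-1$ would both lie in a single open/half-open/closed component of $F|S$, violating the hypothesis that each such component is either entirely inside $T$ or entirely disjoint from $T$.

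I do not foresee any real obstacle here; the argument is routine bookkeeping, and the only subtle point is invoking the hypothesis precisely at the boundary check in (a). Combining (a) and (b) gives $\Delta(F|S) = \Delta(F[T]|S) + \Delta(F|S \cup T)$, together with the parenthetical observation that $\Delta(F[T]|S)$ counts exactly those closed components of $F|S$ that $T$ contains.
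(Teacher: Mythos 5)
Your plan is the natural one: translate all three $\Delta$-terms to counts of closed components via Lemma \ref{lem:Delta=closed}, split the closed components of $F|S$ into those inside $T$ and those disjoint from $T$, and exhibit bijections (a) and (b). Bijection (b) is indeed a pure definition chase, and the forward half of (a) is fine.

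The gap sits precisely at the step you flag as "subtle." In the backward half of (a) you argue that $\{i-1,i\}\in E(F)\setminus E(F[T])$ forces $i-1\notin T$, $i\in T$, $i\notin S$, and conclude that ``$i$ and $i-1$ would both lie in a single component of $F|S$,'' contradicting the hypothesis on $T$. That inference needs the extra fact $i-1\notin S$. If $i-1\in S$, then $i-1$ is the $S$-endpoint of the (open or half-open) component $K$ of $F|S$ that contains $i$, and by the definition of $V(\cdot)$ for such intervals we have $i-1\notin V(K)$; so $V(K)\subseteq T$ can hold even though $i-1\notin T$, and no contradiction is produced. In fact, the conclusion you are trying to reach is not true in that case: take $F=P_{0,4}$, $S=\{0\}$, $T=\{1,2,3,4\}$. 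The only component of $F|S$ is the half-open $(0,4]$ with $V((0,4])=\{1,2,3,4\}=T$, so the stated hypothesis on $T$ is met; but $\Delta(F|S)=0$ while $F[T]=P_{1,4}$ gives $\Delta(F[T]|S)=1$ and $\Delta(F|S\cup T)=0$. So cutting $F$ down to the induced subgraph $F[T]$ can "promote" a half-open component to a closed one, creating a closed component of $F[T]|S$ that does not come from a closed component of $F|S$ contained in $T$; your claimed bijection (a) fails, and so does the displayed identity and the parenthetical remark. To make your argument go through one needs a strengthened hypothesis on $T$ (roughly, that $T$ contain the whole maximal path underlying $K$, including its $S$-endpoints, whenever it meets $V(K)$), which restores exactly the implication ``$\{i-1,i\}\in E(F)$ and $i\in T$ and $i\notin S$ imply $i-1\in T$'' that your boundary check needs.
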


The following lemma will allow us to ``zoom'' into any component of $A|S$ when bounding $\Phi(A|S)$ and gain $1$ for each closed component that we discard. 

\begin{la}\label{la:zoom}
Let $A$ be a join-tree and let $S,T$ be subsets of $V(A)$ such that, for every open/half-open/closed component $K$ of $A|S$, either $T \cap V(K) = \emptyset$ or $V(K) \subseteq T$.  Then 
\[
  \Phi(A|S) \ge \Phi(A|S \cup T) + \Delta(A[T]|S).
\]

\end{la}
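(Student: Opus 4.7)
The plan is to proceed by induction on the pair $(A, |V(A)\setminus S|)$ in lexicographic order, with primary induction on the structural size of $A$ (to justify inductive invocations in $(\dag)$, $(\gad)$, $(\ddag)$ on sub-join-trees) and secondary induction on $|V(A)\setminus S|$ (to handle the recursive occurrence of $\Phi(A|S\cup D)$ in $(\ddag)$). The base cases --- $A$ atomic, or $V(A)\subseteq S$ --- collapse immediately to the identity $\Delta(A|S) = \Delta(A[T]|S) + \Delta(A|S\cup T)$ furnished by Observation~\ref{obs:DeltaSplit}, whose hypothesis is exactly the lemma's hypothesis on $T$.

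For the inductive step with $A = \sq{B,C}$, I will examine which of $(\dag)$, $(\gad)$, $(\ddag)$ is tight in the recursive computation of $\Phi(A|S\cup T)$, and then lift the same inequality (with the same witness sub-join-tree) to a lower bound on $\Phi(A|S)$. A preliminary observation, verified up front, is that the lemma's hypothesis on $T$ transfers to each of the sub-problems $(B,S)$, $(C,S\cup V(B))$, $(D,S)$, $(A,S\cup V(D))$ arising in the argument: the vertex set of every open/half-open/closed component of any such sub-problem lies inside the vertex set of some component of $A|S$, hence is either entirely contained in $T$ or disjoint from $T$.

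In Case $(\dag)$, after applying $(\dag)$ to $\Phi(A|S)$, invoking the IH on $\Phi(B|S)$, and expanding $\Delta(C|S\cup B) = \Delta(C[T]|S\cup B) + \Delta(C|S\cup T\cup B)$ via Observation~\ref{obs:DeltaSplit}, the claim reduces to the combinatorial inequality
\[
  \Delta(B[T]|S) + \Delta(C[T]|S\cup B) \;\ge\; \Delta(A[T]|S).
\]
Case $(\gad)$ is symmetric and reduces to the same inequality, via a parallel expansion of $\Delta(B|S)$. In Case $(\ddag)$ with witness $D \prec A$ (where $V(D)\nsubseteq S\cup T$, and hence $V(D)\nsubseteq S$), applying $(\ddag)$ to $\Phi(A|S)$ with the same $D$, invoking the IH on both $\Phi(D|S)$ (primary induction) and $\Phi(A|S\cup D)$ (secondary induction), and expanding $\Delta(A|S) = \Delta(A[T]|S) + \Delta(A|S\cup T)$, the claim reduces to
\[
  \Delta(D[T]|S) + \Delta(A[T]|S\cup D) \;\ge\; \Delta(A[T]|S).
\]

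The main obstacle is verifying these two combinatorial inequalities; both will follow from an injective charging argument on closed components, based on the key identification (itself a direct consequence of the hypothesis on $T$) of the closed components of $A[T]|S$ with those closed components $K = [i,j]$ of $A|S$ whose vertex set lies entirely in $T$. Given such a $K$, the $2$-regularity of $P_\infty$ forces a clean dichotomy: either some edge of $P_{i,j}$ lies in $B$ (respectively in $D$), in which case any maximal sub-interval of $P_{i,j}$ consisting entirely of $B$-edges (resp.\ $D$-edges) is a closed component of $B[T]|S$ (resp.\ $D[T]|S$) contained in $V(K)$, to which I charge $K$; or no edge of $P_{i,j}$ lies in $B$ (resp.\ $D$), which by $2$-regularity forces $V(P_{i,j})\cap V(B) = \emptyset$ (resp.\ $V(P_{i,j})\cap V(D) = \emptyset$), so that $K$ itself is a closed component of $C[T]|(S\cup B)$ (resp.\ $A[T]|(S\cup D)$). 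Disjointness of distinct components $K$ makes the charging injective, establishing both inequalities and closing the induction.
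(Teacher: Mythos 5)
Your proposal is correct and follows essentially the same route as the paper's own proof: the same nested induction (structural on $A$, backward on $S$ via $|V(A)\setminus S|$), the same case split according to which of $(\dag)$, $(\gad)$, $(\ddag)$ is tight for $\Phi(A|S\cup T)$, and the same reduction to the two combinatorial inequalities $\Delta(B[T]|S)+\Delta(C[T]|S\cup B)\ge\Delta(A[T]|S)$ and $\Delta(D[T]|S)+\Delta(A[T]|S\cup D)\ge\Delta(A[T]|S)$, settled by the same ``each closed component either contains a closed component of the one factor or survives as a closed component of the residual'' charging argument. You merely spell out the $2$-regularity/no-isolated-vertices reasoning behind that dichotomy, which the paper leaves as ``it is then clear.''
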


\begin{proof}
We argue by induction on join-trees $A$ and by a backward induction on the set $S$.  The lemma is trivial when $A$ is atomic, or when $S = V(A)$.  So assume $A = \sq{B,C}$. Note that the given condition means that $A[T]|S$ is a union of components of $A|S$ and since $B|S, C|S$ are subgraphs, $B[T]|S$ is a union of components of $B|S$ (and similarly for $C|S$ and $C|S\cup B$). We start with the observation that 
\begin{equation}\label{eq:DeltaCB}
\Delta(B[T]|S) + \Delta(C[T]|S\cup B) \geq \Delta(A[T]|S).
\end{equation}
To see this, note that as the graph $A[T]|S$ is a union of the graphs $B[T]|S$ and $C[T]|S$, each closed component of $A[T]|S$ either contains at least one closed component of $B[T]|S$, or it does not. In the latter case, it is then clear that it must also be a component of $C[T]|S\cup B$. See Figure~\ref{fig:zoom} for an illustration.

\begin{figure}[H]
\centering
\includegraphics[scale=0.60]{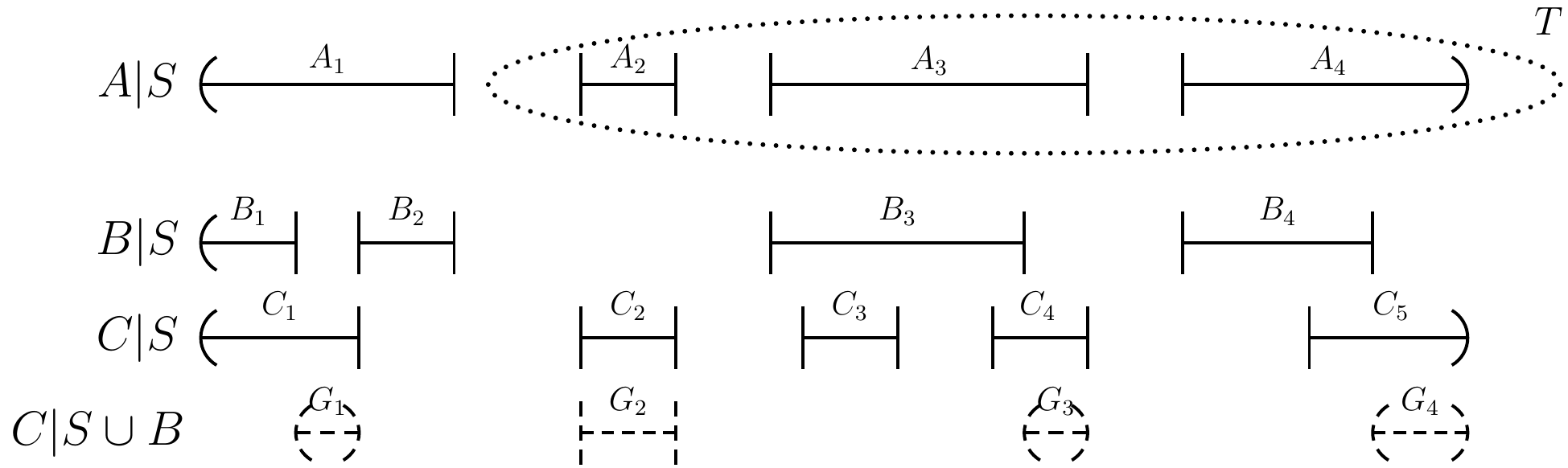}
\caption{\small An example where the set $T$ and the graphs $A|S, B|S,C|S$ are as depicted. Then $B[T]|S = B_3\cup B_4$ and $C[T]|S\cup B = G_2\cup G_3\cup G_4$ with $\Delta(A|S) = \Delta(A[T]|S) = 2$, $\Delta(B[T]|S) = 2$ and $\Delta(C[T]|S\cup B)=1$.}
\label{fig:zoom}
\end{figure}

We will now consider three cases according to whether ($\dag$), ($\gad$) or ($\ddag$) is tight for $\Phi(A|S \cup T)$.  

First, assume $\Phi(A|S \cup T) = \Phi(B|S \cup T) + \Delta(C|S \cup T \cup B)$.  We have
\[
  \Phi(A|S) &\stackrel{(\dag)}{\geq} \Phi(B|S) + \Delta(C|S \cup B))&&\\
  &\ge \Phi(B|S \cup T) + \Delta(B[T]|S) + \Delta(C|S \cup B)
  &&\text{(by induction hypothesis)}\\
  &= \Phi(B|S \cup T) + \Delta(B[T]|S) + \Delta(C[T]|S\cup B) + \Delta(C|S \cup B\cup T) &&\text{(by observation }\ref{obs:DeltaSplit} \text{ for } C|S \cup B\text{)}\\
  &\geq \Phi(A|S \cup T) + \Delta(A[T]|S)&&\text{(by assumption and by eq. (}\ref{eq:DeltaCB} \text{))}
\]

Next, assume $\Phi(A|S \cup T) = \Delta(B|S \cup T) + \Phi(C|S  \cup T \cup B)$. We have
\[
  \Phi(A|S) &\stackrel{(\gad)}{\geq} \Delta(B|S) + \Phi(C|S \cup B) &&\\
  &= \Delta(B|S \cup T) + \Delta(B[T]|S) + \Phi(C|S \cup B)&&\text{(by observation }\ref{obs:DeltaSplit} \text{ for } B|S\text{)}\\  
  &\ge \Delta(B|S \cup T) + \Delta(B[T]|S) + \Phi(C|S \cup B\cup T) + \Delta(C[T]|S\cup B) &&\text{(by induction hypothesis)}\\
  &\geq \Phi(A|S \cup T) + \Delta(A[T]|S)&&\text{(by assumption and by eq. (}\ref{eq:DeltaCB} \text{))}
\]

Finally, assume $\Phi(A|S \cup T) = \frac12\big(\Phi(D|S \cup T) + \Phi(A|S \cup T \cup D) + \Delta(A|S \cup T)\big)$ for some $D \prec A$. We have
\[
\Phi(A|S) &\stackrel{(\ddag)}{\geq} \frac12\Big(\Phi(D|S) + \Phi(A|S \cup D) + \Delta(A|S)\Big)\\
&\ge \frac12\Big(\Phi(D|S\cup T) + \Phi(A|S \cup D\cup T) + \Delta(A|S\cup T) + \Delta(D[T]|S) + \Delta(A[T]|S\cup D) + \Delta(A[T]|S)\Big)
\]
where the last inequality follows from the induction hypothesis on $D|S$ and $A|S \cup D$ (this is where we use the backward induction for sets). It thus suffices to check that 
\[
\Delta(D[T]|S) + \Delta(A[T]|S\cup D) \geq \Delta(A[T]|S).
\]
But this is straightforward as each closed component of $A[T]|S$ either contains at least one closed component of $D[T]|S$, or it does not. In the latter case, it is then clear that it must also be a component of $A[T]|S\cup D$.
\end{proof}

\begin{thm}\label{thm:PhiPk-lb}
For every join-tree $A$ and set $S$, and a component $K$ of $A|S$ of length $k$,
\[  
  \Phi(A|S) &\ge  \log_{c}(\eps\delta k) + \Delta(A|S \cup K)
  &&\text{if }K\text{  is open},\\
  \Phi(A|S) &\ge \log_{c}(\delta k) + \Delta (A|S \cup K)
  &&\text{if }K\text{  is half-open},\\
  \Phi(A|S) &\ge \log_{c}(k) + \Delta (A|S \cup K)
  &&\text{if }K\text{  is closed}.
\]
where $c = \sqrt{5} + 5$, $\delta = (c-3)/c$ and $\eps = 1/2$. 
\end{thm}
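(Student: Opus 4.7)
The plan is to induct on the length $k$ of $K$, with a sub-induction on the structure of the join-tree $A$. The first move is to invoke Lemma~\ref{la:zoom} with $T \defeq V(A)\setminus V(K)$, which gives
\[
\Phi(A|S) \ge \Phi(A|S\cup T) + \Delta(A[T]|S) = \Phi(A|S\cup T) + \Delta(A|S\cup K),
\]
reducing the theorem to the single-component case: $A|(S\cup T)$ has $K$ as its unique component, and the target becomes $\Phi(A|S\cup T) \ge \log_c(\gamma k)$ with $\gamma\in\{1,\delta,\eps\delta\}$ according to the type of $K$. Base cases ($k$ small) follow from $\Phi\ge\Delta\ge 0$: the target is non-positive in the open and half-open cases for small $k$, and equals $0$ when $k=1$ in the closed case while $\Phi\ge 1$.

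For the inductive step, write $A = \sq{B,C}$. The analysis hinges on how the edges of $K = P_{i,j}$ distribute between $B$ and $C$. I first treat the prototypical case where $B$ carries a prefix $[i,w]$ of $K$ of length $k_B$, applying ($\dag$) and ($\gad$) to the decomposition $A = \sq{B,C}$:
\[
\Phi(A|S)\ \ge\ \max\!\left\{\Phi(B|S) + \Delta(C|S\cup B),\ \Delta(B|S) + \Phi(C|S\cup B)\right\}\ \ge\ \max\!\left\{\log_c(k_B),\ \log_c\!\bigl((c-3)(k-k_B)\bigr)\right\},
\]
using the identity $c\delta = c-3$. This maximum exceeds $\log_c k$ whenever $k_B \le k(c-4)/(c-3) = (3-\sqrt 5)\,k$. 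In the complementary regime $k_B > (3-\sqrt 5)\,k$, child $B$ carries most of $K$, and I descend into $B$ to locate a sub-join-tree $D\prec A$ whose graph covers a prefix of length roughly $k_B/2$. Applying ($\ddag$) to $A$ with this $D$ yields
\[
\Phi(A|S)\ \ge\ \tfrac12\!\bigl(\log_c k_D + \log_c\!\bigl(\delta(k - k_D)\bigr) + 1\bigr)\ =\ \tfrac12\log_c\!\bigl(k_D(k-k_D)(c-3)\bigr),
\]
which exceeds $\log_c k$ precisely when $k_D/k$ lies in the interval $\bigl[(3-\sqrt 5)/2,\,(\sqrt 5-1)/2\bigr]$; this interval is reached by the descent in the problematic range of $k_B$.

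The constants $c = \sqrt{5}+5$, $\delta = (c-3)/c$, $\eps = 1/2$ are tuned so the three regimes tile $k_B\in[0,k]$ without gap. The defining identity $c^2 - 10c + 20 = 0$ makes the thresholds align: $(c-4)/(c-3) = 3-\sqrt 5$ coincides with $2\cdot(3-\sqrt 5)/2$, so the ($\dag$)/($\gad$) cutoff matches the lower end of the ($\ddag$) ``usable window'' after doubling. When $K$ is half-open or open rather than closed, the scheme is repeated with the target shifted by a factor of $\delta$ or $\eps\delta$, tracking the component-type transitions under restriction: a closed component splits into two half-open pieces under an interior cut by ($\ddag$), a half-open splits into one half-open plus one open, and an open into two open components, with each transition matched by the corresponding factor of $\delta$ or $\eps\delta$.

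The main obstacle I anticipate is the combinatorial case analysis for non-prefix edge distributions of $K$ between $B$ and $C$ (interleaved, or decomposed into several disjoint sub-intervals): each such pattern must be reduced to the prefix prototype by a suitable choice of sub-join-tree $D$ and a re-invocation of Lemma~\ref{la:zoom}. A secondary difficulty is the descent argument for locating $D$ with a prescribed prefix length: since greedy descent can at best halve the covered length per step, one must verify termination inside the critical interval $[(3-\sqrt 5)/2,\,(\sqrt 5-1)/2]\cdot k$, handling boundary cases by stepping back one level and reapplying ($\dag$) or ($\gad$).
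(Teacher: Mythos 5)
You correctly identify the first move (apply Lemma~\ref{la:zoom} to isolate the component $K$), the key thresholds ($3-\sqrt 5$ for the $(\dag)/(\gad)$ cutoff, the window $[(3-\sqrt 5)/2,\,(\sqrt 5-1)/2]\cdot k$ for $(\ddag)$), and the identity $c\delta = c-3$ that underlies the tuning. However, the two items you flag as ``anticipated obstacles'' are not loose ends --- they constitute essentially the entire body of the paper's argument, and your sketched fixes for them do not work as stated.

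First, the ``non-prefix edge distribution'' problem is not handled in the paper by reduction to a prefix prototype via re-invocation of Lemma~\ref{la:zoom}. When $A = \sq{B,C}$ and $K$ is connected, $\Gr B$ and $\Gr C$ can each consist of many sub-intervals of $K$, interleaved and overlapping. The paper instead constructs the covering $I(B)\cup I(C)$ of $K$ by components of $B|S$ and $C|S$ not contained in one another, observes that these intervals alternate membership, and parametrizes the analysis by $t(A)$ (the number of \emph{closed} intervals in the covering). The open and half-open cases then split into four sub-cases each (by the value and parity of $t(A)$), with averaging arguments over the longest closed interval, longest half-open fringe, and longest ``gap'' interval. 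Lemma~\ref{la:zoom} cannot absorb this interleaving because the edges on both sides of the ``prefix'' boundary can be shared between $B$ and $C$, and discarding them alters the component type (closed $\to$ half-open $\to$ open) in ways that change the target by factors of $\delta$ and $\eps\delta$. You would lose exactly the slack you need.

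Second, the descent for $(\ddag)$ in the closed case is not resolved by ``stepping back one level and reapplying $(\dag)$ or $(\gad)$.'' The window $[(3-\sqrt 5)/2,\,(\sqrt 5-1)/2]$ has ratio $\varphi < 2$, so a single binary descent step can indeed jump over it. The paper's fix in the closed case is more delicate: after establishing that no connected $D \prec A$ lands in the window, it locates a connected $D = [0,j]$ with $j \ge \tfrac12(1+\sqrt x)k$ and a child $E$ of $D$ whose $0$-component $J_0 = [0,i]$ has $i \le \tfrac12(1-\sqrt x)k$, then applies $(\gad)$ \emph{inside $D$} (not $(\ddag)$), splitting into two further sub-cases by whether $E$ has additional components ($s=0$ vs.\ $s\ge 1$). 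This is qualitatively different from ``stepping back and reapplying.''

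There is also a flaw in your prototype calculation that obscures where the difficulty really lies. If $B$ carries a prefix $[i,w]$ and $C$ carries the exact complementary suffix $[w,j]$, then you should also use the symmetric instances of $(\dag)$/$(\gad)$ with the roles of $B,C$ exchanged, which give $\Phi(A|S) \ge \max\{\log_c((c-3)k_B),\,\log_c((c-3)(k-k_B))\} \ge \log_c((c-3)k/2) > \log_c k$ since $(c-3)/2 = (\sqrt 5+2)/2 > 1$. That is, the exact-tiling case is trivially closed by $(\gad)$ alone; $(\ddag)$ is only needed for the non-prefix distributions you deferred. So your proposed regime partition of $k_B \in [0,k]$ is solving a problem that does not arise, while the actual hard regime (interleaved sub-intervals) remains unaddressed.

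Finally, your induction schema (on the length $k$ with sub-induction on $A$) is not clearly well-founded: passing from $\Phi(A|S)$ to $\Phi(B|S)$ via $(\dag)$ can preserve the component length exactly (if $\Gr B = \Gr A$), and passing to $\Phi(A|S\cup D)$ via $(\ddag)$ keeps $A$ the same. The paper's double induction --- structural on $A$ and backward on $S$ --- handles both recursion patterns simultaneously; inducting on $k$ first requires a separate verification that is not obvious.
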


\begin{proof}
We prove this by a structural induction on join-trees as well as a backward induction on the set $S$. Note that we may assume without loss of generality that $S\subseteq V(A)$. Assume $A|S$ is non-empty as the statement follows immediately, otherwise. If $A$ is atomic, then as $k=1$ in each case, the statement is trivial. So let us assume that $A$ is non-atomic with $A = \sq{B,C}$ and that the theorem statement holds for any proper sub-join-tree $D\prec A$ and set $T\subseteq \mbb{Z}$ with the given parameter settings of $c,\delta,\eps$. Moreover, we shall also assume so for the given join-tree $A$ and for every $S'$ such that $S\subsetneqq S' \subseteq V_A$. 

Fix a component $K$ of $A|S$. Let $T$ be the union of the vertex sets of all components of $A|S$ excluding $K$. Then note that $\Delta(A[T]|S) = \Delta(A|S\cup K)$. Therefore, by Lemma~\ref{la:zoom}, it suffices to show that 
\[  
  &&&&&&\Phi(A|S\cup T) &\ge  \log_{c}(\eps\delta k)
  &&\text{if }K\text{  is open},\\
  &&&&&&\Phi(A|S\cup T) &\ge \log_{c}(\delta k) 
  &&\text{if }K\text{  is half-open},\\
  &&&&&&\Phi(A|S\cup T) &\ge \log_{c}(k)
  &&\text{if }K\text{  is closed},&&&&&&
\]
where we know that the graph of $A|S\cup T$ is simply $K$. Hence, we may now assume without loss of generality that $A|S$ is connected and has length $k$.

Henceforth, we shall think of $\eps,\delta,$ and $c$ as indeterminates, imposing constraints on them as we move along the proof. We will eventually verify that the parameter settings specified in the theorem statement indeed satisfy these constraints. We shall proceed by considering the three cases one by one: that $A|S$ is (i) open, (ii) half-open, or (iii) closed. The strategy in each case is to suitably apply one of the three rules, namely ($\dag$), ($\gad$), or ($\ddag$) in order to obtain a lower bound on $\Phi(A|S)$. We remark that cases (i) and (ii) are similar in terms of the ideas involved to arrive at the lower bound: in particular, neither uses the ($\ddag$) rule, which is solely used in case (iii). 
Roughly speaking, we shall see that case (i) determines the value of $\eps$, while case (ii) that of $\delta$ and finally, case (iii) that of $c$. 
\vspace{2ex}

\textbf{Case (i): $A|S$ is open.}
Suppose $A|S = (0,k)$. 
Let $A = \sq{B,C}$ and suppose $B|S = B_1\sqcup\cdots\sqcup B_{n(B)}$ (respectively $C|S = C_1\sqcup\cdots\sqcup C_{n(C)}$) where $B_i$'s (respectively $C_i$'s) are the components of $B|S$ (respectively $C|S$), sorted in the increasing order of left end-points. 
Then with the possible exception of $B_1$ and $B_{n(B)}$, every other $B_i$ is a closed interval (similarly for $C$). Also note that for any $i$ and $j$, $B_i \neq C_j$.
We define 
\[
I(B) = \{B_i :\nexists j \text{ such that } C_j \supseteq B_i\}\text{ and } I(C) = \{C_i: \nexists j \text{ such that } B_j \supseteq C_i\}
\]
It follows that $I(B)\cup I(C)$ is a covering of the interval $(0,k)$ and moreover, each of the end-points $0$ and $k$ lie in a \emph{unique} (and half-open) interval among the members of $I(B)\cup I(C)$, which we denote by $I_0$ and $I_k$ respectively. Observe that if we arrange the intervals in $I(B)\cup I(C)$ in the increasing order of left end-points (starting with $I_0$ obviously), then they alternate membership between $I(B)$ and $I(C)$. Finally, we denote by $\mc{C}(A) \coloneqq I(B)\cup I(C)\setminus \{I_0,I_k\}$, the set of all \emph{closed} intervals in the covering $I(B)\cup I(C)$ and define $t(A) \coloneqq |\mc{C}(A)|$. See Figure~\ref{fig:O_ex} for an example that illustrates these notions.

\begin{figure}[H]
\centering
\includegraphics[scale=0.15]{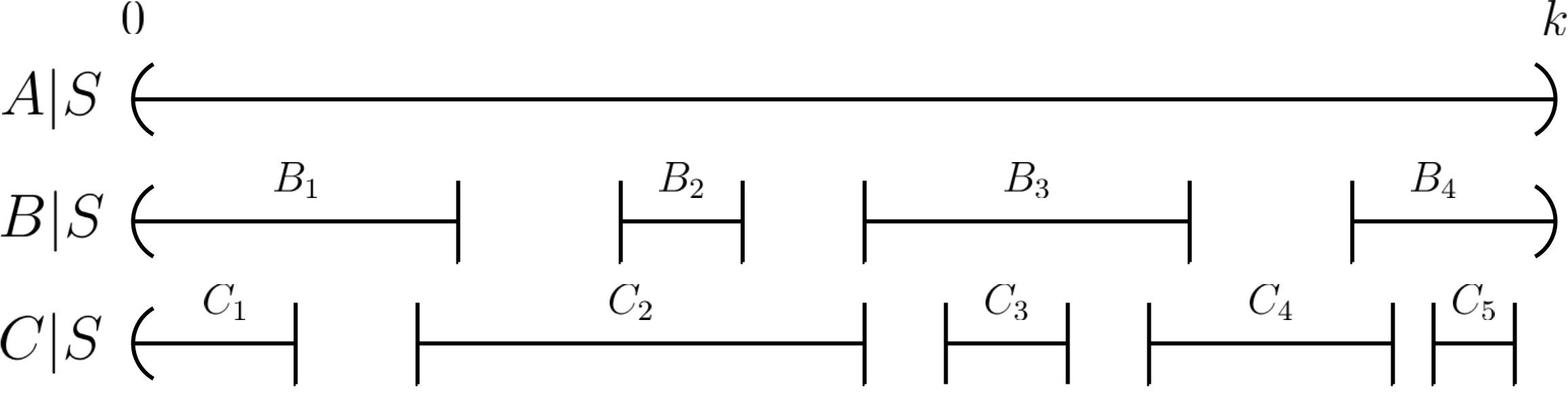}
\caption{\small An example where the graphs $A|S$, $B|S$ and $C|S$ are as depicted above. Here $I(B) = \{B_1,B_3,B_4\}$, $I(C) = \{C_2,C_4\}$, $I_0 = B_1$ and $I_k = B_4$. Thus, $\mc{C}(A) = \{C_2,B_3,C_4\}$ and $t(A) = 3$. Also, notice that $I(B) \cup I(C) = \{B_1,C_2,B_3,C_4,B_4\}$ and thus its members alternate membership between $I(B)$ and $I(C)$ when arranged in increasing order of left end-points.}
\label{fig:O_ex}
\end{figure}

We now consider four sub-cases according to whether $t(A) = 0$, $t(A) = 1$, $t(A) \geq 2$ and $t(A)$ is even, or $t(A) \geq 3$ and $t(A)$ is odd. They all involve very similar ideas except for the $t(A) = 1$ sub-case, which is slightly more subtle and requires the application of ($\gad$) unlike the other sub-cases. The reader should bear in mind that the variables $a,b,x,y$ are taken to hold a distinct meaning in each of these sub-cases.

\paragraph{Open sub-case $t(A) = 0$:}
Either $B$ or $C$ then has a half-open component of size at least $k/2$ and without loss of generality, suppose that is $B$. See Figure~\ref{fig:O_t=0} for an example. 
\begin{figure}[H]
\centering
\includegraphics[scale=0.15]{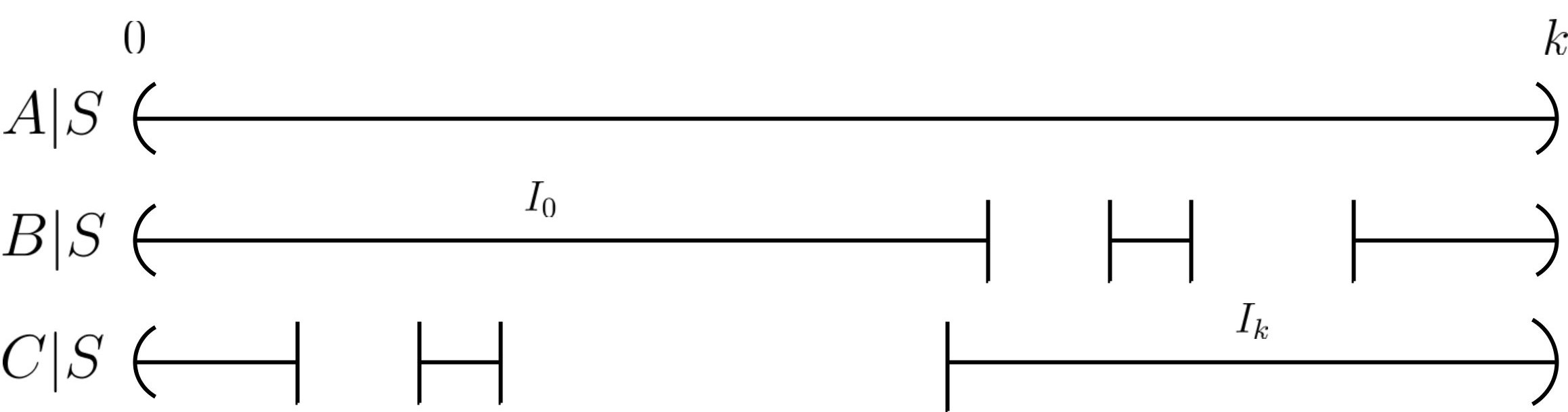}
\caption{\small An example for $t(A) = 0$ where $I_0 \in I(B)$ and $I_k\in I(C)$, with $|I_0| \geq k/2$.  Note that both $B|S$ and $C|S$ each \emph{may} have additional components, such as the unlabelled intervals depicted in this example.}
\label{fig:O_t=0}
\end{figure}

We have
\[
  \Phi(A|S) &\stackrel{(\dag)}{\geq} \Phi(B|S) + \Delta(C|S\cup B)\ge \Phi(B|S)\ge \log_{c}(\delta k/2)\geq \log_c(\eps \delta k)
\] 
as long as $\eps \leq 1/2$. Note that this sub-case is tight for our setting of $\eps$ but works for any $c>1, 0 < \delta \leq 1$.

\paragraph{Open sub-case $t(A) = 1$:}
Both $I_0$ and $I_k$ belong to the same collection i.e., either $I(B)$ or $I(C)$. Without loss of generality, assume that it is the former. Then we know that $|I(C)| = 1$, so let $I = [i,j]$ be that closed interval in $I(C)$, for some $0<i<j<k$. Note that $\max\{i,k-j\} \geq (k - |I|)/2$ so assume without loss of generality that $i\geq (k - |I|)/2$ (in words, that there is more `space' to the left of $I$ than to its right). 
Next, let $J_1,\ldots,J_s$ be the closed intervals in $C|S$ (labelled in the increasing order of left end-points) that appear `before' $I$ i.e., their right end-points are (strictly) less than $i$ (of course it may be the case that $s=0$, when there is no such interval) and let $y\coloneqq \frac{1}{k}(\max_{J\in\{I,J_1\ldots,J_s\}}|J|)$. 
Let $J_0$ be the component (if it exists) in $C|S$ containing $0$, and let $b\coloneqq |J_0|/k$ (define $b=0$ if it does not exist).

Let $G_1\sqcup \cdots\sqcup G_{s+1} = (0,i]\ |\ (J_0\cup \cdots\cup J_s \cup S) \subseteq B|(S\cup C)$ be the `gap' intervals (labelled in the increasing order of left end-points) as shown in Figure~\ref{fig:O_t=1}. Now with the possible exception of $G_1$ which may be half-open or open (depending or whether $J_0$ is empty or not), all other $G_i$ are open. Let $gk$ be the length of the longest gap interval. Then it follows that $(s+1)g + b + sy \geq \frac{i}{k}\geq \frac{1-y}{2}.$

\begin{figure}[H]
\centering
\includegraphics[scale=0.15]{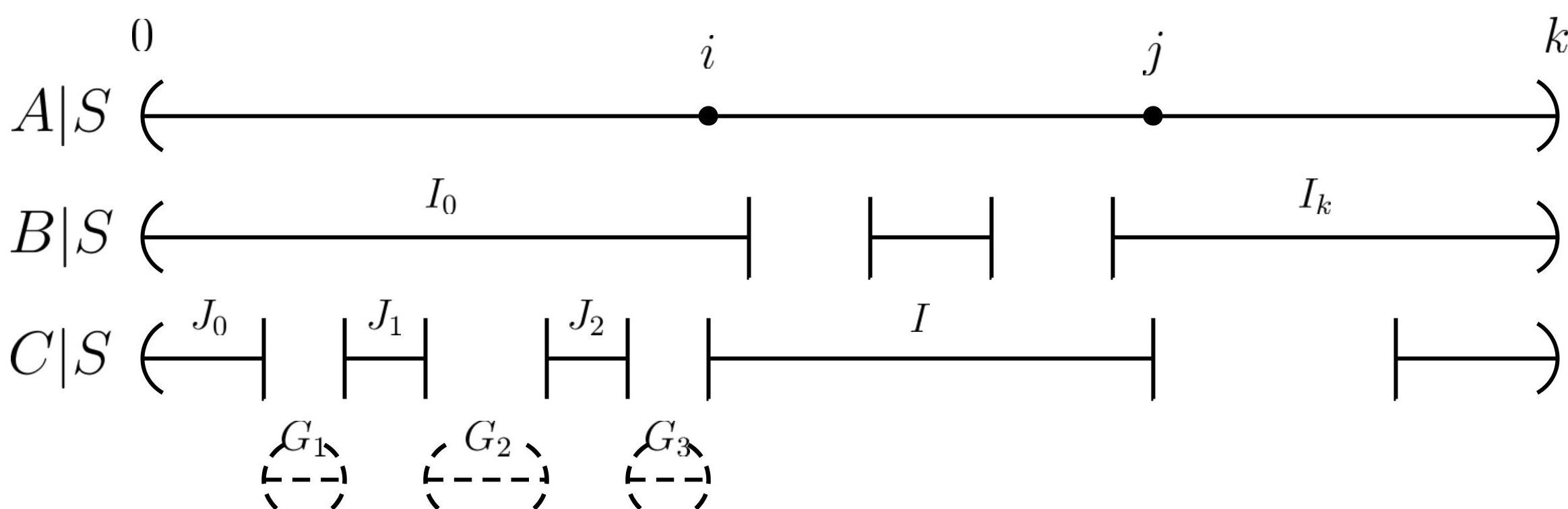}
\caption{\small An example for $t(A) = 1$ where $s = 2$. The gap intervals $G_i$ are shown in dashed lines.  Both $B|S$ and $C|S$ each \emph{may} have additional components, such as the unlabelled intervals depicted in this example.}
\label{fig:O_t=1}
\end{figure}

Now if either of the following occurs, we are immediately done.

\begin{itemize}
\item If $a \geq \eps$, then simply use the induction hypothesis on $B$. We have 
\[
\Phi(A|S)\stackrel{(\dag)}{\geq} \Phi(B|S) + \Delta(C|S\cup B) \geq \Phi(B|S) \geq \log_c(\delta a k) \geq \log_c(\eps \delta k).
\]
\item If $b\geq \eps/c^{s+1}$, then as $C|S$ contains at least $s+1$ other closed components, we have 
\[
\Phi(A|S)\stackrel{(\dag)}{\geq} \Phi(C|S) + \Delta(B|S\cup C) \geq \Phi(C|S) \geq \log_c (\delta bk) + (s+1) \geq \log_c(\eps \delta k).
\]
\item If $y\geq (\eps\delta )/c^s$, then we have 
\[
\Phi(A|S)\stackrel{(\dag)}{\geq} \Phi(C|S) + \Delta(B|S\cup C) \geq \Phi(C|S)\geq  \log_c (yk) + s \geq \log_c(\eps \delta k).
\]
\end{itemize}

So we may assume that $a < \eps$, $b< \eps/c^{s+1}$, and $y< (\eps\delta )/c^s$, which together imply that 
\[
g > \frac{1}{s+1}\left(\frac{1}{2}-\frac{\eps}{c^{s+1}}-\left(s+\frac12\right)\cdot \frac{\eps\delta}{c^s}\right).
\]
We have (this time by ($\gad$))
\[
\Phi(A|S)
  &\stackrel{(\smallgad)}{\geq} \Delta(C|S) + \Phi(B|S\cup C)\ge s+1 + \log_c\left(\frac{\eps\delta k}{s+1}\left(\frac{1}{2}-\frac{\eps}{c^{s+1}}-\left(s+\frac12\right)\cdot \frac{\eps\delta}{c^s}\right)\right)
\]
which is at least $\log_c(\eps\delta k)$ as long as for all $s\geq 0$,
\[
\frac{c^{s+1}-2\eps - 2(s+\frac12)\cdot\eps\delta c}{2(s+1)}\geq 1
\]
which upon plugging in $\eps \leq 1/2$, reduces to showing for all $s\geq 0$ that
\[
c^{s+1}\geq 2s+3 + \left(s+\frac12\right)\cdot\delta c
\]
which is indeed true for $c>6$, $0<\delta \leq 1$. We note that this sub-case is {not} tight for our parameter settings $c = \sqrt{5} + 5$, $\delta = \frac{c-3}{c}$.

\paragraph{Open sub-case $t(A) \geq 2$ and $t(A)$ is even:} If $s = t(A)/2$, then both $B|S$ and $C|S$ contain (exactly) one half-open interval among $I_0,I_k$ and $s\geq 1$ closed intervals each. Similar to the previous case, we define $a\coloneqq \frac{1}{k}(\max\{|I_0|,|I_k|\})$ and $x\coloneqq \frac{1}{k}(\max_{I\in \mc{C}(A)} |I|)$. It follows that $2a + 2sx\geq 1$. 
See Figure~\ref{fig:O_t_even} for an example. 

\begin{figure}[H]
\centering
\includegraphics[scale=0.15]{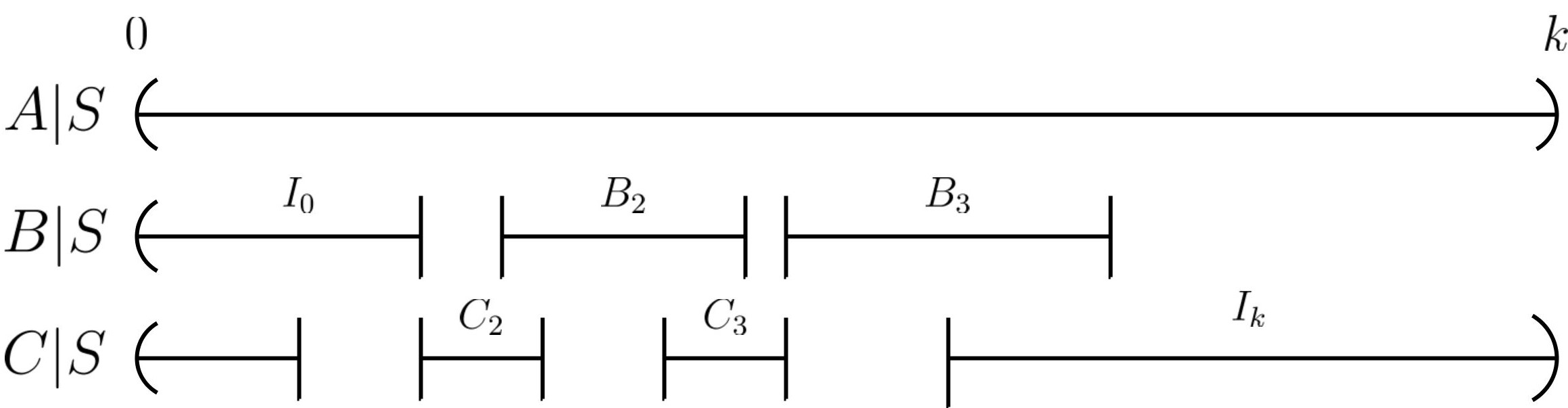}
\caption{\small An example where $t(A) = 4$. Here, $I(B) = \{ I_0, B_2,B_3\}$ and $I(C) = \{C_2,C_3,I_k\}.$ $a$ is therefore defined to be $|I_k|/k$ and $x$ is defined to be $|B_3|/k$.}
\label{fig:O_t_even}
\end{figure}

If $a\geq\eps/c^s$, then as one of $B|S$ or $C|S$ has a half-open interval of length $ak$ along with at least $s$ closed intervals, we have 
\[
\Phi(A|S) \stackrel{(\dag)}{\geq} \max\{\Phi(B|S),\Phi(C|S)\} \geq \log_c (\delta ak) + s \geq \log_c(\eps \delta k)
\]
and we are done. So assume that $a<\eps/c^s$, implying $x>\frac{1}{s}(\frac{1}{2}-\frac{\eps}{c^s})$. Therefore, as one of $B|S$ or $C|S$ has a closed interval of length $xk$ with at least $s-1$ other closed intervals, we have
\[
\Phi(A|S)\stackrel{(\dag)}{\geq}\max\{\Phi(B|S),\Phi(C|S)\} \geq \log_c (xk) + (s-1) >\log_c\left(\frac{c^{s-1}k}{s}\left(\frac{1}{2}-\frac{\eps}{c^s}\right)\right)
\]
and it is enough to check that this last expression is at least $\log_c(\eps \delta k)$, which reduces to showing that for all $s\geq 1$,
\[
c^s-2\eps\geq 2c\eps\delta s
\]
which is clearly true when $c>4$, $0<\delta\leq (c-1)/c$ and $\eps\leq 1/2$.
Therefore, this sub-case is also {not} tight for our parameter settings $c = \sqrt{5} + 5$, $\delta = \frac{c-3}{c}$.

\paragraph{Open sub-case $t(A) \geq 3$ and $t(A)$ is odd:} Both $I_0$ and $I_k$ must then lie in $I(B)$, without loss of generality. Then if $s = (t(A) + 1)/2$, then $I(C)$ has $s\geq 2$ closed components while $I(B)$ has $s-1$. Define $a\coloneqq \frac{1}{k}(\max\{|I_0|,|I_k|\})$, $x\coloneqq \frac{1}{k}(\max_{I\in I(B)\setminus \{I_0,I_k\}} |I|)$, and $y\coloneqq \frac{1}{k}(
\max_{I\in I(C)} |I| )$. It follows that $2a + (s-1) x + sy \geq 1$.

Refer back to the example described in Figure~\ref{fig:O_ex}. For those particular graphs $A|S$, $B|S$, and $C|S$, we would have $a = |B_1|/k$, $x = |B_3|/k$, and $y = |C_2|/k$.

Now if either of the following occurs, we are immediately done.

\begin{itemize}
\item If $a\geq \eps/c^{s-1}$, 
simply use the induction hypothesis on $B$: one of $I_0$ or $I_k$ is a half-open component of length at least $ak$ and $B|S$ has $s-1$ closed components. We have
\[
\Phi(A|S) \stackrel{(\dag)}{\geq} \Phi(B|S)+\Delta(C|S\cup B)\geq \Phi(B|S)\geq \log_c (\delta ak) + (s-1) \geq \log_c (\eps \delta k).
\]
\item Similarly if $x\geq(\eps \delta)/c^{s-2}$, then there is closed component of length at least $xk$ in $B|S$ along with $s-2\geq 0$ other closed components. We have
 \[
\Phi(A|S) \stackrel{(\dag)}{\geq} \Phi(B|S)+\Delta(C|S\cup B)\geq  \Phi(B|S)\geq \log_c (xk) + (s-2) \geq \log_c (\eps \delta k).
\]
\end{itemize}
  
Hence, we may assume that $a<\eps/c^{s-1}$ and $x<(\eps \delta)/c^{s-2}$, which together imply that \[y > \frac{1}{s}\left(1-\frac{2\eps}{c^{s-1}} - \frac{\eps\delta(s-1)}{c^{s-2}}\right).\]
Now $C|S$ has a closed component of length at least $yk$ along with $s-1$ other components. We have
\[\Phi(A|S) \stackrel{(\dag)}{\geq} \Phi (C|S) \geq \log_c(yk) + (s-1) > \log_c\left(\frac{c^{s-1}k}{s}\left(1-\frac{2\eps}{c^{s-1}} - \frac{\eps\delta(s-1)}{c^{s-2}}\right)\right)\]
and it is enough to show that this last expression is at least $\log_c(\eps\delta k)$, which in turn reduces to checking that for all $s\geq 2$,

\[
\frac{c^{s-1}}{\eps\delta s} - \frac{2}{\delta s} - \frac{c(s-1)}{s} \geq 1
\]
which is straightforward to verify for any $4<c, 0<\delta\leq 1$, and  $\eps \leq 1/2$. In particular, this sub-case is also {not} tight for our parameter settings $c = \sqrt{5} + 5$, $\delta = \frac{c-3}{c}$.

\vspace{1ex}
\noindent\textit{Remark:}
The above inequality is {not} true for $s=1$ irrespective of the choice of $c$, which is precisely why we have a separate argument for the sub-case $t(A) = 1$. 
\vspace{2ex}

\textbf{Case (ii): $A|S$ is half-open.}
Suppose without loss of generality that $A|S = [0,k)$. Again, let $A = \sq{B,C}$ and suppose $B|S = B_1\sqcup\cdots\sqcup B_{n(B)}$ (respectively $C|S = C_1\sqcup\cdots\sqcup C_{n(C)}$) where $B_i$s (respectively $C_i$s) are the connected components of $B|S$ (respectively $C|S$). Then with the possible exception of $B_{n(B)}$, every other $B_i$ is a closed interval (similarly for $C$). Also note that for any $i$ and $j$, $B_i \neq C_j$.
We define 
\[
I(B) = \{B_i :\nexists j \text{ such that } C_j \supseteq B_i\}\text{ and } I(C) = \{C_i: \nexists j \text{ such that } B_j \supseteq C_i\}
\]
It follows that $I(B)\cup I(C)$ is a covering of the interval $[0,k)$ and moreover, each of the end-points $0$ and $k$ lies in a \emph{unique} (closed and half-open, respectively) interval among the members of $I(B)\cup I(C)$, which we denote by $I_0$ and $I_k$ respectively. Again, observe that if we arrange the intervals in $I(B)\cup I(C)$ in the increasing order of left end-points, then they alternate membership between $I(B)$ and $I(C)$. Finally, we denote by $\mc{C}(A) \coloneqq I(B)\cup I(C)\setminus \{I_k\}$, the set of all \emph{closed} intervals in the covering $I(B)\cup I(C)$ and define $t(A) \coloneqq |\mc{C}(A)|$. 
See Figure~\ref{fig:HO_ex} for an example that illustrates these notions.

\begin{figure}[H]
\centering
\includegraphics[scale=0.15]{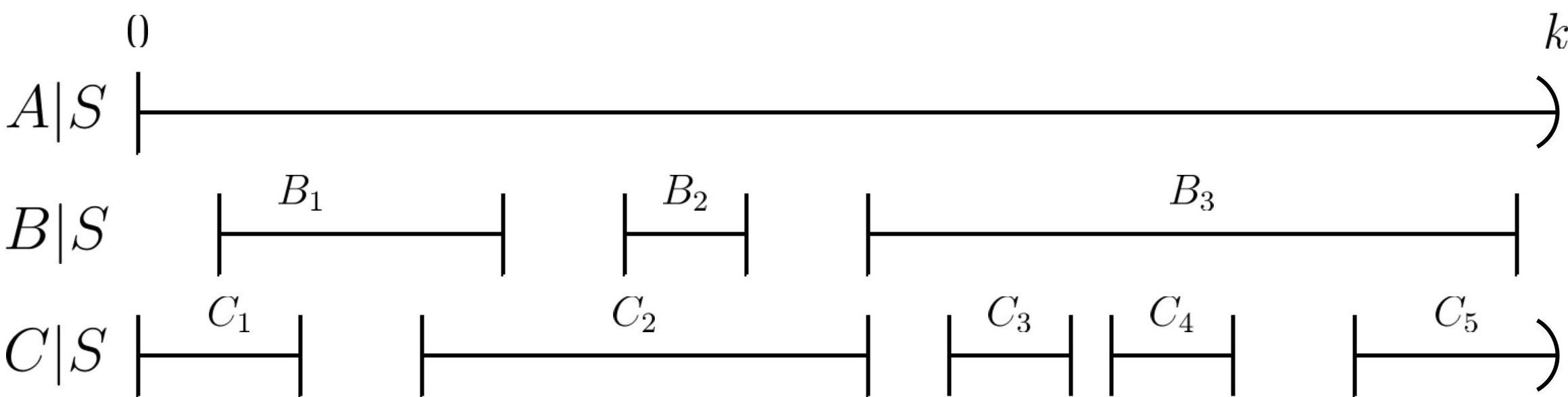}
\caption{\small An example where the graphs $A|S$, $B|S$ and $C|S$ are as depicted above. Here $I(B) = \{B_1,B_2,B_3\}$, $I(C) = \{C_1,C_2,C_5\}$, $I_0 = C_1$ and $I_k = C_5$. Thus, $\mc{C}(A) = \{C_1,B_1,C_2,B_3\}$ and $t(A) = 4$. Also, notice that $I(B) \cup I(C) = \{C_1,B_1,C_2,B_3,C_5\}$ and thus its members alternate membership between $I(B)$ and $I(C)$ when arranged in increasing order of left end-points.}
\label{fig:HO_ex}
\end{figure}

We again consider four sub-cases according to whether $t(A) = 1$, $t(A) = 2$, $t(A) \geq 3$ and $t(A)$ is odd, or $t(A) \geq 4$ and $t(A)$ is even. The first two sub-cases are slightly more subtle and require the application of ($\gad$) unlike the rest. Again, the variables $a,b,x,y$ are taken to hold a distinct meaning in each of these sub-cases.

\paragraph{Half-open sub-case $t(A) = 1$:}
This means that $I_0$ is the unique interval in $\mc{C}(A)$, also implying that $I_k$ and $I_0$ overlap. So suppose without loss of generality that $I(B) = \{I_k\}$, $I(C) = \{I_0\}$, and let $J_1,\ldots J_s$ be the closed intervals that appear in $C|S$ `after' $I_0$. Further, let $J$ be the half-open interval in $C|S$ that contains $k$ and let $b \coloneqq |J|/k$ where $b\coloneqq 0$ if such an interval does not exist. We define $y \coloneqq \frac{1}{k}(\max_{I\in\{|I_0,J_1,\ldots,J_s\}} |I|)$. See Figure~\ref{fig:HO_t=1} for an example.

\begin{figure}[H]
\centering
\includegraphics[scale=0.15]{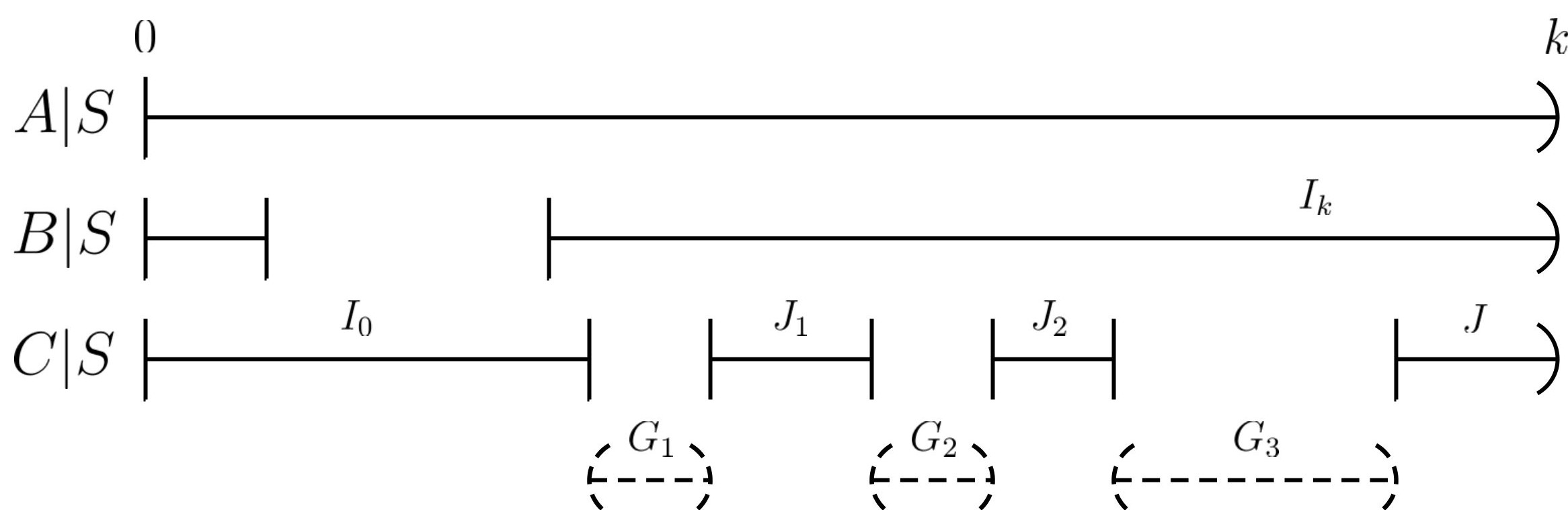}
\caption{\small An example for $t(A) = 1$ where $s = 2$. The gap intervals $G_i$ are shown in dashed lines.  $B|S$ \emph{may} have additional components, such as the unlabelled interval depicted in this example.}
\label{fig:HO_t=1}
\end{figure}

Now if $y\geq \delta/c^s$, then
\[
\Phi(A|S)\stackrel{(\dag)}{\geq} \Phi (C|S) +\Delta(B|S\cup C) \geq \Phi(C|S)\geq \log_c(yk) + s \geq \log_c(\delta k)
\]
and we are done. So assume that $y<\delta/c^s$. Further, if $b\geq 1/c^{s+1}$, then 
\[
\Phi(A|S)\stackrel{(\dag)}{\geq} \Phi (C|S) +\Delta(B|S\cup C) \geq \Phi(C|S)\geq \log_c(bk) + s + 1 \geq \log_c(\delta k)
\]
and so we assume that $b<1/c^{s+1}$. Next, suppose $G_1\sqcup \cdots\sqcup G_{s+1} = I_k\ |\ (I_0\cup J_1\cup \cdots\cup J_s \cup S) \subseteq B|(C\cup S)$ are the `gap' intervals as shown in Figure~\ref{fig:HO_t=1}. Let $gk$ be the length of the longest gap interval. Then it follows that $(s+1)g  + sy + b \geq 1-y$. We have (this time by ($\gad$))
\[
\Phi(A|S) \stackrel{(\smallgad)}{\geq} \Delta(C|S) + \Phi(B|S\cup C) \geq s+1 +\log_c(\eps\delta gk)\geq s+1 + \log_c\left(\frac{\eps\delta k}{s+1}\left(1-\frac{\delta(s+1)}{c^s}-\frac{1}{c^{s+1}}\right)\right). 
\]
The task of showing that this expression is at least $\log_c (\delta k)$ reduces to showing that for all $s\geq 0$,
\[
\eps\left(\frac{c^{s+1}-1}{s+1}-\delta c\right)\geq 1
\]
which is clearly true when $c>4$, $\delta \leq (c-3)/c$, and $\eps = 1/2$.
We note that this sub-case is indeed tight for our parameter settings $\delta = \frac{c-3}{c}$ and $\eps = 1/2$, however it works for any $c>4$.

\paragraph{Half-Open sub-case $t(A) = 2$:} Let $I(C) = \{I_0, I_k\}$, $I(B) = \{J\}$ and $a \coloneqq |I_k| /k$. 
If $a\geq1/c$, then we have ($\dag$)
\[
\Phi(A|S)\geq\Phi(C|S) \geq \log_c (\delta ak) + 1 \geq \log_c( \delta k)
\]
and we are done.
So we may assume that $a< 1/c$. Let $J_1,\ldots,J_s$ ($s$ may be zero) be the closed intervals in $C|S$ that appear `before' $J$ as shown in Figure~\ref{fig:HO_t=2} and suppose $y\coloneqq \frac{1}{k}(\max_{I\in\{J,J_1,\ldots,J_s\}} |I|) $.
If $y\geq \delta/c^s$, then
\[
\Phi(A|S)\stackrel{(\dag)}{\geq} \Phi (B|S) \geq \log_c(yk) + s \geq \log_c(\delta k)
\]
and we are done. So assume that $y<\delta/c^s$. Now let $i$ be the left end-point of $J$ and suppose $G_1\sqcup \cdots\sqcup G_{s+1} = (0,i]\ |\ (J\cup J_1\cup \cdots\cup J_s \cup S) \subseteq C|(S\cup B)$ are the `gap' intervals as shown in the Figure~\ref{fig:HO_t=2}. Let $gk$ be the length of the longest gap interval. Then it follows that $(s+1)g  + sy \geq \frac{i}{k}\geq 1-a-y \geq 1 - \frac{\delta}{c^s}-\frac{1}{c}.$ 

\begin{figure}[H]
\centering
\includegraphics[scale=0.15]{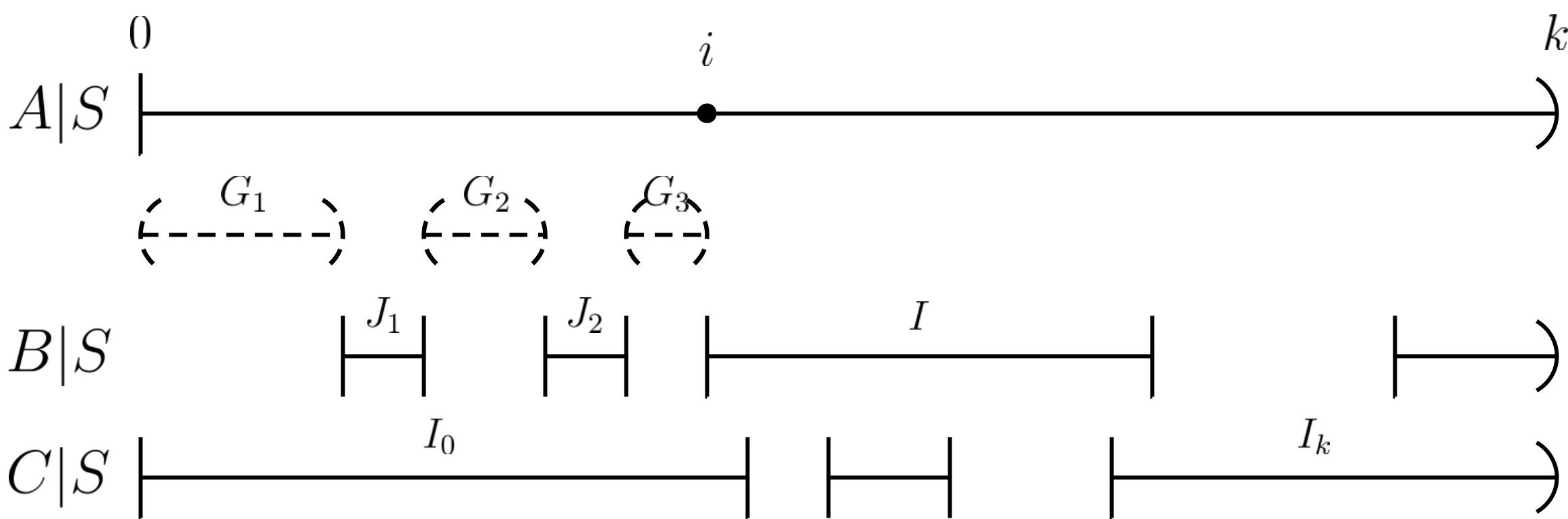}
\caption{\small An example for $t(A) = 2$ where $s = 2$. The gap intervals $G_i$ are shown in dashed lines.  Both $B|S$ and $C|S$ each \emph{may} have additional components, such as the unlabelled intervals depicted in this example.}
\label{fig:HO_t=2}
\end{figure}

We have (by ($\gad$))
\[
\Phi(A|S)\stackrel{(\smallgad)}{\geq} \Delta(B|S) + \Phi(C|S\cup B) \geq s+1 +\log_c(\eps\delta gk)\geq s+1 + \log_c\left(\frac{\eps\delta k}{s+1}\left(1-\frac{\delta(s+1)}{c^s}-\frac{1}{c}\right)\right) 
\]
The task of showing that this expression is at least $\log_c (\delta k)$ reduces to showing that for all $s\geq 0$,
\[
\eps\left(\frac{c^{s+1}-c^s}{s+1}-\delta c\right)\geq 1
\]
which is clearly true when $c>4$, $\delta \leq (c-3)/c$, and $\eps = 1/2$.
This sub-case is also tight for our parameter settings $\delta = \frac{c-3}{c}$ and $\eps = 1/2$, however it works for any $c>4$.

\paragraph{Half-open sub-case $t(A) \geq 3$ and $t(A)$ is odd:} Suppose $I_k$ is in $I(B)$, without loss of generality. Then if $s = (t(A) + 1)/2$, then $I(C)$ has $s\geq 2$ closed components while $I(B)$ has $s-1$. Let $a\coloneqq |I_k| /k$, $x  \coloneqq \frac{1}{k}( \max_{I\in I(B)\setminus \{I_k\}} |I|),$ and $
y \coloneqq \frac{1}{k}( \max_{I\in I(C)} |I|)$ (see Figure \ref{fig:HO_t_odd} for an example). It follows that $a + (s-1) x + sy \geq 1$.

\begin{figure}[H]
\centering
\includegraphics[scale=0.15]{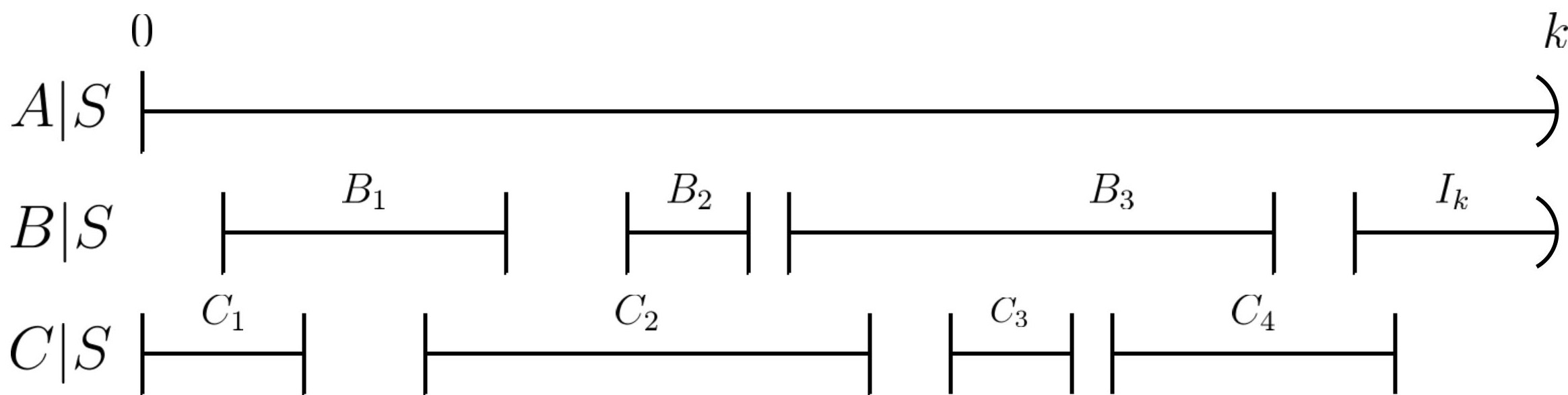}
\caption{\small An example for $t(A) = 5$. Here, we would have $x = |B_3|/k$ and $y = |C_2|/k$.}
\label{fig:HO_t_odd}
\end{figure}

Now if either of the following occurs, we are immediately done.

\begin{itemize}
\item If $a\geq 1/c^{s-1}$, then 
\[
\Phi(A|S) \stackrel{(\dag)}{\geq} \Phi(B|S)\geq \log_c (\delta ak) + (s-1) \geq \log_c ( \delta k).
\]
\item Similarly if $x\geq \delta/c^{s-2}$, then there is closed component of length at least $xk$ in $B|S$ along with $s-2\geq 0$ other closed components. We have 
 \[
\Phi(A|S) \stackrel{(\dag)}{\geq} \Phi(B|S)\geq \log_c (xk) + (s-2) \geq \log_c ( \delta k).
\]
\end{itemize}
  
Hence, we may assume that $a<1/c^{s-1}$ and $x< \delta/c^{s-2}$, which together imply that \[y > \frac{1}{s}\left(1-\frac{1}{c^{s-1}} - \frac{\delta(s-1)}{c^{s-2}}\right).\] Now $C|S$ has a closed component of length at least $yk$ along with $s-1$ other components. Thus, 
\[\Phi(A|S) \stackrel{(\dag)}{\geq} \Phi (C|S) \geq \log_c(yk) + (s-1) > \log_c\left(\frac{c^{s-1}k}{s}\left(1-\frac{1}{c^{s-1}} - \frac{\delta(s-1)}{c^{s-2}}\right)\right)\]
and it is enough to show that this last expression is at least $\log_c(\delta k)$, which in turn reduces to checking that for all $s\geq 2$,

\[
\frac{c^{s-1}}{\delta s} - \frac{1}{\delta s} - \frac{c(s-1)}{s} \geq 1
\]
which is straightforward to verify for $4<c$ and $0<\delta\leq (c-1)/(c+2)$. In particular, this is true for our parameter settings $c = \sqrt{5} + 5$, $\delta = \frac{c-3}{c}$ and it follows that this sub-case is {not tight}.

\paragraph{Half-open sub-case $t(A) \geq 4$ and $t(A)$ is even:} Suppose $a \coloneqq |I_k|/k$ without loss of generality, $I_k\in I(C)$. Then if $s = t(A)/2$, then both $B|S$ and $C|S$ contain at least $s\geq 2$ closed intervals each. We define $x\coloneqq \frac{1}{k}(\max_{I\in \mc{C}(A)} |I| )$. It follows that $a + 2sx\geq 1$. 

Refer back to the example depicted in Figure~\ref{fig:HO_ex}. For that particular example, we would then have $a = |C_5|/k$ and $x = |B_3|/k$, as $B_3$ is the longest interval in $\mc{C}(A)$. 

If $a\geq1/c^s$, then 
\[
\Phi(A|S)\stackrel{(\dag)}{\geq}\Phi(C|S) \geq \log_c (\delta ak) + s \geq \log_c( \delta k)
\]
and we are done. So assume that $a<1/c^s$, implying $x>\frac{1}{2s}(1-\frac{1}{c^s})$. Therefore, 
\[
\Phi(A|S)\stackrel{(\dag)}{\geq}
\max\{\Phi(B|S),\Phi(C|S)\} \geq \log_c (xk) + (s-1) >\log_c\left(\frac{c^{s-1}k}{2s}\left(1-\frac{1}{c^s}\right)\right)
\]
and it is enough to check that this last expression is at least $\log_c( \delta k)$, which reduces to showing that for all $s\geq 2$,
\[
c^s-1\geq 2c\delta s
\]
which is clearly true when $c>4$ and $0<\delta\leq 1$. This sub-case is also {not} tight for our parameter settings $c = \sqrt{5} + 5$, $\delta = \frac{c-3}{c}$.
\vspace{2ex}

\textbf{Case (iii): $A|S$ is closed.}
Finally, suppose $A|S = [0,k]$. 
First, consider the sub-case that there exists $D \prec A$ such that $D = [0,j]$ or $D = [j,k]$ with 
\[
  \frac12(1-\sqrt x)k \le j \le \frac12(1+\sqrt x)k
\]
for $x = \frac{1}{(\sqrt{5}+2)^2}$, as shown in Figure~\ref{fig:C_int}. Then 
$
  j(k-j) \ge \frac{1-x}{4}k^2.
$
We have (this time by ($\ddag$))
\[
  \Phi(A|S)
  &\stackrel{(\ddag)}{\geq} \frac12\Big(\Phi(D|S) + \Phi(A|S\cup D) + \Delta(A|S)\Big)\\
  &\ge \frac12\Big(\log_{c}(k-j) + \log_{c}(\delta j) + 1\Big) \quad (\text{induction hypothesis on } A|S\cup D)\\
  &= \frac12\log_{c}(c\delta j(k-j))
  \ge \frac12\log_{c}\left(\frac{c\delta(1-x)}{4}k^2\right)
  = \log_c(k) + \frac12\log_c\left(\frac{c\delta(1-x)}{4}\right).
\]
Note that for our choice of $x$, and the given parameter settings $c = \sqrt{5} + 5$ and $\delta = \frac{c-3}{c}$, $\tsfrac{c\delta(1-x)}{4} = 1$, 
thereby establishing the claim if such a $D\prec A$ exists. We also note that our parameter settings are indeed tight in this case.

\begin{figure}[H]
\centering
\includegraphics[scale=0.15]{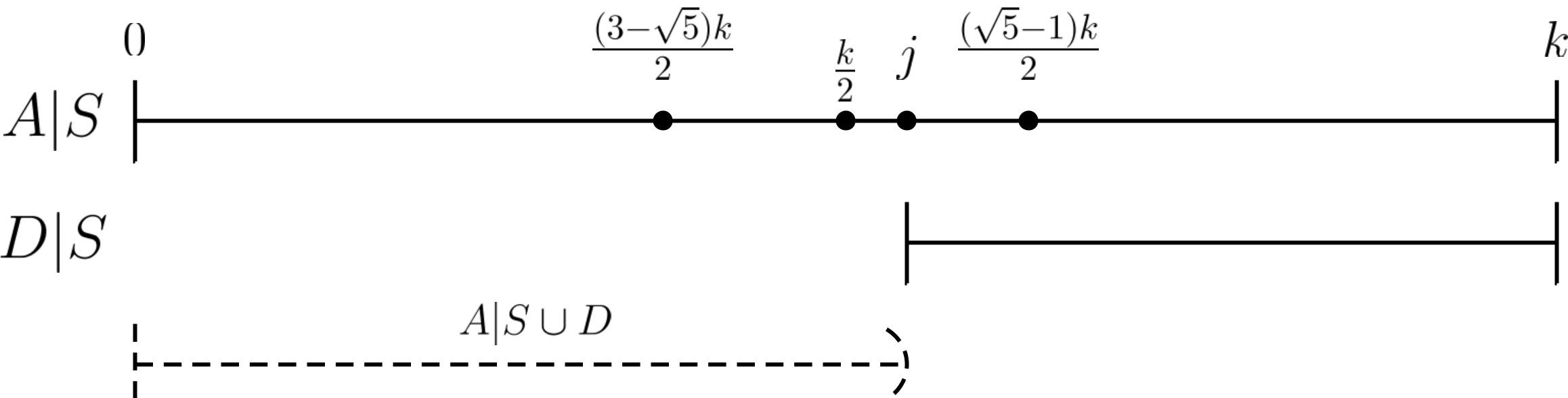}
\caption{\small An example where $A|S$ is closed and there exists a connected $D\prec A$ containing one of the end-points of $A|S$ and having `intermediate' length. We apply the ($\ddag$) rule in this situation.}
\label{fig:C_int}
\end{figure}

Therefore, now assume that no such $D$ exists.

Furthermore, we may assume that if there exists $D \prec A$ such that $D$ contains a component of length at least $\frac{1}{c}k$, then $D$ is connected (call this assumption ($\ast$)). Because otherwise, we have the following:
\[  
  \Phi(A|S)
  \stackrel{(\dag)}{\geq}
  \Phi(D|S)
  \ge
  1 + \log_c(k/c)
  = \log_c(k)
\]
Note that $\frac{1}{c} < \frac12(1-\sqrt{x})$.

It follows from the application of ($\ddag$) above and ($\ast$) that the only case left now is when there exists a (connected) $D \prec A$ and a child $E$ of $D$ such that $D = [0,j]$ and the component $J_0 = [0,i]$ of $E$ containing $0$ satisfy 
\[
  i \le \frac12\left(1-\sqrt{x}\right)k \quad\text{ and }\quad \frac12\left(1+\sqrt{x}\right)k \le j. 
\]
Let $E$ have $s$ other components $J_1,\ldots,J_s$ apart from $J_0$. We now consider sub-cases according to whether $s$ is zero or not.

\paragraph{Closed sub-case I: $s=0$.} It follows that $E$ is connected. Then note that $ D|S \cup E = G = (i,j]$ is as shown in Figure~\ref{fig:C_s=0}, and we have
\[
  \Phi(A|S)
  \stackrel{(\dag)}{\geq} 
  \Phi(D|S)
  &\stackrel{(\smallgad)}{\geq}
  \Delta(E|S) + \Phi(D|S \cup E)
  \ge
  1 + \log_c(\ts\delta(j-i))
  \ge
  1 + \log_c(\delta \sqrt x k)
  =
  \log_c(k) + \log_c(c\delta\sqrt x).
\]
So it only remains to check that $x \ge 1/(c\delta)^2$, which is indeed true for our parameter settings. We note that this sub-case is  tight for our parameter settings.

\begin{figure}[H]
\centering
\includegraphics[scale=0.15]{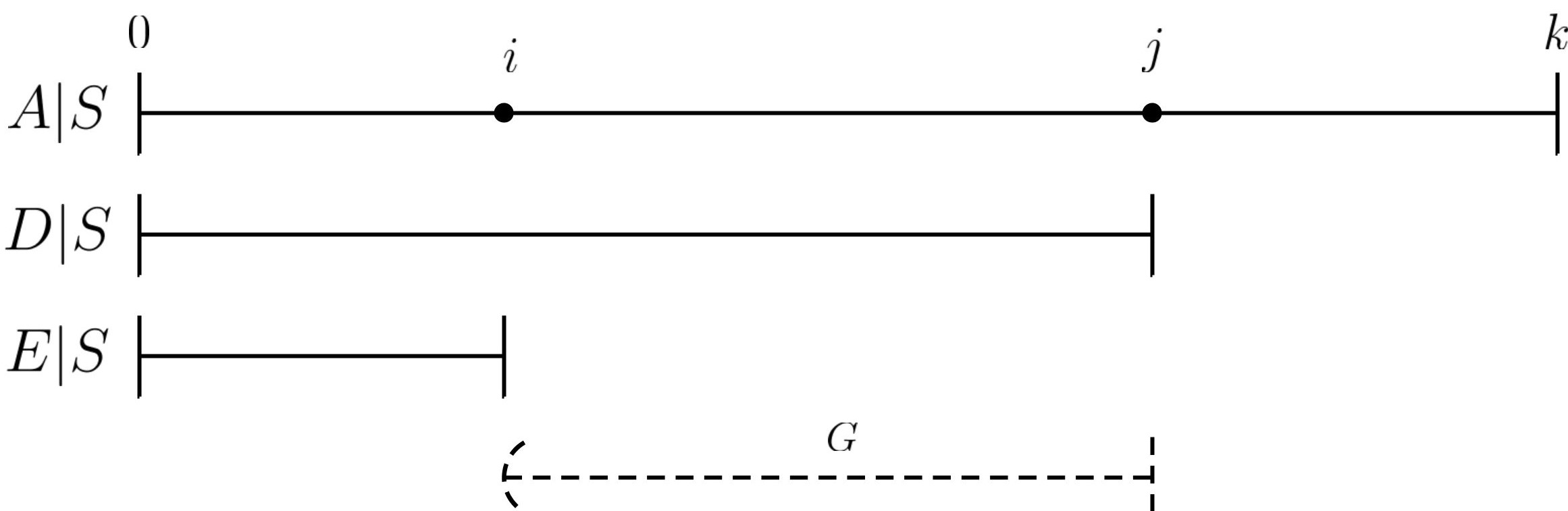}
\caption{\small An example where $A|S$ is closed and $E\prec D$ is connected.}
\label{fig:C_s=0}
\end{figure}

\paragraph{Closed sub-case II: $s\geq 1$.} We still apply ($\gad$) as in the previous sub-case, but the analysis is different. As $E$ has $s+1$ components in all, we may assume that each $J_i$ has length at most $\frac{1}{c^{s}}k$ as otherwise, we immediately obtain $\Phi(A|S) \geq \Phi(D|S) \geq \log_c(k)$ from the ($\dag$) rule. As a consequence, if  $G_1\sqcup \cdots \sqcup G_{s+1} = D|S\cup E$ are the `gap' intervals as shown in Figure~\ref{fig:C_s>0}, some $G_i$ must be an open (or half-open) interval of length at least $\frac{1}{s+1} (j - \frac{s+1}{c^{s}})$. 

\begin{figure}[H]
\centering
\includegraphics[scale=0.15]{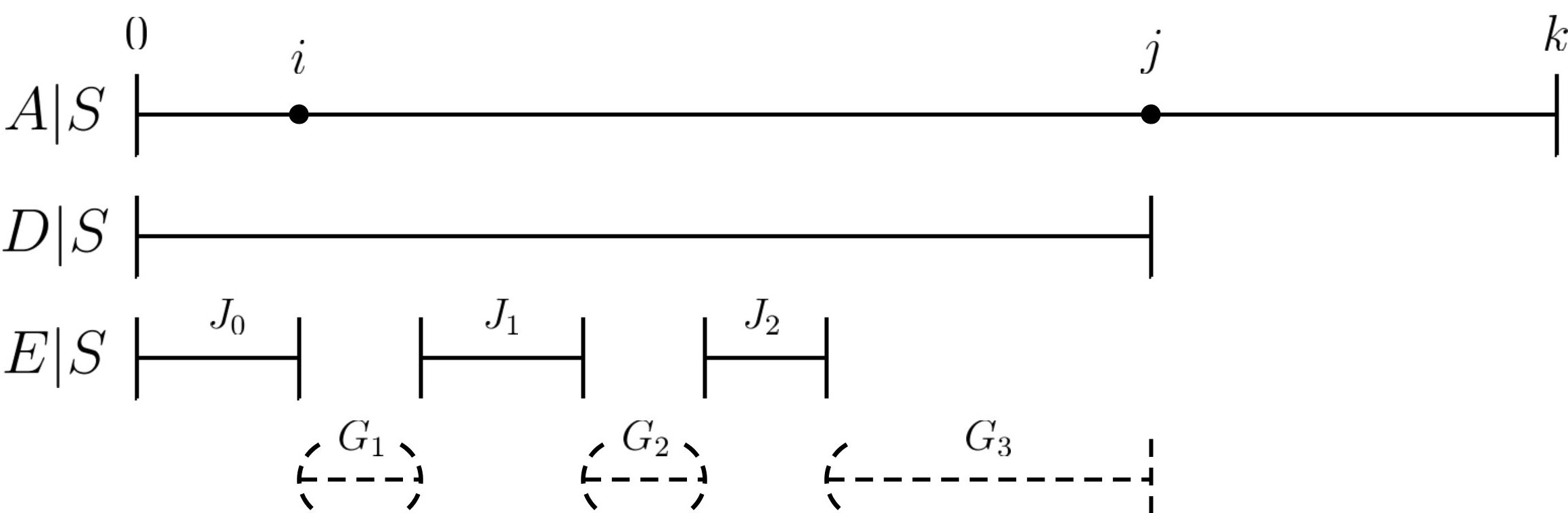}
\caption{\small An example where $A|S$ is closed and $s = 2$.}
\label{fig:C_s>0}
\end{figure}

We have 
\[
\Phi(A|S)
  \stackrel{(\dag)}{\geq} 
  \Phi(D|S)
  &\stackrel{(\smallgad)}{\geq} 
  \Delta(E|S) + \Phi(D|S \cup E)
  \ge
  s+1 + \log_c\left(\frac{\eps\delta }{s+1} \left(j - \frac{s+1}{c^{s}}k\right)\right)\\  
\]
which is at least $
  \log_c(k)$ as long as for all $s\geq 1$,
\[
\frac{\eps\delta c^{s+1}}{s+1} \left(\frac{j}{k} - \frac{s+1}{c^{s}}\right) - 1 \geq 0
\]
By assumption, $\frac{j}{k}\geq \frac12\left(1+\sqrt{x}\right) = \frac{\sqrt{5}-1}{2}$. Thus, we plug in $\eps = \frac12$ and $c\delta = \sqrt{5} + 2$ and see that it is enough to verify that for all $s\geq 1$,
\[
\frac{\sqrt{5}+2}{2(s+1)}\left(\frac{c^s(\sqrt{5}-1)}{2} - (s+1)\right) - 1 \geq 0
\]
which is easily seen to be true for $c = \sqrt{5} + 5$. 
\end{proof}

Lemma~\ref{la:PhiPk2} and hence Theorem~\ref{thm:Pklower} ($\tau(P_k) \ge \log_{\sqrt 5 + 5}(k) - 1$) now follow as a corollary to Theorem~\ref{thm:PhiPk-lb} and Lemma~\ref{la:infinite} (in the latter, we simply plug in $G^\ast = P_\infty$, $G = P_k$, $\theta$ and $\theta^\ast$ to be the constant $1 + \frac{1}{k}$ and $1$ threshold weightings respectively).

\section{Randomized $\ACzero$ formulas computing the product of $k$ permutations}\label{sec:Pkupper}

In this section we define a broad class of randomized $\ACzero$ formulas for computing the product of $k$ permutations.  The size of these formulas corresponds to a complexity measure related to pathset complexity, but much simpler and easier to analyze.

\begin{df}
For integers $k \ge 1$, let $\mc P(k)$ be the set of sequences $\vec a = (a_0,\dots,a_k) \in [0,1]^{\{0,\dots,k\}}$ such that $\|\vec a\| \defeq a_0 + \dots + a_k \ge 1$. {We denote by $\vec a \le \vec b$ that each $a_i\leq b_i$.}

We define a complexity measure $\chi : \bigcup_{k \ge 1} \mc P(k) \to \R_{\ge 0}$ by the following induction:
\begin{itemize}
\item
In the base case $k=1$, let
$
  \chi(a_0,a_1) \defeq 0.
$
\item
For $k \ge 2$, let
\[
  \chi(a_0,\dots,a_k) \defeq 
  \min_{\substack{0 < i \le j < k,\\
  \vec b \in \mc P(k)\,:\\ 
  \vec a \le \vec b,\\
  (b_0,\dots,b_j) \in \mc P(j),\\
  (b_i,\dots,b_k) \in \mc P(k-i)
  }} 
  \|\vec b - \vec a\| + \max\{\chi(b_0,\dots,b_j),\, \chi(b_i,\dots,b_k)\}.\qquad
\]
\end{itemize}
\end{df}

Note the following properties of $\chi$:
\begin{enumerate}
   \item
    For all $\vec a \in \mc P(k)$ and $1 < i \le j < k$, if $(a_0,\dots,a_j) \in \mc P(j)$ and $(a_i,\dots,a_k) \in \mc P(k-i)$, then $\chi(\vec a) \le \max\{\chi(a_0,\dots,a_j),\,\chi(a_i,\dots,a_k)\}$.
	\item
    For all $\vec a,\vec b \in \mc P(k)$, if $\vec a \le \vec b$, then $\chi(\vec a) \le \|\vec b - \vec a\| + \chi(\vec b)$.  
  \item
    For all $k \ge 1$, we have
    $\ts\chi(\underbrace{\ts\frac12,\dots,\frac12}_{k+1\text{ times}}) = 0$. 
\end{enumerate}

The complexity measure $\chi(\vec a)$ is a simplified version of pathset complexity $\chi_A(\A)$.  In fact, $\chi(\vec a)$ provides an upper bound on pathset complexity!

\begin{rmk}\label{rmk:pathset-connection}
Consider the infinite pattern graph $P_\infty$ under the constant $1$ threshold weighting. 
For join-trees $A$ over $P_\infty$, we will write $\P{A}$ for $\P{A|\emptyset}$ and $\chi_A(\cdot)$ for $\chi_{A|\emptyset}(\cdot)$.

Each $\vec a \in \mc P(k)$ corresponds to a $P_k$-pathset
\[
  \A_{\vec a} &\defeq \{x \in [n]^{\{0,\dots,k\}} : 
  x_h \in S_h \text{ for all } h \in {\{0,\dots,k\}}\}
\]
where $S_0,\dots,S_k$ are arbitrary subsets of $[n]$ of size $|S_h| \le n^{1-a_h}$.
Then there exists a join-tree $A$ with graph $P_k$ such that
\[
  \chi_{A}(\A_{\vec a}) \le n^{\chi(\vec a)+o(1)}.
\]
This join-tree arises from the optimal $1 < i \le j < k$ and $\vec b$ in the definition of $\chi(\vec a)$: namely, $A = \sq{B,C}$ where $B$ is the join-tree for $P_{0,j}$ associated with $(b_0,\dots,b_j)$ and $C$ is the join-tree for $P_{i,k}$ associated with $(b_i,\dots,b_k)$.  (Note that $A$ has the property that $\Gr{D}$ is a path for each $D \preceq A$; so, not all join-trees with graph $P_k$ arise in this way.)

The above bound on $\chi_{A}(\A)$ {(now dropping the subscript as $\vec a $ is fixed)} is justified as follows. Letting $m \defeq n^{\|\vec b - \vec a\| + o(1)}$, there exist sets $T_{\ell,h} \subseteq S_{h}$ of size $|T_{\ell,h}| \le n^{1-b_h}$, indexed over $\ell \in [m]$ and $h \in \{0,\dots,k\}$, such that $\bigcup_{\ell \in [m]} (T_{\ell,0} \times \dots \times T_{\ell,k}) = S_0 \times \dots \times S_k$.
We then have
$
  \A = 
  \bigcup_{\ell \in [m]} 
  \B_\ell \bowtie \C_\ell
$
where 
\[
  \B_\ell &\defeq \{y \in [n]^{\{0,\dots,j\}} : 
  y_h \in T_{\ell,h} \text{ for all } 0 \le h \le j\},\\
  \C_\ell &\defeq \{z \in [n]^{\{i,\dots,k\}} : 
  z_h \in T_{\ell,h} \text{ for all } i \le h \le k \}.
\]
Arguing by induction on proper subsequences $(b_0,\dots,b_j)$ and $(b_i,\dots,b_k)$ (note that the base case $k = 1$ is trivial as $\chi(\vec a) = 0$ and $\chi_{A}(\A) = 1$ as $\A$ itself is a pathset), it follows that
\[
  \chi_{A}(\A)
  &\le
  \sum_{\ell \in [m]}
  \max\{\chi_{B}(\B_\ell),\, \chi_{C}(\C_\ell)\}\\
  &\le
  \sum_{\ell \in [m]}
  \max\{n^{\chi(b_0,\dots,b_j)+o(1)},\, n^{\chi(b_i,\dots,b_k)+o(1)}\}
  \le
  m \cdot n^{\chi(\vec a) - \|\vec b - \vec a\| +o(1)}
  =
  n^{\chi(\vec a) + o(1)}.
\]

As a consequence of these observations, we see that $\chi(\vec a)$ is lower-bounded by $\log_n(\chi_A(\A))$.  Since $\chi_A(\A) \le n^{\Phi(A)} \cdot \mu(\A) \le n^{\Phi(A)} \cdot n^{-\|\vec a\| + o(1)}$ (by Theorem \ref{thm:general-chiPhi}), it follows that 
\[
  \chi(\vec a) \ge \max_{\substack{\vphantom{t^t}\text{join-trees $A$ with graph $P_k$ s.t.}\\ 
  \text{$\Gr{D}$ is connected for all $D \preceq A$}}} \Phi(A) - \|\vec a\|.
\]
In particular, our lower bound of Section \ref{sec:Pk} implies that $\chi(\vec a) \ge \log_{\sqrt 5+5}(k)-\|\vec a\|-1$ for all $\vec a \in \mc P(k)$.

Finally, note that by covering the complete relation $[n]^{\{0,\dots,k\}}$ by $n^{\|\vec a\|+o(1)}$ shifted copies of rectangles $S_0 \times \dots \times S_k$, we get an upper bound 
\[
\chi_{A}([n]^{\{0,\dots,k\}}) \le n^{\chi(\vec a) + \|\vec a\|+o(1)}.
\]
\end{rmk}

By a similar construction, we will show that $n^{\chi(\vec a) + \|\vec a\| + o(1)}$ is an upper bound on the randomized $\ACzero$ formula size of computing the product of $k$ permutations.

\begin{df}
Let $\vec\pi = (\pi_1,\dots,\pi_k)$ be a sequence of permutations $[n] \stackrel\cong\to [n]$.
For a sequence $\vec x = (x_0,\dots,x_k) \in [n]^{\{0,\dots,k\}}$, we say that $\vec x$ is a {\em $\vec\pi$-path} if $\pi_h(x_{h-1}) = x_h$ for all $h \in \{1,\dots,k\}$.

If $\vec x$ is a $\vec\pi$-path and $\vec S = (S_0,\dots,S_k)$ is a sequence of sets $S_0,\dots,S_k \subseteq [n]$, we will say that {\em $\vec S$ isolates $\vec x$} if $\vec x \in S_0 \times \dots \times S_k$ and $\vec x$ is the {\em only} $\vec\pi$-path in $S_0 \times \dots \times S_k$.
\end{df}

\begin{df}
For a set $U$ and $p \in [0,1]$, notation $\mb S \subseteq_p U$ denotes that $\mb S$ is a random subset of $U$ that contains each element independently with probability $p$.

Given $\vec a = (a_0,\dots,a_k) \in \mc P(k)$, we will denote by $\vec{\mb S} = (\mb S_0,\dots,\mb S_k)$ the sequence of independent random sets $\mb S_h \subseteq_{n^{-a_h}} [n]$.
\end{df}

We now state the key lemma for our construction.

\begin{la}\label{la:formulas}
For every $\vec a \in \mc P(k)$ and sequence $\vec S = (S_0,\dots,S_k)$ of sets $S_h \subseteq [n]$, there exist randomized $\ACzero$ formulas 
\[
\mb f_{\vec a,\vec S} \text{ and }
\vec{\mb g}_{\vec a,\vec S} = 
\{\mb g_{\vec a,\vec S}^{(h,t)}\}_{\substack{r \in \{0,\dots,k\},\ 
t \in \{1,\dots,\lceil\log(n+1)\rceil\}}}
\]
each of depth $O(k)$ and size $n^{\chi(\vec a) + o(1)}$ and taking as input a sequence $\vec\pi = (\pi_1,\dots,\pi_k)$ 
of permutations $[n] \stackrel\cong\to [n]$, such that 
on every input $\vec\pi$ then with probability $1 - n^{-\omega(1)}
$ (with respect to both $\vec{\mb S}$ and the randomness of $\mb f_{\vec a,\vec{\mb S}}$ and $\vec{\mb g}_{\vec a,\vec{\mb S}}$):
\begin{enumerate}
  \item
    $\vec{\mb f}_{\vec a,\vec{\mb S}}(\vec\pi)$ outputs $1$ if, and only if, $\vec{\mb S}$ isolates some $\vec\pi$-path.
  \item
    If $\vec{\mb S}$ isolates a (necessarily unique) $\vec\pi$-path $\vec x = (x_0,\dots,x_k)$, then 
    formulas $\vec{\mb g}_{\vec a,\vec{\mb S}}(\vec\pi)$ output the binary representation of integers $x_0,\dots,x_k \in [n]$.
\end{enumerate}
\end{la}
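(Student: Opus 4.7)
We proceed by strong induction on $k$, mirroring the recursive definition of $\chi(\vec a)$.  In the base case $k = 1$, $\chi(\vec a) = 0$ and the event that $\vec{\mb S}$ isolates a $\vec\pi$-path is just the existence of a unique $x_0 \in \mb S_0$ with $\pi_1(x_0) \in \mb S_1$.  Since (with high probability over $\vec{\mb S}$) the candidate set $\mb S_0 \cap \pi_1^{-1}(\mb S_1)$ has $O(1)$ elements in expectation, a standard randomized $\ACzero$ uniqueness/isolation gadget (using $O(\log n)$ auxiliary random bits for pairwise-independent hashing of the survivors) produces formulas $\mb f$ and $\vec{\mb g}$ of size $n^{o(1)}$ and constant depth that correctly decide and report the isolated path with probability $1 - n^{-\omega(1)}$.

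For the inductive step with $k \ge 2$, pick $0 < i \le j < k$ and $\vec b \ge \vec a$ achieving the minimum in $\chi(\vec a) = \|\vec b - \vec a\| + \max\{\chi(\vec b_{\le j}),\,\chi(\vec b_{\ge i})\}$, and set $M \defeq n^{\|\vec b - \vec a\| + o(1)}$.  Using randomness independent of $\vec{\mb S}$, draw $M$ independent pairs $(\vec{\mb T}^{(m)}, \vec{\mb U}^{(m)})$, where $\vec{\mb T}^{(m)} = (\mb T_0^{(m)},\dots,\mb T_j^{(m)})$ and $\vec{\mb U}^{(m)} = (\mb U_i^{(m)},\dots,\mb U_k^{(m)})$ with $\mb T_h^{(m)} \subseteq_{n^{a_h - b_h}} \mb S_h$ and $\mb U_h^{(m)} \subseteq_{n^{a_h - b_h}} \mb S_h$ (so each has effective density $n^{-b_h}$ in $[n]$).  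By the inductive hypothesis applied to $(b_0,\dots,b_j) \in \mc P(j)$ with sets $\vec{\mb T}^{(m)}$ and to $(b_i,\dots,b_k) \in \mc P(k-i)$ with sets $\vec{\mb U}^{(m)}$, we obtain, for each $m$, formulas $\mb f^{(m)}_{\mr L}, \vec{\mb g}^{(m)}_{\mr L}$ and $\mb f^{(m)}_{\mr R}, \vec{\mb g}^{(m)}_{\mr R}$ of size $n^{\max\{\chi(\vec b_{\le j}),\chi(\vec b_{\ge i})\} + o(1)}$ and depth $O(k)$.  The combined formula is
\[
  \mb f_{\vec a,\vec{\mb S}} \defeq \bigvee_{m=1}^{M} \bigl(\mb f^{(m)}_{\mr L} \wedge \mb f^{(m)}_{\mr R} \wedge \mathsf{Match}^{(m)}\bigr),
\]
where $\mathsf{Match}^{(m)}$ tests bitwise equality between $\vec{\mb g}^{(m)}_{\mr L}$ and $\vec{\mb g}^{(m)}_{\mr R}$ on the overlap coordinates $\{i,\dots,j\}$, and each $\mb g^{(h,t)}_{\vec a,\vec{\mb S}}$ is assembled analogously by routing coordinate $h$ through whichever side covers it, gated by the ``winning'' witness.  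Tallying sizes gives $M \cdot n^{\max + o(1)} = n^{\chi(\vec a) + o(1)}$ and depth $O(k)$.

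The main obstacle lies in the probabilistic correctness analysis.  Conditioned on $\vec{\mb S}$ isolating a genuine $\vec\pi$-path $\vec x$, a fixed copy $m$ jointly isolates $\vec x_{\le j}$ inside $\vec{\mb T}^{(m)}$ and $\vec x_{\ge i}$ inside $\vec{\mb U}^{(m)}$ with probability concentrated at $n^{-\|\vec b - \vec a\| - o(1)}$, as the subsampling densities multiply to $\prod_h n^{a_h - b_h}$ on each side.  The calibration $M = n^{\|\vec b - \vec a\| + o(1)}$ yields $n^{o(1)}$ expected successes, and a Chernoff bound delivers at least one with probability $1 - n^{-\omega(1)}$.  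Spurious matches --- where the two sides isolate half-paths that only accidentally agree on the overlap --- are controlled by union bounds that exploit the constraints $(b_0,\dots,b_j),(b_i,\dots,b_k) \in \mc P$: these ensure $\|\vec b_{\le j}\|,\|\vec b_{\ge i}\| \ge 1$, so the pathset density inequalities bound the expected count of ``phantom'' isolated half-paths per copy, and the probability that such an impostor passes the overlap equality check is $n^{-\omega(1)}$.  A final union bound over the $M$ copies, the $O(k \log n)$ output bits, and the $O(k)$ recursive calls, combined with the inductive correctness of the sub-formulas, completes the argument.
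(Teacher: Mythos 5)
The high-level architecture of your proposal mirrors the paper's: recursion on $k$ following the definition of $\chi$, subsampled rectangles $\vec{\mb T}_\ell$, a join formula that conjoins left and right outputs and tests bitwise equality on the overlap, and a final OR with a uniqueness filter. However, there is a genuine error in the probabilistic calibration that breaks the size bound.

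You sample \emph{independent} sets $\mb T_h^{(m)} \subseteq_{n^{a_h-b_h}} \mb S_h$ and $\mb U_h^{(m)} \subseteq_{n^{a_h-b_h}} \mb S_h$ for the left and right halves. On the overlap coordinates $h \in \{i,\dots,j\}$, these are two independent subsamples of the same $\mb S_h$. For a fixed $\vec\pi$-path $\vec x$ in $\vec{\mb S}$, the probability that $\vec x_{\le j}$ lands in $\mb T_0^{(m)}\times\cdots\times\mb T_j^{(m)}$ \emph{and} $\vec x_{\ge i}$ lands in $\mb U_i^{(m)}\times\cdots\times\mb U_k^{(m)}$ is therefore
\[
\Big(\prod_{h=0}^{j} n^{a_h-b_h}\Big)\Big(\prod_{h=i}^{k} n^{a_h-b_h}\Big)
= n^{-\|\vec b - \vec a\| - \sum_{h=i}^{j}(b_h - a_h)},
\]
which is \emph{strictly smaller} than the $n^{-\|\vec b - \vec a\|}$ you claim whenever $b_h > a_h$ for some $h$ in the overlap. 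That occurs for genuinely optimal choices of $\vec b$: e.g.\ take $k=2$, $\vec a = (\tfrac12,0,\tfrac12)$, which forces $i=j=1$ and the unique optimal $\vec b = (\tfrac12,\tfrac12,\tfrac12)$, so $b_1 - a_1 = \tfrac12 > 0$ and $\chi(\vec a) = \tfrac12$. With $M = n^{1/2+o(1)}$ copies the expected number of copies where both containment events hold is $n^{1/2+o(1)-1} \to 0$, so with high probability no $\mb{join}_m$ fires even when $\vec{\mb S}$ isolates a path; boosting $M$ to compensate pushes the size to $n^{\chi(\vec a) + \sum_{h=i}^{j}(b_h-a_h) + o(1)}$, which violates the stated bound. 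The paper avoids this by drawing a \emph{single} sequence $\vec{\mb T}_\ell = (\mb T_{\ell,0},\dots,\mb T_{\ell,k})$ per copy (one subsample in each coordinate $h\in\{0,\dots,k\}$) and feeding its two restrictions $\vec{\mb T}_\ell' = (\mb T_{\ell,0},\dots,\mb T_{\ell,j})$ and $\vec{\mb T}_\ell'' = (\mb T_{\ell,i},\dots,\mb T_{\ell,k})$ into the two recursive calls, so that the overlap coordinates are \emph{shared}; then the containment event for the whole path is a single event of probability exactly $n^{-\|\vec b-\vec a\|}$. Replacing your independent pair with this shared sample is the essential repair.

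Two smaller remarks. In the base case $k=1$ you invoke a randomized pairwise-independent hashing/isolation gadget; this is unnecessary machinery, since once $S_0,S_1$ are fixed, a single depth-$O(1)$, size-$O(\log n)$ formula (an OR over $S_0\times S_1$ of the relevant input literals, together with $2\lceil\log(n+1)\rceil$ bit-consistency checks) computes the exact answer deterministically, which is what the paper does. And your size tally $M\cdot n^{\max+o(1)}$ should really carry the multiplicative $O((k\log n)^2)$ per recursion level and an argument that these factors are absorbed in the $n^{o(1)}$ over $O(k)$ levels; this is a minor bookkeeping omission, not a gap.
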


\begin{proof}
The construction mimics the pathset complexity upper bound in Remark \ref{rmk:pathset-connection}. 
In the base case $k=1$, 
we have sets $S_0,S_1 \subseteq [n]$ and need to determine if a permutation $\pi : [n] \stackrel\cong\to [n]$ satisfies $\pi(x)=y$ for a unique pair $(x,y) \in S_0 \times S_1$. 
This is accomplished by the following $\ACzero$ formula (writing $1_{\pi(x)=y}$ for the input variable that is $1$ if and only if $\pi(x)=y$):
\[ 
  \mb f&{}_{\vec a,\vec S}(\pi) 
  \defeq
  \bigvee_{(x,y) \in S_0\times S_1} 1_{\pi(x)=y}\\
  &\wedge \bigwedge_{t \in \{1,\dots,\lceil\log(n+1)\rceil\}}
  \neg\bigg(
  \bigg(
\bigvee_{(x,y) \in S_0\times S_1 \,:\, \text{the $t^{\text{th}}$ bit of $x$ is $0$}} 1_{\pi(x)=y}\bigg)
  \wedge
  \bigg(
\bigvee_{(x,y) \in S_0\times S_1 \,:\, \text{the $t^{\text{th}}$ bit of $x$ is $1$}} 1_{\pi(x)=y}\bigg)
  \bigg)\\
  &\wedge \bigwedge_{t \in \{1,\dots,\lceil\log(n+1)\rceil\}}
  \neg\bigg(
  \bigg(
\bigvee_{(x,y) \in S_0\times S_1 \,:\, \text{the $t^{\text{th}}$ bit of $y$ is $0$}} 1_{\pi(x)=y}\bigg)
  \wedge
  \bigg(
\bigvee_{(x,y) \in S_0\times S_1 \,:\, \text{the $t^{\text{th}}$ bit of $y$ is $1$}} 1_{\pi(x)=y}\bigg)
  \bigg). 
\]
This formula has depth $O(1)$ and size $O(\log n)$ (as measured by number of gates). Since $\chi(\vec a) = 0$, this size bound is $n^{\chi(\vec a)+o(1)}$ as required. Formulas $\vec{\mb g}_{\vec a,\vec S}$ giving the binary representation of $x$ and $y$ (whenever $(x,y)$ uniquely exists) have just a single OR gate:
\[
  \mb g^{(0,t)}_{\vec a,\vec S}(\pi) &\defeq \bigvee_{(x,y) \in S_0\times S_1 \,:\, \text{the $t^{\text{th}}$ bit of $x$ is $1$}} 1_{\pi(x)=y},\\  
  \mb g^{(1,t)}_{\vec a,\vec S}(\pi) &\defeq \bigvee_{(x,y) \in S_0\times S_1 \,:\, \text{the $t^{\text{th}}$ bit of $y$ is $1$}} 1_{\pi(x)=y}.
\]

Onto the induction step where $k \ge 2$. Fix $0 < i \le j < k$ and $\vec b \in \mc P(k)$ with $\vec a \le \vec b$ and
\[
  \vec b' &\defeq (b_0,\dots,b_j) \in \mc P(j),\\
  \vec b'' &\defeq (b_i,\dots,b_k) \in \mc P(k-i),\\
  \chi(\vec a)\! &\phantom{:}= \|\vec b - \vec a\| + \max\{\chi(\vec b'),\,\chi(\vec b'')\}.
\] 
Letting $m \defeq n^{\|\vec b-\vec a\|+o(1)}$, we sample independent random sequences of sets $\vec{\mb T}_1,\dots,\vec{\mb T}_m$ where for each $\ell \in [m]$, we have $\vec{\mb T}_\ell = (\mb T_{\ell,0},\dots,\mb T_{\ell,k})$ with $\mb T_{\ell,h} \subseteq_{n^{-b_h+a_h}} S_h$.\footnote{A minor technicality arises when $a_h = b_h$: in this case, we should instead sample $\mb T_{\ell,h} \subseteq_{1/2} S_h$.  This case can be also avoided by approximating $\chi(\vec a)$ to an arbitrary additive constant $\eps > 0$: if we instead consider $\vec c$ defined by $c_h \defeq b_h+(\eps/k)$, then we have $\max\{\chi(c_0,\dots,c_j),\, \chi(c_i,\dots,c_k)\}
\le \max\{\chi(b_0,\dots,b_j) + (j+1)\eps,\, \chi(b_i,\dots,b_k) + (k-i+1)\eps\} \le \chi(\vec a) + \eps$.}

Writing $\vec{\mb T}_\ell'$ for $(\mb T_{\ell,0},\dots,\mb T_{\ell,j})$ and $\vec{\mb T}_\ell''$ for $(\mb T_{\ell,i},\dots,\mb T_{\ell,k})$ and $\vec\pi'$ for $(\pi_1,\dots,\pi_j)$ and $\vec\pi''$ for $(\pi_{i+1},\dots,\pi_k)$, we now introduce auxiliary randomized formulas $\mb{join}_1,\dots,\mb{join}_m$ defined by
\[
  \mb{join}_\ell(\vec\pi)
  &\defeq
  \mb f_{\vec b',\vec{\mb T}_\ell'}(\vec\pi') \wedge 
  \mb f_{\vec b'',\vec{\mb T}_\ell''}(\vec\pi'') \wedge 
  \bigwedge_{\substack{h \in \{i,\dots,j\}\\ t \in \{1,\dots,\lceil\log(n+1)\rceil\}}}
  \Big(\mb g^{(h,t)}_{\vec b',\vec{\mb T}_\ell'}(\vec\pi') 
  \leftrightarrow \mb g^{(h-i,t)}_{\vec b'',\vec{\mb T}_\ell''}(\vec\pi'')\Big).
\]
(Here $P \leftrightarrow Q$ abbreviates the formula $(P \wedge Q) \vee (\neg P \wedge \neg Q)$.)
If we consider the random sequence $\vec{\mb S}$ (in place of the arbitrary fixed sequence $\vec S$), then for every input $\vec\pi$, with high probability, the formula $\mb{join}_\ell(\vec\pi)$ outputs $1$ if, and only if, there exists a $\vec \pi$-path $(x_0,\dots,x_k)$ such that $\vec{\mb T}_\ell'$ isolates the $\vec\pi'$-path $(x_0,\dots,x_j)$ and $\vec{\mb T}_\ell''$ isolates the $\vec\pi''$-path $(x_i,\dots,x_k)$.

Note that the number of $\vec\pi$-paths in $\vec{\mb S}$ has expectation $n^{1-\|\vec a\|}$ ($\le 1$); it is easily shown that this number is at most $n^{o(1)}$ with high probability. For each $\vec\pi$-path $\vec x$ and $\ell \in [m]$, we have (by independence)
\[
  \Pr\Big[\ \vec x \in \mb T_{\ell,0}\times \dots \times \mb T_{\ell,k}
  \ \Big|\ \vec x \in \mb S_0 \times \dots \times \mb S_k\ \Big] 
  =
  n^{- \|\vec b - \vec a\|}.
\]
A further argument\footnote{This is a straightforward union bound.  Here is where we use the assumption that $b_h > a_h$ for all $h \in \{0,\dots,k\}$.} shows that
\[
  \Pr\Big[\ \vec{\mb T}_\ell' \text{ isolates } (x_0,\dots,x_j)
  \text{ and } \vec{\mb T}_\ell'' \text{ isolates } (x_i,\dots,x_k)
  \ \Big|\ \vec x \in \mb T_{\ell,0}\times \dots \times \mb T_{\ell,k}\ \Big] 
  =
  1 - o(1).
\]
By independence of $\vec{\mb T}_1,\dots,\vec{\mb T}_m$, we next have
\[
  \Pr\left[\ 
  \bigvee_{\ell \in [m]}
  \left(\begin{aligned}
  &\vec{\mb T}_\ell' \text{ isolates } (x_0,\dots,x_j) \text{ and\,}\\
  &\vec{\mb T}_\ell'' \text{ isolates } (x_i,\dots,x_k)
  \end{aligned}\right)
  \ \middle|\ \vec x \in \mb S_0 \times \dots \times \mb S_k\ \right]
  &\le
  1 - \Big(1 - \Omega(n^{-\|\vec b -\vec a\|})\Big)^m\\
  &\le
  1 - \exp(-\Omega(n^{-\|\vec b -\vec a\|} m)).
\]
Recalling that $m = n^{\|\vec b -\vec a\|+o(1)}$, the above bound will be $1 - n^{-\omega(1)}$ (i.e., ``with high probability'') for a suitable choice of $o(1)$ in the exponent of $m$ (for instance, if we set $m = n^{\|\vec b - \vec a\|} (\log n)^c$ for any constant $c > 1$).

We have shown that, with high probability, for every $\vec\pi$-path $(x_0,\dots,x_k)$ in $\mb S_0 \times \dots \times \mb S_k$, there exists $\ell \in [m]$ such that $\mb{join}_\ell(\vec\pi)$ outputs $1$. This justifies defining
\[
  \mb f_{\vec a,\vec S}(\vec\pi) 
  \defeq
  \mbox{}&
  \bigvee_{\ell \in [m]}\mb{join}_\ell(\vec\pi) 
  \\
  &
  \wedge \bigwedge_{\substack{h \in \{0,\dots,j\}\\ t \in \{1,\dots,\lceil\log(n+1)\rceil\}}} 
  \neg\bigg(
  \bigg(
  \bigvee_{\ell \in [m]}
  \mb{join}_\ell(\vec\pi) \wedge \mb g^{(h,t)}_{\vec b',\vec{\mb T}_\ell'}(\vec\pi')
  \bigg)
  \wedge
  \bigg(
  \bigvee_{\ell \in [m]}
  \mb{join}_\ell(\vec\pi) \wedge \neg\mb g^{(h,t)}_{\vec b',\vec{\mb T}_\ell'}(\vec\pi')
  \bigg)
  \bigg)\\
  & 
  \wedge \bigwedge_{\substack{h \in \{j+1,\dots,k\}\\ t \in \{1,\dots,\lceil\log(n+1)\rceil\}}} 
  \neg\bigg(
  \bigg(
  \bigvee_{\ell \in [m]}
  \mb{join}_\ell(\vec\pi) \wedge \mb g^{(h-i,t)}_{\vec b'',\vec{\mb T}_\ell''}(\vec\pi'')
  \bigg)
  \wedge
  \bigg(
  \bigvee_{\ell \in [m]}
  \mb{join}_\ell(\vec\pi) \wedge \neg\mb g^{(h-i,t)}_{\vec b'',\vec{\mb T}_\ell''}(\vec\pi'')
  \bigg)
  \bigg).
\]
In light of the above discussion, for random $\vec{\mb S}$, the subformula $\bigvee_{\ell \in [m]} \mb{join}_\ell(\vec\pi)$ with high probability asserts the existence of a $\vec\pi$-path in $\vec{\mb S}$, while the remainder of $\mb f_{\vec a,\vec{\mb S}}(\vec \pi)$ asserts uniqueness.  Formulas $\vec{\mb g}_{\vec a,\vec S}$ are defined by
\[
  \mb g^{(h,t)}_{\vec a,\vec S}(\vec\pi)
  \defeq 
  \begin{cases}
  \ds
  \bigvee_{\ell \in [m]}
  \mb{join}_\ell(\vec\pi) \wedge 
  \mb g^{(h,t)}_{\vec b',\vec{\mb T}_\ell'}(\vec\pi')
  &\text{if } h \in \{0,\dots,j\},\\
  \ds
  \bigvee_{\ell \in [m]}
  \mb{join}_\ell(\vec\pi) \wedge 
  \mb g^{(h-i,t)}_{\vec b'',\vec{\mb T}_\ell''}(\vec\pi'')
  &\text{if } h \in \{j+1,\dots,k\}.
  \end{cases}
\]

If formulas $f_{\vec b',\vec{\mb T}_\ell'}$ and $\vec{\mb g}_{\vec b',\vec{\mb T}_\ell'}$ (respectively, $f_{\vec b'',\vec{\mb T}_\ell''}$) and $\vec{\mb g}_{\vec b'',\vec{\mb T}_\ell''}$ have depth at most $d'$ and size at most $z'$ (respectively, $d''$ and $z''$),
then it is readily seen that formulas $\mb f_{\vec a,\vec S}$ and $\vec{\mb g}_{\vec a,\vec S}$ have depth $d$ and size $z$ where 
\[
d = \max\{d',d''\} + O(1),\qquad z = O\big(m \cdot (k \log n)^2 \cdot (z' + z'')\big) = n^{\|\vec b - \vec a\| + o(1)} \cdot (z' + z'').
\]
This recurrence justifies the bounds $d = O(k)$ and $z = n^{\chi(\vec a)+o(1)}$.

As for the error probability of Properties (1) and (2),  
it should be clear that every usage of ``with high probability'' in this argument can be made to be 
$1 - n^{-\omega(1)}$ by setting $m = n^{\|\vec b - \vec a\|} (\log n)^{c_k}$ for suitable constants $c_k > 1$.
\end{proof}

Lemma \ref{la:formulas} has the following corollary, which for each $\vec a \in \mc P(k)$, gives a collection of randomized $\ACzero$ formulas of aggregate size $n^{\chi(\vec a) + \|\vec a\| + o(1)}$  that compute the product of $k$ permutations.  Moreover, these formulas, on input $\vec \pi$, produce a list of all paths $\vec x = (x_0,\dots,x_k) \in [n]^{\{0,\dots,k\}}$ such that $\pi_h(x_{h-1})=x_h$ for all $h \in \{1,\dots,k\}$.

\begin{cor}\label{cor:formulas}
For every $\vec a \in \mc P(k)$, there exists a matrix of randomized $\ACzero$ formulas 
\[
  \vec{\mb h}_{\vec a} = \{\mb h_{\vec a}^{(\ell,t)}\}_{\substack{\ell \in \{1,\dots,n^{\|a\|+o(1)}\},\ 
  t \in \{1,\dots,(k+1)\log\lceil(n+1)\rceil\}}}
\]
each of depth $O(k)$ and size $n^{\chi(\vec a) + o(1)}$ and taking a sequence $\vec\pi = (\pi_1,\dots,\pi_k)$ 
of permutations $[n] \stackrel\cong\to [n]$
as input, such that 
the following properties hold 
with probability $1 - n^{-\omega(1)}$:
\begin{enumerate}
  \item
    Each row in $\vec{\mb h}_{\vec a}(\vec\pi)$ is either the all-0 string or contains the binary representation of integers $x_0,\dots,x_k$ for some $\vec\pi$-path $\vec x \in [n]^{\{0,\dots,k\}}$.
  \item
    For every 
    $\vec\pi$-path $\vec x \in [n]^{\{0,\dots,k\}}$, the binary representation of integers $x_0,\dots,x_k \in [n]$ is given by at least one row of $\vec{\mb h}_{\vec a}(\vec\pi)$. 
\end{enumerate}
\end{cor}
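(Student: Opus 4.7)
The plan is to obtain Corollary~\ref{cor:formulas} as an immediate consequence of Lemma~\ref{la:formulas} by running that lemma's construction in parallel on many independent copies of the random sequence $\vec{\mb S}$, one copy per row of the output matrix. Set $m \defeq n^{\|\vec a\|+o(1)}$ (with the $o(1)$ term tuned as in the proof of Lemma~\ref{la:formulas}), and draw independent random sequences $\vec{\mb S}_1, \dots, \vec{\mb S}_m$, each distributed as $\vec{\mb S}$. For each $\ell \in [m]$ and each coordinate index $(h,t')$ with $h \in \{0,\dots,k\}$ and $t' \in \{1,\dots,\lceil \log(n+1) \rceil\}$, define
\[
  \mb h_{\vec a}^{(\ell,(h,t'))}(\vec \pi)\ \defeq\ \mb f_{\vec a,\vec{\mb S}_\ell}(\vec \pi) \wedge \mb g_{\vec a,\vec{\mb S}_\ell}^{(h,t')}(\vec \pi),
\]
which is the $(\ell,(h,t'))$-entry of the matrix (here $t = (h,t')$ is a composite index). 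By Lemma~\ref{la:formulas}, this formula has depth $O(k)$ and size $n^{\chi(\vec a)+o(1)}$, and with probability $1-n^{-\omega(1)}$ it equals (the $t'$-th bit of $x_h$) if $\vec{\mb S}_\ell$ isolates a unique $\vec \pi$-path $\vec x$, and equals $0$ otherwise. Property (1) is then immediate from a union bound over the $m$ rows, since $m = n^{O(1)}$ preserves the $1 - n^{-\omega(1)}$ bound.

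The main content is Property (2): every $\vec \pi$-path must be caught by at least one row. The key observation is that because $\pi_1,\dots,\pi_k$ are permutations, the $\vec\pi$-paths are pairwise coordinate-disjoint (if $\vec x \ne \vec y$ are two $\vec\pi$-paths, then $x_h \ne y_h$ for every $h \in \{0,\dots,k\}$), and there are exactly $n$ of them (one per choice of $x_0$). Fix any $\vec\pi$-path $\vec x$. For each $\ell \in [m]$, the probability that $\vec x \in \mb S_{\ell,0} \times \cdots \times \mb S_{\ell,k}$ equals $n^{-\|\vec a\|}$. Conditional on this event, the expected number of other $\vec\pi$-paths landing in $\vec{\mb S}_\ell$ is at most $(n-1)\, n^{-\|\vec a\|} \le n^{1-\|\vec a\|}$, so by Markov's inequality (in the delicate regime $\|\vec a\| = 1$) or by a union bound (when $\|\vec a\| > 1$), the probability that $\vec{\mb S}_\ell$ isolates $\vec x$ is at least $\Omega(n^{-\|\vec a\|})$. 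Independence across $\ell$ gives
\[
  \Pr[\, \vec x \text{ is isolated by no } \vec{\mb S}_\ell \,] \le \bigl(1 - \Omega(n^{-\|\vec a\|})\bigr)^m \le \exp\!\bigl(-\Omega(n^{-\|\vec a\|} m)\bigr) = n^{-\omega(1)}
\]
provided $m = n^{\|\vec a\|}\cdot (\log n)^{c}$ for a sufficiently large constant $c$. A final union bound over the $n$ candidate paths $\vec x$ yields Property (2) with probability $1 - n^{-\omega(1)}$.

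The size and depth bounds are inherited directly: each $\mb h_{\vec a}^{(\ell,t)}$ is an AND of two formulas of depth $O(k)$ and size $n^{\chi(\vec a)+o(1)}$ supplied by Lemma~\ref{la:formulas}, so its depth and size fall in the same bounds. The number of rows is $m = n^{\|\vec a\|+o(1)}$ and the number of columns is $(k+1)\lceil \log(n+1) \rceil$, matching the statement. I do not expect any serious obstacle here; the only subtlety is the choice of the $o(1)$ term in $m$ to push all ``with high probability'' events to $1 - n^{-\omega(1)}$ and to absorb the case $\|\vec a\| = 1$, where the isolation probability for a single trial is only $\Theta(n^{-1})$ rather than $1-o(1)$, so the probabilistic bookkeeping needs Chernoff concentration on the number of isolating trials rather than a simple Markov step.
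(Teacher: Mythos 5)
Your construction matches the paper's: draw $m = n^{\|\vec a\|+o(1)}$ independent copies $\vec{\mb S}_1,\dots,\vec{\mb S}_m$ of the random sequence from Lemma~\ref{la:formulas}, and let row $\ell$ be the conjunction of $\mb f_{\vec a,\vec{\mb S}_\ell}$ with the output bits $\vec{\mb g}_{\vec a,\vec{\mb S}_\ell}$. Property~(1) and the size/depth accounting are as you say.

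The one step that needs repair is your per-trial isolation probability. You invoke Markov's inequality in the regime $\|\vec a\|=1$, but there Markov only gives $\Pr[\text{no other path lands in the rectangle}] \ge 1/n$, hence an isolation probability of only $\Omega(n^{-2})$, not the claimed $\Omega(n^{-\|\vec a\|}) = \Omega(n^{-1})$. With that weaker bound you would need $m \approx n^{2+o(1)}$ rows, exceeding the $n^{\|\vec a\|+o(1)}$ budget. The correct estimate uses exactly the coordinate-disjointness fact you already isolated: the $n-1$ events $\{\vec y \in \mb S_{\ell,0}\times\cdots\times\mb S_{\ell,k}\}$ for the other $\vec\pi$-paths $\vec y \ne \vec x$ involve pairwise disjoint sets of random coins, so they are mutually independent and also independent of the event $\vec x \in \mb S_{\ell,0}\times\cdots\times\mb S_{\ell,k}$. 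Therefore
\[
  \Pr\bigl[\text{no other }\vec\pi\text{-path in the rectangle}\bigr]
  = \bigl(1 - n^{-\|\vec a\|}\bigr)^{n-1}
  \ \ge\ e^{-2(n-1)n^{-\|\vec a\|}}
  \ \ge\ e^{-2},
\]
uniformly for $\|\vec a\| \ge 1$, giving isolation probability $\Omega(n^{-\|\vec a\|})$ in one step. After this, the bound $(1-p)^m \le e^{-pm} = n^{-\omega(1)}$ with $m = n^{\|\vec a\|}(\log n)^c$ and a union bound over the $n$ paths finish Property~(2) as you describe; the ``Chernoff on the number of isolating trials'' you gesture at in your last sentence is not the needed fix -- the independence computation above is.
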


Similar to the bound $\chi_{A}([n]^{\{0,\dots,k\}}) \le n^{\chi(\vec a) + \|\vec a\|+o(1)}$ in Remark \ref{rmk:pathset-connection},
Corollary \ref{cor:formulas} is obtained from Lemma \ref{la:formulas} by covering $[n]^{\{0,\dots,k\}}$ with $m \defeq n^{\|\vec a\|+o(1)}$ random rectangles $\mb S_{\ell,0}\times \dots \times \mb S_{\ell,k}$ where each $\vec{\mb S}_\ell = (\mb S_{\ell,0},\dots,\mb S_{\ell,k})$ has the same distribution as $\vec{\mb S}$ (i.e.,\ $\mb S_{\ell,h} \subseteq_{n^{-a_h}} [n]$).  The rows of $\vec{\mb h}_{\vec a}(\vec\pi)$ are then given by the conjunction of $\mb f_{\vec a,\vec{\mb S}_\ell}(\vec\pi)$ with formulas $\vec{\mb g}_{\vec a,\vec{\mb S}_\ell}(\vec\pi)$.  Property (1) is immediate from Lemma \ref{la:formulas}, while Property (2) follows by noting that, with high probability, every $\vec\pi$-path in $[n]^{\{0,\dots,k\}}$ is isolated by a rectangle $\mb S_{\ell,0}\times \dots \times \mb S_{\ell,k}$ for some $\ell \in [m]$.

\subsection{Upper bounds on $\chi(\vec a) + \|\vec a\|$}

We describe a few different constructions giving upper bounds on $\chi(\vec a) + \|\vec a\|$ for sequences $\vec a \in \mc P(k)$.  Thanks to Corollary \ref{cor:formulas}, each of these constructions corresponds to randomized $\ACzero$ formulas of size $n^{\chi(\vec a) + \|\vec a\| + o(1)}$ computing the product of $k$ permutations.
Our best bound, $\frac13\log_{(\sqrt 5+1)/2}(k)+O(1)$, is obtained via a construction we call ``Fibonacci overlapping''.

\subsubsection{Recursive doubling}

For $k \ge 2$, let
\[
  \vec a_k \defeq 
  \ts(\frac12,\underbrace{0,\dots,0}_{\lceil k/2 \rceil - 1},\frac12,\underbrace{0,\dots,0}_{\lfloor k/2 \rfloor - 1},\frac12)
\]
Then $\chi(\vec a_2) = \chi(\frac12,\frac12,\frac12) = \max\{\chi(\frac12,\frac12),\,\chi(\frac12,\frac12)\} = 0$ and for $k \ge 3$,
\[
  \chi(\vec a_k)
  &\ts\le
  \max\{\chi(\frac12,\underbrace{0,\dots,0}_{\lceil k/2 \rceil - 1},\frac12),\, \chi(\frac12,\underbrace{0,\dots,0}_{\lfloor k/2 \rfloor - 1},\frac12)\}
  \le
  \max\{\frac12 + \chi(\vec a_{\lceil k/2 \rceil}),\, 
  \frac12 + \chi(\vec a_{\lceil k/2 \rceil})\}
  \le
  \frac12 \lceil\log_2(k)\rceil.
\]
This construction achieves
\[
  \ts\chi(\vec a_k) + \|\vec a_k\| \le \frac12 \lceil\log_2(k)\rceil + 1.
\]

\subsubsection{Maximally overlapping joins}

For $k \ge 2$, let 
\[
  \vec a_k \defeq \ts(\underbrace{\ts\frac{1}{k},\dots,\frac{1}{k}}_{k+1}) \in \mc P(k).
\]
Then $\chi(\vec a_2) = \chi(\frac12,\frac12,\frac12) = 0$ and for $k \ge 3$,
\[
  \chi(\vec a_k) 
  &\ts\le
  \max\{\chi(\underbrace{\ts\frac{1}{k},\dots,\frac{1}{k}}_{k}),\,
  \chi(\underbrace{\ts\frac{1}{k},\dots,\frac{1}{k}}_{k})\}
  \le
  \chi(\vec a_{k-1}) + k\big(\frac1{k-1} - \frac1k\big)
  =
  \chi(\vec a_{k-1}) + \frac{1}{k-1}
  \le
  \frac{1}{2} + \dots + \frac{1}{k-1}.
\]
This construction achieves
\[
  \chi(\vec a_k) + \|\vec a_k\|
  &\ts\le
  1 + \frac{1}{2} + \dots + \frac{1}{k-1} + \frac{1}{k}
  =
  \ln(k) + O(1).
\]
Since $\ln(k) \approx 0.69\log_2(k)$, this upper bound is worse that the one from recursive doubling.  

It turns out that a $\frac12\log_2(k) + O(1)$ upper bound is achievable via a different construction via the maximally overlapping join-tree.  If $k = 2^\ell + t$ where $\ell \ge 0$ and $t \in \{0,\dots,2^\ell-1\}$, we instead define
\[
  \vec a_k \defeq 
  (\underbrace{\ts\frac{1}{2^{\ell+1}},\dots,\frac{1}{2^{\ell+1}}}_{t},
  \underbrace{\ts\frac{1}{2^{\ell}},\dots,\frac{1}{2^{\ell}}}_{2^\ell - t + 1},
  \underbrace{\ts\frac{1}{2^{\ell+1}},\dots,\frac{1}{2^{\ell+1}}}_{t}).
\]
(Note that $\vec a \in \mc P(k)$ since $\|\vec a\| = 
(2^\ell - t + 1)\frac{1}{2^\ell} + 2t\frac{1}{2^{\ell+1}} = 
1 + \frac{1}{2^\ell} > 1
$.)
In the base case $k = 1$ (i.e.,\ $\ell = t = 0$), we have 
$\chi(\vec a_2) = \ts\chi(1,1) = 0$.
When $\ell \ge 1$ and $t \ge 1$, we have
\[
  \chi(\vec a_k)
  &\le
  \max\{
  \chi(\underbrace{\ts\frac{1}{2^{\ell+1}},\dots,\frac{1}{2^{\ell+1}}}_{t},
  \underbrace{\ts\frac{1}{2^{\ell}},\dots,\frac{1}{2^{\ell}}}_{2^\ell - t + 1},
  \underbrace{\ts\frac{1}{2^{\ell+1}},\dots,\frac{1}{2^{\ell+1}}}_{t-1}),\,
  \chi(\underbrace{\ts\frac{1}{2^{\ell+1}},\dots,\frac{1}{2^{\ell+1}}}_{t-1},
  \underbrace{\ts\frac{1}{2^{\ell}},\dots,\frac{1}{2^{\ell}}}_{2^\ell - t + 1},
  \underbrace{\ts\frac{1}{2^{\ell+1}},\dots,\frac{1}{2^{\ell+1}}}_{t})\}\hspace{-3in}\\
  &=
  \chi(\underbrace{\ts\frac{1}{2^{\ell+1}},\dots,\frac{1}{2^{\ell+1}}}_{t},
  \underbrace{\ts\frac{1}{2^{\ell}},\dots,\frac{1}{2^{\ell}}}_{2^\ell - t + 1},
  \underbrace{\ts\frac{1}{2^{\ell+1}},\dots,\frac{1}{2^{\ell+1}}}_{t-1}) &&\text{(by symmetry)}\\
  &=
  \chi(\underbrace{\ts\frac{1}{2^{\ell+1}},\dots,\frac{1}{2^{\ell+1}}}_{t-1},
  \underbrace{\ts\frac{1}{2^{\ell}},\dots,\frac{1}{2^{\ell}}}_{2^\ell - t + 2},
  \underbrace{\ts\frac{1}{2^{\ell+1}},\dots,\frac{1}{2^{\ell+1}}}_{t-1}) + \ts\frac{1}{2^{\ell}} - \frac{1}{2^{\ell+1}}
  &&\text{($t^{\text{th}}$ coordinate increases from $\tsfrac{1}{2^{\ell+1}}$ to $\tsfrac{1}{2^{\ell}}$)}\\
  &=
  \chi(\vec a_{k-1}) + \ts\frac{1}{2^{\ell+1}}.
\]
When $\ell \ge 1$ and $t = 0$ (i.e.,\ $k = 2^\ell$), we have
\[
  \chi(\vec a_k)
  = \chi(\underbrace{\ts\frac{1}{2^\ell},\dots,\frac{1}{2^\ell}}_{2^\ell + 1})
  \le
  \chi(\underbrace{\ts\frac{1}{2^\ell},\dots,\frac{1}{2^\ell}}_{2^\ell})
  &\le
  \chi(
  \underbrace{\ts\frac{1}{2^\ell},\dots,\frac{1}{2^\ell}}_{2^{\ell-1}-1},
  \tsfrac{1}{2^{\ell-1}},\tsfrac{1}{2^{\ell-1}},
  \underbrace{\ts\frac{1}{2^\ell},\dots,\frac{1}{2^\ell}}_{2^{\ell-1}-1}
  )
  + 2(\tsfrac{1}{2^{\ell-1}} - \tsfrac{1}{2^{\ell}})\\
  &=
  \chi(\vec a_{k-1}) + \tsfrac{1}{2^{\ell-1}}.
\]
For any $k = 2^\ell + t$, this recurrence shows
\[
  \chi(\vec a_k) 
  &\le
  t\cdot
  \tsfrac{1}{2^{\ell+1}} + \chi(\vec a_{2^\ell})\\
  &\le
  t\cdot
  \tsfrac{1}{2^{\ell+1}} + \tsfrac{1}{2^{\ell-1}} + \chi(\vec a_{2^\ell-1})\\
  &\le
  t\cdot\tsfrac{1}{2^{\ell+1}} + \big(\tsfrac{1}{2^{\ell-1}} + (2^{\ell-1}-1)\cdot \tsfrac{1}{2^\ell}\big) + 
  \chi(\vec a_{2^{(\ell-1)}})\\
  &=
  t\cdot
  \tsfrac{1}{2^{\ell+1}} + \big(\tsfrac12 + \tsfrac{1}{2^\ell}\big) + 
  \chi(\vec a_{2^{(\ell-1)}})\\
  &\le  
  t\cdot \tsfrac{1}{2^{\ell+1}} + 
  \ts\sum_{j=1}^\ell
   \big(\tsfrac12 + \tsfrac{1}{2^j}\big)\\
  &=
  \ts\frac{1}{2}(\ell+\frac{t}{2^\ell}) + 
  1 - \frac{1}{2^\ell}.
\]
This second construction thus achieves
\[
\chi(\vec a) + \|\vec a\| \le
\ts\frac{1}{2}(\ell+\frac{t}{2^\ell}) + 2
= \tsfrac12\log_2(k)+O(1).
\]

\subsubsection{Fibonacci overlapping joins}

Let $\Fib(1)=\Fib(2)=1$ and for $\ell \ge 3$, let $\Fib(\ell) \defeq \Fib(\ell-1)+\Fib(\ell-2)$. 
For $\ell \ge 4$, let
\[
  \chi(\vec a_{\Fib(\ell)}) 
  &\defeq
  \ts(\frac13,\underbrace{0,\dots,0}_{\Fib(\ell-2)-1},\frac13,\underbrace{0,\dots,0}_{\Fib(\ell-3)-1},\frac13,\underbrace{0,\dots,0}_{\Fib(\ell-2)-1},\frac13)
  \in
  \mc P(\Fib(\ell))
\]
We have $\Fib(4) = 3$ and
\[
  \chi(\vec a_{\Fib(4)}) 
  =
  \ts\chi(\frac13,\frac13,\frac13,\frac13) 
  \le
  \chi(\frac13,\frac13,\frac13)
  \le
  \frac13 + \max\{\chi(\frac13,\frac23),\,\chi(\frac23,\frac13)\}
  =
  \frac13.
\]
For $\ell \ge 5$, we have
\[
  \chi(\vec a_{\Fib(\ell)}) 
  &\le
  \ts\max\{\chi(
  \frac13,\underbrace{0,\dots,0}_{\Fib(\ell-2)-1},\frac13,\underbrace{0,\dots,0}_{\Fib(\ell-3)-1},\frac13),\,
  \chi(\frac13,\underbrace{0,\dots,0}_{\Fib(\ell-3)-1},\frac13,\underbrace{0,\dots,0}_{\Fib(\ell-2)-1},\frac13)\}\\
  &\ts=
  \chi(\frac13,\underbrace{0,\dots,0}_{\Fib(\ell-2)-1},\frac13,\underbrace{0,\dots,0}_{\Fib(\ell-3)-1},\frac13)
  \quad\text{(by symmetry)}\\
  &\ts\le
  \frac13 + 
  \chi(\frac13,\underbrace{0,\dots,0}_{\Fib(\ell-3)-1},\frac13,\underbrace{0,\dots,0}_{\Fib(\ell-4)-1},\frac13,\underbrace{0,\dots,0}_{\Fib(\ell-3)-1},\frac13) 
  =
  \frac13 + \chi(\vec a_{\Fib(\ell-1)})
  =
  \frac13\ell - 1.
\]
For $k = \Fib(\ell)$ with $\ell \ge 4$, this construction gives $\vec a_k \in \mc P(k)$ with
\[
  \chi(\vec a_k) + \|\vec a_k\| \le \ts\frac{1}{3}(\ell+1).
\]
For $\Fib(\ell-1) < k \le \Fib(\ell)$, the bound $\chi(\vec a_k) + \|\vec a_k\| \le \ts\frac{1}{3}(\ell+1)$ extends to all $\vec a_k \in \mc P(k)$ of the form
\[
  \vec a_k \defeq 
  \ts(\frac13,\underbrace{0,\dots,0}_{\le \Fib(\ell-2)-1},\frac13,\underbrace{0,\dots,0}_{\le \Fib(\ell-3)-1},\frac13,\underbrace{0,\dots,0}_{\le \Fib(\ell-2)-1},\frac13).
\]
This construction proves the following

\begin{thm}\label{thm:Fib}
For all $k \ge 1$, there exists $\vec a \in \mc P(k)$ with
\[
  \ts\chi(\vec a) + \|\vec a\| = \frac13\log_{\varphi}(k) + O(1) 
\]
where $\varphi = (\sqrt 5 + 1)/2$ is the golden ratio.
\end{thm}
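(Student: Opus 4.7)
The strategy is to build, for each Fibonacci number $k=\Fib(\ell)$, an explicit sequence $\vec a_{\Fib(\ell)}\in\mc P(k)$ that is ``self-similar'' under the recursive definition of $\chi$: it consists of four spikes of value $\tfrac13$ placed at positions $0,\ \Fib(\ell-2),\ \Fib(\ell-1),\ \Fib(\ell)$, with zeros in between. The norm of such a vector is $\tfrac43\ge 1$, so $\vec a_{\Fib(\ell)}\in\mc P(k)$, and the total quantity we wish to bound is $\chi(\vec a_{\Fib(\ell)})+\tfrac43$. The point of placing the spikes at Fibonacci offsets is that the two sub-sequences arising from the recursion (left prefix of length $\Fib(\ell-1)$, right suffix of length $\Fib(\ell-1)$) inherit exactly an analogous Fibonacci-spike structure one level down.

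The heart of the argument is the inductive step. For $\ell\ge 5$ I would apply the defining recursion of $\chi$ with $i=\Fib(\ell-2)$, $j=\Fib(\ell-1)$, and $\vec b=\vec a_{\Fib(\ell)}$ itself (so the $\|\vec b-\vec a\|$ term vanishes). The prefix $(a_0,\dots,a_j)$ then has spikes at positions $0,\Fib(\ell-2),\Fib(\ell-1)$ and the suffix $(a_i,\dots,a_k)$ has spikes at positions $0,\Fib(\ell-3),\Fib(\ell-1)$; each has norm exactly $1$, so both lie in the appropriate $\mc P$. By the mirror symmetry of the definition of $\chi$ it suffices to bound the prefix. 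Although the prefix only has three spikes, raising the $\Fib(\ell-3)$-th coordinate to $\tfrac13$ turns it into $\vec a_{\Fib(\ell-1)}$ at a cost of $\tfrac13$ in the $\|\vec b-\vec a\|$ term, giving $\chi(\vec a_{\Fib(\ell)})\le\tfrac13+\chi(\vec a_{\Fib(\ell-1)})$.

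Unrolling this recurrence down to the base case $\ell=4$ (where $k=3$ and $\vec a_3=(\tfrac13,\tfrac13,\tfrac13,\tfrac13)$, which one checks directly to satisfy $\chi(\vec a_3)\le\tfrac13$ by splitting at the midpoint and paying $\tfrac13$ to lift the central coordinate to $\tfrac23$) yields $\chi(\vec a_{\Fib(\ell)})\le\tfrac13(\ell-3)$, hence $\chi(\vec a_{\Fib(\ell)})+\|\vec a_{\Fib(\ell)}\|\le\tfrac13(\ell+1)$. For an arbitrary $k$ with $\Fib(\ell-1)<k\le\Fib(\ell)$, the same four-spike construction works verbatim: simply shorten any of the four gaps, since the recursion only requires each gap to be at most its Fibonacci upper bound. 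Converting via $\Fib(\ell)=\Theta(\varphi^\ell)$ turns $\tfrac13(\ell+1)$ into $\tfrac13\log_\varphi(k)+O(1)$.

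I do not expect a genuine obstacle: the proof is essentially bookkeeping on a cleverly chosen self-similar sequence. The one point requiring attention is verifying at each step of the recursion that the chosen $\vec b$ is admissible, i.e.\ that each of $(b_0,\dots,b_j)$ and $(b_i,\dots,b_k)$ has norm $\ge 1$; this is automatic here because each such sub-sequence always contains exactly three $\tfrac13$-spikes summing to $1$. The most delicate spot is ensuring that the ``filling in'' of the fourth spike when passing from the prefix back to $\vec a_{\Fib(\ell-1)}$ is coordinatewise valid ($\vec a\le\vec b$), which holds because the filled coordinate was previously $0$.
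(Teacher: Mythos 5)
Your proposal is correct and follows essentially the same Fibonacci-overlapping construction as the paper: four $\tfrac13$-spikes at $0,\Fib(\ell-2),\Fib(\ell-1),\Fib(\ell)$, an overlapping split at $i=\Fib(\ell-2)$, $j=\Fib(\ell-1)$ with $\vec b=\vec a$ (cost zero), mirror symmetry to focus on the prefix, and a lift of cost $\tfrac13$ to restore the four-spike structure at the next scale, yielding $\chi(\vec a_{\Fib(\ell)})\le\tfrac13+\chi(\vec a_{\Fib(\ell-1)})$. The only wrinkle is the phrase ``splitting at the midpoint'' in the base case: for $k=3$ a disjoint split $i=j=2$ would cost $\tfrac23$, so you should clarify that you mean the overlapping split $i=1,j=2$ (both three-coordinate windows then have norm exactly $1$), after which the $\tfrac13$ lift of the central coordinate happens one level down in $\chi(\tfrac13,\tfrac13,\tfrac13)$ --- exactly as in the paper.
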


Since $\frac13\log_{\varphi}(k) = \log_{\sqrt 5+2}(k) \le 0.49\log_2(k)$, Theorem \ref{thm:Fib} improves the $\frac12\log_2(k) + O(1)$ upper bounds from the recursive doubling and maximally overlapping join-trees described above.  As a corollary of Corollary \ref{cor:formulas} and Theorem \ref{thm:Fib}, we have

\begin{cor}
There exist randomized $\ACzero$ formulas of size $n^{\frac13\log_{\varphi}(k) + O(1)}$ that compute the product of $k$ $(n\times n)$-permutation matrices.
\end{cor}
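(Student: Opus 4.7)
The plan is to directly combine Theorem \ref{thm:Fib} with Corollary \ref{cor:formulas}. Concretely, for the given $k$, I would invoke Theorem \ref{thm:Fib} to obtain a sequence $\vec a \in \mc P(k)$ such that $\chi(\vec a) + \|\vec a\| = \tfrac13 \log_\varphi(k) + O(1)$, and then apply Corollary \ref{cor:formulas} to this $\vec a$. This yields a matrix of randomized $\ACzero$ formulas $\vec{\mb h}_{\vec a}$, each of depth $O(k)=O(1)$ (since $k$ is fixed relative to $n$) and size $n^{\chi(\vec a) + o(1)}$, indexed by $\ell \in \{1,\dots,n^{\|\vec a\|+o(1)}\}$ rows and $t \in \{1,\dots,(k+1)\lceil \log(n+1)\rceil\}$ columns. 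The number of formulas is $n^{\|\vec a\|+o(1)}\cdot O_k(\log n)$, and so the aggregate size is
\[
  n^{\chi(\vec a)+\|\vec a\|+o(1)} = n^{\frac13 \log_\varphi(k) + O(1)},
\]
matching the claimed bound.

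Next, I would explain how to convert the output of $\vec{\mb h}_{\vec a}$ (which with high probability lists the binary representations of all $\vec \pi$-paths $\vec x = (x_0,\dots,x_k)$) into formulas computing each individual entry of the product matrix $Q_1\cdots Q_k$ (where $Q_h$ is the permutation matrix of $\pi_h$). Observe that $(Q_1\cdots Q_k)_{a,b} = 1$ if and only if there exists a $\vec\pi$-path $\vec x$ with $x_0 = a$ and $x_k = b$. Hence for each $(a,b) \in [n]\times[n]$, I would construct the formula
\[
  \mb{Prod}_{a,b}(\vec\pi) \defeq \bigvee_{\ell} \Bigl( \bigwedge_{t} \mb h_{\vec a}^{(\ell, t)}(\vec\pi) \leftrightarrow \bigl[ \text{the $t$-th bit of }(a,b)\bigr] \Bigr),
\]
interpreting the $t$-indices as addressing the binary representations of $x_0$ and $x_k$ only. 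This wraps each row of $\vec{\mb h}_{\vec a}$ in a constant-depth circuit of size $O(\log n)$, and takes an OR over $n^{\|\vec a\|+o(1)}$ such rows. The combined formula has depth $O(1)$ and size $n^{\chi(\vec a)+\|\vec a\|+o(1)}$, and by the correctness guarantees of Corollary \ref{cor:formulas}, outputs the correct value of $(Q_1\cdots Q_k)_{a,b}$ with probability $1 - n^{-\omega(1)}$.

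The main obstacle I anticipate is essentially bookkeeping rather than mathematical content: one must verify that combining the per-entry checks into a single collection of formulas (over all $(a,b) \in [n]^2$) does not inflate the size beyond $n^{\chi(\vec a)+\|\vec a\|+o(1)}$. This is handled because the subformulas $\mb h_{\vec a}^{(\ell,t)}$ can be shared across all $(a,b)$, so only the (constant-depth, size $O(\log n)$) outer wrapping is duplicated per output, and the total size remains $n^{\frac13\log_\varphi(k)+O(1)}$ after absorbing the $\mr{poly}(n)$ factor into the $o(1)$ in the exponent. No new lower-bound machinery or combinatorial argument is needed; the corollary is a direct consequence of the upper-bound construction just established.
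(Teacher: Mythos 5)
Your proposal follows exactly the paper's route: the paper simply records that the corollary follows from Corollary~\ref{cor:formulas} and Theorem~\ref{thm:Fib}, and your fleshing out of the conversion from the path-listing output to the individual entries of the product matrix supplies a detail the paper leaves implicit. The formula $\mb{Prod}_{a,b}$ is sound: checking only the bits of $x_0$ and $x_k$ against $(a,b)$ works because Property~(1) of Corollary~\ref{cor:formulas} guarantees each row is either all-zero or a genuine $\vec\pi$-path, and the all-zero string cannot match any $(a,b) \in [n]^2$ since $0 \notin [n]$; Property~(2) then guarantees completeness.

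The size accounting in your last paragraph, however, contains a double slip. You claim the subformulas $\mb h^{(\ell,t)}_{\vec a}$ ``can be shared across all $(a,b)$'' — but this corollary is about \emph{formula} size, and in a formula every gate has fan-out one, so there is no sharing. Each of the $n^2$ formulas $\mb{Prod}_{a,b}$ must carry its own private copy of the entire matrix $\vec{\mb h}_{\vec a}$, making the aggregate size $n^2 \cdot n^{\chi(\vec a)+\|\vec a\|+o(1)}$. You then say this $\mr{poly}(n)$ overhead is ``absorbed into the $o(1)$ in the exponent,'' which it cannot be, as $n^{o(1)}$ is sub-polynomial; the $n^2$ factor is absorbed into the $O(1)$ in the exponent, which is exactly why the statement reads $n^{\frac13\log_\varphi(k)+O(1)}$ and not $n^{\frac13\log_\varphi(k)+o(1)}$. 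With these corrections the arithmetic closes: $n^2\cdot n^{\chi(\vec a)+\|\vec a\|+o(1)} = n^{\chi(\vec a)+\|\vec a\|+2+o(1)} = n^{\frac13\log_\varphi(k)+O(1)}$, and the proof is complete.
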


\subsection{Tightness of upper bounds}

We say that a join-tree $A$ (over $P_\infty$) is {\em connected} if $\Gr{D}$ is connected for all $D \preceq A$. For every connected join-tree $A$ with graph $P_k$, we can consider the constrained complexity measure $\chi_A : \mc P(k) \to \R_{\ge 0}$ where parameters $0 < i \le j < k$ in the definition of $\chi(\vec a)$ are fixed according to $A$.  As described in Remark \ref{rmk:pathset-connection}, the potential function $\Phi(A)$ implies a lower bound $\chi_A(\vec a) \ge \Phi(A) - \|\vec a\|$.

Let $\mr{RD}_k$, $\mr{MO}_k$ and $\mr{FO}_k$ be the ``recursive doubling'', ``maximally overlapping'' and ``Fibonacci overlapping'' connected join-trees recursively defined by $\mr{RD}_1 = \mr{MO}_1 = \mr{FO}_1 = \sq{\{0,1\}}$ and
\[
  &&&&&&&&\mr{RD}_k = \mr{RD}_{0,k} &\defeq \sq{\mr{RD}_{0,\lceil k/2 \rceil},\mr{RD}_{\lceil k/2 \rceil,k}} 
  &&\text{for }k \ge 2,&&&&&&&&\\
  &&&&&&&&\mr{MO}_k = \mr{MO}_{0,k} &\defeq \sq{\mr{MO}_{0,k-1},\mr{MO}_{1,k}} 
  &&\text{for }k \ge 2,\\
  &&&&&&&&\mr{FO}_k = \mr{FO}_{0,k} &\defeq \sq{\mr{FO}_{0,\Fib(\ell-1)},\mr{FO}_{\Fib(\ell-1),k}} 
  &&\text{for }k = \mr{Fib}(\ell),\, \ell \ge 3.
\]
The upper bounds on $\chi(\cdot)$ in the previous subsection respectively apply to constrained complexity measures $\chi_{\mr{RD}_k}(\cdot)$, $\chi_{\mr{MO}_k}(\cdot)$ and $\chi_{\mr{FO}_k}(\cdot)$.

With respect to these particular join-trees, the upper bounds of previous subsection are in fact tight!  By ($\dag$) we have
\[
  &&&&&&&&\Phi(\mr{RD}_k) &\ge \Phi(\mr{RD}_{\lceil \lceil k/2 \rceil / 2\rceil}) + 1 
  &&\text{for } k \ge 4&&&&&&&&,\\
  &&&&&&&&\Phi(\mr{FO}_{\Fib(\ell)}) &\ge \Phi(\mr{FO}_{\Fib(\ell-3)}) + 1
  &&\text{for } k = \Fib(\ell),\, \ell \ge 5.
\]
It follows that $\Phi(\mr{RD}_k) \ge \frac12\log_2(k)$ and $\Phi(\mr{FO}_k) \ge \frac13\log_\varphi(k) - O(1)$.
We get a lower bound on $\Phi(\mr{MO}_k) \ge \frac12\log_2(k) - O(1)$ via ($\ddag$):
\[
  &&&&&&&&\Phi(\mr{MO}_k) &\ge 
  \tsfrac12\big(
  \Phi(\mr{MO}_{0,\lfloor (k-1)/2 \rfloor})
  +
  \Phi(\mr{MO}_{\lceil (k-1)/2 \rceil,k})
  +
  1
  \big)
  &&\text{for } k \ge 3.&&&&&&&&
\]
Therefore, for all $\vec a \in \mc P(k)$, we have
\[
  \chi_{\mr{RD}_k}(\vec a) + \|\vec a\| &\ge \tsfrac12\log_2(k),\\
  \chi_{\mr{MO}_k}(\vec a) + \|\vec a\| &\ge \tsfrac12\log_2(k) - O(1),\\
  \chi_{\mr{FO}_k}(\vec a) + \|\vec a\| &\ge \tsfrac13\log_\varphi(k) - O(1).
\]
This establishes the tightness of our upper bounds for these specific join-trees.  It is open whether a different connected join-tree achieves a better bound.  (The best lower bound on $\chi(\vec a) + \|\vec a\|$ that we could determine is $\log_{\sqrt 5 + 4}(k) - 1$ via a strengthening of the argument in Section \ref{sec:Pk} in the case of connected join-trees.)

\paragraph{Experimental results.}
For any connected join-tree $A$, the value
$\min_{\vec a \in \mc P(k)}\, \chi_A(\vec a)$
is computable by a linear program with $O(\sum_{D \preceq A} |V(D)|)$ variables and $O(\sum_{D \preceq A} |V(D)|)$ constraints.
In fact, our second upper bound for the maximally overlapping pattern (achieving $\chi_{\mr{MO}_k}(\vec a_k)+\|\vec a_k\| = \frac12\log_2(k)+O(1)$) was found with the help of this linear programs!

We experimentally searched for connected join-trees $A$ that beat the $\frac13\log_\varphi(k)+O(1)$ upper bound via Fibonacci overlapping by evaluating 
$\min_{\vec a \in \mc P(k)}\, \chi_A(\vec a) + \|\vec a\|$ on various examples, both structured and randomly generated. We could not find any better upper bound. (In particular, join-trees $\mr{FO}_k$ appears optimal among a broad class of ``recursively overlapping'' join-trees.)
It is tempting to conjecture that $\mr{FO}_k$ in fact gives the optimal bound, that is, $\chi_A(\vec a) + \|\vec a\| \le \frac13\log_\varphi(k)+O(1)$ for all $\vec a \in \mc P(k)$. 
We leave this as an intriguing open question.

\section{Open problems}\label{sec:open-problems}

We conclude by mentioning some open questions raised by this work.

\begin{problem}\label{prob1}
Prove that $\tau(G) = \Omega(\td(G))$ for all graphs $G$.  (An $\wt\Omega(\td(G))$ would also be interesting.)
\end{problem}

Problem \ref{prob1} is unlikely to follow from any excluded-minor approximation of tree-depth along the lines of Theorem \ref{thm:excluded-minor}.  A first step to resolving this problem is to identify, for each graph $G$, a particular threshold weighting $\theta$ such that $\mb X_{\theta,n}$ is a ``hard'' input distribution with respect to the average-case $\ACzero$ formula size of $\SUB(G)$.  (The paper \cite{li2017ac} does precisely this with respect to $\ACzero$ circuit size.)

Problem \ref{prob1} should be easier to tackle in the special case of trees.  (We remark that our lower bounds for $P_k$ and $T_k$, combined with results in \cite{czerwinski2019improved,KR18}, imply that $\tau(T) = \Omega(\sqrt{\td(T)})$ for all trees $T$.)

\begin{problem}\label{prob2}
Prove that $\tau(T) = \Omega(\td(T))$ for all trees $T$.
\end{problem}

A solution to Problem \ref{prob2} could perhaps be shown by a common generalization of our lower bounds for $P_k$ and $T_k$.

A third open problem is to nail down the exact average-case $\ACzero$ formula size of $\SUB(P_k)$ (or the related problem of multiplying $k$ permutations).

\begin{problem}\label{prob3}
Prove that $\tau(P_k) = \frac13\log_\varphi(k)$ or find an upper bound improving Theorem \ref{thm:Pkupper}.
\end{problem}

Finally and most ambitiously:

\begin{problem}\label{prob4}
Prove that $n^{\tau(G)-o(1)}$ is a lower bound the {\em unrestricted} formula size $\SUB(G)$.
\end{problem}

This of course would imply $\NCone \ne \cc{NL}$.  An $n^{\Omega(\log k)}$ lower bound for $\SUB(P_k)$ in the average-case (or for the problem of multiplying $k$ permutations) would moreover imply $\NCone \ne \cc{L}$.  Although Problem \ref{prob3} lies beyond current techniques, the applicability of the pathset framework in establishing $n^{\tau(G)-o(1)}$ lower bounds in the disparate $\ACzero$ and monotone settings is possibly reason for optimism.

\appendix{}

\section{Appendix: Lower bound $\tau(P_k) \ge \frac12\log_{\sqrt{13}+1}(k)$ from \cite{rossman2018formulas}}\label{sec:appendix}

\newcommand{\PHI}[1]{\Phi(#1)}
\renewcommand{\c}[1]{\Delta(#1)}
\newcommand{\AB}{\sq{A,B}}
\newcommand{\parent}[1]{#1^{\uparrow}}
\newcommand{\sib}[1]{#1^{\sim}}
\newcommand{\edge}[2]{\mr{edge}(#1,#2)}
\newcommand{\LLL}[1]{\lambda(#1)}

This appendix gives the proof of Lemma \ref{la:PhiPk} from \cite{rossman2018formulas}. 
As in Section \ref{sec:Pk}, we consider infinite pattern graph $P_\infty$ with the constant threshold weighting $E(P_\infty) \to \{1\}$.

\begin{df}
For a join-tree $A$, let $\LLL{A}$ denote the length of the longest path in $A$ (i.e.,\ the number of edges in the largest connected component of $A$).
\end{df}

We omit the proof of the following lemma, which is similar to Lemma \ref{la:zoom}.

\begin{la}\label{la:A-S}
For every join-tree $A$ and set $S$, if $S$ intersects $t$ distinct connected components of $\Gr{A}$, then
\[
  \PHI{A} \ge \PHI{A \ominus S} + t.
\]
\end{la}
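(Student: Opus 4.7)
\textbf{My plan} is to prove Lemma~\ref{la:A-S} by structural induction on the join-tree $A$, after first reducing to the single-vertex case $|S|=1$. The reduction is immediate: if $S = S' \sqcup \{v\}$, then the components of $\Gr{A}$ hit by $S$ split into those already hit by $S'$ (counted by $t'$) plus those of $\Gr{A \ominus S'}$ hit by $\{v\}$ (contributing at most one), giving $t = t' + t''$ with $t'' \in \{0,1\}$. Applying the single-vertex case to $(A,S')$ and then to $(A \ominus S', \{v\})$ and composing yields the general statement. Throughout, I would use the key identity $\c{F} - \c{F \ominus \{v\}} = \mathbb{1}[v \in V(F)]$, which holds in $P_\infty$ under $\theta \equiv 1$ because $\c{\cdot}$ simply counts connected components there.

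\textbf{For the reduced statement}---$\PHI{A} \ge \PHI{A \ominus \{v\}} + 1$ whenever $v \in V(A)$---I would induct on the structure of $A$. The base case $A = \sq{e}$ with $v \in e$ is direct, since $\PHI{A} = \c{A} = 1$ and $\PHI{A \ominus \{v\}} = 0$. For the inductive step $A = \sq{B,C}$ with $v \in V(A)$, I would fix the tight rule defining $\PHI{A \ominus \{v\}}$---either $(\dag)$ with some $D^* \preceq B \ominus \{v\}$, $(\dag)$ with $D^* \preceq C \ominus \{v\}$, or $(\ddag)$ with $D^*, E^* \prec A \ominus \{v\}$---and in each case apply the corresponding rule for $\PHI{A}$ using the lifted sub-join-trees $D$ (and $E$) of $A$ at the same tree positions, so that $D \ominus \{v\} = D^*$ (and similarly for $E^*$). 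The required additional $+1$ would be accounted for in one of two ways: (i) via the inductive hypothesis applied to a lifted sub-join-tree whose vertex set contains $v$, yielding $\PHI{D} \ge \PHI{D^*} + 1$; or (ii) from an additional connected component captured by one of the $\c{\cdot}$-terms for $\PHI{A}$ but not for the corresponding $\c{\cdot}$-term in the tight expression for $\PHI{A \ominus \{v\}}$, invoking the identity mentioned above.

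\textbf{The main obstacle} is the case in which $v$ lies in the vertex set of the lifted $D$. Here the naive gain from IH on $D$ is exactly $+1$, but the $\c{\cdot}$-terms can move in the ``wrong'' direction because $V(D^*) = V(D) \setminus V(K_v)$ strictly contains fewer vertices than $V(D)$, where $K_v$ is the $\Gr{D}$-component through $v$. Consequently, components of $\Gr{C}$ or $\Gr{A}$ confined to $V(K_v)\setminus\{v\}$ can satisfy the weaker disjointness condition in the tight expression for $\PHI{A \ominus \{v\}}$ yet fail it on the $\PHI{A}$ side, producing a spurious loss against the single unit of inductive gain. The plan for absorbing this loss is to observe that any such ``extra'' components lie in the same connected component of $\Gr{A}$ as the hit component $K_v$ of $\Gr{D}$, so they should be regarded as already subsumed by the IH contribution rather than counted separately; executing this bookkeeping---tracking precisely how components of $\Gr{A}$ partition across $V(D)$, $V(C)$, and $\{v\}$---is the main technical step, and it mirrors (in simpler form) the analysis driving the proof of Lemma~\ref{la:zoom}.
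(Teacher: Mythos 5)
Your high-level template is the right one and matches what the paper intends (the proof is omitted with a pointer to Lemma~\ref{la:zoom}, whose argument has exactly this shape): induct on join-trees, fix whichever of ($\dag$)/($\ddag$) is tight for the restricted object, apply the corresponding inequality to $A$ with the sub-join-trees at the same tree positions, and extract the gain of $+t$ either from the induction hypothesis on a sub-join-tree or from an extra connected component picked up by one of the $\Delta$-terms. The reduction of the count $t$ to repeated single-vertex removals is also fine as far as it goes.

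However, there is a concrete error in how you identify the sub-join-trees of $A \ominus \{v\}$. By definition, $A \ominus S$ deletes every leaf label not in $E(\Gr{A} \ominus S)$, so the sub-join-tree of $A \ominus \{v\}$ at the position of $D \preceq A$ is $D$ with labels restricted to $E(\Gr{A}\ominus\{v\})$; writing $K$ for the connected component of $\Gr{A}$ containing $v$, this is $D \ominus V(K)$, \emph{not} $D \ominus \{v\}$. (Concretely: if $\Gr{A}$ is connected, then $A \ominus \{v\}$ is the all-$\bot$ join-tree and every $D^*$ is empty, even when $\Gr{D} \ominus \{v\}$ is not.) This has two consequences. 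First, the ``main obstacle'' you describe does not actually arise: a component of $\Gr{C}$ or $\Gr{A}$ contained in $V(K) \setminus \{v\}$ is excised from $C^*$ and from $A \ominus \{v\}$ as well, so it cannot be counted on the restricted side; one checks that each $\Delta$-term appearing in the inequality for $A$ dominates the corresponding term for $A \ominus \{v\}$, so there is no spurious loss to absorb. Second --- and this is the real gap --- the inductive gain you need on a sub-join-tree is $\Phi(D) \ge \Phi(D \ominus V(K)) + 1$ whenever $V(D)$ meets $V(K)$ (not merely when $v \in V(D)$), which is an instance of the lemma for the \emph{set} $V(K)$, not for a single vertex. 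Your organization proves the single-vertex case first by structural induction and only afterwards derives the set version, so the set-version hypothesis is not available inside the structural induction as written, and the step ``$\Phi(D) \ge \Phi(D^*)+1$ by IH'' is unjustified. This is repairable --- either run the structural induction directly on the statement for arbitrary $S$ (as in the proof of Lemma~\ref{la:zoom}), or note that your reduction from sets to single vertices preserves tree structure, so the set version for strictly smaller join-trees is already implied by the single-vertex inductive hypothesis --- but the proposal as written does not do either.
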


We now present the result from \cite{rossman2018formulas} that implies Lemma \ref{la:PhiPk}.
(The precise value of $c$ in Lemma \ref{la:PHI-lb}, below, is thanks to an optimization suggested by an anonymous referee of the journal paper \cite{rossman2018formulas}.)


\begin{la}\label{la:PHI-lb}
For every join-tree $A$, $\PHI{A} \ge \tsfrac1c\log(\LLL{A}) + \c{A}$ where $c = 2\log(\sqrt{13}+1)$.
\end{la}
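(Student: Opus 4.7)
The plan is to prove $\Phi(A) \ge \tfrac{1}{c}\log\lambda(A) + \Delta(A)$ by structural induction on join-trees $A$, using both inequalities $(\dag)$ and $(\ddag)$ together with Lemma \ref{la:A-S}. The base cases are immediate: if $A = \sq{}$ both sides vanish (taking $\log 0 = -\infty$ by convention), and if $A = \sq{e}$ then $\lambda(A) = 1$ and $\Phi(A) = \Delta(A) = 1$. Throughout the inductive step I will exploit two facts specific to the constant threshold weighting on $P_\infty$: that $\Delta(F)$ counts the connected components of $F$, and that $\Delta(D) + \Delta(C \ominus D) + \Delta(A \ominus (C \cup D)) \ge \Delta(A)$ whenever $D \preceq B$ and $A = \sq{B,C}$, which follows because those three graphs are pairwise vertex-disjoint subgraphs of $A$.

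For the inductive step on $A = \sq{B,C}$, let $k = \lambda(A)$ and fix a longest path $P = v_0 v_1 \cdots v_k$ in $\Gr{A}$. The first (easy) case is when some proper sub-join-tree $D \prec A$ already satisfies $\lambda(D) \ge k$, which in particular covers the situation where $P$ lies entirely in $B$ or entirely in $C$. Then a chain of $(\dag)$ applications descending to $D$, followed by the inductive hypothesis on $D$ and the additivity bound above, yields $\Phi(A) \ge \Phi(D) + (\text{nonneg }\Delta\text{ terms}) \ge \tfrac{1}{c}\log k + \Delta(A)$.

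The substantive case is when $\lambda(B), \lambda(C) < k$, so $P$ truly straddles both children. Here the plan is to split $P$ at a suitable interior vertex $v_i$ into an initial arc $P_1$ of length $k_1$ and a terminal arc $P_2$ of length $k_2 = k - k_1$, and then exhibit sub-join-trees $D, E \prec A$ with $\Gr{D} \supseteq P_1$ and $\Gr{E \ominus D}$ containing (up to at most one edge sacrificed near $v_i$) the arc $P_2$ as a connected component vertex-disjoint from $V(D)$. Applying $(\ddag)$ and the inductive hypothesis on $D$ and $E \ominus D$ then produces
\[
\Phi(A) \ge \tfrac{1}{2c}\log(k_1 k_2) + \tfrac{1}{2}\bigl(\Delta(D) + \Delta(E \ominus D) + \Delta(A \ominus (D \cup E))\bigr) + \tfrac{1}{2}\Delta(A).
\]
The extra $\Delta$ contributions coming from disconnected components in $E \ominus D$ and $A \ominus (D \cup E)$ compensate for the deficit $\log(k^2 / (k_1 k_2))$ caused by an unavoidably uneven split, and the value $c = 2\log(\sqrt{13}+1)$ is exactly what is forced when one optimizes this balance over the worst admissible choice of split.

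The main obstacle is this second case. Because $D$ and $E$ must be drawn from the fixed join-tree $A$ rather than constructed freely, there is very little flexibility in positioning the split. I anticipate having to descend through $A$ along $P$ until reaching a node whose two children each carry a substantial portion of $P$, then selecting $D$ and $E$ inside those children while simultaneously guaranteeing that (i) the product $\lambda(D) \cdot \lambda(E \ominus D)$ recovers nearly all of $k^2$, and (ii) the number of components of $E \ominus D$ disjoint from $V(D)$ is large enough for the $\Delta$-accounting to deliver the promised constant. Tying this combinatorial decomposition to the exact tight value $c = 2\log(\sqrt{13}+1)$, and showing that both $(\dag)$ and $(\ddag)$ are genuinely needed to pin down that constant, is where the argument is most delicate.
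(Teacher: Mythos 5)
Your high-level strategy matches the paper: structural induction, a reduction to the connected case via Lemma~\ref{la:A-S}, and then a case analysis that genuinely exploits both $(\dag)$ and $(\ddag)$ to reach the constant $c = 2\log(\sqrt{13}+1)$. The easy case (some $D \prec A$ already captures a long path) is handled the same way in both arguments.

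However, your proposal for the ``substantive case'' --- split $P$ at an interior vertex $v_i$, find $D$ covering the prefix, $E \ominus D$ covering approximately the suffix, and apply $(\ddag)$ --- is a plan rather than a proof, and you acknowledge as much in your final paragraph. The gap is real and it is exactly where all the work lies. Worse, the structure you anticipate is not quite what the actual proof does. The paper does use $(\ddag)$ for a clean split, but only in \emph{one} subcase: after descending from both ends $v_0$ and $v_k$ to get end-hugging sub-join-trees $\parent{B}$ and $\parent{Z}$ of controlled size, $(\ddag)$ is applied \emph{only when $\parent{B}$ and $\parent{Z}$ are vertex-disjoint}. The genuinely hard situation --- which your plan does not address --- is when they are \emph{not} vertex-disjoint, so that no single interior split point is available. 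There the paper does not use $(\ddag)$ at all. Instead it establishes a structural assumption $(\circledast)$ (every sufficiently large sub-join-tree is connected), walks down further to pin down vertices $v_r, v_s, v_t$ along $P_k$, locates a proper sub-join-tree $D$ whose edge count sits in the tight window $[\tsfrac{1}{2^{c-1}}k, \tsfrac{1}{2^{c-2}}k)$, and then applies $(\dag)$ with a $+1$ bonus extracted from the fact that $B \ominus D$ or $C \ominus D$ must be nonempty (because $G_D$, being short, cannot cover both $v_r$ and $v_s$). The exact constant $c = 2\log(\sqrt{13}+1)$ is not the outcome of optimizing a $\log(k_1 k_2)$-versus-$\log(k^2)$ tradeoff as your sketch suggests; it is forced by the equation $\tsfrac12 - \tsfrac{1}{2^{c/2}} - \tsfrac{1}{2^{c-1}} = \tsfrac{1}{2^{c-2}}$, which balances the edge-count thresholds across the several nested cases. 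So the missing piece of your argument is not bookkeeping: it is the entire mechanism for handling the non-split case, and it takes a fundamentally different route ($(\dag)$ plus a disconnectedness bonus) from the one you envisioned.
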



\begin{proof}
Here $c$ is chosen such that $\frac12 - \frac{1}{2^{c/2}} - \frac{1}{2^{c-1}} = \frac{1}{2^{c-2}}$.

We argue by induction on join-trees. The base case where $A$ is atomic is trivial. For the induction step, let $A$ be a non-atomic join-tree and assume the lemma holds for all smaller join-trees. We will consider a sequence of cases, which will be summarized at the end of the proof.

First, consider the case that $G_A$ is disconnected. Let $t = \c{A}$ ($\ge 2$). Let $S$ be the set of all vertices of $G_A$, except those in the largest connected component of $G_A$.  We have
\begin{align*}
  &&&&\PHI{A} 
    &\ge \PHI{A{\uhr}S} + t - 1
    &&\text{(Lemma \ref{la:A-S})}&&&&\\
  &&&&&\ge \tsfrac1c\log(\LLL{A \ominus S}) + \c{A{\uhr}S} + t - 1
    &&\text{(induction hypothesis)}\\
  &&&&&= \tsfrac1c\log(\LLL{A}) + \c{A}.
\end{align*}
This proves the lemma in the case where $G_A$ is disconnected.

Therefore, we proceed under the assumption that $G_A$ is connected (i.e.\ $\c{A} = 1$). Without loss of generality, we assume that $G_A = P_k$ (i.e.\ $\LLL{A} = k$). Our goal is to show that
\[
  \PHI{A} \ge \tsfrac1c\log(k)+1.
\]

Consider the case that there exists a sub-join-tree $A' \preceq A$ such that $|E_{A'}| \ge \tsfrac{1}{2^{c-1}}k$ and $\c{A'} \ge 2$. Note that $\LLL{A'} \ge |E_{A'}|/\c{A'}$ (i.e.\ the number of edges in the largest component of $G_{A'}$ is at least the number of edges in $G_{A'}$ divided by the number of components in $G_{A'}$). We have
\begin{align*}
  \PHI{A} &\ge \PHI{A'}\\
  &\ge \tsfrac1c\log(\LLL{A'}) + \c{A'}
  &&\text{(induction hypothesis)}\\
  &\ge \tsfrac1c\log(k) - \tsfrac{c-1}{c} - \tsfrac1c\log(\c{A'}) + \c{A'}
  &&\text{($\LLL{A'} \ge |E_{A'}|/\c{A'} \ge \tsfrac{1}{2^{c-1}}k\c{A'}$)}\\
  &\ge \tsfrac1c\log(k) - \tsfrac{c-1}{c} - \tsfrac1c\log(2) + 2
  &&\text{($\c{A'} \ge 2$ and $x - \tsfrac1c\log x$ increasing)}\\
  &= \tsfrac1c\log(k) + 1.
\end{align*}
This proves the lemma in this case.

Therefore, we proceed under the following assumption:
\begin{equation}
\tag{$\hspace{-.5pt}{\circledast}\hspace{-.5pt}$}
\label{eq:ast}
 \text{for all $A' \preceq A$, if 
 $|E_{A'}| \ge \tsfrac{1}{2^{c-1}}k$
 then }\c{A'} = 1.
\end{equation}
Going forward, the following notation will be convenient: for a proper sub-join-tree $B \prec A$, let $\parent{B}$ denote the parent of $B$ in $A$, and let $\sib{B}$ denote the sibling of $B$ in $A$. Note that $\parent{B} = \{B,\sib{B}\} \preceq A$.

By walking down the join-tree $A$, we can proper sub-join-trees $B,Z \prec A$ such that 
\[
  v_0 \in V_B,\qquad
  v_k \in V_Z,\qquad
  |E_B|,|E_Z| < \tsfrac{1}{2^{c/2}}k,\qquad
  |E_{\parent{B}}|,|E_{\parent{Z}}| \ge \tsfrac{1}{2^{c/2}}k.
\]
Fix any choice of such $B$ and $Z$. Note that $G_{\parent{B}}$ and $G_{\parent{Z}}$ are connected by (\ref{eq:ast}). In particular, $G_{\parent{B}}$ is a path of length $|E_{\parent{B}}|$ with initial endpoint $v_0$, and $G_{\parent{Z}}$ is a path of length $|E_{\parent{Z}}|$ with final endpoint $v_k$.

Consider the case that $\parent{B}$ and $\parent{Z}$ are vertex-disjoint. Note that $\tsfrac{1}{2^{c/2}}k \ge \tsfrac{1}{2^{c-1}}k$, so the assumption (\ref{eq:ast}) implies that $\parent{B}$ and $\parent{Z}$ are connected and $\LLL{\parent{B}},\LLL{\parent{Z}} \ge \tsfrac{1}{2^{c/2}}k$.
Let $Y$ denote the least common ancestor of $\parent{B}$ and $\parent{Z}$ in $A$. We have
\begin{align*}
  \PHI{A} &\ge \PHI{Y}\\
  &\ge \tsfrac12\big(\PHI{\parent{B}} + \PHI{\parent{Z} \ominus \parent{B}} + \c{Y} + \c{Y \ominus \{\parent{B},\parent{Z}\}}\big)
  &&\text{(by $(\ddag)$)}\\
  &\ge \tsfrac12\big(\PHI{\parent{B}} + \PHI{\parent{Z}}\big) + \tsfrac12
  &&\text{($\c{Y} \ge 1$ and $\parent{Z} \ominus \parent{B} = \parent{Z}$)}\\
  &\ge \tsfrac12\big(\tsfrac1c\log(\LLL{\parent{B}}) + \c{\parent{B}} + \tsfrac1c\log(\LLL{\parent{Z}}) + \c{\parent{Z}}\big) + \tsfrac12
  &&\text{(induction hypothesis)}\\
  &\ge \tsfrac12\big(\tsfrac1c\log(\tsfrac{1}{2^{c/2}}k) + 1 + \tsfrac1c\log(\tsfrac{1}{2^{c/2}}k) + 1\big) + \tsfrac12\\
  &= \tsfrac1c\log(k) + 1.
\end{align*}

Therefore, we proceed under the assumption that $\parent{B}$ and $\parent{Z}$ are not vertex-disjoint. It follows that $\LLL{\parent{B}} \ge k/2$ or $\LLL{\parent{Z}} \ge k/2$. Without loss of generality, we assume that $\LLL{\parent{B}} \ge k/2$. (We now forget about $Z$ and $\parent{Z}$.)

Before continuing, let's take stock of the assumptions we have made so far:
\[
  G_A = P_k,\quad\
  \text{(\ref{eq:ast})},\quad\
  B \preceq A,\quad\
  v_0 \in V_B,\quad\
  |E_B| < \tsfrac{1}{2^{c/2}}k,\quad\
  |E_{\parent{B}}| = \LLL{\parent{B}} \ge k/2.
\]
Going forward, we will define vertices $v_r,v_s,v_t$ where $0 < r < s < t \le k$. The following illustration will be helpful for what follows:

\begin{figure}[H]
\centering
\includegraphics[scale=.75]{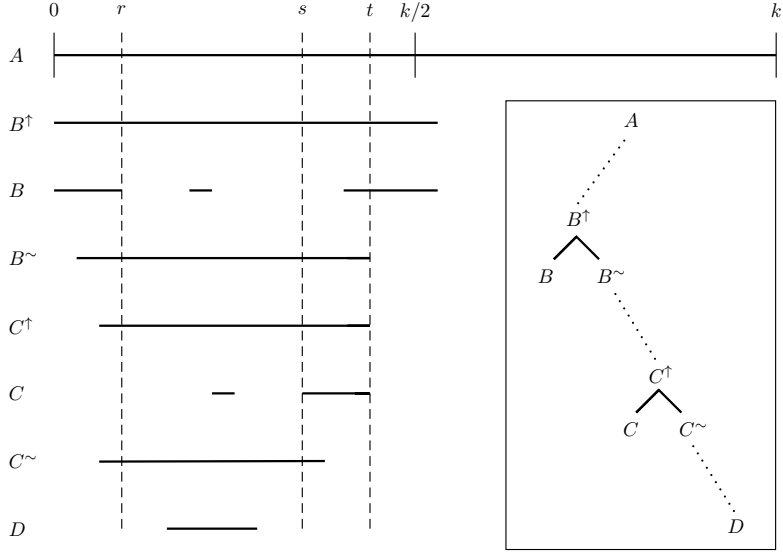}
\caption{\small Schematic of sub-join-trees in the argument.}
\end{figure}

We first define $v_r \in B$ and $v_t \in \sib{B}$ as follows: Let $\{v_0,\dots,v_r\}$ be the component of $G_B$ containing $v_0$. (That is, the component of $v_0$ in $G_B$ is a path whose initial vertex is $v_0$; let $v_r$ be the final vertex in this path.) Let $v_t$ be the vertex in $V_{\sib{B}}$ with maximal index $t$ (i.e.\ farthest away from $v_0$).

Note that $E_B$ contains edges $\edge{v_i}{v_{i+1}}$ for all $i \in \{0,\dots,r-1\} \cup \{t,\dots,\lceil k/2 \rceil-1\}$. (In the event that $t < k/2$, since $G_{\parent{B}} = G_B \cup G_{\sib{B}}$ is a path of length $\ge k/2$ and $G_{\sib{B}}$ does not contain vertices $v_{t+1},\dots,v_{\lceil k/2 \rceil}$, it follows that $G_B$ contains all edges between $v_t$ and $v_{\lceil k/2 \rceil}$.) Therefore, $r + (k/2) - t \le |E_B| < \tsfrac{1}{2^{c/2}}k$. It follows that
\[
  t-r > (\tsfrac12 - \tsfrac1{2^{c/2}})k.
\]

Next, note that $|E_{\sib{B}}| \ge |E_{\parent{B}}| - |E_B| \ge (\tsfrac12 - \tsfrac1{2^{c/2}})k > \frac{1}{2^{c-1}}k$.
We now walk down $\sib{B}$ to find a proper sub-join-tree $C \prec \sib{B}$ such that
\[
  v_t \in V_C,\qquad
  |E_C| < \tsfrac{1}{2^{c-1}}k,\qquad 
  |E_{\parent{C}}| \ge \tsfrac{1}{2^{c-1}}k.
\]
Fix any choice of such $C$. Note that $G_{\parent{C}}$ is connected by (\ref{eq:ast}).

Consider the case that $|E_{\parent{C}}| < (\tsfrac12 - \tsfrac1{2^{c/2}})k$. Since $G_{\parent{C}}$ is connected and $v_t \in V_{\parent{C}}$ and $t - r > (\tsfrac12 - \tsfrac1{2^{c/2}})k$, it follows that $V_{\parent{C}} \cap \{v_0,\dots,v_r\} = \emptyset$ and hence $\c{B \ominus \parent{C}} \ge 1$. We have
\begin{align*}
  &&\PHI{A} \ge \PHI{\parent{B}}
  &\ge \PHI{\parent{C}} + \c{B \ominus \parent{C}} + \c{\parent{B} \ominus \{B,\parent{C}\}}
    &&\text{(by $(\dag)$)}\\
  &&&\ge \PHI{\parent{C}} + 1\\
  &&&\ge \tsfrac1c\log(\LLL{\parent{C}}) + \c{\parent{C}} + 1
    &&\text{(induction hypothesis)}\\
  &&&\ge \tsfrac1c\log(\tsfrac{1}{2^{c-1}}k) + 2\\
  &&&> \tsfrac1c\log(k) + 1.
\end{align*}

Therefore, we proceed under the assumption that $|E_{\parent{C}}| \ge (\tsfrac12 - \tsfrac1{2^{c/2}})k$. Since $E_{\parent{C}} = E_C \cup E_{\sib{C}}$, we have
\[
  |E_{\sib{C}}| 
  \ge |E_{\parent{C}}| - |E_C| 
  > (\tsfrac12 - \tsfrac1{2^{c/2}} - \tsfrac{1}{2^{c-1}})k
  = \tsfrac{1}{2^{c-2}}k.
\]
We now define vertex $v_s \in V_C$. Since $v_t$ is the vertex of $G_{\sib{B}}$ with maximal index, it follows that $\edge{v_t}{v_{t+1}} \notin E_{\sib{B}}$ and hence $\edge{v_t}{v_{t+1}} \notin E_C$ (since $C \prec \sib{B}$). Therefore, the component of $G_C$ containing $v_t$ is a path with final vertex $v_t$; let $v_s$ be the initial vertex in this path. That is, $\{v_s,\dots,v_t\}$ is the component of $G_C$ which contains $v_t$.
Recall that $t-r > (\tsfrac12 - \tsfrac1{2^{c/2}})k$ and note that $t-s \le |E_C| < \tsfrac{1}{2^{c-1}}k$. Therefore,
\[
  s - r = (t - r) - (t - s) > (\tsfrac12 - \tsfrac1{2^{c/2}} - \tsfrac{1}{2^{c-1}})k = \tsfrac{1}{2^{c-2}}k.
\]

We now claim that there exists a proper sub-join-tree $D \prec \sib{C}$ such that 
\[
  \tsfrac{1}{2^{c-1}}k \le |E_D| < \tsfrac{1}{2^{c-2}}k.
\]
To see this, note that there exists a chain of sub-join-trees $\sib{C} = D_0 \succ D_1 \succ \dots \succ D_j$ such that $D_j$ is atomic and $D_{i-1} = \parent{D_i}$ and $|E_{D_i}| \ge |E_{\sib{D}_i}|$ for all $i \in \{1,\dots,j\}$. Since $|E_{D_0}| > \tsfrac{1}{2^{c-2}}k$ and $|E_{D_j}| = 1$ and $|E_{D_{i-1}}| \le |E_{D_i}| + |E_{\sib{D}_i}| \le 2|E_{D_i}|$, it must be the case that there exists $i \in \{1,\dots,j\}$ such that $\tsfrac{1}{2^{c-1}}k \le |E_{D_i}| < \tsfrac{1}{2^{c-2}}k$.

Since $|E_D| \ge \tsfrac{1}{2^{c-1}}k$, (\ref{eq:ast}) implies that $G_D$ is connected. Since $|E_D| < \tsfrac{1}{2^{c-2}}k$ and $s-r > \tsfrac{1}{2^{c-2}}k$, it follows that $V_D$ cannot contain both $v_r$ and $v_s$. We are now down to our final two cases: either $v_r \notin V_D$ or $v_s \notin V_D$.

First, suppose that $v_r \notin V_D$. We have $\c{B \ominus D} \ge 1$ and hence
\begin{align*}
  &&\PHI{A} \ge \PHI{\parent{B}}
    &\ge \PHI{D} + \c{B \ominus D} + \c{\parent{B} \ominus \{B,D\}}
    &&\text{(by $(\dag)$)}\\
  &&&\ge \PHI{D} + 1\\
  &&&\ge \tsfrac1c\log(\LLL{D}) + \c{D} + 1
    &&\text{(induction hypothesis)}\\
  &&&\ge \tsfrac1c\log(\tsfrac{1}{2^{c-1}}k) + 2\\
  &&&> \tsfrac1c\log(k) + 1.
\end{align*}
Finally, we are left with the alternative that $v_s \notin V_D$. In this case $\c{C \ominus D} \ge 1$ and hence (substituting $C$ for $B$ in the above), we have
\begin{align*}
  \PHI{A} \ge \PHI{\parent{C}}
  \ge \PHI{D} + \c{C \ominus D} + \c{\parent{C} \ominus \{C,D\}}
  \ge \PHI{D} + 1
  > \tsfrac1c\log(k) + 1.
\end{align*}

We have now covered all cases. In summary, we considered cases in the following sequence:
\begin{enumerate}[\quad\ \,1.\ ]
\setlength{\itemsep}{0pt}
  \item
    \makebox[3in]{$\c{A} \ge 2$\hfill}
    otherwise assume $G_A = P_k$ w.l.o.g.,    
 \item
    \makebox[3in]{$\exists A' \prec A$ with $\c{A'} \ge 2$ and $\LLL{A'} \ge \frac{1}{2^{c-1}}k$\hfill} 
    otherwise assume (\ref{eq:ast}),
  \item
    \makebox[3in]{
    $\parent{B}$ and $\parent{Z}$ are vertex-disjoint
    \hfill} 
    otherwise assume $|E_{\parent{B}}| \ge k/2$ w.l.o.g.,
  \item
    \makebox[3in]{$|E_{\parent{C}}| < (\tsfrac12 - \tsfrac1{2^{c/2}})k$\hfill}
    otherwise assume $|E_{\parent{C}}| \ge (\tsfrac12 - \tsfrac1{2^{c/2}})k$,
  \item
    $v_r \notin V_D$ or $v_s \notin V_D$.
\end{enumerate}
Since $\PHI{A} \ge \tsfrac1c\log(\LLL{A}) + \c{A}$ in each case, the proof is complete.
\end{proof}

\bibliographystyle{plain}
\bibliography{phi-bib}

\end{document}